\newtheorem{thm}{Theorem}[section]
\newtheorem*{thm*}{Theorem}
\newtheorem{cor}{Corollary}[section]
\newtheorem{lem}{Lemma}[section]
\newtheorem{prop}{Proposition}[section]
\newtheorem*{prop*}{Proposition}
\theoremstyle{definition}
\newtheorem{defn}{Definition}[section]
\theoremstyle{remark}
\newtheorem{rem}{Remark}[section]
\numberwithin{equation}{section}
\begin{document}

\title{The Wasserstein distance of order $1$ for quantum spin systems on infinite lattices}

\author[1]{Giacomo De Palma}
\author[2]{Dario Trevisan}
\affil[1]{Department of Mathematics, University of Bologna, Bologna, Italy}
\affil[2]{Department of Mathematics, University of Pisa, Pisa, Italy}

\maketitle

\begin{abstract}
We propose a generalization of the Wasserstein distance of order $1$ to quantum spin systems on the lattice $\mathbb{Z}^d$, which we call specific quantum $W_1$ distance.
The proposal is based on the $W_1$ distance for qudits of [De Palma \emph{et al.}, IEEE Trans. Inf. Theory 67, 6627 (2021)] and recovers Ornstein's $\bar{d}$-distance for the quantum states whose marginal states on any finite number of spins are diagonal in the canonical basis.
We also propose a generalization of the Lipschitz constant to quantum interactions on $\mathbb{Z}^d$ and prove that such quantum Lipschitz constant and the specific quantum $W_1$ distance are mutually dual.
We prove a new continuity bound for the von Neumann entropy for a finite set of quantum spins in terms of the quantum $W_1$ distance, and we apply it to prove a continuity bound for the specific von Neumann entropy in terms of the specific quantum $W_1$ distance for quantum spin systems on $\mathbb{Z}^d$.
Finally, we prove that local quantum commuting interactions above a critical temperature satisfy a transportation-cost inequality, which implies the uniqueness of their Gibbs states.
\end{abstract}

\section{Introduction}

Let $\mathcal{X}$ be a finite set endowed with the distance $D$ and let $\mu$ and $\nu$ be probability distributions on $\mathcal{X}$.
A \emph{coupling} between $\mu$ and $\nu$ is a probability distribution on two copies of $\mathcal{X}$ with marginal distributions equal to $\mu$ and $\nu$, respectively.
The theory of optimal mass transport considers $\mu$ and $\nu$ as distributions of a unit amount of mass, and any coupling $\pi$ prescribes a plan to transform the distribution $\mu$ into the distribution $\nu$, \emph{i.e.}, $\pi(x,y)$
is the amount of mass that is moved from $x$ to $y$ for any $x,\,y\in\mathcal{X}$.
Assuming that the cost of moving a unit of mass from $x$ to $y$ is equal to $D(x,y)$, the cost of the coupling $\pi$ is equal to $\mathbb{E}_{(X,Y)\sim\pi}D(X,Y)$.
The Monge--Kantorovich distance between $\mu$ and $\nu$ is given by the minimum cost among all the couplings between $\mu$ and $\nu$ \cite{monge1781memoire, kantorovich1942translocation, vershik2013long}.
Such distance is also called earth mover's distance or Wasserstein distance of order $1$, often shortened to $W_1$ distance.
The exploration of the theory of optimal mass transport has led to the creation of an extremely fruitful field in mathematical analysis, with applications ranging from differential geometry and partial differential equations to machine learning \cite{villani2008optimal, ambrosio2008gradient, peyre2019computational,vershik2013long}.

The Hamming distance constitutes a natural choice for the distance $D$ when $\mathcal{X}$ is a set of finite strings over an alphabet.
The $W_1$ distance with respect to the Hamming distance is called Ornstein's $\bar{d}$-distance and was first considered in \cite{ornstein1973application}, together with its extension to stationary stochastic processes. Originally introduced as a tool for the classification of stationary processes in ergodic theory, since then it has found further applications in probability theory, such as the statistical estimation of processes \cite{ornstein1990sampling, ornstein1994d, fernandez2002markov, csiszar2010rate, o2021estimation},  information theory, such as coding theorems for a large class of discrete noisy channels with memory and rate distortion theory \cite{gray1975generalization, gray2011entropy} and recently also machine learning, as a peculiar case of Wasserstein auto-encoders \cite{choi2019ornstein, choi2021learning}.

Ref. \cite{de2021quantum} proposed a generalization of the $W_1$ distance to the space of the quantum states of a finite set of qudits or spins, called quantum $W_1$ distance.
The generalization is based on the notion of neighboring quantum states.
Two quantum states of a finite set of qudits are neighboring if they coincide after discarding one qudit.
The quantum $W_1$ distance proposed in Ref. \cite{de2021quantum} is the distance induced by the maximum norm that assigns distance at most $1$ to any couple of neighboring states.
Such quantum $W_1$ distance recovers Ornstein's $\bar{d}$-distance in the case of quantum states diagonal in the canonical basis and inherits most of its properties.
The quantum $W_1$ distance has found several applications in quantum information theory.
In the context of statistical mechanics of quantum spin systems, a connection with quantum speed limits \cite{hamazaki2022speed} has been found.
Furthermore, transportation-cost inequalities have been proved, which upper bound the square of the quantum $W_1$ distance between a generic quantum state and the Gibbs state of a local quantum commuting Hamiltonian with the relative entropy between the same states \cite{de2022quantum}.
Such inequalities have been applied to prove the equivalence between the microcanonical and the canonical ensembles of quantum statistical mechanics \cite{de2022quantum} and to prove limitations of variational quantum algorithms \cite{de2022limitations,chou2022limitations}.
Moreover, the quantum $W_1$ distance has been applied to quantify the complexity of quantum circuits \cite{li2022wasserstein}.
In the context of quantum state tomography, the quantum $W_1$ distance has been employed as quantifier of the quality of the learned quantum state, and the transportation-cost inequalities have led to an efficient algorithm to learn Gibbs states of local quantum commuting Hamiltonians \cite{rouze2021learning,maciejewski2021exploring}.
In the context of quantum machine learning, the quantum $W_1$ distance has been employed as cost function of the quantum version of generative adversarial networks \cite{kiani2022learning,herr2021anomaly,anschuetz2022beyond,coyle2022machine}.
Furthermore, the quantum $W_1$ distance has been applied in the context of differential privacy of a quantum computation \cite{hirche2022quantum,angrisani2022differential}.
Finally, the quantum $W_1$ distance has been extended to general composite systems \cite{duvenhage2022quantum}, which include the case of a finite tensor product of $C^*$ algebras, but also provides a way to define a quantum $W_1$ distance between quantum channels.

\subsection{Our contribution}
In this paper we propose a generalization of the $W_1$ distance to quantum spin systems on the lattice $\mathbb{Z}^d$ \cite{bratteli2013operatorI,bratteli2013operatorII,naaijkens2017quantum,alicki2001quantum} based on the quantum $W_1$ distance of Ref. \cite{de2021quantum}.
Quantum spin systems on infinite lattices play a key role in quantum statistical mechanics since they provide a model to study the thermodynamic limit of infinite size of the system.
Such limit is necessary to define phase transitions and to identify the properties of the system that are independent on boundary effects and boundary conditions, and more generally to make a clear distinction between the local and the global properties of the system.

We define the specific quantum $W_1$ distance between two translation-invariant states as the limit of the distance between their marginal states on an hypercube divided by the volume of the hypercube for the volume of the hypercube tending to infinity (\autoref{def:w1}).
Contrarily to the trace distance, the specific quantum $W_1$ distance has an intensive nature that make it suitable to capture the closeness of states that are locally similar but become perfectly distinguishable globally, such as Gibbs states at close but different temperatures.
We provide in \autoref{def:alternative-wass} an equivalent definition of the specific quantum $W_1$ distance that does not require the limit.
We propose a generalization of the Lipschitz constant to quantum interactions on $\mathbb{Z}^d$ (\autoref{def:L}), and we prove in \autoref{thm:duality} that the specific quantum $W_1$ distance and the Lipschitz constant are mutually dual.

We prove in \autoref{prop:recovery} that the specific quantum $W_1$ distance recovers Ornstein's $\bar{d}$-distance in the case of quantum states whose marginal states on a finite number of spins are all diagonal in the canonical basis.
We prove in \autoref{prop:PoincI} a Poincar\'e inequality stating that for any product state, the variance of the local Hamiltonians associated with an interaction grows linearly with the volume.
In \autoref{thm:Gauss}, we prove a Gaussian concentration inequality for the maximally mixed state of a finite set of spins, and we apply it in \autoref{cor:PL} to prove an upper bound to the pressure of a quantum interaction on $\mathbb{Z}^d$ in terms of its Lipschitz constant.

In \autoref{thm:main}, we prove a continuity bound for the von Neumann entropy in terms of the $W_1$ distance.
The bound applies to quantum systems made by a finite number of spins or qudits and states that the difference between the von Neumann entropy of any two quantum states divided by the number of spins is upper bounded by a universal function of the ratio between the $W_1$ distance and the number of spins.
The bound of \autoref{thm:main} contains only intensive quantities, and thanks to this property we apply it to prove a continuity bound for the specific von Neumann entropy in terms of the specific quantum $W_1$ distance (\autoref{thm:mainI}).
\autoref{thm:main} improves \cite[Theorem 1]{de2021quantum}, which is a weaker continuity bound for the von Neumann entropy in terms of the $W_1$ distance.
Contrarily to the bound of \autoref{thm:main}, the bound of \cite[Theorem 1]{de2021quantum} cannot be expressed in terms of only intensive quantities, and therefore such bound would not be sufficient to prove a continuity bound for the specific von Neumann entropy.
Besides the applications to quantum spin systems, \autoref{thm:main} can be useful in quantum Shannon theory in the context of rate-distortion theory, which addresses the problem of determining the maximum compression rate of a quantum state if a certain level of distortion in the recovered state is allowed \cite{barnum2000quantum,devetak2001quantum,devetak2002quantum,chen2008entanglement, datta2012quantum,datta2013quantum,wilde2013quantum,salek2018quantum}.

In the remainder of the paper we apply our $W_1$ distance to study the statistical mechanics of quantum spin systems on infinite lattices.
We propose a definition of $w_1$-Gibbs state as a translation-invariant state such that the $W_1$ distance between its marginal state on a hypercube and the Gibbs state of the local Hamiltonian on the same hypercube scales sublinearly with the volume of the hypercube (\autoref{defn:w1G}).
If an interaction admits a $w_1$-Gibbs state, then such state is unique (\autoref{prop:w1G}) and is an equilibrium state of the interaction (\autoref{prop:w1eq}) in the sense of Kubo--Martin--Schwinger  \cite{bratteli2013operatorII}.
In \autoref{sec:TCI} we consider transportation-cost inequalities for interactions on the quantum spin lattice $\mathbb{Z}^d$.
Such inequalities imply the uniqueness of the Gibbs state of the interaction (\autoref{thm:uniqueness}) and a continuity bound for the specific entropy in terms of the specific relative entropy with respect to the Gibbs state (\autoref{prop:ss}).
Finally, we prove that transportation-cost inequalities are satisfied by interactions that contain only terms acting on single spins (\autoref{cor:prodTCI}) and geometrically local commuting interactions above a critical temperature (\autoref{thm:TCI} and \autoref{thm:TCI2}).

The paper is structured as follows.
In \autoref{sec:defs} we introduce quantum spin systems on the lattice $\mathbb{Z}^d$ and in \autoref{sec:W1} we present the quantum $W_1$ distance and the quantum Lipschitz constant of Ref. \cite{de2021quantum}.
In \autoref{sec:w1} and \autoref{sec:L} we generalize the quantum $W_1$ distance and the quantum Lipschitz constant, respectively, to quantum spin systems on the lattice $\mathbb{Z}^d$.
In \autoref{sec:duality} we prove the duality between the specific quantum $W_1$ distance and the Lipschitz constant and in \autoref{sec:recovery} we prove that the specific quantum $W_1$ distance that we propose recovers Ornstein's $\bar{d}$-distance.
In \autoref{sec:conc} we prove the quantum Poincar\'e and Gaussian concentration inequalities for product states.
In \autoref{sec:continuity} we prove the continuity bound for the von Neumann entropy in terms of the $W_1$ distance, and in \autoref{sec:w1cont} we prove the continuity bound for the specific entropy in terms of the specific quantum $W_1$ distance.
In \autoref{sec:Gibbs} we introduce the notion of $w_1$-Gibbs state.
In \autoref{sec:TCI} we present and prove the transportation-cost inequalities for Gibbs states.
We conclude in \autoref{sec:persp} presenting some perspective applications of this work.
\autoref{app:W1} recalls some relevant properties of the quantum $W_1$ distance.
\autoref{app:auxproofs} contains some auxiliary proofs, and \autoref{app:aux} contains the proof of the auxiliary lemmas.

\subsection{Related approaches}

Several quantum generalizations of optimal transport distances have been proposed.
One line of research by Carlen, Maas, Datta and Rouz\'e \cite{carlen2014analog,carlen2017gradient,carlen2020non,rouze2019concentration,datta2020relating,van2020geometrical,wirth2022dual} defines a quantum Wasserstein distance of order $2$ from a Riemannian metric on the space of quantum states based on a quantum analog of a differential structure.
Exploiting their quantum differential structure, Refs. \cite{rouze2019concentration,carlen2020non,gao2020fisher} also define a quantum generalization of the Lipschitz constant and of the Wasserstein distance of order $1$.
Alternative definitions of quantum Wasserstein distances of order $1$ based on a quantum differential structure are proposed in Refs. \cite{chen2017matricial,ryu2018vector,chen2018matrix,chen2018wasserstein}.
Refs. \cite{agredo2013wasserstein,agredo2016exponential,ikeda2020foundation} propose quantum Wasserstein distances of order $1$ based on a distance between the vectors of the canonical basis.

Another line of research by Golse, Mouhot, Paul and Caglioti \cite{golse2016mean,caglioti2021towards,golse2018quantum,golse2017schrodinger,golse2018wave, caglioti2019quantum,friedland2021quantum, cole2021quantum, duvenhage2021optimal,bistron2022monotonicity,van2022thermodynamic} arose in the context of the study of the semiclassical limit of quantum mechanics and defines a family of quantum Wasserstein distances of order $2$ built on a quantum generalization of couplings.
Such distances have been generalized to von Neumann algebras \cite{duvenhage2020quadratic,duvenhage2021wasserstein,duvenhage2022extending}.

Ref. \cite{de2021quantumAHP} proposes another quantum Wasserstein distance of order $2$ based on couplings, with the property that each quantum coupling is associated to a quantum channel.
The relation between quantum couplings and quantum channels in the framework of von Neumann algebras has been explored in \cite{duvenhage2018balance}.
The problem of defining a quantum Wasserstein distance of order $1$ through quantum couplings has been explored in Ref. \cite{agredo2017quantum}.

The quantum Wasserstein distance between two quantum states can be defined as the classical Wasserstein distance between the probability distributions of the outcomes of an informationally complete measurement performed on the states, which is a measurement whose probability distribution completely determines the state.
This definition has been explored for Gaussian quantum systems with the heterodyne measurement in Refs. \cite{zyczkowski1998monge,zyczkowski2001monge,bengtsson2017geometry}.

\section{Quantum spin systems on infinite lattices}\label{sec:defs}

In this section we introduce the setting of quantum spin systems on infinite lattices and fix the notation for the paper.
For more details, the reader is encouraged to consult the books \cite{bratteli2013operatorI,bratteli2013operatorII,alicki2001quantum,naaijkens2017quantum}.

\subsection{Algebra and states}
We associate to each $x\in\mathbb{Z}^d$ the single-spin Hilbert space $\mathcal{H}_x = \mathbb{C}^q$.
Let $\mathcal{F}_{\mathbb{Z}^d}$ be the collection of all the finite subsets of $\mathbb{Z}^d$.
We associate to each $\Lambda\in\mathcal{F}_{\mathbb{Z}^d}$ the Hilbert space
\begin{equation}
\mathcal{H}_\Lambda = \bigotimes_{x\in \Lambda}\mathcal{H}_x\,.
\end{equation}
For each $\Lambda\in\mathcal{F}_{\mathbb{Z}^d}$, we denote with $\mathfrak{U}_\Lambda$ the algebra of the linear operators acting on $\mathcal{H}_\Lambda$ equipped with the operator norm, which we denote with $\|\cdot\|_\infty$.
For any $\Lambda'\subseteq \Lambda$, $\mathfrak{U}_{\Lambda'}$ can be canonically identified with a subalgebra of $\mathfrak{U}_\Lambda$.
This identification will always be implicit.

We denote with $\mathcal{O}_\Lambda\subset\mathfrak{U}_\Lambda$ the set of the self-adjoint linear operators acting on $\mathcal{H}_\Lambda$, and with $\mathcal{O}^T_\Lambda\subset\mathcal{O}_\Lambda$ the set of the traceless self-adjoint linear operators acting on $\mathcal{H}_\Lambda$.
We denote with $\mathcal{S}_\Lambda\subset\mathcal{O}_\Lambda$ the set of the quantum states acting on $\mathcal{H}_\Lambda$, \emph{i.e.}, the positive semidefinite linear operators with unit trace, and with $\mathrm{Tr}_\Lambda$ the partial trace over $\mathcal{H}_\Lambda$.
We say that $\rho\in\mathcal{S}_\Lambda$ is a product state if there exists a collection of states $\left\{\rho_x\in\mathcal{S}_x\right\}_{x\in\Lambda}$ such that
\begin{equation}
\rho = \bigotimes_{x\in\Lambda}\rho_x\,.
\end{equation}
Some results of this paper do not require the lattice structure of $\mathbb{Z}^d$ and apply to generic finite spin systems.
If $\Lambda$ is a generic finite set, we still define $\mathcal{H}_\Lambda$, $\mathrm{Tr}_\Lambda$, $\mathfrak{U}_\Lambda$, $\mathcal{O}_\Lambda$, $\mathcal{O}_\Lambda^T$ and $\mathcal{S}_\Lambda$ as above.

The strictly local algebra of the spin lattice $\mathbb{Z}^d$ is
\begin{equation}
\mathfrak{U}_{\mathbb{Z}^d}^{loc} = \bigcup_{\Lambda\in\mathcal{F}_{\mathbb{Z}^d}}\mathfrak{U}_\Lambda\,,
\end{equation}
and is equipped with the norm inherited from the operator norm of each $\mathfrak{U}_\Lambda$.
The quasi-local algebra $\mathfrak{U}_{\mathbb{Z}^d}$ is the completion of $\mathfrak{U}_{\mathbb{Z}^d}^{loc}$ with respect to such norm, which we still denote with $\|\cdot\|_\infty$.
For any (not necessarily finite) $\Gamma\subseteq\mathbb{Z}^d$, we define
\begin{equation}\label{eq:ULambda}
\mathfrak{U}_\Gamma = \overline{\bigcup_{X\in\mathcal{F}_{\mathbb{Z}^d},\,X\subseteq\Gamma}\mathfrak{U}_X}\subseteq\mathfrak{U}_{\mathbb{Z}^d}\,,
\end{equation}
where the closure is with respect to the $\|\cdot\|_\infty$ norm in $\mathfrak{U}_{\mathbb{Z}^d}$.
When $\Gamma$ is finite or $\Gamma=\mathbb{Z}^d$, \eqref{eq:ULambda} is consistent with the previous definitions.
We denote with $\mathcal{O}_\Gamma$ the set of the self-adjoint elements of $\mathfrak{U}_\Gamma$.

A quantum state $\rho$ of the spin lattice $\mathbb{Z}^d$ is a positive linear functional on $\mathfrak{U}_{\mathbb{Z}^d}$ with $\rho(\mathbb{I})=1$.
We denote the set of the quantum states of $\mathbb{Z}^d$ with $\mathcal{S}_{\mathbb{Z}^d}$.
Analogously, for any (not necessarily finite) $\Gamma\subseteq\mathbb{Z}^d$, a quantum state $\rho$ of $\Gamma$ is a positive linear functional on $\mathfrak{U}_\Gamma$ with $\rho(\mathbb{I})=1$.
We denote with $\mathcal{S}_\Gamma$ the set of the quantum states of $\Gamma$.
If $\Gamma$ is finite, this definition is consistent with the previous one by setting for any $A\in\mathfrak{U}_\Gamma$
\begin{equation}
\rho(A) = \mathrm{Tr}_\Gamma\left[\rho\,A\right]\,.
\end{equation}
Let $\Gamma'\subseteq\Gamma\subseteq\mathbb{Z}^d$ and let $\rho\in\mathcal{S}_\Gamma$.
The marginal state $\rho_{\Gamma'}$ of $\rho$ on $\Gamma'$ is the restriction of $\rho$ on $\mathfrak{U}_{\Gamma'}$.
If $\Gamma$ is finite we have
\begin{equation}\label{eq:marginal-marginal}
\rho_{\Gamma'} = \mathrm{Tr}_{\Gamma\setminus\Gamma'}\rho\,.
\end{equation}
Since $\mathfrak{U}_{\mathbb{Z}^d}^{loc}$ is dense in $\mathfrak{U}_{\mathbb{Z}^d}$ by construction, any $\rho\in\mathcal{S}_{\mathbb{Z}^d}$ is completely determined by the collection of its marginal states $(\rho_\Lambda)_{\Lambda\in\mathcal{F}_{\mathbb{Z}^d}}$.
We say that $\rho\in\mathcal{S}_{\mathbb{Z}^d}$ is a product state if for any $\Lambda\in\mathcal{F}_{\mathbb{Z}^d}$ the marginal state $\rho_\Lambda$ is a product state.

We associate to each $x\in\mathbb{Z}^d$ the translation operator $\tau_x$, which is the automorphism of $\mathfrak{U}_{\mathbb{Z}^d}$ that sends $\mathfrak{U}_\Gamma$ to $\mathfrak{U}_{\Gamma+x}$ for any $\Gamma\subseteq\mathbb{Z}^d$.
With some abuse of notation, we denote with $\tau_x$ also the automorphism of $\mathcal{S}_{\mathbb{Z}^d}$ such that for any $\rho\in\mathcal{S}_{\mathbb{Z}^d}$ and any $A\in\mathfrak{U}_{\mathbb{Z}^d}$ we have
\begin{equation}\label{eq:deftau}
(\tau_x\rho)(\tau_x A) = \rho(A)\,.
\end{equation}
With some further abuse of notation, for any $\Gamma\subseteq\mathbb{Z}^d$ we denote with $\tau_x$ also the isomorphism between $\mathcal{S}_\Gamma$ and $\mathcal{S}_{\Gamma+x}$ such that \eqref{eq:deftau} holds for any $\rho\in\mathcal{S}_\Gamma$ and any $A\in\mathfrak{U}_\Gamma$.
We say that $\rho\in\mathcal{S}_{\mathbb{Z}^d}$ is translation invariant if $\tau_x\rho = \rho$ for any $x\in\mathbb{Z}^d$.
We denote with $\mathcal{S}_{\mathbb{Z}^d}^I\subset\mathcal{S}_{\mathbb{Z}^d}$ the set of the translation-invariant quantum states of $\mathbb{Z}^d$.

For each $a\in\mathbb{N}_+^d$, we denote with $\Lambda_a$ the box
\begin{equation}
\Lambda_a = \left\{x\in\mathbb{Z}^d: -a \le x < a\right\}\,,\qquad\left|\Lambda_a\right| = \prod_{i=1}^d 2a_i\,,
\end{equation}
where inequalities between vectors hold for each component.
Given a sequence $\left(a^{(n)}\right)_{n\in\mathbb{N}}\subset\mathbb{N}_+^d$, we say that $a^{(n)}\to\infty$ if $a^{(n)}_i\to\infty$ for any $i=1,\,\ldots,\,d$.

\begin{defn}[Trace distance]
The \emph{trace distance} on $\mathcal{S}_{\mathbb{Z}^d}$ is the distance induced by the norm on $\mathfrak{U}_{\mathbb{Z}^d}$: For any $\rho,\,\sigma\in\mathcal{S}_{\mathbb{Z}^d}$,
\begin{equation}
T(\rho,\sigma) = \frac{1}{2}\sup\left\{\left|\rho(A) - \sigma(A)\right|:A\in\mathfrak{U}_{\mathbb{Z}^d},\,\left\|A\right\|_\infty\le1\right\}\,.
\end{equation}
\end{defn}

\begin{prop}\label{prop:TI}
The trace distance on $\mathcal{S}_{\mathbb{Z}^d}$ is the supremum of the trace distances between the marginal states: For any $\rho,\,\sigma\in\mathcal{S}_{\mathbb{Z}^d}$,
\begin{equation}
T(\rho,\sigma) = \frac{1}{2}\sup_{\Lambda\in\mathcal{F}_{\mathbb{Z}^d}}\left\|\rho_\Lambda - \sigma_\Lambda\right\|_1\,,
\end{equation}
where $\|\cdot\|_1$ denotes the trace norm on $\mathfrak{U}_\Lambda$ given by
\begin{equation}
\left\|A\right\|_1 = \mathrm{Tr}_\Lambda\sqrt{A^\dag A}\,,\qquad A\in\mathfrak{U}_\Lambda\,.
\end{equation}
\end{prop}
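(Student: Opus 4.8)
The plan is to show the two inequalities between $T(\rho,\sigma)$ and $\tfrac12\sup_{\Lambda}\|\rho_\Lambda-\sigma_\Lambda\|_1$ separately. For both directions the key input is the standard finite-dimensional duality identity $\frac12\|A\|_1 = \sup\{\tfrac12|\mathrm{Tr}_\Lambda[(X)A]| : X\in\mathfrak{U}_\Lambda,\ \|X\|_\infty\le1\}$ for $A=\rho_\Lambda-\sigma_\Lambda\in\mathfrak{U}_\Lambda$ self-adjoint, together with the fact (already fixed in the excerpt) that $\mathfrak{U}_{\mathbb{Z}^d}^{loc}=\bigcup_{\Lambda}\mathfrak{U}_\Lambda$ is dense in $\mathfrak{U}_{\mathbb{Z}^d}$ in the $\|\cdot\|_\infty$ norm.

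First I would prove $T(\rho,\sigma)\ge\tfrac12\sup_\Lambda\|\rho_\Lambda-\sigma_\Lambda\|_1$. Fix $\Lambda\in\mathcal{F}_{\mathbb{Z}^d}$ and fix $X\in\mathfrak{U}_\Lambda$ with $\|X\|_\infty\le1$ achieving (or approaching) the supremum in the trace-norm duality for $A=\rho_\Lambda-\sigma_\Lambda$. Under the implicit identification $\mathfrak{U}_\Lambda\subseteq\mathfrak{U}_{\mathbb{Z}^d}$ we have $\|X\|_\infty\le1$ as an element of the quasi-local algebra, and by the definition of marginal states $\rho(X)=\mathrm{Tr}_\Lambda[\rho_\Lambda X]$, $\sigma(X)=\mathrm{Tr}_\Lambda[\sigma_\Lambda X]$, so
\begin{equation}
\tfrac12\left\|\rho_\Lambda-\sigma_\Lambda\right\|_1 = \tfrac12\left|\mathrm{Tr}_\Lambda\left[(\rho_\Lambda-\sigma_\Lambda)X\right]\right| = \tfrac12\left|\rho(X)-\sigma(X)\right| \le T(\rho,\sigma).
\end{equation}
Taking the supremum over $\Lambda$ gives one inequality.

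For the reverse inequality, $T(\rho,\sigma)\le\tfrac12\sup_\Lambda\|\rho_\Lambda-\sigma_\Lambda\|_1$, I would take an arbitrary $A\in\mathfrak{U}_{\mathbb{Z}^d}$ with $\|A\|_\infty\le1$ and $\varepsilon>0$, use density to pick $A'\in\mathfrak{U}_\Lambda$ for some finite $\Lambda$ with $\|A-A'\|_\infty<\varepsilon$; one may also assume $\|A'\|_\infty\le1$ after a harmless rescaling (replacing $A'$ by $A'/\max(1,\|A'\|_\infty)$ changes it by at most $O(\varepsilon)$). Then $|\rho(A)-\sigma(A)|\le |\rho(A')-\sigma(A')| + 2\varepsilon$ since $\rho,\sigma$ are states (norm-$1$ functionals), and $|\rho(A')-\sigma(A')| = |\mathrm{Tr}_\Lambda[(\rho_\Lambda-\sigma_\Lambda)A']| \le \|\rho_\Lambda-\sigma_\Lambda\|_1\,\|A'\|_\infty \le \sup_{\Lambda'}\|\rho_{\Lambda'}-\sigma_{\Lambda'}\|_1$ by Hölder's inequality for Schatten norms. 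Letting $\varepsilon\to0$ and taking the supremum over $A$ yields $2T(\rho,\sigma)\le\sup_{\Lambda}\|\rho_\Lambda-\sigma_\Lambda\|_1$.

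The only genuinely delicate point is the density/approximation argument in the second direction — one must be careful that truncating $A$ to a strictly local operator and then renormalizing to keep $\|A'\|_\infty\le1$ only perturbs the relevant quantities by $O(\varepsilon)$; everything else is the finite-dimensional trace-norm/operator-norm duality applied marginal by marginal. I would also remark that the supremum over $\Lambda$ on the right-hand side is monotone increasing along any cofinal sequence of finite sets (since $\|\rho_{\Lambda'}-\sigma_{\Lambda'}\|_1$ is nondecreasing under enlarging $\Lambda'$, by monotonicity of trace distance under the partial trace), so it can equivalently be written as a limit over $\Lambda\uparrow\mathbb{Z}^d$.
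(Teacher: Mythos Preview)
Your proof is correct and follows essentially the same approach as the paper. The paper's proof compresses both directions into a single chain of equalities: density of $\mathfrak{U}_{\mathbb{Z}^d}^{loc}$ lets one replace the supremum over the unit ball of $\mathfrak{U}_{\mathbb{Z}^d}$ by the supremum over local operators, which is then rewritten as $\sup_{\Lambda}\sup_{A\in\mathfrak{U}_\Lambda,\,\|A\|_\infty\le1}|\mathrm{Tr}_\Lambda[(\rho_\Lambda-\sigma_\Lambda)A]|=\sup_\Lambda\|\rho_\Lambda-\sigma_\Lambda\|_1$ via the finite-dimensional trace-norm/operator-norm duality --- exactly the ingredients you use, though you spell out the approximation and rescaling step that the paper leaves implicit.
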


\begin{proof}
See \autoref{sec:propTI}.
\end{proof}

\begin{defn}[Specific entropy {\cite[Proposition 6.2.38]{bratteli2013operatorII}}]
The \emph{von Neumann entropy} of a quantum state $\rho$ acting on a finite-dimensional Hilbert space is \cite{nielsen2010quantum,wilde2017quantum,holevo2019quantum}
\begin{equation}
S(\rho) = - \mathrm{Tr}\left[\rho\ln\rho\right]\,.
\end{equation}
The \emph{specific entropy} of $\rho\in\mathcal{S}_{\mathbb{Z}^d}^I$ is the entropy per site in the limit of infinite volume:
\begin{equation}
s(\rho) = \lim_{a\to\infty}\frac{S(\rho_{\Lambda_a})}{\left|\Lambda_a\right|} = \inf_{a\in\mathbb{N}_+^d}\frac{S(\rho_{\Lambda_a})}{|\Lambda_a|}\,.
\end{equation}
\end{defn}

\begin{defn}[Specific relative entropy \cite{jaksic2022approach}]
The \emph{relative entropy} \cite{nielsen2010quantum,wilde2017quantum,holevo2019quantum} between the quantum states $\rho$ and $\sigma$ acting on a finite-dimensional Hilbert space is
\begin{equation}
S(\rho\|\sigma) = -\mathrm{Tr}\left[\rho\left(\ln\rho - \ln\sigma\right)\right]\,.
\end{equation}
The \emph{specific relative entropy} between the states $\rho,\,\sigma\in\mathcal{S}_{\mathbb{Z}^d}^I$ is the relative entropy per site in the limit of infinite volume:
\begin{equation}
s(\rho\|\sigma) = \lim_{a\to\infty}\frac{S(\rho_{\Lambda_a}\|\sigma_{\Lambda_a})}{\left|\Lambda_a\right|}\,,
\end{equation}
whenever the limit exists.
\end{defn}

\begin{rem}
$s(\rho\|\sigma)=0$ does not imply $\rho=\sigma$.
Indeed, let $\rho_\Lambda = |0\rangle\langle0|^{\otimes\Lambda}$ and $\sigma_\Lambda = \frac{1}{2}\left(|0\rangle\langle0|^{\otimes\Lambda} + |1\rangle\langle1|^{\otimes\Lambda}\right)$ for any $\Lambda\in\mathcal{F}_{\mathbb{Z}^d}$.
Then, $S(\rho_\Lambda\|\sigma_\Lambda) =\ln2$ and $s(\rho\|\sigma)=0$.
\end{rem}

\subsection{Interactions}

An interaction $\Phi$ is a collection of observables $\left(\Phi(\Lambda)\right)_{\Lambda\in\mathcal{F}_{\mathbb{Z}^d}}$, where $\Phi(\Lambda)\in\mathcal{O}_\Lambda$ for any $\Lambda\in\mathcal{F}_{\mathbb{Z}^d}$.
We can associate to $\Phi$ the formal Hamiltonian
\begin{equation}\label{eq:HPhi}
H^\Phi_{\mathbb{Z}^d} = \sum_{\Lambda\in\mathcal{F}_{\mathbb{Z}^d}}\Phi(\Lambda)\,.
\end{equation}
In general the series \eqref{eq:HPhi} does not converge not even weakly, and $H^\Phi_{\mathbb{Z}^d}$ cannot be defined as an element of $\mathfrak{U}_{\mathbb{Z}^d}$.

We can define for any $\Lambda\in\mathcal{F}_{\mathbb{Z}^d}$ the local Hamiltonian on $\Lambda$ with open boundary conditions
\begin{equation}
H_\Lambda^\Phi = \sum_{X\subseteq\Lambda}\Phi(X) \in \mathcal{O}_\Lambda\,.
\end{equation}
An interaction $\Phi$ is translation invariant if $\Phi(\Lambda+x) = \tau_x(\Phi(\Lambda))$ for any $\Lambda\in\mathcal{F}_{\mathbb{Z}^d}$ and any $x\in\mathbb{Z}^d$.
For $r>0$, we denote with $\mathcal{B}_{\mathbb{Z}^d}^r$ the set of translation-invariant interactions satisfying
\begin{equation}\label{eq:defBr}
\|\Phi\|_r = \sum_{0\in\Lambda\in\mathcal{F}_{\mathbb{Z}^d}}e^{r\left(\left|\Lambda\right|-1\right)}\left\|\Phi(\Lambda)\right\|_\infty < \infty\,.
\end{equation}
The specific energy observable of $\Phi\in\mathcal{B}_{\mathbb{Z}^d}^r$ is
\begin{equation}
E_\Phi = \sum_{0\in\Lambda\in\mathcal{F}_{\mathbb{Z}^d}}\frac{\Phi(\Lambda)}{\left|\Lambda\right|} \in \mathcal{O}_{\mathbb{Z}^d}
\end{equation}
and satisfies \cite[Proposition 6.2.39]{bratteli2013operatorII}
\begin{equation}\label{eq:EPhiN}
\lim_{a\to\infty}\frac{1}{\left|\Lambda_a\right|}\left\|H^\Phi_{\Lambda_a} - \sum_{x\in\Lambda_a}\tau_x E_\Phi\right\|_\infty = 0\,.
\end{equation}
Therefore, for any $\rho\in\mathcal{S}_{\mathbb{Z}^d}^I$ we have that $\rho(E_\Phi)$ is equal to the average energy per site of $\rho$ in the limit of infinite volume:
\begin{equation}\label{eq:EPhi}
\lim_{a\to\infty}\frac{\rho\left(H^\Phi_{\Lambda_a}\right)}{\left|\Lambda_a\right|} = \rho\left(E_\Phi\right)\,.
\end{equation}

\subsection{Gibbs states}

Let $\Phi\in\mathcal{B}^r_{\mathbb{Z}^d}$.
For any $\Lambda\in\mathcal{F}_{\mathbb{Z}^d}$, the local Gibbs state of $\Phi$ on $\Lambda$ with open boundary conditions is the Gibbs state of $H^\Phi_\Lambda$:
\begin{equation}\label{eq:Gloc}
\omega^\Phi_\Lambda = \frac{e^{-H^\Phi_\Lambda}}{\mathrm{Tr}_\Lambda e^{-H^\Phi_\Lambda}} \in \mathcal{S}_\Lambda\,.
\end{equation}
Since the temperature can always be reabsorbed in the interaction, in the whole paper we set it to one.
\begin{rem}
The states $\left(\omega^\Phi_\Lambda\right)_{\Lambda\in\mathcal{F}_{\mathbb{Z}^d}}$ defined in \eqref{eq:Gloc} are not necessarily the marginal states of a single global state $\omega\in\mathcal{S}_{\mathbb{Z}^d}$.
\end{rem}

The pressure of $\Phi\in\mathcal{B}_{\mathbb{Z}^d}^r$ is \cite[Theorem 6.2.40]{bratteli2013operatorII}
\begin{equation}\label{eq:defP}
P(\Phi) = \lim_{a\to\infty}\frac{\ln\mathrm{Tr}_{\Lambda_a} e^{-H^\Phi_{\Lambda_a}}}{\left|\Lambda_a\right|} = \sup_{\rho\in\mathcal{S}^I_{\mathbb{Z}^d}}\left(s(\rho) - \rho(E_\Phi)\right)\,.
\end{equation}
The states $\rho\in\mathcal{S}^I_{\mathbb{Z}^d}$ that achieve the supremum in \eqref{eq:defP} are called equilibrium states of $\Phi$ and satisfy the Kubo--Martin--Schwinger condition \cite{bratteli2013operatorII}.
We denote with $\mathcal{S}_{eq}(\Phi)$ the set of such states.
For any $\Phi\in\mathcal{B}^r_{\mathbb{Z}^d}$, $\mathcal{S}_{eq}(\Phi)$ is nonempty, convex and compact with respect to the trace distance.

\section{The quantum \texorpdfstring{$W_1$}{W\_1} distance}\label{sec:W1}
Ref. \cite{de2021quantum} introduced the following generalization of the Wasserstein distance of order $1$ and of the Lipschitz constant to quantum systems made by a finite number of spins.
Since Ref. \cite{de2021quantum} does not require the lattice structure of $\mathbb{Z}^d$, here $\Lambda$ denotes a fixed generic finite set.
The quantum $W_1$ distance is based on the notion of neighboring quantum states.
The states $\rho,\,\sigma\in\mathcal{S}_\Lambda$ are neighboring if there exists $x\in\Lambda$ such that $\mathrm{Tr}_x\rho = \mathrm{Tr}_x\sigma$.
The quantum $W_1$ norm is the maximum norm that assigns distance at most $1$ to each couple of neighboring states:
\begin{defn}[$W_1$ norm]\label{defn:W1n}
Let $\Lambda$ be a finite set.
We define for any $\Delta\in\mathcal{O}_\Lambda^T$
\begin{equation}
\left\|\Delta\right\|_{W_1} = \frac{1}{2}\min\left\{\sum_{x\in\Lambda}\left\|\Delta^{(x)}\right\|_1 : \Delta^{(x)}\in\mathcal{O}^T_\Lambda\,,\;\mathrm{Tr}_x \Delta^{(x)} = 0\,,\; \sum_{x\in\Lambda}\Delta^{(x)} = \Delta\right\}\,.
\end{equation}
\end{defn}
The \emph{quantum $W_1$ distance} on $\mathcal{S}_\Lambda$ is the distance induced by the quantum $W_1$ norm: For any $\rho,\,\sigma\in\mathcal{S}_\Lambda$,
\begin{equation}
W_1(\rho,\sigma) = \left\|\rho - \sigma\right\|_{W_1}\,.
\end{equation}

\begin{defn}[Lipschitz constant]\label{defn:partial}
Let $\Lambda$ be a finite set.
We define the dependence of $H\in\mathcal{O}_\Lambda$ on the site $x\in\Lambda$ as
\begin{equation}\label{eq:partialxH}
\partial_x H = 2\min_{H_{\Lambda\setminus x}\in\mathcal{O}_{\Lambda\setminus x}}\left\|H - H_{\Lambda\setminus x}\right\|_\infty\,.
\end{equation}
The \emph{quantum Lipschitz constant} of $H\in\mathcal{O}_\Lambda$ is
\begin{equation}
\left\|H\right\|_L = \max_{x\in\Lambda}\partial_xH\,.
\end{equation}
\end{defn}

\begin{prop}[{\cite[Proposition 8]{de2021quantum}}]\label{prop:duality}
The quantum $W_1$ norm and the quantum Lipschitz constant are mutually dual, \emph{i.e.}, for any $\Delta\in\mathcal{O}_\Lambda^T$ we have
\begin{equation}
\left\|\Delta\right\|_{W_1} = \max\left\{\mathrm{Tr}_\Lambda\left[\Delta\,H\right]:H\in\mathcal{O}_\Lambda,\,\left\|H\right\|_L\le1\right\}\,.
\end{equation}
\end{prop}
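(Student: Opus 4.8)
The plan is to recognise $\|\cdot\|_{W_1}$ as a norm on the finite-dimensional real space $\mathcal{O}_\Lambda^T$, compute its dual norm for the Hilbert--Schmidt pairing $\langle A,B\rangle=\mathrm{Tr}_\Lambda[AB]$, show this dual norm equals the Lipschitz constant, and then invoke finite-dimensional reflexivity. First I would make two reductions. Both sides of the asserted identity are unchanged under $H\mapsto H+c\,\mathbb{I}$ (the left side because $\Delta$ is traceless, the right side because $\mathbb{I}\in\mathcal{O}_{\Lambda\setminus x}$, so $\partial_x(H+c\mathbb{I})=\partial_xH$), hence it is enough to work with $H\in\mathcal{O}_\Lambda^T$; and on $\mathcal{O}_\Lambda^T$ the Lipschitz constant is a genuine norm (it vanishes only on multiples of $\mathbb{I}$), so its unit ball is compact and the ``$\max$'' in the statement is legitimate. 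For each $x\in\Lambda$ set $\mathcal{K}_x=\{A\in\mathcal{O}_\Lambda:\mathrm{Tr}_xA=0\}$; I would record the two elementary facts that $\mathcal{K}_x^\perp=\mathbb{I}_x\otimes\mathcal{O}_{\Lambda\setminus x}$ inside $\mathcal{O}_\Lambda$, and that $\sum_{x\in\Lambda}\mathcal{K}_x=\mathcal{O}_\Lambda^T$ (an operator lying in every $\mathcal{K}_x^\perp$ depends on no site, so is a multiple of $\mathbb{I}$). The latter shows $\|\cdot\|_{W_1}$ is finite on $\mathcal{O}_\Lambda^T$, hence a norm.

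Next I would identify the unit ball. Straight from \autoref{defn:W1n}, $\|\Delta\|_{W_1}\le1$ holds iff $\Delta$ admits a decomposition $\Delta=\sum_x\Delta^{(x)}$ with $\Delta^{(x)}\in\mathcal{K}_x$ and $\tfrac12\sum_x\|\Delta^{(x)}\|_1\le1$, and such a decomposition exhibits $\Delta$ as a convex combination of the operators $2\Delta^{(x)}/\|\Delta^{(x)}\|_1$ together with $0$; conversely every convex combination of elements of $\bigcup_x\{2A:A\in\mathcal{K}_x,\ \|A\|_1\le1\}$ arises this way. So the $W_1$ unit ball is the convex hull of these sets, and therefore $\|H\|_{W_1}^*=\max_{x\in\Lambda}2\,c_x(H)$ with $c_x(H)=\sup\{\mathrm{Tr}_\Lambda[AH]:A\in\mathcal{K}_x,\ \|A\|_1\le1\}$. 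The remaining task is to prove $2\,c_x(H)=\partial_xH$. One inequality is immediate: for $A\in\mathcal{K}_x$ and any $H_{\Lambda\setminus x}\in\mathcal{O}_{\Lambda\setminus x}$ we have $\mathrm{Tr}_\Lambda[AH_{\Lambda\setminus x}]=\mathrm{Tr}_{\Lambda\setminus x}[(\mathrm{Tr}_xA)H_{\Lambda\setminus x}]=0$, whence $\mathrm{Tr}_\Lambda[AH]=\mathrm{Tr}_\Lambda[A(H-H_{\Lambda\setminus x})]\le\|A\|_1\,\|H-H_{\Lambda\setminus x}\|_\infty$; minimising over $H_{\Lambda\setminus x}$ and using \autoref{defn:partial} gives $c_x(H)\le\tfrac12\partial_xH$.

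The reverse inequality $c_x(H)\ge\tfrac12\partial_xH$ is the crux, and it is where a duality argument is unavoidable: $c_x(H)$ is the norm of the linear functional $A\mapsto\mathrm{Tr}_\Lambda[AH]$ on the normed space $(\mathcal{K}_x,\|\cdot\|_1)$, and by the Hahn--Banach theorem (elementary in finite dimension) it extends to $(\mathcal{O}_\Lambda,\|\cdot\|_1)$ without increasing its norm; since the dual of $(\mathcal{O}_\Lambda,\|\cdot\|_1)$ is $(\mathcal{O}_\Lambda,\|\cdot\|_\infty)$, the extension has the form $A\mapsto\mathrm{Tr}_\Lambda[AB]$ with $\|B\|_\infty=c_x(H)$ and $B-H\in\mathcal{K}_x^\perp=\mathbb{I}_x\otimes\mathcal{O}_{\Lambda\setminus x}$, so $H-B$ is an admissible $H_{\Lambda\setminus x}$ and $\tfrac12\partial_xH=\min_{H_{\Lambda\setminus x}}\|H-H_{\Lambda\setminus x}\|_\infty\le\|H-(H-B)\|_\infty=c_x(H)$. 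Hence $\|H\|_{W_1}^*=\max_x\partial_xH=\|H\|_L$, and finite-dimensional reflexivity gives $\|\Delta\|_{W_1}=\max\{\mathrm{Tr}_\Lambda[\Delta H]:H\in\mathcal{O}_\Lambda^T,\ \|H\|_L\le1\}$, which by the first reduction is the claimed formula.

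I expect this last step — exhibiting the optimal $H$ — to be the only genuine difficulty; equivalently, one can present the whole statement as strong duality for the semidefinite program that defines $\|\Delta\|_{W_1}$, in which case the one thing that needs checking is primal feasibility, i.e.\ that $\sum_x\mathcal{K}_x=\mathcal{O}_\Lambda^T$. Everything else — the convex-hull description of the unit ball, the identity $\mathcal{K}_x^\perp=\mathbb{I}_x\otimes\mathcal{O}_{\Lambda\setminus x}$, and the easy bound $c_x(H)\le\tfrac12\partial_xH$ — is routine.
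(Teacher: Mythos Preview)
The paper does not give its own proof of this statement; \autoref{prop:duality} is quoted from \cite[Proposition 8]{de2021quantum} and used as a black box. There is therefore nothing in the present paper to compare your argument against.

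That said, your argument is correct and is essentially the standard convex-duality proof. The identification $\mathcal{K}_x^\perp=\mathbb{I}_x\otimes\mathcal{O}_{\Lambda\setminus x}$ is right (by a dimension count, once the easy inclusion is observed), and so is $\sum_x\mathcal{K}_x=\mathcal{O}_\Lambda^T$; the description of the $W_1$ unit ball as the convex hull of $\bigcup_x\{2A:A\in\mathcal{K}_x,\ \|A\|_1\le1\}$ is exactly what \autoref{defn:W1n} says. The one nontrivial step, $c_x(H)\ge\tfrac12\,\partial_xH$, is handled cleanly by Hahn--Banach; the only point worth making explicit is that you are working in the \emph{real} normed space $(\mathcal{O}_\Lambda,\|\cdot\|_1)$, and its dual under the pairing $(A,B)\mapsto\mathrm{Tr}_\Lambda[AB]$ is indeed $(\mathcal{O}_\Lambda,\|\cdot\|_\infty)$ (the supremum $\sup\{\mathrm{Tr}_\Lambda[AB]:A\in\mathcal{O}_\Lambda,\ \|A\|_1\le1\}$ is attained at a rank-one projector onto an eigenvector of $B$ with extremal eigenvalue, giving $\|B\|_\infty$). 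With that, the Hahn--Banach extension $B$ is automatically self-adjoint and the rest follows. Your remark that the whole statement is strong duality for a finite-dimensional linear program, with feasibility equivalent to $\sum_x\mathcal{K}_x=\mathcal{O}_\Lambda^T$, is an accurate summary.
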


\section{The quantum \texorpdfstring{$W_1$}{W\_1} distance for infinite lattices}\label{sec:w1}
In this section we extend the quantum Wasserstein distance of order $1$ of Ref. \cite{de2021quantum} to the quantum states of the spin lattice $\mathbb{Z}^d$.
As for the entropy and the relative entropy, we define a specific quantum $W_1$ distance, which we denote with $w_1$, as the $W_1$ distance per site in the limit of infinite volume:
\begin{defn}[Specific quantum $W_1$ distance]\label{def:w1}
For any $\rho,\,\sigma\in\mathcal{S}_{\mathbb{Z}^d}^I$ we define
\begin{equation}\label{eq:w1def}
w_1(\rho,\sigma) = \lim_{a\to\infty}\frac{\left\|\rho_{\Lambda_a} - \sigma_{\Lambda_a}\right\|_{W_1}}{\left|\Lambda_a\right|}\,.
\end{equation}
\end{defn}

\begin{rem}
We define the specific quantum $W_1$ distance only for translation-invariant states since the limit \eqref{eq:w1def} may not exist for generic states in $\mathcal{S}_{\mathbb{Z}^d}$.
\end{rem}

\begin{prop}\label{prop:w1sup}
The limit in \eqref{eq:w1def} exists for any $\rho,\,\sigma\in\mathcal{S}_{\mathbb{Z}^d}^I$ and is equal to
\begin{equation}\label{eq:w1sup}
w_1(\rho,\sigma) = \sup_{a\in\mathbb{N}_+^d}\frac{\left\|\rho_{\Lambda_a} - \sigma_{\Lambda_a}\right\|_{W_1}}{|\Lambda_a|}\,.
\end{equation}
Moreover, $w_1$ is a distance on $\mathcal{S}_{\mathbb{Z}^d}^I$.
\end{prop}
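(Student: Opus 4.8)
The plan is to reduce everything to a \emph{superadditivity} property of the $W_1$ norm under tensor decomposition and then run a multidimensional Fekete argument. The superadditivity statement is: if the finite set $\Lambda$ is partitioned as $\Lambda = \Lambda_1\sqcup\dots\sqcup\Lambda_k$ and $\Delta\in\mathcal{O}_\Lambda^T$ has marginals $\Delta_{\Lambda_j}=\mathrm{Tr}_{\Lambda\setminus\Lambda_j}\Delta$, then $\|\Delta\|_{W_1}\ge\sum_{j=1}^k\|\Delta_{\Lambda_j}\|_{W_1}$. I would obtain this from the duality of \autoref{prop:duality} (it is also among the properties recalled in \autoref{app:W1}): pick for each $j$ an optimal $H_j\in\mathcal{O}_{\Lambda_j}$ with $\|H_j\|_L\le1$ and $\mathrm{Tr}_{\Lambda_j}[\Delta_{\Lambda_j}H_j]=\|\Delta_{\Lambda_j}\|_{W_1}$, and set $H=\sum_j H_j\in\mathcal{O}_\Lambda$. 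For $x\in\Lambda_j$ the summands $H_{j'}$ with $j'\ne j$ lie in $\mathcal{O}_{\Lambda\setminus x}$, so a minimizer for $\partial_x H_j$ can be enlarged by $\sum_{j'\ne j}H_{j'}$ to show $\partial_x H\le\partial_x H_j\le1$; hence $\|H\|_L\le1$, and since $\mathrm{Tr}_\Lambda[\Delta H]=\sum_j\mathrm{Tr}_{\Lambda_j}[\Delta_{\Lambda_j}H_j]$, duality gives the claim. I would also record the elementary bound $\|\rho_\Lambda-\sigma_\Lambda\|_{W_1}\le\frac{|\Lambda|}{2}\|\rho_\Lambda-\sigma_\Lambda\|_1\le|\Lambda|$ from \autoref{app:W1}, so that $\|\rho_{\Lambda_a}-\sigma_{\Lambda_a}\|_{W_1}/|\Lambda_a|\le1$ for every $a$.

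Write $f(a)=\|\rho_{\Lambda_a}-\sigma_{\Lambda_a}\|_{W_1}$. By translation invariance, for any $x\in\mathbb{Z}^d$ the states $\rho_{\Lambda_a+x}$, $\sigma_{\Lambda_a+x}$ are the images of $\rho_{\Lambda_a}$, $\sigma_{\Lambda_a}$ under the relabeling isomorphism $\tau_x$, which preserves the $W_1$ norm, so $\|\rho_{\Lambda_a+x}-\sigma_{\Lambda_a+x}\|_{W_1}=f(a)$. Fix $a\in\mathbb{N}_+^d$. For any $b\in\mathbb{N}_+^d$, the box $\Lambda_b$ contains $N(a,b):=\prod_{i=1}^d\lfloor b_i/a_i\rfloor$ pairwise disjoint translates of $\Lambda_a$; applying superadditivity to this (partial) partition of $\Lambda_b$ and discarding the remainder gives $f(b)\ge N(a,b)\,f(a)$. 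Dividing by $|\Lambda_b|=\prod_i 2b_i$,
\[
\frac{f(b)}{|\Lambda_b|}\;\ge\;\Bigl(\prod_{i=1}^d\frac{a_i\lfloor b_i/a_i\rfloor}{b_i}\Bigr)\frac{f(a)}{|\Lambda_a|}\,,
\]
and since $a_i\lfloor b_i/a_i\rfloor/b_i\to1$ as $b\to\infty$, we get $\liminf_{b\to\infty}f(b)/|\Lambda_b|\ge f(a)/|\Lambda_a|$. Taking the supremum over $a$ and combining with the trivial bound $\limsup_{a\to\infty}f(a)/|\Lambda_a|\le\sup_a f(a)/|\Lambda_a|$ shows that the limit in \eqref{eq:w1def} exists and equals \eqref{eq:w1sup}; the bound by $1$ makes it finite.

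For the distance axioms: symmetry is inherited from the $W_1$ norm; $w_1(\rho,\rho)=0$ is immediate; the triangle inequality follows by passing to the limit (which exists by the first part applied to each pair) in $\|\rho_{\Lambda_a}-\tau_{\Lambda_a}\|_{W_1}\le\|\rho_{\Lambda_a}-\sigma_{\Lambda_a}\|_{W_1}+\|\sigma_{\Lambda_a}-\tau_{\Lambda_a}\|_{W_1}$ and dividing by $|\Lambda_a|$. Faithfulness uses the supremum formula \eqref{eq:w1sup}: if $w_1(\rho,\sigma)=0$ then $\|\rho_{\Lambda_a}-\sigma_{\Lambda_a}\|_{W_1}=0$ for every $a$, hence $\rho_{\Lambda_a}=\sigma_{\Lambda_a}$ because $\|\cdot\|_{W_1}$ is a genuine norm on $\mathcal{O}_{\Lambda_a}^T$; since every $\Lambda\in\mathcal{F}_{\mathbb{Z}^d}$ is contained in some $\Lambda_a$ and marginals are consistent, $\rho_\Lambda=\sigma_\Lambda$ for all finite $\Lambda$, so $\rho=\sigma$. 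The only delicate point is the superadditivity step, and the thing to be careful about there is precisely the behaviour of $\|\cdot\|_L$ under adding operators supported on disjoint regions; the rest is bookkeeping with the tiling.
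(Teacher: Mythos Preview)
Your proof is correct and follows essentially the same route as the paper: superadditivity of the $W_1$ norm under disjoint decomposition (the paper cites this as \autoref{prop:W1SA}, while you reprove it via duality) combined with translation invariance, then a Fekete-type argument. The only cosmetic difference is that the paper shows coordinate-wise superadditivity of $f(a)$ via the exact splitting $\Lambda_b=(\Lambda_a-ke_i)\cup(\Lambda_c+a_ie_i)$ and then invokes the multidimensional Fekete lemma (\autoref{lem:Fekete}), whereas you inline that lemma by tiling $\Lambda_b$ with $\prod_i\lfloor b_i/a_i\rfloor$ translates of $\Lambda_a$ and passing to the $\liminf$ directly; the distance-axiom part is identical.
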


\begin{proof}
For any $a\in\mathbb{N}_+^d$, let
\begin{equation}
f(a) = \left\|\rho_{\Lambda_a} - \sigma_{\Lambda_a}\right\|_{W_1}\,.
\end{equation}
Given $a\in\mathbb{N}_+^d$, $k\in\mathbb{N}$ and $i\in\left\{1,\,\ldots,\,d\right\}$, let
\begin{equation}
b=\left(a_1,\,\ldots,\,a_i + k,\,\ldots,\,a_d\right)\,,\qquad c=\left(a_1,\,\ldots,\,k,\,\ldots,\,a_d\right)\,.
\end{equation}
We have
\begin{equation}
\Lambda_b = \left(\Lambda_a - k\,e_i\right)\cup\left(\Lambda_c + a_i\,e_i\right)\,,
\end{equation}
where $e_i$ is the $i$-th vector of the canonical basis of $\mathbb{R}^d$.
Then, we get from \autoref{prop:W1SA} and from the translation invariance of $\rho$ and $\sigma$
\begin{equation}
f(b) \ge \left\|\rho_{\Lambda_a - k e_i} - \sigma_{\Lambda_a - k e_i}\right\|_{W_1} + \left\|\rho_{\Lambda_c + a_i e_i} - \sigma_{\Lambda_c + a_i e_i}\right\|_{W_1} = f(a) + f(c)\,.
\end{equation}
Then, $f$ is superadditive in each variable, and we have from the multidimensional Fekete's lemma \autoref{lem:Fekete}
\begin{equation}
w_1(\rho,\sigma) = \lim_{a\to\infty}\frac{f(a)}{\left|\Lambda_a\right|} = \sup_{a\in\mathbb{N}_+^d}\frac{f(a)}{\left|\Lambda_a\right|}\,.
\end{equation}

The nonnegativity and the triangle inequality for $w_1$ follow from the nonnegativity and the triangle inequality for $W_1$, respectively.
Let $w_1(\rho,\sigma) = 0$.
Then, \eqref{eq:w1sup} implies
\begin{equation}
\left\|\rho_{\Lambda_a} - \sigma_{\Lambda_a}\right\|_{W_1} = 0 \qquad\forall\,a\in\mathbb{N}_+^d\,,
\end{equation}
\emph{i.e.}, $\rho_{\Lambda_a} = \sigma_{\Lambda_a}$ for any $a\in\mathbb{N}_+^d$.
Let $\Lambda\in\mathcal{F}_{\mathbb{Z}^d}$, and let us choose $a\in\mathbb{N}_+^d$ such that $\Lambda\subseteq\Lambda_a$.
Then,
\begin{equation}
\rho_\Lambda = \mathrm{Tr}_{\Lambda_a\setminus\Lambda}\rho_{\Lambda_a} = \mathrm{Tr}_{\Lambda_a\setminus\Lambda}\sigma_{\Lambda_a} = \sigma_\Lambda\,,
\end{equation}
hence $\rho = \sigma$.
\end{proof}

The specific quantum $W_1$ distance is always upper bounded by the trace distance:
\begin{prop}
For any $\rho,\,\sigma\in\mathcal{S}_{\mathbb{Z}^d}^I$ we have
\begin{equation}
w_1(\rho,\sigma) \le T(\rho,\sigma)\,.
\end{equation}
Moreover, for any $a\in\mathbb{N}_+^d$ we have
\begin{equation}
\left\|\rho_{\Lambda_a} - \sigma_{\Lambda_a}\right\|_1 \le 2\left|\Lambda_a\right|w_1(\rho,\sigma)\,.
\end{equation}
\end{prop}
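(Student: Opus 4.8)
The plan is to obtain both bounds from the elementary two-sided comparison between the quantum $W_1$ norm and the trace norm on a finite spin system,
\begin{equation}\label{eq:W1tracecomp}
\tfrac{1}{2}\left\|\Delta\right\|_1\;\le\;\left\|\Delta\right\|_{W_1}\;\le\;\tfrac{\left|\Lambda\right|}{2}\left\|\Delta\right\|_1\,,\qquad\Delta\in\mathcal{O}^T_\Lambda\,,
\end{equation}
together with \autoref{prop:w1sup} and \autoref{prop:TI}. The left inequality in \eqref{eq:W1tracecomp} is immediate: for any admissible decomposition $\Delta=\sum_{x\in\Lambda}\Delta^{(x)}$ as in \autoref{defn:W1n}, the triangle inequality for $\|\cdot\|_1$ gives $\sum_{x\in\Lambda}\|\Delta^{(x)}\|_1\ge\|\Delta\|_1$, and taking the minimum yields $\|\Delta\|_{W_1}\ge\tfrac{1}{2}\|\Delta\|_1$. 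The right inequality is the one nontrivial input; it is recalled in \autoref{app:W1} and can be derived from the duality of \autoref{prop:duality} together with the fact that $\|H\|_L\le1$ forces the spectrum of $H$ to lie in an interval of length at most $|\Lambda|$ (see \cite{de2021quantum}).

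For the second bound I would fix $a\in\mathbb{N}_+^d$ and apply the left inequality of \eqref{eq:W1tracecomp} to $\Delta=\rho_{\Lambda_a}-\sigma_{\Lambda_a}\in\mathcal{O}^T_{\Lambda_a}$, obtaining $\tfrac{1}{2}\|\rho_{\Lambda_a}-\sigma_{\Lambda_a}\|_1\le\|\rho_{\Lambda_a}-\sigma_{\Lambda_a}\|_{W_1}$. By the supremum formula \eqref{eq:w1sup} of \autoref{prop:w1sup} we have $\|\rho_{\Lambda_a}-\sigma_{\Lambda_a}\|_{W_1}\le|\Lambda_a|\,w_1(\rho,\sigma)$, and combining the two estimates and multiplying by $2$ gives $\|\rho_{\Lambda_a}-\sigma_{\Lambda_a}\|_1\le2|\Lambda_a|\,w_1(\rho,\sigma)$.

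For the first bound I would instead apply the right inequality of \eqref{eq:W1tracecomp} with the same $\Delta$ and $|\Lambda|=|\Lambda_a|$, getting $\tfrac{1}{|\Lambda_a|}\|\rho_{\Lambda_a}-\sigma_{\Lambda_a}\|_{W_1}\le\tfrac{1}{2}\|\rho_{\Lambda_a}-\sigma_{\Lambda_a}\|_1$. Since $\Lambda_a\in\mathcal{F}_{\mathbb{Z}^d}$, the right-hand side is at most $\tfrac{1}{2}\sup_{\Lambda\in\mathcal{F}_{\mathbb{Z}^d}}\|\rho_\Lambda-\sigma_\Lambda\|_1=T(\rho,\sigma)$ by \autoref{prop:TI}. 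Taking the supremum over $a\in\mathbb{N}_+^d$ and using \eqref{eq:w1sup} once more yields $w_1(\rho,\sigma)\le T(\rho,\sigma)$.

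The argument is essentially bookkeeping, so I do not expect a genuine obstacle; the only substantive input is the right-hand inequality of \eqref{eq:W1tracecomp}, which is quoted from \cite{de2021quantum} rather than reproved. One minor point worth flagging is that $T$ is a supremum over all of $\mathcal{F}_{\mathbb{Z}^d}$ while $w_1$ only involves the boxes $\Lambda_a$; this is harmless, since the boxes form a subfamily of $\mathcal{F}_{\mathbb{Z}^d}$, so enlarging the family only helps in the direction $w_1\le T$ that we need.
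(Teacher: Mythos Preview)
Your proposal is correct and follows essentially the same approach as the paper: both arguments combine the two-sided comparison \eqref{eq:W1tracecomp} (which is exactly \autoref{prop:W1T}) with the supremum formula of \autoref{prop:w1sup} and with \autoref{prop:TI}, in the same way for each of the two inequalities. The only difference is cosmetic—you supply a short justification of the left inequality in \eqref{eq:W1tracecomp}, whereas the paper simply cites \autoref{prop:W1T}.
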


\begin{proof}
We have
\begin{equation}
w_1(\rho,\sigma) = \sup_{a\in\mathbb{N}_+^d}\frac{\left\|\rho_{\Lambda_a} - \sigma_{\Lambda_a}\right\|_{W_1}}{|\Lambda_a|} \overset{\mathrm{(a)}}{\le} \sup_{a\in\mathbb{N}_+^d}\frac{\left\|\rho_{\Lambda_a} - \sigma_{\Lambda_a}\right\|_1}{2} \overset{\mathrm{(b)}}{\le} T(\rho,\sigma)\,,
\end{equation}
where (a) follows from \autoref{prop:W1T} and (b) follows from \autoref{prop:TI}.
From \autoref{prop:W1T}, we have for any $a\in\mathbb{N}_+^d$
\begin{equation}
w_1(\rho,\sigma) \ge \frac{\left\|\rho_{\Lambda_a} - \sigma_{\Lambda_a}\right\|_{W_1}}{|\Lambda_a|} \ge \frac{\left\|\rho_{\Lambda_a} - \sigma_{\Lambda_a}\right\|_{1}}{2\left|\Lambda_a\right|}\,.
\end{equation}
The claim follows.
\end{proof}

The specific quantum $W_1$ distance admits an equivalent definition, which directly generalizes \autoref{defn:W1n} to infinite lattices and does not require the limit over hypercubes.

\begin{defn}\label{def:alternative-wass}
We define for any $\sigma$, $\rho \in \mathcal{S}_{\mathbb{Z}^d}^I$,
\begin{align}\label{eq:definition-w1-distance}
\left\| \rho - \sigma \right\|_{w_1} =  \inf\Bigg\{ c\ge0 : \exists \, \rho', \, \sigma' \in \mathcal{S}_{\mathbb{Z}^d} :\; &\rho'_{\mathbb{Z}^d \setminus 0} = \sigma '_{\mathbb{Z}^d\setminus 0}\,,\nonumber\\
&\rho_\Lambda - \sigma_\Lambda = c \sum_{x \in \Lambda } (\tau_x\rho')_{\Lambda} - (\tau_x\sigma')_{\Lambda}\quad\forall \, \Lambda \in \mathcal{F}_{\mathbb{Z}^d}\Bigg\}\,.
 \end{align}
 \end{defn}
Let us collect some basic properties of the quantity defined above.

\begin{prop}
The infimum in \eqref{eq:definition-w1-distance} is attained for some $c \ge 0$, $\rho', \sigma'\in \mathcal{S}_{\mathbb{Z}^d}$. Moreover, given sequences $(\rho_n)_{n \in \mathbb{N}}$, $(\sigma_n)_{n \in \mathbb{N}} \subseteq \mathcal{S}_{\mathbb{Z}^d}^I$ weakly converging respectively towards $\rho$ and $\sigma$, then
\begin{equation}\label{eq:lsc}
\| \rho - \sigma\|_{w_1} \le \liminf_{ n \to \infty} \| \rho_n - \sigma_n\|_{w_1}\, .
\end{equation}
\end{prop}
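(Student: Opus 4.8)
The plan is to deduce both statements from weak compactness of the state space together with weak closedness of the defining constraints. Recall that $\mathcal{S}_{\mathbb{Z}^d}$, being a weakly closed subset of the unit ball of the dual of the unital $C^*$-algebra $\mathfrak{U}_{\mathbb{Z}^d}$, is compact in the weak topology (Banach--Alaoglu), and that $\mathcal{S}_{\mathbb{Z}^d}^I$ is weakly closed. For a fixed pair $(\rho,\sigma)\in\mathcal{S}_{\mathbb{Z}^d}^I\times\mathcal{S}_{\mathbb{Z}^d}^I$ call a triple $(c,\rho',\sigma')\in[0,\infty)\times\mathcal{S}_{\mathbb{Z}^d}\times\mathcal{S}_{\mathbb{Z}^d}$ \emph{feasible} if it satisfies the conditions in \eqref{eq:definition-w1-distance}. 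The structural fact I will use is that the feasible set is closed under the joint convergence $c_n\to c$ together with $\rho'_n\to\rho'$, $\sigma'_n\to\sigma'$ weakly. Indeed, $\rho'_{\mathbb{Z}^d\setminus0}=\sigma'_{\mathbb{Z}^d\setminus0}$ is the family of scalar identities $\rho'(A)=\sigma'(A)$, $A\in\mathfrak{U}_{\mathbb{Z}^d\setminus0}$, which pass to weak limits; and for each fixed $\Lambda\in\mathcal{F}_{\mathbb{Z}^d}$ and $A\in\mathfrak{U}_\Lambda$ the defining constraint reads $\rho_\Lambda(A)-\sigma_\Lambda(A)=c\sum_{x\in\Lambda}(\rho'(\tau_{-x}A)-\sigma'(\tau_{-x}A))$, a \emph{finite} sum of evaluations of $\rho',\sigma'$ on the fixed local elements $\tau_{-x}A\in\mathfrak{U}_{\Lambda-x}$, so it too passes to the limit once $c_n\to c$.

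For attainment the only nontrivial point is that the feasible set is nonempty. I would prove this by the finite-volume approximation that also underlies the identity $\|\rho-\sigma\|_{w_1}=w_1(\rho,\sigma)$: for each box $\Lambda_a$ take an optimal decomposition $\rho_{\Lambda_a}-\sigma_{\Lambda_a}=\sum_{y\in\Lambda_a}\Delta^{(y)}$ as in \autoref{defn:W1n}, write each $\Delta^{(y)}$ (traceless, with $\mathrm{Tr}_y\Delta^{(y)}=0$) by its Jordan decomposition as $\tfrac12\|\Delta^{(y)}\|_1(\rho^{(y)}-\sigma^{(y)})$ with $\rho^{(y)},\sigma^{(y)}\in\mathcal{S}_{\Lambda_a}$ agreeing on $\Lambda_a\setminus y$, translate each such pair to the origin, average over $y$ with weights $\tfrac12\|\Delta^{(y)}\|_1$, and extend to $\mathcal{S}_{\mathbb{Z}^d}$ by tensoring with a fixed translation-invariant product state; the translation-averaging argument available for the amenable group $\mathbb{Z}^d$ shows that, with $c_a=\|\rho_{\Lambda_a}-\sigma_{\Lambda_a}\|_{W_1}/|\Lambda_a|$, the resulting states $\rho'_a,\sigma'_a$ agree off the origin and the defect in the constraint is concentrated near $\partial\Lambda_a$, hence vanishes against every fixed local observable as $a\to\infty$. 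Since by \autoref{prop:w1sup} the bounded sequence $(c_a)$ converges to $w_1(\rho,\sigma)<\infty$, any weak limit point of $(c_a,\rho'_a,\sigma'_a)$ is, by the closedness above, a genuine feasible triple (if $\|\cdot\|_{w_1}=w_1$ is taken as already proven, nonemptiness is immediate). Granting nonemptiness, let $w^*\ge0$ be the infimum in \eqref{eq:definition-w1-distance}, which is then finite; choose feasible triples $(c_n,\rho'_n,\sigma'_n)$ with $c_n\to w^*$, pass to a subsequence along which $\rho'_n,\sigma'_n$ converge weakly, and conclude from closedness that the limiting triple $(w^*,\rho',\sigma')$ is feasible, i.e. the infimum is attained.

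For the lower semicontinuity \eqref{eq:lsc}, put $L=\liminf_{n\to\infty}\|\rho_n-\sigma_n\|_{w_1}$; if $L=\infty$ there is nothing to prove, so assume $L<\infty$ and pass to a subsequence along which $\|\rho_n-\sigma_n\|_{w_1}\to L$. Since $\mathcal{S}_{\mathbb{Z}^d}^I$ is weakly closed, $\rho,\sigma\in\mathcal{S}_{\mathbb{Z}^d}^I$, so $\|\rho-\sigma\|_{w_1}$ is defined. By the attainment part, for each $n$ there is a feasible triple $(\|\rho_n-\sigma_n\|_{w_1},\rho'_n,\sigma'_n)$ for $(\rho_n,\sigma_n)$; after a further subsequence $\rho'_n\to\rho'$ and $\sigma'_n\to\sigma'$ weakly. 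Passing to the limit in the constraints exactly as in the first paragraph — now also using $\rho_n\to\rho$, $\sigma_n\to\sigma$ weakly on the left-hand side $(\rho_n)_\Lambda(A)-(\sigma_n)_\Lambda(A)$ and $\|\rho_n-\sigma_n\|_{w_1}\to L$ on the right-hand side — shows that $(L,\rho',\sigma')$ is feasible for $(\rho,\sigma)$, whence $\|\rho-\sigma\|_{w_1}\le L$.

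The crux is thus the nonemptiness step: turning the non-covariant optimal finite-volume decompositions into a single ``origin perturbation'' $\rho'-\sigma'$ whose translates reconstruct $\rho-\sigma$ requires the Følner averaging and the estimate that the boundary defect is $o(|\Lambda_a|)$ and invisible in the weak limit — precisely the ingredient needed to prove that \eqref{eq:definition-w1-distance} computes $w_1$. Once nonemptiness is in hand, both attainment and lower semicontinuity are the routine ``weakly compact domain plus weakly closed constraints'' argument.
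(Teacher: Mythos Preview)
Your core argument---weak(-$*$) compactness of $\mathcal{S}_{\mathbb{Z}^d}$ together with weak closedness of the constraints, then extract convergent subsequences---is exactly the paper's proof, which dispatches both claims in three sentences. Where you go further is in flagging nonemptiness of the feasible set as a genuine issue: the paper's proof simply \emph{assumes} a minimizing sequence exists and does not address finiteness of the infimum at this stage. You are right that this is not free; in the paper's logical order, finiteness is only secured \emph{after} this proposition, once the periodic-approximation construction (\autoref{prop:periodic}) produces explicit feasible triples for the approximants $\tilde\rho^a,\tilde\sigma^a$, and the lower-semicontinuity half of the present proposition is then invoked in \autoref{prop:ge} to transfer the bound to $\rho,\sigma$. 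So the paper effectively treats attainment as conditional on finiteness and closes the loop later. Your sketched nonemptiness argument---translate optimal finite-volume decompositions to the origin, extend, and take weak limits, with the error localized at $\partial\Lambda_a$---is morally the same mechanism as \autoref{prop:periodic}, and the part you call ``the crux'' is indeed where all the work lies; the paper's careful telescoping via the sets $R_d$ shows how delicate that step is. In short: your proof of the proposition proper is the paper's proof with more detail, and your added nonemptiness discussion is a correct observation about a point the paper defers rather than a new ingredient.
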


\begin{proof}
Both statements follow from the weak sequential compactness of $\mathcal{S}_{\mathbb{Z}^d}$, together with the fact that for every $\Lambda \in \mathcal{F}_{\mathbb{Z}^d}$, the restriction map on states $\rho \mapsto \rho_{\Lambda}$ is weakly continuous. Considering a sequence $(c_n,\rho'_n,\sigma_n')_{n \in \mathbb{N}}$ such that $\lim_n c_n = \| \rho - \sigma\|_{w_1}$ and, using compactness to extract converging subsequences, assume that $\lim_n \rho_n' = \rho'$, $\lim_n \sigma_n' = \sigma'$. By continuity we deduce that $c$, $\rho'$, $\sigma'$ satisfy the conditions in \eqref{eq:definition-w1-distance}, hence they are minimizers. A similar argument gives \eqref{eq:lsc}.
\end{proof}

The rest of this section is devoted to showing the equivalence between \autoref{def:w1} and \autoref{def:alternative-wass}.

 \begin{thm}\label{thm:alternative-w1}
 For  $\sigma$, $\rho \in \mathcal{S}_{\mathbb{Z}^d}^I$,  we have
 \begin{equation} w_1(\rho, \sigma) = \| \rho- \sigma\|_{w_1}\, .\end{equation}
 \end{thm}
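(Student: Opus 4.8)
The plan is to prove the two inequalities $w_1(\rho,\sigma) \le \|\rho-\sigma\|_{w_1}$ and $\|\rho-\sigma\|_{w_1} \le w_1(\rho,\sigma)$ separately, using the characterization $w_1(\rho,\sigma) = \sup_{a} \|\rho_{\Lambda_a}-\sigma_{\Lambda_a}\|_{W_1}/|\Lambda_a|$ from \autoref{prop:w1sup} together with the definition of $\|\cdot\|_{W_1}$ in terms of the decomposition into $\Delta^{(x)}$'s with $\mathrm{Tr}_x\Delta^{(x)}=0$. The difficulty is that \autoref{def:alternative-wass} uses a single pair of global states $(\rho',\sigma')$ producing \emph{all} the local differences simultaneously via translates, whereas the finite-volume $W_1$ norm allows independent decompositions at each scale $a$; so one direction is a compactness/extraction argument and the other is a direct construction.

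\textbf{Easy direction: $w_1 \le \|\cdot\|_{w_1}$.} Suppose $c$, $\rho'$, $\sigma'$ attain the infimum in \eqref{eq:definition-w1-distance}, so $\rho'_{\mathbb{Z}^d\setminus 0} = \sigma'_{\mathbb{Z}^d\setminus 0}$ and $\rho_\Lambda - \sigma_\Lambda = c\sum_{x\in\Lambda}\bigl((\tau_x\rho')_\Lambda - (\tau_x\sigma')_\Lambda\bigr)$ for every finite $\Lambda$. Fix $a\in\mathbb{N}_+^d$ and take $\Lambda = \Lambda_a$. For each $x\in\Lambda_a$ set $\Delta^{(x)} = c\bigl((\tau_x\rho')_{\Lambda_a} - (\tau_x\sigma')_{\Lambda_a}\bigr)$; this is traceless and self-adjoint. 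Because $\rho'$ and $\sigma'$ agree away from the site $0$, their translates $\tau_x\rho'$ and $\tau_x\sigma'$ agree away from the site $x$, hence $\mathrm{Tr}_x\Delta^{(x)} = 0$ (after restricting to $\Lambda_a$ and tracing out $x$, both terms become the common marginal on $\Lambda_a\setminus x$). These $\Delta^{(x)}$ sum to $\rho_{\Lambda_a}-\sigma_{\Lambda_a}$, so they form a feasible decomposition in \autoref{defn:W1n}, giving
\begin{equation*}
\|\rho_{\Lambda_a}-\sigma_{\Lambda_a}\|_{W_1} \le \frac{1}{2}\sum_{x\in\Lambda_a}\|\Delta^{(x)}\|_1 \le \frac{c}{2}\sum_{x\in\Lambda_a}\|(\tau_x\rho')_{\Lambda_a}-(\tau_x\sigma')_{\Lambda_a}\|_1 \le c\,|\Lambda_a|\,,
\end{equation*}
using $\|\cdot\|_1 \le 2$ on states. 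Dividing by $|\Lambda_a|$ and taking the supremum over $a$ gives $w_1(\rho,\sigma)\le c = \|\rho-\sigma\|_{w_1}$.

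\textbf{Hard direction: $\|\cdot\|_{w_1} \le w_1$.} Here I would, for each $a$, take an optimal decomposition $\rho_{\Lambda_a}-\sigma_{\Lambda_a} = \sum_{x\in\Lambda_a}\Delta_a^{(x)}$ with $\mathrm{Tr}_x\Delta_a^{(x)}=0$ and $\frac12\sum_x\|\Delta_a^{(x)}\|_1 = \|\rho_{\Lambda_a}-\sigma_{\Lambda_a}\|_{W_1}$. The issue is to manufacture from these a \emph{single} global pair $(\rho',\sigma')$ supported (in their difference) near one site. The natural move is an averaging/symmetrization over translations combined with a diagonal compactness argument: translate $\Delta_a^{(x)}$ back to the origin, average over $x$ in a large sub-box to restore translation invariance, normalize, and split the resulting traceless operator into a positive and negative part to read off candidate states $\rho'_a$, $\sigma'_a$; then extract a weak limit using weak sequential compactness of $\mathcal{S}_{\mathbb{Z}^d}$ (as in the Proposition preceding the theorem). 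The main obstacle — and the step I expect to need the most care — is controlling the boundary/overlap terms so that the limiting $(\rho',\sigma')$ exactly satisfy the \emph{algebraic} identity $\rho_\Lambda-\sigma_\Lambda = c\sum_{x\in\Lambda}\bigl((\tau_x\rho')_\Lambda-(\tau_x\sigma')_\Lambda\bigr)$ for \emph{every} finite $\Lambda$ simultaneously, not just asymptotically; this likely requires exploiting the superadditivity already established in \autoref{prop:w1sup} and the fact that the optimal decompositions can be chosen "consistently" across nested boxes (monotone in $a$), so that the per-site cost $c = w_1(\rho,\sigma)$ is exactly matched in the limit and no mass is lost at the boundary. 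The lower-semicontinuity bound \eqref{eq:lsc} and the attainment of the infimum, already proved, should then close the argument by passing to the limit.
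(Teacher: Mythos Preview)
Your easy direction ($w_1 \le \|\cdot\|_{w_1}$) is correct and is essentially the paper's \autoref{prop:le}.

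The hard direction, however, has a genuine gap. Your plan is to average the translated pieces $\tau_{-x}\Delta_a^{(x)}$, take a Jordan decomposition to produce candidate states $\rho'_a,\sigma'_a$, and pass to a weak limit. The problem is that Definition~\ref{def:alternative-wass} demands the \emph{exact algebraic identity}
\[
\rho_\Lambda - \sigma_\Lambda = c\sum_{x\in\Lambda}\bigl((\tau_x\rho')_\Lambda - (\tau_x\sigma')_\Lambda\bigr)\qquad\text{for every finite }\Lambda,
\]
not merely an asymptotic one, and nothing in your averaging/Jordan-decomposition step produces this. The translated operators $\tau_{-x}\Delta_a^{(x)}$ live on different (shifted) boxes, and once you embed them into a common space the sum identity no longer matches $\rho-\sigma$ on a fixed $\Lambda$; taking positive and negative parts after averaging further destroys any hope of reconstructing the identity exactly. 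Your appeal to ``consistent'' decompositions across nested boxes is not something established anywhere, and superadditivity by itself does not supply it. You correctly flag this as the obstacle, but your sketch does not contain an idea that overcomes it.

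The paper's route is genuinely different and avoids this difficulty. Rather than trying to build $(\rho',\sigma')$ directly for $(\rho,\sigma)$, it introduces \emph{periodic approximations} $\tilde\rho^a,\tilde\sigma^a\in\mathcal{S}_{\mathbb{Z}^d}^I$ obtained by tiling $\mathbb{Z}^d$ with independent copies of $\rho_{\Lambda_a}$ (resp.\ $\sigma_{\Lambda_a}$) and averaging over translations. For these periodic states one can \emph{explicitly construct} admissible $(\rho',\sigma')$ from a single finite-volume optimal decomposition $\rho_{\Lambda_a}-\sigma_{\Lambda_a}=\sum_x c_x(\rho^{(x)}_{\Lambda_a}-\sigma^{(x)}_{\Lambda_a})$, by tensoring each piece with a carefully chosen product state $\alpha_{R_d,S_d}$ on $\mathbb{Z}^d\setminus\Lambda_a$; the sets $R_d$ are defined recursively so that the required sum over $x\in\Lambda$ becomes a telescopic sum, giving the algebraic identity exactly (\autoref{prop:periodic}). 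This yields $\|\tilde\rho^a-\tilde\sigma^a\|_{w_1}\le \|\rho_{\Lambda_a}-\sigma_{\Lambda_a}\|_{W_1}/|\Lambda_a|$. Since $\tilde\rho^a\to\rho$ weakly, the lower semicontinuity \eqref{eq:lsc} then gives $\|\rho-\sigma\|_{w_1}\le w_1(\rho,\sigma)$. The telescopic construction with the sets $R_d$ is the missing idea in your proposal.
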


We split the proof into several intermediate results. We begin with the following upper bound.

\begin{lem}[Upper bound]\label{prop:le}
For  $\sigma$, $\rho \in \mathcal{S}_{\mathbb{Z}^d}^I$,  we have
\begin{equation}\label{eq:upper-bound} w_1(\rho, \sigma)  \le \| \rho - \sigma\|_{w_1} \, .\end{equation}
\end{lem}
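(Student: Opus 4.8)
The plan is to unwind \autoref{def:alternative-wass} and feed a near-optimal decomposition of the global states $\rho,\sigma$ into the finite-volume definition of $w_1$ from \autoref{prop:w1sup}. Fix $c>\|\rho-\sigma\|_{w_1}$ and pick witnesses $\rho',\sigma'\in\mathcal{S}_{\mathbb{Z}^d}$ with $\rho'_{\mathbb{Z}^d\setminus 0}=\sigma'_{\mathbb{Z}^d\setminus 0}$ and $\rho_\Lambda-\sigma_\Lambda=c\sum_{x\in\Lambda}\bigl((\tau_x\rho')_\Lambda-(\tau_x\sigma')_\Lambda\bigr)$ for every $\Lambda\in\mathcal{F}_{\mathbb{Z}^d}$. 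Specialize this to $\Lambda=\Lambda_a$. For each $x\in\Lambda_a$ set $\Delta^{(x)}=c\bigl((\tau_x\rho')_{\Lambda_a}-(\tau_x\sigma')_{\Lambda_a}\bigr)\in\mathcal{O}^T_{\Lambda_a}$; then $\sum_{x\in\Lambda_a}\Delta^{(x)}=\rho_{\Lambda_a}-\sigma_{\Lambda_a}$, and the condition $\rho'_{\mathbb{Z}^d\setminus 0}=\sigma'_{\mathbb{Z}^d\setminus 0}$ gives, after translation, $(\tau_x\rho')_{\mathbb{Z}^d\setminus x}=(\tau_x\sigma')_{\mathbb{Z}^d\setminus x}$, hence $\mathrm{Tr}_x\Delta^{(x)}=0$. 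So $(\Delta^{(x)})_{x\in\Lambda_a}$ is an admissible decomposition of $\rho_{\Lambda_a}-\sigma_{\Lambda_a}$ in the sense of \autoref{defn:W1n}, and therefore
\begin{equation}
\left\|\rho_{\Lambda_a}-\sigma_{\Lambda_a}\right\|_{W_1}\le\frac{c}{2}\sum_{x\in\Lambda_a}\left\|(\tau_x\rho')_{\Lambda_a}-(\tau_x\sigma')_{\Lambda_a}\right\|_1\le\frac{c}{2}\left|\Lambda_a\right|\left\|\rho'-\sigma'\right\|_1\le c\left|\Lambda_a\right|\,,
\end{equation}
using $\|(\tau_x\rho')_{\Lambda_a}-(\tau_x\sigma')_{\Lambda_a}\|_1\le\|\tau_x\rho'-\tau_x\sigma'\|_1=\|\rho'-\sigma'\|_1\le 2$ (monotonicity of the trace norm under partial trace, and invariance under the translation automorphism). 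Dividing by $|\Lambda_a|$ and taking the supremum over $a$ via \eqref{eq:w1sup} yields $w_1(\rho,\sigma)\le c$; letting $c\downarrow\|\rho-\sigma\|_{w_1}$ gives \eqref{eq:upper-bound}.

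One point deserves care rather than being swept under the rug: the trace-norm estimate $\|(\tau_x\rho')_{\Lambda_a}-(\tau_x\sigma')_{\Lambda_a}\|_1\le 2$ only uses that $\rho',\sigma'$ are states, so it is automatic; the genuinely substantive input is that the single-site condition $\rho'_{\mathbb{Z}^d\setminus 0}=\sigma'_{\mathbb{Z}^d\setminus 0}$ survives restriction to the finite box, which it does because $\mathrm{Tr}_{\Lambda_a\setminus x}$ factors as a partial trace of $\mathrm{Tr}_{\mathbb{Z}^d\setminus x}$ on the marginal $(\tau_x\rho')_{\Lambda_a}$ — concretely, $\mathrm{Tr}_x(\tau_x\rho')_{\Lambda_a}=(\tau_x\rho')_{\Lambda_a\setminus x}=\bigl((\tau_x\rho')_{\mathbb{Z}^d\setminus x}\bigr)_{\Lambda_a\setminus x}$, and similarly for $\sigma'$, so the two agree. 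I would state this explicitly as a one-line consistency remark.

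The main obstacle here is essentially bookkeeping rather than any deep estimate: one must make sure the defining identity in \eqref{eq:definition-w1-distance}, which is phrased simultaneously for all finite $\Lambda$, is correctly instantiated at $\Lambda=\Lambda_a$ and that the neighboring-state condition transported by $\tau_x$ matches exactly the hypothesis $\mathrm{Tr}_x\Delta^{(x)}=0$ of \autoref{defn:W1n}. Everything else is the monotonicity of $\|\cdot\|_1$ under partial trace, translation invariance of the trace norm, and Fekete's lemma as already packaged in \autoref{prop:w1sup}. The reverse inequality $\|\rho-\sigma\|_{w_1}\le w_1(\rho,\sigma)$ — which is the harder direction, requiring one to assemble global witnesses $\rho',\sigma'$ from optimal finite-volume decompositions and pass to a weak limit, using the lower semicontinuity \eqref{eq:lsc} and weak compactness of $\mathcal{S}_{\mathbb{Z}^d}$ — is not needed for this lemma and would be handled separately.
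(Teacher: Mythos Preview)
Your proof is correct and follows essentially the same route as the paper's: instantiate the global decomposition at $\Lambda=\Lambda_a$, observe that each $c\bigl((\tau_x\rho')_{\Lambda_a}-(\tau_x\sigma')_{\Lambda_a}\bigr)$ is an admissible piece in \autoref{defn:W1n}, and bound each trace-norm term by $2$. The only cosmetic quibble is the intermediate expression $\|\rho'-\sigma'\|_1$ for states on the quasi-local algebra (which are functionals, not trace-class operators); you can drop that step entirely, since $(\tau_x\rho')_{\Lambda_a}$ and $(\tau_x\sigma')_{\Lambda_a}$ are already density matrices on a finite-dimensional space and hence have difference of trace norm at most $2$.
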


\begin{proof}
Given $c\ge 0$, $\rho'$, $\sigma' \in \mathcal{S}_{\mathbb{Z}^d}$ as in the right hand side of \eqref{eq:definition-w1-distance},
for any $a \in \mathbb{N}^d_+$, we write the identity
\begin{equation} \rho_{\Lambda_a} - \sigma_{\Lambda_a} =   \sum_{x \in \Lambda_a} c \left( \rho^{(x)}_{\Lambda_a} - \sigma^{(x)}_{\Lambda_a}\right)\, ,\end{equation}
where we define, for $x \in \Lambda_a$, the states $\rho^{(x)}_{\Lambda_a} = (\tau_x \rho')_{\Lambda_a}$ and $\sigma^{(x)}_{\Lambda_a} = (\tau_x \sigma')_{\Lambda_a}$. Using \eqref{eq:marginal-marginal}, it follows that
\begin{equation}
\mathrm{Tr}_{x} \rho^{(x)}_{\Lambda_a} = \mathrm{Tr}_{x} \left[ (\tau_x \rho')_{\Lambda_a} \right]= (\tau_x \rho')_{\Lambda_a \setminus x} = \tau_x\rho'_{(\Lambda_a-x) \setminus 0} \, ,
\end{equation}
and similarly with $\sigma^{(x)}_{\Lambda_a}$, so that  $\mathrm{Tr}_{x} \rho^{(x)}_{\Lambda_a}  = \mathrm{Tr}_{x} \sigma^{(x)}_{\Lambda_a}$ for every $x \in \Lambda_a$. Therefore, by definition of $W_1$ on $\mathcal{S}_{\Lambda_a}$, we have the inequality
\begin{equation} \| \rho_{\Lambda_a} - \sigma_{\Lambda_a} \|_{W_1} \le c | \Lambda_a | \, .\end{equation}
Dividing both sides by $|\Lambda_a|$ and letting $a \to \infty$, we deduce $w_1(\rho, \sigma)  \le c$, hence \eqref{eq:upper-bound}.
\end{proof}

To establish the lower bound, we consider a periodic approximation of the marginal states over a box $\Lambda_a$. We write, for any $a \in \mathbb{N}_+^d$ and $k \in \mathbb{Z}^d$,
\begin{equation} 2ak =(2a_ik_i)_{i=1}^d\, .
\end{equation}

\begin{prop}[Periodic approximation]\label{prop:periodic}
For  $\sigma$, $\rho \in \mathcal{S}_{\mathbb{Z}^d}^I$,  and $a \in \mathbb{N}_+^d$, define  $\tilde{\rho}^a$, $\tilde{\sigma}^a \in \mathcal{S}_{\mathbb{Z}^d}^I$ as follows:
\begin{equation}
\tilde \rho^a = \frac{1}{|\Lambda_a|} \sum_{x \in \Lambda_a} \tau_x \bigotimes_{k \in \mathbb{Z}^d} \tau_{2ak} \rho_{\Lambda_a}\, , \qquad \tilde \sigma^a = \frac{1}{|\Lambda_a|} \sum_{x \in \Lambda_a} \tau_x \bigotimes_{k \in \mathbb{Z}^d} \tau_{2ak} \sigma_{\Lambda_a}\, .
\end{equation}
Then, we have
\begin{equation}\label{eq:inequality-periodic-w1}
\| \tilde \rho^{a} - \tilde \sigma^{a} \|_{w_1} \le  \frac { \| \rho_{\Lambda_a} - \sigma_{\Lambda_a} \|_{W_1}}{|\Lambda_a|}  \, .
\end{equation}
\end{prop}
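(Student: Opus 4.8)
The plan is to construct, from a minimiser in the definition of $\|\rho_{\Lambda_a}-\sigma_{\Lambda_a}\|_{W_1}$, an admissible triple $(c,\rho',\sigma')$ for the infimum in \autoref{def:alternative-wass} with $c=\|\rho_{\Lambda_a}-\sigma_{\Lambda_a}\|_{W_1}/|\Lambda_a|$; then \eqref{eq:inequality-periodic-w1} is immediate. (The states $\tilde\rho^a,\tilde\sigma^a$ are translation invariant: $\bigotimes_{k}\tau_{2ak}\rho_{\Lambda_a}$ is fixed by each $\tau_{2ak}$, so averaging over the $|\Lambda_a|$ translates $\tau_x$, $x\in\Lambda_a$, produces a $\tau$-invariant state, and the left-hand side of \eqref{eq:inequality-periodic-w1} is well defined.) First I would unfold the finite $W_1$ norm: by \autoref{defn:W1n} there are $\Delta^{(y)}\in\mathcal{O}_{\Lambda_a}^T$, $y\in\Lambda_a$, with $\mathrm{Tr}_y\Delta^{(y)}=0$, $\sum_{y}\Delta^{(y)}=\rho_{\Lambda_a}-\sigma_{\Lambda_a}$ and $\tfrac12\sum_y\|\Delta^{(y)}\|_1=\|\rho_{\Lambda_a}-\sigma_{\Lambda_a}\|_{W_1}$. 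Splitting $\Delta^{(y)}$ into positive and negative parts and using that it is traceless with $\mathrm{Tr}_y\Delta^{(y)}=0$ gives $\Delta^{(y)}=p_y(\rho^{(y)}-\sigma^{(y)})$ with $p_y=\tfrac12\|\Delta^{(y)}\|_1$ and $\rho^{(y)},\sigma^{(y)}\in\mathcal{S}_{\Lambda_a}$ satisfying $\mathrm{Tr}_y\rho^{(y)}=\mathrm{Tr}_y\sigma^{(y)}$ (take $\rho^{(y)}=\sigma^{(y)}$ arbitrarily when $p_y=0$). Put $P=\sum_y p_y=\|\rho_{\Lambda_a}-\sigma_{\Lambda_a}\|_{W_1}$ and $c=P/|\Lambda_a|$; if $P=0$ then $\rho_{\Lambda_a}=\sigma_{\Lambda_a}$, hence $\tilde\rho^a=\tilde\sigma^a$ and there is nothing to prove, so assume $P>0$.

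Next I would tile $\mathbb{Z}^d$ by the boxes $T_k=\Lambda_a+2ak$, so that $\Lambda_a$ is a fundamental domain for the sublattice $\{2ak:k\in\mathbb{Z}^d\}$, fix the lexicographic — hence translation-invariant — order $\prec$ on $\mathbb{Z}^d$, and telescope
\[
\Omega_\rho-\Omega_\sigma=\sum_{m\in\mathbb{Z}^d}\Bigl(\bigotimes_{k\prec m}\tau_{2ak}\rho_{\Lambda_a}\Bigr)\otimes\tau_{2am}\bigl(\rho_{\Lambda_a}-\sigma_{\Lambda_a}\bigr)\otimes\Bigl(\bigotimes_{k\succ m}\tau_{2ak}\sigma_{\Lambda_a}\Bigr),
\]
where $\Omega_\rho=\bigotimes_k\tau_{2ak}\rho_{\Lambda_a}$ and $\Omega_\sigma=\bigotimes_k\tau_{2ak}\sigma_{\Lambda_a}$; on the marginal over any fixed finite set only the finitely many $m$ with $T_m$ meeting it contribute, since for the other $m$ that summand has vanishing marginal there, so the identity is meaningful term by term. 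Substituting $\tau_{2am}(\rho_{\Lambda_a}-\sigma_{\Lambda_a})=\sum_y p_y\,\tau_{2am}(\rho^{(y)}-\sigma^{(y)})$ rewrites $\Omega_\rho-\Omega_\sigma$ as a weighted sum over $(m,y)\in\mathbb{Z}^d\times\Lambda_a$ of $p_y(P_{m,y}-Q_{m,y})$, where $P_{m,y},Q_{m,y}\in\mathcal{S}_{\mathbb{Z}^d}$ are product states over the tiles that coincide off $T_m$ and, because $\mathrm{Tr}_y\rho^{(y)}=\mathrm{Tr}_y\sigma^{(y)}$, in fact off the single site $2am+y$. Applying $\frac1{|\Lambda_a|}\sum_{x\in\Lambda_a}\tau_x$ presents $\tilde\rho^a-\tilde\sigma^a$ as a weighted sum over triples $(x,m,y)$ of $\frac{p_y}{|\Lambda_a|}(\tau_xP_{m,y}-\tau_xQ_{m,y})$, the $(x,m,y)$-term differing only at the site $z=x+2am+y$.

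The last step is to regroup by $z$. Since $\prec$ is translation invariant, shifting the $(x,m,y)$-term by $\tau_{-z}$ and reindexing $k\mapsto k-m$ in the tensor products removes the dependence on $x$ and $m$: writing $\varrho_y=\tau_{-y}\rho_{\Lambda_a}$, $\varsigma_y=\tau_{-y}\sigma_{\Lambda_a}$, $\theta^{(y)}_\rho=\tau_{-y}\rho^{(y)}$, $\theta^{(y)}_\sigma=\tau_{-y}\sigma^{(y)}$ (states on $\Lambda_a-y\ni0$ with $\theta^{(y)}_\rho,\theta^{(y)}_\sigma$ coinciding off $0$), one finds $\tau_xP_{m,y}=\tau_zR^{(y)}$ and $\tau_xQ_{m,y}=\tau_zR^{(y)}_\sigma$ whenever $x+2am+y=z$, where
\[
R^{(y)}=\Bigl(\bigotimes_{j\prec0}\tau_{2aj}\varrho_y\Bigr)\otimes\theta^{(y)}_\rho\otimes\Bigl(\bigotimes_{j\succ0}\tau_{2aj}\varsigma_y\Bigr),\qquad R^{(y)}_\sigma=\Bigl(\bigotimes_{j\prec0}\tau_{2aj}\varrho_y\Bigr)\otimes\theta^{(y)}_\sigma\otimes\Bigl(\bigotimes_{j\succ0}\tau_{2aj}\varsigma_y\Bigr)
\]
lie in $\mathcal{S}_{\mathbb{Z}^d}$ and agree off the origin. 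For each $z\in\mathbb{Z}^d$ and $y\in\Lambda_a$ there is exactly one $(x,m)\in\Lambda_a\times\mathbb{Z}^d$ with $x+2am+y=z$, so summing the regrouped terms, and using that $\tau_zR^{(y)}$ and $\tau_zR^{(y)}_\sigma$ agree off $z$, yields for every $\Lambda\in\mathcal{F}_{\mathbb{Z}^d}$
\[
\tilde\rho^a_\Lambda-\tilde\sigma^a_\Lambda=\frac1{|\Lambda_a|}\sum_{z\in\Lambda}\sum_{y\in\Lambda_a}p_y\bigl((\tau_zR^{(y)})_\Lambda-(\tau_zR^{(y)}_\sigma)_\Lambda\bigr)=c\sum_{z\in\Lambda}\bigl((\tau_z\rho')_\Lambda-(\tau_z\sigma')_\Lambda\bigr),
\]
where $\rho'=\tfrac1P\sum_y p_yR^{(y)}$ and $\sigma'=\tfrac1P\sum_y p_yR^{(y)}_\sigma$ are states in $\mathcal{S}_{\mathbb{Z}^d}$ with $\rho'_{\mathbb{Z}^d\setminus0}=\sigma'_{\mathbb{Z}^d\setminus0}$. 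Thus $(c,\rho',\sigma')$ is admissible in \autoref{def:alternative-wass}, and \eqref{eq:inequality-periodic-w1} follows.

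I expect the main difficulty to be the bookkeeping of the regrouping: one must check that the telescoping and the substitution are valid term by term on each finite marginal (the infinite tensor products are harmless, but this has to be said), and — the genuinely delicate point — that translation invariance of $\prec$ makes the family of terms attached to a site $z$ equal to $\tau_z$ applied to a $z$-independent operator, which is exactly what lets a single pair $(\rho',\sigma')$ serve for all $\Lambda$ at once. Should the direct computation become unwieldy, a lighter route is to first observe that $\|\cdot\|_{w_1}$ is jointly convex under convex combinations of differences of translation-invariant states (immediate from \autoref{def:alternative-wass}), combine it with the decomposition $\rho_{\Lambda_a}-\sigma_{\Lambda_a}=\sum_y p_y(\rho^{(y)}-\sigma^{(y)})$ to reduce to the case in which $\rho_{\Lambda_a},\sigma_{\Lambda_a}$ are neighboring and $c=1/|\Lambda_a|$, and then run the same telescoping in that single-site case, where the notation is considerably lighter.
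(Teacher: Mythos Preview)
Your proof is correct and follows essentially the same approach as the paper: the states $\rho',\sigma'$ you construct coincide with the paper's (your set $\{k:k\prec0\}$ for the lexicographic order is exactly the paper's recursively defined $R_d$, so your $R^{(y)}=\tau_{-y}\tilde\rho^{(y)}$ in the paper's notation), and both verifications rest on the same telescoping identity. The only difference is presentation---you telescope $\Omega_\rho-\Omega_\sigma$ over tile indices and regroup via the bijection $(x,m)\leftrightarrow z$, while the paper starts from the sum over $z$, extends it to a large box, and telescopes coordinate by coordinate back to $(\alpha_{\mathbb{Z}^d}-\alpha_\emptyset)_\Lambda$; your packaging is arguably the more streamlined of the two.
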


\begin{proof}
We introduce first some notation. For disjoint sets $R, S \subseteq \mathbb{Z}^d$, write
\begin{equation}
\alpha_{R, S} = \left(\bigotimes_{k \in R} \tau_{2ak} \rho_{\Lambda_a}\right) \otimes \left(\bigotimes_{k \in S}\tau_{2ak} \sigma_{\Lambda_a}\right)\, ,
\end{equation}
which we further simplify to $\alpha_R = \alpha_{R,S}$ whenever $S = \mathbb{Z}^d \setminus R$.
With the above notation, we have
\begin{equation}\label{eq:tilde-rho-a}
\tilde{\rho}^a = \frac{1}{|\Lambda_a|} \sum_{x \in \Lambda_a} \tau_x \alpha_{\mathbb{Z}^d}\, , \qquad \tilde{\sigma}^a = \frac{1}{|\Lambda_a|} \sum_{x \in \Lambda_a} \tau_x \alpha_{\emptyset}\, .
\end{equation}
For $x \in \Lambda_a$, let $c_{x} \ge 0$ and $\rho^{(x)}_{\Lambda_a}, \sigma^{(x)}_{\Lambda_a} \in \mathcal{S}_{\Lambda_a}$ be such that
\begin{equation}\label{eq:rho-x-sigma-x}
\mathrm{Tr}_x \rho^{(x)}_{\Lambda_a} = \mathrm{Tr}_x \sigma^{(x)}_{\Lambda_a}\; , \qquad \rho_{\Lambda_a} - \sigma_{\Lambda_a} = \sum_{x \in \Lambda_a} c_{x} \left( \rho^{(x)}_{\Lambda_a} - \sigma^{(x)}_{\Lambda_a}\right)\,,
\end{equation}
and introduce the states
\begin{equation}   \tilde\rho^{(x)}  =  \alpha_{R_d, S_d}  \otimes  \rho^{(x)}_{\Lambda_a} \, , \qquad
  \tilde\sigma^{(x)} = \alpha_{R_d, S_d}  \otimes  \sigma^{(x)}_{\Lambda_a}\, ,
  \end{equation}
  where $R_d, S_d$ are disjoint sets with $R_d \cup S_d =  \mathbb{Z}^d\setminus 0$, to be specified in \eqref{eq:recursive-rd} below (their precise definition will be  relevant only later). Notice that
  \begin{equation}\label{eq:identity-trace-x-rho-x}
  \mathrm{Tr}_x \tilde{\rho}^{(x)} = \alpha_{R_d, S_d}\otimes\mathrm{Tr}_x  \rho^{(x)}_{\Lambda_a} = \alpha_{R_d, S_d}\otimes\mathrm{Tr}_x  \sigma^{(x)}_{\Lambda_a} = \mathrm{Tr}_x \tilde{\sigma}^{(x)} \, ,
  \end{equation}
hence, for every $z \in \mathbb{Z}^d$, $\Lambda \in \mathcal{F}_{\mathbb{Z}^d}$ with $x+ z \notin \Lambda$,
  \begin{equation}\label{eq:identity-marginal-rho-x}
   \left( \tau_{z} \tilde{\rho}^{(x)}\right)_\Lambda = \left( \tau_{z} \tilde{\sigma}^{(x)}\right)_\Lambda\, ,
   \end{equation}
    Moreover,
   \begin{align}\label{eq:telescopic-one-term}
  \sum_{x \in \Lambda_a} c_x \left(  \tilde{\rho}^{(x)}  -  \tilde{\sigma}^{(x)} \right) & = \alpha_{R_d, S_d}\otimes \sum_{x \in \Lambda_a} c_x \left(  \rho^{(x)}_{\Lambda_a}  -  \sigma^{(x)}_{\Lambda_a} \right) = \alpha_{R_d ,S_d} \otimes \left(\rho_{\Lambda_a} - \sigma_{\Lambda_a}\right) \nonumber \\
   & = \alpha_{R_d \cup 0} -\alpha_{R_d}\, .
   \end{align}

We assume that $c = \sum_{x \in \Lambda_a} c_x >0$, otherwise \eqref{eq:rho-x-sigma-x}  yields $\rho_{\Lambda_a} = \sigma_{\Lambda_a}$ hence $\tilde{\rho}^{a} = \tilde{\sigma}^{a}$ and \eqref{eq:inequality-periodic-w1} holds since $\| \tilde\rho^a - \tilde\sigma^a\|_{w_1} = 0$, by choosing $c=0$ and any $\rho' =\sigma' \in \mathcal{S}_{\mathbb{Z}^d}$. Therefore, letting $p_x = c_x/c$, we define the states
 \begin{equation}
 \rho' =  \sum_{x \in \Lambda} p_x \tau_{-x} \tilde{\rho}^{(x)}\, ,  \quad \sigma' =   \sum_{x \in \Lambda} p_x \tau_{-x} \tilde{\sigma}^{(x)}\, .
 \end{equation}
By \eqref{eq:identity-trace-x-rho-x}, we have the identity
 \begin{equation}
\mathrm{Tr}_0 \rho'  = \sum_{x \in \Lambda} p_x \mathrm{Tr}_0   \tau_{-x} \tilde{\rho}^{(x)}  =  \sum_{x \in \Lambda} p_x\tau_{-x} \mathrm{Tr}_x \tilde{\rho}^{(x)} =  \sum_{x \in \Lambda} p_x \tau_{-x}\mathrm{Tr}_x \tilde{\sigma}^{(x)}  = \mathrm{Tr}_0 \sigma'\, .
\end{equation}
If we prove that, for every $\Lambda \in \mathcal{F}_{\mathbb{Z}^d}$,
\begin{equation}\label{eq:sum-periodic-lambda}
\tilde{\rho}^a_{\Lambda} - \tilde{\sigma}^a_{\Lambda} = c \sum_{y \in \Lambda}\left( (\tau_y \rho')_{\Lambda} - (\tau_y \sigma')_{\Lambda}\right)\, ,
\end{equation}
then
\begin{equation}
\| \tilde{\rho}^a - \tilde{\sigma}^a\|_{w_1} \le c\, ,
\end{equation}
and \eqref{eq:inequality-periodic-w1} follows.
To show \eqref{eq:sum-periodic-lambda}, we write explicitly
\begin{align}
 c \sum_{y \in \Lambda} \left((\tau_{y} \rho')_{\Lambda} -  (\tau_{y} \sigma')_{\Lambda}\right) &  =  \sum_{y \in \Lambda } \sum_{x \in \Lambda_a} c_x \left( ( \tau_{y-x} \tilde{\rho}^{(x)})_{\Lambda} -  ( \tau_{y-x} \tilde{\sigma}^{(x)})_{\Lambda} \right) \nonumber \\ & = \sum_{x \in \Lambda_a} c_x \sum_{y \in \Lambda}  ( \tau_{y-x} \tilde{\rho}^{(x)})_{\Lambda} -  ( \tau_{y-x} \tilde{\sigma}^{(x)})_{\Lambda}  \nonumber \\
 & = \sum_{x \in \Lambda_a} c_x \sum_{z \in \Lambda -x }   ( \tau_{z} \tilde{\rho}^{(x)})_{\Lambda} -  ( \tau_{z} \tilde{\sigma}^{(x)})_{\Lambda} \,  .
\end{align}
where the last line follows letting $z = y-x$. Using \eqref{eq:identity-marginal-rho-x}, we extend the summation  over $z \in \Lambda'$, for any $\Lambda' \in \mathcal{F}_{\mathbb{Z}^d}$ such that
\begin{equation}\label{eq:lambda-prime-large}
\Lambda'\supseteq \bigcup_{x \in \Lambda_a} ( \Lambda-x)\, .
\end{equation}
 Then, exchanging again the order of summation and using \eqref{eq:telescopic-one-term},
\begin{align}
\sum_{x \in \Lambda_a} c_x \sum_{z \in  \Lambda' }  ( \tau_{z} \tilde{\rho}^{(x)} )_{\Lambda} -  ( \tau_{z} \tilde{\sigma}^{(x)})_{\Lambda}   & = \sum_{z \in  \Lambda' }   \sum_{x \in \Lambda_a} c_x  \left( ( \tau_{z} \tilde{\rho}^{(x)} )_{\Lambda} -  ( \tau_{z} \tilde{\sigma}^{(x)})_{\Lambda}) \right) \\
& = \sum_{z \in  \Lambda' }  (\tau_z \alpha_{R_d \cup 0})_{\Lambda} - ( \tau_z \alpha_{R_d})_{\Lambda}
\end{align}
We now specify the sets $\Lambda'$ and $R_d$  in such a way that the above summation is telescopic and yields \eqref{eq:sum-periodic-lambda}. First, we let
\begin{equation}
\Lambda' = \bigcup_{k \in \Lambda_b} \left( \Lambda_a + 2ak\right)\, ,
\end{equation}
with $b \in \mathbb{N}^d_+$ sufficiently large so that \eqref{eq:lambda-prime-large} holds. Then, recalling \eqref{eq:tilde-rho-a}, to obtain \eqref{eq:sum-periodic-lambda} it is sufficient to prove that
\begin{equation}\label{eq:telescopic}
\sum_{k \in  \Lambda_b }  (\tau_{2ak} \alpha_{R_d \cup 0})_{\Lambda} - ( \tau_{2ak} \alpha_{R_d})_{\Lambda} = (\alpha_{\mathbb{Z}^d})_{\Lambda} - (\alpha_\emptyset)_{\Lambda}\, .
\end{equation}
The following recursive definition for the subsets $R_d \subseteq \mathbb{Z}^d\setminus0$ serves exactly this purpose. We let
\begin{equation}\label{eq:recursive-rd}
R_1 = \mathbb{Z}_-\, , \; R_{d} = \left( \mathbb{Z}^{d-1} \times \mathbb{Z}_-\right) \cup \left(R_{d-1} \times 0 \right)\, ,
\end{equation}
so that
\begin{equation}
R_d \cup \{ 0 \} = R_d+ e_1\, ,
\end{equation}
where we write $e_i \in \mathbb{Z}^d$ for the natural basis vectors, for $i=1, \ldots, d$ (see \autoref{fig:R_d}).

\begin{figure}[ht]
\begin{minipage}{0.45\textwidth}
\begin{center}
\begin{tikzpicture}[scale = 1.2, every node/.style={minimum size=13, outer sep=0pt}]
	\draw[step=0.5, color=black!10, style=dashed] (0.1,0.1) grid (3.4,3.4);

    \foreach \y in {0.25,0.75,1.25} {
        \foreach \x in {0.25, 0.75, 1.25, 1.75, 2.25, 2.75, 3.25} {
   			 \node[fill=black!30, rectangle] at (\x,\y) {};
   			 }
    }

    \foreach \x in {0.25, 0.75, 1.25} {
   			 \node[fill=black!30] at (\x,1.75) {};
    }

\draw[->, color=black] (1.75, 0) -- (1.75, 3.4);
\draw[->, color=black] (0,1.75) -- (3.4,1.75);
\end{tikzpicture}
\end{center}
\end{minipage}
\begin{minipage}{0.45\textwidth}
\begin{center}
\begin{tikzpicture}[scale = 1.2, every node/.style={minimum size=13, outer sep=0pt}]
	\draw[step=1,color=black!10, style=dashed] (0.1,0.1) grid (3.4,3.4);

    \foreach \y in {0.25,0.75,1.25} {
        \foreach \x in {0.25, 0.75, 1.25, 1.75, 2.25, 2.75, 3.25} {
   			 \node[fill=black!30] at (\x,\y) {};
   			 }
    }

    \foreach \x in {0.25, 0.75, 1.25, 1.75} {
   			 \node[fill=black!30] at (\x,1.75) {};
    }

\draw[->, color=black] (1.75, 0) -- (1.75, 3.4);
\draw[->, color=black] (0,1.75) -- (3.4,1.75);
\end{tikzpicture}
\end{center}
\end{minipage}
\caption{representation of $R_2 \subseteq \mathbb{Z}^2$ (left) and its translated $R_2+e_1 = R_2 \cup 0$ (right). \label{fig:R_d}}
\end{figure}
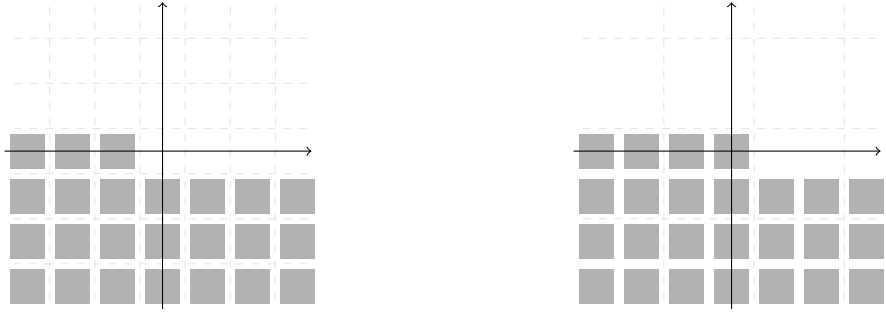

We decompose  the right hand side in \eqref{eq:telescopic} as a double summation, over $k_1$ and $k_{\setminus 1} = (-b_1, k_2, \ldots, k_{d})$, so that, for fixed $k_{\setminus 1}$, we find a telescopic sum
\begin{align}
\sum_{0 \le  k_1 < 2 b_1} \tau_{2ak_{\setminus 1}} \alpha_{R_d + e_1}  -  \tau_{2ak_{\setminus 1}} \alpha_{R_d} & = \sum_{0 \le  k_1 < 2b_1} \tau_{2ak_{\setminus 1}} \alpha_{R_d + (k_1+1)e_1 } -  \tau_{2ak_{\setminus 1}} \alpha_{R_d+k_1 e_1 } \nonumber\\
& = \tau_{2ak_{\setminus 1}} \alpha_{R_d + 2b_1 e_1 } - \tau_{2ak_{\setminus 1}} \alpha_{R_d}\, .
\end{align}
Since $\Lambda \subseteq \Lambda'$, and
\begin{equation}\label{eq:identity-local-shift-above}
(R_d + 2b_1 e_1 + k_{\setminus 1}) \cap \Lambda_b = (R_d + e_{2} + k_{\setminus 1}) \cap \Lambda_b\, ,
\end{equation}
(see \autoref{fig:Lambda_b}), it follows that
\begin{equation}
(\tau_{2ak_{\setminus 1}} \alpha_{R_d + 2b_1 e_1})_{\Lambda}- (\tau_{2ak_{\setminus 1}} \alpha_{R_d})_{\Lambda} =  (\tau_{2a k_{\setminus 1}} \alpha_{R_d + e_{2}})_{\Lambda} - (\tau_{2ak_{\setminus 1}} \alpha_{R_d})_{\Lambda}\, .
\end{equation}

\begin{figure}[ht]
\begin{minipage}{0.45\textwidth}
\begin{center}
\begin{tikzpicture}[scale = 1.2, every node/.style={minimum size=13, outer sep=0pt}]
	\draw[step=0.5, color=black!10, style=dashed] (0.1,0.1) grid (3.4,3.4);

    \foreach \y in {0.25,0.75,1.25} {
        \foreach \x in {0.25, 0.75, 1.25, 1.75, 2.25, 2.75, 3.25} {
   			 \node[fill=black!30] at (\x,\y) {};
   			 }
    }

    \foreach \x in {0.25, 0.75, 1.25, 1.75, 2.25} {
   			 \node[fill=black!30] at (\x,1.75) {};
    }

\draw[->, color=black] (1.75, 0) -- (1.75, 3.4);
\draw[->, color=black] (0,1.75) -- (3.4,1.75);
\node at (2.25,2.25) {$\Lambda_b$};
\draw (0.5,0.5) -- (0.5,2.5);
\draw (0.5,0.5) -- (2.5,0.5);
\draw (2.5,2.5) -- (0.5,2.5);
\draw (2.5,2.5) -- (2.5,0.5);
\end{tikzpicture}
\end{center}
\end{minipage}
\begin{minipage}{0.45\textwidth}
\begin{center}
\begin{tikzpicture}[scale = 1.2, every node/.style={minimum size=13, outer sep=0pt}]
	\draw[step=1,color=black!10, style=dashed] (0.1,0.1) grid (3.4,3.4);

    \foreach \y in {0.25,0.75,1.25, 1.75} {
        \foreach \x in {0.25, 0.75, 1.25, 1.75, 2.25, 2.75, 3.25} {
   			 \node[fill=black!30] at (\x,\y) {};
   			 }
    }

    \foreach \x in {0.25} {
   			 \node[fill=black!30] at (\x,2.25) {};
    }

\draw[->, color=black] (1.75, 0) -- (1.75, 3.4);
\draw[->, color=black] (0,1.75) -- (3.4,1.75);
\node at (2.25,2.25) {$\Lambda_b$};
\draw (0.5,0.5) -- (0.5,2.5);
\draw (0.5,0.5) -- (2.5,0.5);
\draw (2.5,2.5) -- (0.5,2.5);
\draw (2.5,2.5) -- (2.5,0.5);
\end{tikzpicture}
\end{center}
\end{minipage}
\caption{the rectangle $\Lambda_b$ with $b=(2,2)$ is highlighted, providing an example of the general identity \eqref{eq:identity-local-shift-above}: in this case, $(R_2 + 2 e_1)\cap \Lambda_b$ (left) equals $(R_2 - 2e_1+e_2) \cap \Lambda_b$ (right). \label{fig:Lambda_b}}
\end{figure}
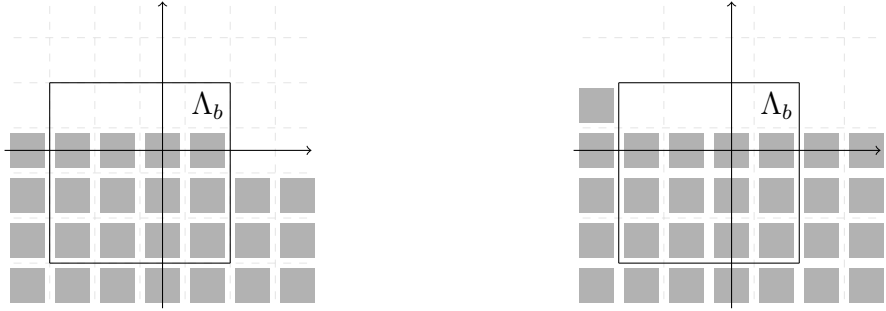
We further proceed decomposing the summation upon $k_2$ and
\begin{equation}
k_{\setminus 1,2} = (-b_1, -b_2, k_3, \ldots, k_d)\,,
\end{equation}
so that we obtain a similar telescopic sum. By iterating the same argument up to summation over $k_d$, we eventually conclude that
\begin{equation}
\sum_{k \in  \Lambda_b }  (\tau_{2ak} \alpha_{R_d \cup 0})_{\Lambda} - ( \tau_{2ak} \alpha_{R_d})_{\Lambda} = (\tau_{ - 2a b} \alpha_{R_d + 2b_d e_d})_{\Lambda} - (\tau_{ - 2a b} \alpha_{R_d})_{\Lambda}\, ,
\end{equation}
which gives \eqref{eq:telescopic} since
\begin{equation} (R_d+2b_d e_d - 2 ab) \cap \Lambda_b = \Lambda_b\, , \; (R_d - 2 ab) \cap \Lambda_b = \emptyset\, ,
\end{equation}
thus the proof  is completed.
\end{proof}

Using the above construction, we establish the following lower bound, hence completing the proof of \autoref{thm:alternative-w1}.

\begin{lem}[Lower bound]\label{prop:ge}
For  $\sigma$, $\rho \in \mathcal{S}_{\mathbb{Z}^d}^I$,  we have
\begin{equation}\label{eq:ge}
w_1(\rho, \sigma) \ge \| \rho - \sigma\|_{w_1}\, .
\end{equation}
\end{lem}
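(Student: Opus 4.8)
The plan is to obtain \eqref{eq:ge} by letting $a\to\infty$ in the periodic approximation of \autoref{prop:periodic}, using the lower semicontinuity \eqref{eq:lsc} to pass the $w_1$-distance to the limit. Fix a sequence $\left(a^{(n)}\right)_{n\in\mathbb{N}}\subset\mathbb{N}_+^d$ with $a^{(n)}\to\infty$, and let $\tilde\rho^{a^{(n)}},\,\tilde\sigma^{a^{(n)}}\in\mathcal{S}_{\mathbb{Z}^d}^I$ be the associated periodized states. By \autoref{prop:periodic},
\begin{equation}
\bigl\| \tilde\rho^{a^{(n)}} - \tilde\sigma^{a^{(n)}} \bigr\|_{w_1} \le \frac{\bigl\| \rho_{\Lambda_{a^{(n)}}} - \sigma_{\Lambda_{a^{(n)}}} \bigr\|_{W_1}}{\bigl|\Lambda_{a^{(n)}}\bigr|}\, ,
\end{equation}
and by \autoref{prop:w1sup} the right-hand side tends to $w_1(\rho,\sigma)$ as $n\to\infty$. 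Hence it remains to show that $\tilde\rho^{a}\to\rho$ and $\tilde\sigma^{a}\to\sigma$ weakly as $a\to\infty$; equivalently, since $\mathfrak{U}^{loc}_{\mathbb{Z}^d}$ is dense in $\mathfrak{U}_{\mathbb{Z}^d}$ and states are uniformly bounded, that $\bigl\|(\tilde\rho^a)_\Lambda-\rho_\Lambda\bigr\|_1\to0$ for every fixed $\Lambda\in\mathcal{F}_{\mathbb{Z}^d}$ (and likewise for $\sigma$).

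For the weak-convergence claim, write $\alpha^a=\bigotimes_{k\in\mathbb{Z}^d}\tau_{2ak}\rho_{\Lambda_a}$, so that $\tilde\rho^a=\frac1{|\Lambda_a|}\sum_{x\in\Lambda_a}\tau_x\alpha^a$. By translation invariance of $\rho$, the marginal of $\alpha^a$ on any subset contained in a single cell $\Lambda_a+2ak$ of the periodic tiling coincides with the corresponding marginal of $\rho$; therefore, whenever $x\in\Lambda_a$ is such that $\Lambda-x$ lies inside a single cell, $(\tau_x\alpha^a)_\Lambda=(\tau_x\rho)_\Lambda=\rho_\Lambda$. The exceptional $x\in\Lambda_a$ (those for which $\Lambda-x$ straddles a cell wall) form a boundary layer of thickness $\mathrm{diam}(\Lambda)$, so an elementary count — e.g.\ after reducing, by monotonicity of the trace distance under partial trace, to the case $\Lambda=\Lambda_c$ — shows that their fraction in $\Lambda_a$ is $O\!\left(\sum_{i=1}^d\mathrm{diam}(\Lambda)/a_i\right)\to0$. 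Since each $(\tau_x\alpha^a)_\Lambda$ and $\rho_\Lambda$ are states, the triangle inequality for the trace norm gives $\bigl\|(\tilde\rho^a)_\Lambda-\rho_\Lambda\bigr\|_1\le\frac{2}{|\Lambda_a|}\,\#\{\text{exceptional }x\}\to0$, and the same argument applies to $\tilde\sigma^a$.

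Combining the two ingredients with \eqref{eq:lsc},
\begin{equation}
\| \rho - \sigma\|_{w_1} \le \liminf_{n\to\infty}\bigl\| \tilde\rho^{a^{(n)}} - \tilde\sigma^{a^{(n)}} \bigr\|_{w_1} \le \lim_{n\to\infty}\frac{\bigl\| \rho_{\Lambda_{a^{(n)}}} - \sigma_{\Lambda_{a^{(n)}}} \bigr\|_{W_1}}{\bigl|\Lambda_{a^{(n)}}\bigr|} = w_1(\rho,\sigma)\, ,
\end{equation}
which is \eqref{eq:ge}. The substantive point is the weak-convergence statement; within it, the only thing requiring care is the verification that the exceptional boundary set is genuinely $o(|\Lambda_a|)$, which holds because $\Lambda$ is fixed while $a\to\infty$. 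Everything else is bookkeeping built on \autoref{prop:periodic}, \autoref{prop:w1sup} and \eqref{eq:lsc}.
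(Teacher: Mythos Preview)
Your proof is correct and follows essentially the same approach as the paper: you invoke \autoref{prop:periodic}, establish weak convergence of the periodized states $\tilde\rho^a\to\rho$ (and $\tilde\sigma^a\to\sigma$) by the same ``most translates keep $\Lambda$ inside a single cell'' counting argument, and then combine \eqref{eq:lsc} with \autoref{prop:w1sup} to conclude. The only cosmetic difference is that you allow $\Lambda-x$ to lie in \emph{any} cell of the periodic tiling while the paper checks the slightly stronger condition $\Lambda_b\subseteq\Lambda_a+x$ for the $k=0$ cell; either version gives a good set of density $1$ in $\Lambda_a$.
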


\begin{proof}
With the notation of \autoref{prop:periodic}, we argue that the states $\tilde{\rho}^a$ weakly converge to $\rho$. For any fixed $b \in \mathbb{N}^d_+$, if $x \in \Lambda_a$ is such that $\Lambda_b \subseteq \Lambda_a+x$, then
\begin{equation}
\left( \tau_x \bigotimes_{k \in \mathbb{Z}^d} \tau_{2ak} \rho_{\Lambda_a} \right)_{\Lambda_b} = \rho_{\Lambda_b}\, .
\end{equation}
Therefore, denoting by $G_a\subseteq \Lambda_a$ the set of such $x$'s, we write
\begin{equation}
\tilde{\rho}^a_{\Lambda_b} = \frac{|G_a|}{|\Lambda_a|}  \rho_{\Lambda_b} + \sum_{x \in \Lambda_a\setminus G_a}  \left( \tau_x \bigotimes_{k \in \mathbb{Z}^d} \tau_{2ak} \rho_{\Lambda_a} \right)_{\Lambda_b}
\end{equation}
Since $|G_a| = |\Lambda_a|- |\Lambda_b|$, it follows that, as $a \to \infty$, $\tilde{\rho}^a_{\Lambda_b}$ converge in $\mathcal{S}_{\Lambda_b}$ towards $\rho_{\Lambda_b}$. This holds for any $b \in \mathbb{N}^d_+$, hence we obtain the desired weak convergence in $\mathcal{S}_{\mathbb{Z}^d}$ of $\tilde{\rho}^a$ towards $\rho$. By \eqref{eq:lsc}, \autoref{prop:periodic} and \autoref{prop:w1sup} we have the inequalities
\begin{align}\label{eq:inequalities-periodic-approx}
\| \rho- \sigma\|_{w_1} &\le \liminf_{a \to \infty} \| \tilde{\rho}^a - \tilde{\sigma}^a \|_{w_1}  \nonumber \\
& \le  \limsup_{a \to \infty} \| \tilde{\rho}^a - \tilde{\sigma}^a \|_{w_1} \le \sup_{a \in \mathbb{N}^d_+} \frac{ \| \rho_{\Lambda_a} -  \sigma_{\Lambda_a} \|_{W_1}}{|\Lambda_a|} =   w_1(\rho, \sigma)\, ,
\end{align}
and the proof of \eqref{eq:ge} is completed.
\end{proof}

As a consequence of the above argument, we also obtain that the periodic approximations  always converge with respect to  the specific quantum $W_1$ distance.

\begin{cor} With the notation of \autoref{prop:periodic}, we have
\begin{equation}
\lim_{a \to \infty} w_1( \tilde{\rho}^a, \tilde{\sigma}^a) =  w_1(\rho, \sigma)\, .
\end{equation}
\end{cor}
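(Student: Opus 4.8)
The plan is to squeeze $w_1(\tilde\rho^a,\tilde\sigma^a)$ between an upper bound coming from \autoref{prop:periodic} and a lower bound coming from lower semicontinuity, both of which converge to $w_1(\rho,\sigma)$. Throughout I would use \autoref{thm:alternative-w1} to freely identify $w_1$ with $\|\cdot\|_{w_1}$ on $\mathcal{S}_{\mathbb{Z}^d}^I$.

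For the upper bound, \autoref{prop:periodic} together with \autoref{prop:w1sup} gives
\begin{equation}
w_1(\tilde\rho^a,\tilde\sigma^a) \;\le\; \frac{\|\rho_{\Lambda_a}-\sigma_{\Lambda_a}\|_{W_1}}{|\Lambda_a|} \;\le\; \sup_{b\in\mathbb{N}_+^d}\frac{\|\rho_{\Lambda_b}-\sigma_{\Lambda_b}\|_{W_1}}{|\Lambda_b|} \;=\; w_1(\rho,\sigma)\,,
\end{equation}
so that $\limsup_{a\to\infty} w_1(\tilde\rho^a,\tilde\sigma^a)\le w_1(\rho,\sigma)$.

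For the matching lower bound, I would reuse the weak-convergence argument already carried out in the proof of \autoref{prop:ge}: for any fixed $b\in\mathbb{N}_+^d$, the marginal $\tilde\rho^a_{\Lambda_b}$ agrees with $\rho_{\Lambda_b}$ on the fraction $|G_a|/|\Lambda_a| = 1 - |\Lambda_b|/|\Lambda_a|$ of the averaged translates, which tends to $1$ as $a\to\infty$; hence $\tilde\rho^a\to\rho$ and likewise $\tilde\sigma^a\to\sigma$ weakly in $\mathcal{S}_{\mathbb{Z}^d}$. Applying the lower semicontinuity bound \eqref{eq:lsc} then yields
\begin{equation}
w_1(\rho,\sigma) \;=\; \|\rho-\sigma\|_{w_1} \;\le\; \liminf_{a\to\infty}\|\tilde\rho^a-\tilde\sigma^a\|_{w_1} \;=\; \liminf_{a\to\infty} w_1(\tilde\rho^a,\tilde\sigma^a)\,.
\end{equation}

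Combining the two displays squeezes $\liminf$ and $\limsup$ together, so the limit exists and equals $w_1(\rho,\sigma)$. I do not expect any real obstacle here: the corollary is essentially a restatement of the chain of inequalities \eqref{eq:inequalities-periodic-approx} appearing in the proof of \autoref{prop:ge}, the only additional ingredient being the elementary observation that each term $\|\rho_{\Lambda_a}-\sigma_{\Lambda_a}\|_{W_1}/|\Lambda_a|$ is bounded above by the supremum $w_1(\rho,\sigma)$, which upgrades the previously established one-sided bound into a genuine limit.
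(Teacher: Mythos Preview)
Your proposal is correct and follows essentially the same approach as the paper: both arguments squeeze $w_1(\tilde\rho^a,\tilde\sigma^a)$ using the chain of inequalities \eqref{eq:inequalities-periodic-approx}, noting that the two ends coincide by \autoref{thm:alternative-w1}, so the $\liminf$ and $\limsup$ are forced to agree. The paper's proof is just the one-line observation that all inequalities in \eqref{eq:inequalities-periodic-approx} become equalities once $\|\rho-\sigma\|_{w_1}=w_1(\rho,\sigma)$ is known; you spell out the same reasoning more explicitly.
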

\begin{proof}
In \eqref{eq:inequalities-periodic-approx} all inequalities must be equalities, hence the limit.
\end{proof}

\section{The quantum Lipschitz constant for infinite lattices}\label{sec:L}
In this section we extend the definition of quantum Lipschitz constant of Ref. \cite{de2021quantum} to interactions on the quantum spin lattice $\mathbb{Z}^d$.

The first step towards defining a Lipschitz constant for interactions is to extend to $\mathcal{O}_{\mathbb{Z}^d}$ the definition of dependence on a site:
\begin{defn}\label{defn:partialI}
For any $H\in\mathcal{O}_{\mathbb{Z}^d}$ and any $x\in\mathbb{Z}^d$ we define
\begin{equation}\label{eq:partialxHI}
\partial_x H = 2\inf_{A\in \mathcal{O}_{\mathbb{Z}^d\setminus x}}\left\|H - A\right\|_\infty\,.
\end{equation}
\end{defn}

\begin{prop}\label{prop:equivpartial}
For any $\Lambda\in\mathcal{F}_{\mathbb{Z}^d}$, any $H\in\mathcal{O}_\Lambda$ and any $x\in\Lambda$, \eqref{eq:partialxH} and \eqref{eq:partialxHI} are equivalent.
\end{prop}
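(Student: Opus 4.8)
The plan is to show that the infimum defining \eqref{eq:partialxHI} over the full quasi-local algebra $\mathcal{O}_{\mathbb{Z}^d\setminus x}$ coincides with the minimum over the finite-dimensional algebra $\mathcal{O}_{\Lambda\setminus x}$ appearing in \eqref{eq:partialxH}. One inequality is immediate: since $x\in\Lambda$ we have $\Lambda\setminus x\subseteq\mathbb{Z}^d\setminus x$, hence $\mathcal{O}_{\Lambda\setminus x}\subseteq\mathcal{O}_{\mathbb{Z}^d\setminus x}$ under the canonical identification, and therefore $\inf_{A\in\mathcal{O}_{\mathbb{Z}^d\setminus x}}\|H-A\|_\infty\le\min_{A\in\mathcal{O}_{\Lambda\setminus x}}\|H-A\|_\infty$.

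For the reverse inequality I would use a conditional-expectation (normalized partial trace) argument to push any competitor $A\in\mathcal{O}_{\mathbb{Z}^d\setminus x}$ back onto $\Lambda\setminus x$ without increasing its distance to $H$. Given $\varepsilon>0$, first approximate $A$ in $\|\cdot\|_\infty$ to within $\varepsilon$ by a strictly local element $B\in\mathfrak{U}_X$ with $X\in\mathcal{F}_{\mathbb{Z}^d}$, $X\subseteq\mathbb{Z}^d\setminus x$ — this is legitimate because $\mathfrak{U}_{\mathbb{Z}^d\setminus x}$ is by definition the norm-closure of $\bigcup_{X\subseteq\mathbb{Z}^d\setminus x}\mathfrak{U}_X$. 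Replacing $B$ by $(B+B^\dagger)/2\in\mathcal{O}_X$, which is no farther from the self-adjoint $A$, we may assume $B\in\mathcal{O}_X$. Writing $S=X\cap\Lambda\subseteq\Lambda\setminus x$ and $S'=X\setminus\Lambda$, set
\begin{equation*}
B' = \frac{\mathrm{Tr}_{S'}B}{q^{|S'|}}\in\mathcal{O}_S\subseteq\mathcal{O}_{\Lambda\setminus x}\,.
\end{equation*}

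The key point is that the map $\mathbb{E}_{S'}\colon C\mapsto q^{-|S'|}\,\mathrm{Tr}_{S'}C$, viewed as a map $\mathfrak{U}_X\to\mathfrak{U}_S$ (equivalently, composed with the inclusion $\mathfrak{U}_S\hookrightarrow\mathfrak{U}_X$), is a unital, completely positive linear contraction in operator norm — for instance because it can be realized as the Haar average $\mathbb{E}_{S'}(C)=\int U\,C\,U^\dagger\,\mathrm{d}U$ over the unitaries acting on $\mathcal{H}_{S'}$ alone, each integrand $U\,C\,U^\dagger$ having norm $\|C\|_\infty$. Since $H\in\mathcal{O}_\Lambda$ acts trivially on $\mathcal{H}_{S'}$, it is fixed: $\mathbb{E}_{S'}(H)=H$. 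Consequently $\|H-B'\|_\infty=\|\mathbb{E}_{S'}(H-B)\|_\infty\le\|H-B\|_\infty\le\|H-A\|_\infty+\varepsilon$, so that $\min_{A'\in\mathcal{O}_{\Lambda\setminus x}}\|H-A'\|_\infty\le\|H-A\|_\infty+\varepsilon$. Letting $\varepsilon\to0$ and then taking the infimum over $A\in\mathcal{O}_{\mathbb{Z}^d\setminus x}$ yields the missing inequality, and the two definitions agree.

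There is no serious obstacle: the only points needing a line of justification are that $\mathfrak{U}_{\mathbb{Z}^d\setminus x}$ is the norm-closure of the strictly local subalgebras supported in $\mathbb{Z}^d\setminus x$ (legitimizing the approximation step) and that the normalized partial trace is a norm-one map fixing operators outside its support; both are standard for the quasi-local algebra. Alternatively one could invoke directly the existence of the trace-preserving conditional expectation $\mathbb{E}_\Lambda\colon\mathfrak{U}_{\mathbb{Z}^d}\to\mathfrak{U}_\Lambda$ and apply it to $H-A$, bypassing the explicit approximation, but the elementary argument above keeps the exposition self-contained.
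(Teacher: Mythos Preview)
Your proof is correct and follows essentially the same idea as the paper: use the normalized partial trace (conditional expectation) onto $\mathfrak{U}_\Lambda$, which is a unital completely positive contraction fixing $H$, to replace any competitor $A\in\mathcal{O}_{\mathbb{Z}^d\setminus x}$ by an element of $\mathcal{O}_{\Lambda\setminus x}$ at no greater distance. The only cosmetic difference is that you first approximate by a strictly local operator and then trace out a finite set, whereas the paper applies the conditional expectation $\Psi_\Lambda$ directly on the quasi-local algebra --- precisely the alternative you mention in your final paragraph.
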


\begin{proof}
See \autoref{sec:equivpartial}.
\end{proof}

We can now define the Lipschitz constant for interactions:
\begin{defn}[Lipschitz constant]\label{def:L}
We define the \emph{Lipschitz constant} of $\Phi\in\mathcal{B}^r_{\mathbb{Z}^d}$ as the dependence of the formal Hamiltonian $H^\Phi_{\mathbb{Z}^d}$ on the site $0$:
\begin{equation}
\left\|\Phi\right\|_L = \partial_0 \sum_{0\in \Lambda \in \mathcal{F}_{\mathbb{Z}^d}} \Phi(\Lambda)\,,
\end{equation}
where the series converges absolutely in the $\|\cdot\|_\infty$ norm.
\end{defn}

The Lipschitz constant of $\Phi$ is also equal to the dependence on a fixed site of the local Hamiltonian on a region in the limit of infinite volume:
\begin{prop}
For any $\Phi\in\mathcal{B}_{\mathbb{Z}^d}^r$ and any $x\in\mathbb{Z}^d$ we have
\begin{equation}\label{eq:defPhiL}
\lim_{a\to\infty}\partial_xH^\Phi_{\Lambda_a} = \left\|\Phi\right\|_L\,.
\end{equation}
\end{prop}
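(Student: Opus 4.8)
The plan is to isolate the terms of $H^\Phi_{\Lambda_a}$ that genuinely depend on the site $x$ and to control the remainder by the tail of the absolutely convergent series defining $\|\Phi\|_L$.

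First I would record that $\partial_x$ behaves like a seminorm: since $\partial_x A = 2\,\mathrm{dist}_{\|\cdot\|_\infty}(A,\mathcal{O}_{\mathbb{Z}^d\setminus x})$ is twice the distance from $A$ to a real linear subspace, it satisfies $\partial_x(A+B)\le\partial_xA+\partial_xB$ and, crucially, $\partial_x(A+B)=\partial_xA$ whenever $B\in\mathcal{O}_{\mathbb{Z}^d\setminus x}$. Splitting $H^\Phi_{\Lambda_a}=\sum_{x\in X\subseteq\Lambda_a}\Phi(X)+\sum_{X\subseteq\Lambda_a,\;x\notin X}\Phi(X)$ and noting that the second sum lies in $\mathcal{O}_{\Lambda_a\setminus x}\subseteq\mathcal{O}_{\mathbb{Z}^d\setminus x}$, we obtain $\partial_xH^\Phi_{\Lambda_a}=\partial_x\bigl(\sum_{x\in X\subseteq\Lambda_a}\Phi(X)\bigr)$. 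By translation invariance of $\Phi$ this equals $\partial_x\bigl(\tau_x\sum_{0\in Y\subseteq\Lambda_a-x}\Phi(Y)\bigr)$, and since $\tau_x$ is an isometry mapping $\mathcal{O}_{\mathbb{Z}^d\setminus0}$ onto $\mathcal{O}_{\mathbb{Z}^d\setminus x}$, this is $\partial_0 G_a$ with $G_a:=\sum_{0\in Y\subseteq\Lambda_a-x}\Phi(Y)$. (Here \autoref{prop:equivpartial} lets us pass freely between the finite-volume formula \eqref{eq:partialxH} and the quasi-local formula \eqref{eq:partialxHI}, since $H^\Phi_{\Lambda_a}\in\mathcal{O}_{\Lambda_a}\subseteq\mathcal{O}_{\mathbb{Z}^d}$.)

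Next I would show $G_a\to G:=\sum_{0\in Y\in\mathcal{F}_{\mathbb{Z}^d}}\Phi(Y)$ in $\|\cdot\|_\infty$. The series for $G$ converges absolutely because $e^{r(|Y|-1)}\ge1$ gives $\sum_{0\in Y}\|\Phi(Y)\|_\infty\le\|\Phi\|_r<\infty$. Since $\Lambda_a-x$ exhausts $\mathbb{Z}^d$ as $a\to\infty$, every fixed finite $Y$ with $0\in Y$ is eventually contained in $\Lambda_a-x$; hence $\|G-G_a\|_\infty\le\sum_{0\in Y,\;Y\not\subseteq\Lambda_a-x}\|\Phi(Y)\|_\infty$ is a tail of a convergent series and tends to $0$.

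Finally, combining the seminorm bound $|\partial_0G_a-\partial_0G|\le\partial_0(G_a-G)\le2\|G_a-G\|_\infty$ with the convergence just established yields $\lim_{a\to\infty}\partial_xH^\Phi_{\Lambda_a}=\partial_0G=\|\Phi\|_L$. I do not anticipate a genuine obstacle; the only points requiring a little care are the interchange of $\tau_x$ with $\partial_x$ and the verification that the index sets $\{Y:0\in Y,\;Y\not\subseteq\Lambda_a-x\}$ shrink to the empty set, which is what makes the tail estimate work.
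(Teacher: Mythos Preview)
Your proof is correct and follows essentially the same route as the paper: reduce $\partial_x H^\Phi_{\Lambda_a}$ via translation invariance to $\partial_0$ of the partial sum $\sum_{0\in Y\subseteq\Lambda_a-x}\Phi(Y)$, and then control the difference from $\|\Phi\|_L=\partial_0\sum_{0\in Y}\Phi(Y)$ by the tail of the absolutely convergent series $\sum_{0\in Y}\|\Phi(Y)\|_\infty\le\|\Phi\|_r$. The only cosmetic difference is the order in which you strip the terms not containing the distinguished site and apply the translation; the seminorm estimate and the tail argument are the same.
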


\begin{proof}
We have
\begin{align}
\left|\left\|\Phi\right\|_L - \partial_xH^\Phi_{\Lambda_a}\right| &\overset{\mathrm{(a)}}{=}
\left|\left\|\Phi\right\|_L - \partial_0 H^\Phi_{\Lambda_a-x}\right| =
\left|\left\|\Phi\right\|_L - \partial_0 \sum_{0\in\Lambda\subseteq\Lambda_a-x}\Phi(\Lambda)\right|\nonumber\\
&\overset{\mathrm{(b)}}{\le} \partial_0 \sum_{0\in \Lambda \in \mathcal{F}_{\mathbb{Z}^d}:\Lambda\not\subseteq\Lambda_a-x} \Phi(\Lambda) \le 2\sum_{0\in \Lambda \in \mathcal{F}_{\mathbb{Z}^d}:\Lambda\not\subseteq\Lambda_a-x}\left\|\Phi(\Lambda)\right\|_\infty\,,
\end{align}
where (a) follows from the translation invariance of $\Phi$ and (b) follows since $\partial_0$ is a seminorm.
Since
\begin{equation}
\sum_{0\in \Lambda \in \mathcal{F}_{\mathbb{Z}^d}}\left\|\Phi(\Lambda)\right\|_\infty \le \left\|\Phi\right\|_r < \infty\,,
\end{equation}
we have
\begin{equation}
\lim_{a\to\infty}\left|\left\|\Phi\right\|_L - \partial_xH^\Phi_{\Lambda_a}\right| \le 2\lim_{a\to\infty}\sum_{0\in \Lambda \in \mathcal{F}_{\mathbb{Z}^d}:\Lambda\not\subseteq\Lambda_a-x}\left\|\Phi(\Lambda)\right\|_\infty = 0\,.
\end{equation}
The claim follows.
\end{proof}

\subsection{Physical equivalence}
Different interactions may give rise to the same formal Hamiltonian.
Such interactions are called physically equivalent.
The concept of physical equivalence between interactions was formally introduced in \cite{griffiths1971strict,roos1974strict,israel2015convexity}.
We adopt the definition of \cite[Section 4.7]{ruelle2004thermodynamic} and \cite{jaksic2022approach,jaksic2022note}.
The reader can find more details in \cite[Section 2.4.6]{van1993regularity}.

\begin{defn}
The interaction $\Phi,\,\Psi\in\mathcal{B}^r_{\mathbb{Z}^d}$ are \emph{physically equivalent} if any of the following equivalent conditions holds:
\begin{enumerate}
\item The local Hamiltonians of $\Phi$ and $\Psi$ normalized by the number of sites differ only by a constant in the limit of infinite volume:
\begin{equation}
\lim_{a\to\infty}\left\|\frac{H^{\Phi}_\Lambda - H^{\Psi}_\Lambda}{\left|\Lambda_a\right|} - \omega(E_{\Phi-\Psi})\,\mathbb{I}\right\|_\infty = 0\,,
\end{equation}
where $\omega\in\mathcal{S}^I_{\mathbb{Z}^d}$ is the uniform distribution, \emph{i.e.}, $\omega_\Lambda = \frac{\mathbb{I}_\Lambda}{q^{|\Lambda|}}$ for any $\Lambda\in\mathcal{F}_{\mathbb{Z}^d}$.
\item $\Phi$ and $\Psi$ generate the same time evolution: For any $A\in\mathfrak{U}_{\mathbb{Z}^d}$ and any $t\in\mathbb{R}$ we have
\begin{equation}
\lim_{a\to\infty} \left\|e^{i H^\Phi_{\Lambda_a} t}\,A\,e^{-i H^\Phi_{\Lambda_a} t} - e^{i H^\Psi_{\Lambda_a} t}\,A\,e^{-i H^\Psi_{\Lambda_a} t}\right\|_\infty = 0\,.
\end{equation}
\item For any $A\in\mathfrak{U}^{loc}_{\mathbb{Z}^d}$ we have
\begin{equation}
\sum_{\Lambda\in\mathcal{F}_{\mathbb{Z}^d}} \left[\Phi(\Lambda) - \Psi(\Lambda),\,A\right] = 0\,.
\end{equation}
\end{enumerate}
\end{defn}

\begin{prop}
The interaction $\Phi\in\mathcal{B}^r_{\mathbb{Z}^d}$ is physically equivalent to the null interaction iff $\left\|\Phi\right\|_L = 0$.
\end{prop}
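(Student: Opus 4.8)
The plan is to prove both implications of the equivalence. The easy direction is that physical equivalence to the null interaction implies $\|\Phi\|_L = 0$: if $\Phi$ is physically equivalent to $0$, then condition (1) of the definition gives that $H^\Phi_{\Lambda_a}/|\Lambda_a|$ converges in operator norm to a constant multiple of the identity, and combined with \eqref{eq:defPhiL} together with the fact that $\partial_0$ vanishes on scalars, one expects $\|\Phi\|_L = \lim_{a\to\infty}\partial_0 H^\Phi_{\Lambda_a}$ to be zero. I would, however, prefer to argue directly from condition (3): if $\sum_{\Lambda} [\Phi(\Lambda), A] = 0$ for every local $A$, then in particular $H^\Phi_{\mathbb{Z}^d} := \sum_{0\in\Lambda}\Phi(\Lambda)$ (which converges absolutely by $\Phi\in\mathcal{B}^r_{\mathbb{Z}^d}$, as noted in \autoref{def:L}) commutes with every $A\in\mathfrak{U}_{\{0\}'}$ — more carefully, one shows that the bounded operator $\sum_{0\in\Lambda}\Phi(\Lambda)$ differs from an element of $\mathcal{O}_{\mathbb{Z}^d\setminus 0}$ by a scalar, which forces $\partial_0$ of it to vanish.

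For the converse, suppose $\|\Phi\|_L = 0$, i.e. $\partial_0\sum_{0\in\Lambda}\Phi(\Lambda) = 0$. By \autoref{defn:partialI}, this means there is a sequence $A_n \in \mathcal{O}_{\mathbb{Z}^d\setminus 0}$ with $\|\sum_{0\in\Lambda}\Phi(\Lambda) - A_n\|_\infty \to 0$; since the left operand is a fixed bounded operator, $A_n$ is Cauchy and converges to some $A_0 \in \mathcal{O}_{\mathbb{Z}^d\setminus 0}$ with $\sum_{0\in\Lambda}\Phi(\Lambda) = A_0$. The idea is then to use translation invariance: for each $x\in\mathbb{Z}^d$ the operator $B_x := \sum_{x\in\Lambda}\Phi(\Lambda) = \tau_x(\sum_{0\in\Lambda}\Phi(\Lambda))$ lies in $\mathcal{O}_{\mathbb{Z}^d\setminus x}$. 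Now I would verify condition (3): for a local observable $A$ supported on a finite set $S$, write $\sum_{\Lambda}[\Phi(\Lambda), A] = \sum_{x\in S}\sum_{x\in\Lambda,\ \Lambda\cap (S\setminus x) \text{ arbitrary}}[\cdots]$ — more cleanly, split $\sum_\Lambda \Phi(\Lambda) = \sum_{x\in S} c_x \sum_{x\in\Lambda}\Phi(\Lambda) + (\text{terms with }\Lambda\cap S = \emptyset)$ for suitable weights, where the second group commutes with $A$ trivially, and each $B_x$ with $x\in S$ commutes with $A$ only if $A$ does not act on $x$ — which is the obstruction. The correct bookkeeping is: $[\Phi(\Lambda), A] = 0$ whenever $\Lambda \cap S = \emptyset$, so $\sum_\Lambda[\Phi(\Lambda),A] = \sum_{x\in S}\sum_{\Lambda: \min(\Lambda\cap S) = x}[\Phi(\Lambda), A]$, and one shows each inner sum equals $[B'_x, A]$ for some $B'_x \in \mathcal{O}_{\mathbb{Z}^d\setminus x}$ built from the hypothesis $B_x \in \mathcal{O}_{\mathbb{Z}^d\setminus x}$, hence commutes with... no — here lies the subtlety, since $A$ does act on $x$.

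The cleaner route for the converse is via condition (1). Assuming $\|\Phi\|_L = 0$, one has $\sum_{0\in\Lambda}\Phi(\Lambda) = c_0\mathbb{I} + A_0$ with $A_0 \in \mathcal{O}_{\mathbb{Z}^d\setminus 0}$ and $c_0\in\mathbb{R}$; translating, $B_x = c_0\mathbb{I} + A_x$ with $A_x\in\mathcal{O}_{\mathbb{Z}^d\setminus x}$. I would then show that $H^\Phi_{\Lambda_a} - c_0|\Lambda_a|\,\mathbb{I}$, suitably compared with $\sum_{x\in\Lambda_a}\tau_x E_\Phi$ via \eqref{eq:EPhiN}, has the property that $\frac{1}{|\Lambda_a|}(H^\Phi_{\Lambda_a} - H^0_{\Lambda_a})$ — with $H^0_{\Lambda_a}=0$ — converges in operator norm to $\omega(E_\Phi)\mathbb{I}$. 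Indeed $c_0$ should be identified with $\omega(E_\Phi)$: averaging $B_0 = \sum_{0\in\Lambda}\Phi(\Lambda)$ against the uniform state $\omega$ kills $A_0$ (since $\mathrm{Tr}_0$ of each summand of $A_0$ times the maximally mixed state leaves an operator on $\mathbb{Z}^d\setminus 0$, and iterating... ) — more directly, $\omega(B_0) = c_0 + \omega(A_0)$, and one checks $\omega(A_0) = 0$ by using that $A_0 = \lim A_n$ with $A_n$ traceless on $0$ after subtracting its $0$-average, or simply by redefining $c_0 = \omega(B_0)$ from the start. Then $H^\Phi_{\Lambda_a} = \sum_{x\in\Lambda_a} B_x + (\text{boundary corrections})$ where the boundary corrections, by the $\mathcal{B}^r_{\mathbb{Z}^d}$ summability, are $o(|\Lambda_a|)$ in operator norm; writing $B_x = \omega(E_\Phi)\mathbb{I} + A_x$ and noting $\|\sum_{x\in\Lambda_a}A_x\|_\infty$ need not be $o(|\Lambda_a|)$ in general — so this too requires care, and the honest argument uses \eqref{eq:EPhiN} to replace $\sum_{x\in\Lambda_a}B_x$ by $\sum_{x\in\Lambda_a}\tau_x E_\Phi$ up to $o(|\Lambda_a|)$, then shows $\tau_x E_\Phi - \omega(E_\Phi)\mathbb{I}$ sums to something $o(|\Lambda_a|)$ precisely because each $\partial_y(\tau_x E_\Phi)$ is controlled and their overlaps telescope. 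The main obstacle is this last point: translating from the \emph{local} statement $\|\Phi\|_L = 0$ (dependence on a single site vanishes) to the \emph{global} statement that the local Hamiltonians converge to a scalar — this is where one must exploit translation invariance and the exponential summability in $\mathcal{B}^r_{\mathbb{Z}^d}$ to sum up the single-site estimates without losing a factor of volume, and I expect the proof to be deferred to an appendix with a careful telescoping argument analogous to the one in the proof of \autoref{prop:periodic}.
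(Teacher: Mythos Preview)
Your first direction (physical equivalence to $0$ implies $\|\Phi\|_L=0$) via condition (3) is essentially the paper's argument: with $K=\sum_{0\in\Lambda}\Phi(\Lambda)$, condition (3) applied to $A\in\mathfrak{U}_0$ gives $[K,A]=0$, hence $K\in\mathcal{O}_{\mathbb{Z}^d\setminus 0}$ and $\partial_0 K=0$.

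For the converse, however, there is a genuine gap. You correctly deduce from $\partial_0 K=0$ that $K\in\mathcal{O}_{\mathbb{Z}^d\setminus 0}$ (being a limit of elements of $\mathcal{O}_{\mathbb{Z}^d\setminus 0}$, which is closed), and by translation invariance $B_x=\tau_x K\in\mathcal{O}_{\mathbb{Z}^d\setminus x}$ for every $x$. But then you abandon condition (3) after getting tangled in the decomposition by $\min(\Lambda\cap S)$, and switch to condition (1), which you do not complete --- and the route via condition (1) is indeed substantially harder, since controlling $\|\sum_{x\in\Lambda_a}A_x\|_\infty$ is exactly the difficulty you identify and does not obviously telescope.

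The paper stays with condition (3) and resolves your ``subtlety'' with a one-line trick you missed: the Leibniz rule for commutators, combined with induction on the size of the support of $A$. From $K\in\mathcal{O}_{\mathbb{Z}^d\setminus 0}$ one has $\sum_X[\Phi(X),A_0]=[K,A_0]=0$ for $A_0\in\mathfrak{U}_0$, and by translation invariance the same holds for any single-site observable. For the inductive step with $A_\Lambda=A_x\otimes A_{\Lambda\setminus x}$, write
\[
\sum_X[\Phi(X),A_x\otimes A_{\Lambda\setminus x}]=\sum_X\bigl([\Phi(X),A_x]\,A_{\Lambda\setminus x}+A_x\,[\Phi(X),A_{\Lambda\setminus x}]\bigr)=0,
\]
since both inner sums vanish by the inductive hypothesis. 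No grouping by $\min(\Lambda\cap S)$ is needed; the obstruction you flagged (that $A$ acts on $x$) disappears because one never tries to commute a single $B'_x$ with the full $A$, but instead splits $A$ itself.
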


\begin{proof}
\begin{itemize}
\item Let $\Phi$ be physically equivalent to the null interaction.
Let
\begin{equation}\label{eq:K}
K = \sum_{0\in\Lambda\in\mathcal{F}_{\mathbb{Z}^d}}\Phi(\Lambda)\,,
\end{equation}
where the series converges absolutely in the $\|\cdot\|_\infty$ norm.
We have for any $A\in\mathfrak{U}_0$
\begin{equation}
0 = \sum_{\Lambda\in\mathcal{F}_{\mathbb{Z}^d}} \left[\Phi(\Lambda),\,A\right] = \sum_{0\in\Lambda\in\mathcal{F}_{\mathbb{Z}^d}}\left[\Phi(\Lambda),\,A\right] = \left[K,\,A\right]\,,
\end{equation}
therefore $K\in\mathcal{O}_{\mathbb{Z}^d}\setminus0$ and $\left\|\Phi\right\|_L = \partial_0 K = 0$.

\item Let $\left\|\Phi\right\|_L = 0$.
Let $K$ be as in \eqref{eq:K}.
We have
\begin{equation}
2\inf_{H\in\mathcal{O}_{\mathbb{Z}^d\setminus0}}\left\|K - H\right\|_\infty = \partial_0 K = \left\|\Phi\right\|_L = 0\,,
\end{equation}
therefore there exists a sequence
$\left(K^{(n)}\right)_{n\in\mathbb{N}}\subset\mathcal{O}_{\mathbb{Z}^d\setminus0}$ such that
\begin{equation}
\lim_{n\to\infty}\left\|K - K^{(n)}\right\|_\infty = 0\,.
\end{equation}
Then, $K\in\mathcal{O}_{\mathbb{Z}^d\setminus0}$, and for any $A_0\in\mathfrak{U}_0$ we have
\begin{equation}
\sum_{\Lambda\in\mathcal{F}_{\mathbb{Z}^d}}\left[\Phi(\Lambda),\,A_0\right] = \left[K,\,A_0\right] = 0\,.
\end{equation}
Let us prove that for any $\Lambda\in\mathcal{F}_{\mathbb{Z}^d}$ and any $A_\Lambda\in\mathfrak{U}_\Lambda$ we have
\begin{equation}
\sum_{X\in\mathcal{F}_{\mathbb{Z}^d}}\left[\Phi(X),\,A_\Lambda\right] = 0
\end{equation}
by induction on the size of $\Lambda$.
We have already proved the claim for $|\Lambda|=1$.
Let us fix $x\in\Lambda$.
By linearity, we can assume that $A_\Lambda = A_x\otimes A_{\Lambda\setminus x}$ with $A_{\Lambda\setminus x}\in\mathfrak{U}_{\Lambda\setminus x}$.
We have from the inductive hypothesis
\begin{equation}
\sum_{X\in\mathcal{F}_{\mathbb{Z}^d}}\left[\Phi(X),\,A_x\otimes A_{\Lambda\setminus x}\right] = \sum_{X\in\mathcal{F}_{\mathbb{Z}^d}}\left(\left[\Phi(X),\,A_x\right]A_{\Lambda\setminus x} + A_x\left[\Phi(X),\,A_{\Lambda\setminus x}\right]\right) = 0\,.
\end{equation}
The claim follows.
\end{itemize}
\end{proof}

\section{Duality for the \texorpdfstring{$w_1$}{w\_1} distance}\label{sec:duality}

Using \autoref{thm:alternative-w1}, we prove the following dual formulation for the specific quantum $W_1$ distance.

\begin{thm}[Duality]\label{thm:duality}
For $\rho$, $\sigma \in \mathcal{S}_{\mathbb{Z}^d}^I$, we have
\begin{equation}\label{eq:duality}
w_1(\rho, \sigma) = \sup\left\{\rho(E_{\Phi}) - \sigma(E_{\Phi}): \Phi \in \mathcal{B}_{\mathbb{Z}^d}^r , \, \left\|\Phi\right\|_L\le1\right\}\, .
\end{equation}
\end{thm}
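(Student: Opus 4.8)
The plan is to establish the two inequalities separately, in each case passing through the finite-volume duality of \autoref{prop:duality} and the equivalent formulation of \autoref{def:alternative-wass}. For the inequality $w_1(\rho,\sigma) \ge \sup\{\ldots\}$, I would start from an arbitrary $\Phi \in \mathcal{B}^r_{\mathbb{Z}^d}$ with $\|\Phi\|_L \le 1$. Using \eqref{eq:EPhi}, $\rho(E_\Phi) - \sigma(E_\Phi) = \lim_{a\to\infty} (\rho - \sigma)(H^\Phi_{\Lambda_a})/|\Lambda_a|$. For each fixed $a$, $H^\Phi_{\Lambda_a} \in \mathcal{O}_{\Lambda_a}$ and \autoref{prop:equivpartial} together with \eqref{eq:defPhiL} gives $\|H^\Phi_{\Lambda_a}\|_L = \max_{x\in\Lambda_a}\partial_x H^\Phi_{\Lambda_a}$, which is \emph{not} a priori $\le 1$ because of boundary sites. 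However, $\limsup_a \max_{x}\partial_x H^\Phi_{\Lambda_a}$ need not be bounded by $1$ either; the clean fix is to use only the translation-invariant bulk or to invoke the superadditivity already proven in \autoref{prop:w1sup}: it suffices to compare with $\|\rho_{\Lambda_a}-\sigma_{\Lambda_a}\|_{W_1}/|\Lambda_a|$ on a subsequence, and by \autoref{prop:duality} this dominates $(\rho-\sigma)(H^\Phi_{\Lambda_a})/|\Lambda_a|$ up to the correction $\frac{1}{|\Lambda_a|}\sum_{x\in\Lambda_a}(\partial_x H^\Phi_{\Lambda_a} - 1)_+$, which vanishes as $a\to\infty$ since only an $o(|\Lambda_a|)$-fraction of sites near the boundary see interaction terms not fully contained in $\Lambda_a$, and for the remaining sites $\partial_x H^\Phi_{\Lambda_a} \le \|\Phi\|_L \le 1$ by \eqref{eq:defPhiL}-type estimates. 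This yields $\rho(E_\Phi)-\sigma(E_\Phi) \le w_1(\rho,\sigma)$.

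For the reverse inequality $w_1(\rho,\sigma) \le \sup\{\ldots\}$, I would use \autoref{thm:alternative-w1} and take the optimal $c \ge 0$, $\rho',\sigma' \in \mathcal{S}_{\mathbb{Z}^d}$ attaining the infimum in \eqref{eq:definition-w1-distance}, so that $w_1(\rho,\sigma) = c$ and $\rho'_{\mathbb{Z}^d\setminus 0} = \sigma'_{\mathbb{Z}^d\setminus 0}$. The idea is to build, from the finite-volume dual optimizer, a translation-invariant interaction. Concretely, for large $a$ apply \autoref{prop:duality} on $\Lambda_a$: there exists $H_a \in \mathcal{O}_{\Lambda_a}$ with $\|H_a\|_L \le 1$ and $\mathrm{Tr}[(\rho_{\Lambda_a}-\sigma_{\Lambda_a})H_a] = \|\rho_{\Lambda_a}-\sigma_{\Lambda_a}\|_{W_1}$. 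The problem is turning $H_a$ into an element of $\mathcal{B}^r_{\mathbb{Z}^d}$: one symmetrizes over translations and decomposes $H_a$ into a sum of local terms via a conditional-expectation / interaction-decomposition scheme (as in the standard passage from local Hamiltonians to interactions, e.g. \cite[Section 4.7]{ruelle2004thermodynamic}), obtaining $\Phi_a$ with $\|\Phi_a\|_L \le 1$ and $\rho(E_{\Phi_a}) - \sigma(E_{\Phi_a})$ close to $\|\rho_{\Lambda_a}-\sigma_{\Lambda_a}\|_{W_1}/|\Lambda_a|$. Alternatively — and this is likely cleaner — I would use the structure of \autoref{def:alternative-wass} directly: the operator witnessing neighboring-ness of $\tau_x\rho'$ and $\tau_x\sigma'$ at each site can be dualized locally, and translation invariance of the constraint $\rho'_{\mathbb{Z}^d\setminus 0}=\sigma'_{\mathbb{Z}^d\setminus 0}$ lets one assemble a single $\Phi$ supported on finite $\Lambda \ni 0$ with $\partial_0 H^\Phi_{\mathbb{Z}^d} \le 1$ and $\rho(E_\Phi)-\sigma(E_\Phi) = c$.

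The main obstacle, as usual in these thermodynamic-limit dualities, is the boundary-term control and the interaction-decomposition step: one must ensure that the finite-volume dual optimizers $H_a$ can be chosen (a) translation-covariant enough to pass to a limit, and (b) decomposable into an interaction in $\mathcal{B}^r_{\mathbb{Z}^d}$ with Lipschitz constant not exceeding $1$, since naive decompositions blow up the $\|\cdot\|_r$ norm or the per-site dependence. I expect the cleanest route avoids explicitly constructing a limiting interaction for a \emph{fixed} pair $(\rho,\sigma)$ and instead proves the sup on the right-hand side is both $\le w_1$ (easy direction above) and $\ge w_1 - \varepsilon$ for every $\varepsilon$, by exhibiting for each $\varepsilon$ a \emph{single} finitely-supported translation-invariant $\Phi$ — built from the attained minimizer $(\rho',\sigma')$ of \autoref{def:alternative-wass} and a finite-volume dual operator realizing the neighboring pair at site $0$ — whose specific energy gap recovers $c - \varepsilon$. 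Checking $\|\Phi\|_L \le 1$ then reduces, via \autoref{prop:equivpartial}, to the finite-volume statement $\|H\|_L \le 1$, which is exactly \autoref{prop:duality} run backwards.
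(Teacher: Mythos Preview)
Your two-inequality strategy is right, but you have essentially swapped which tool handles which direction cleanly, and this leaves a real gap in your argument for $w_1(\rho,\sigma)\ge\sup\{\ldots\}$.

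For $w_1\ge\sup$: the correction term $\tfrac{1}{|\Lambda_a|}\sum_{x}(\partial_x H^\Phi_{\Lambda_a}-1)_+$ does not follow from \autoref{prop:duality}. Finite-volume duality only gives $\mathrm{Tr}[\Delta\,H]\le\|H\|_L\,\|\Delta\|_{W_1}$ with $\|H\|_L=\max_x\partial_xH$; turning the max into a sum would require per-site control on the optimal decomposition $\Delta=\sum_x\Delta^{(x)}$, which you do not have. Moreover $\|H^\Phi_{\Lambda_a}\|_L$ need not converge to $\|\Phi\|_L$: interactions with cancellations (e.g.\ $\Phi(\{0\})=Z_0$, $\Phi(\{0,1\})=-Z_0$) give $\|\Phi\|_L=0$ while $\|H^\Phi_{\Lambda_a}\|_L=2$ for every $a$. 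The paper sidesteps all of this by using \autoref{thm:alternative-w1} for \emph{this} direction: given optimal $c,\rho',\sigma'$ in \eqref{eq:definition-w1-distance}, a change-of-variables identity yields
\[
\rho(E_\Phi)-\sigma(E_\Phi)=c\Bigl(\rho'\!\!\!\sum_{0\in\Lambda}\Phi(\Lambda)-\sigma'\!\!\!\sum_{0\in\Lambda}\Phi(\Lambda)\Bigr)\le c\,\|\Phi\|_L\le c=w_1(\rho,\sigma)\,,
\]
using only $\rho'_{\mathbb{Z}^d\setminus0}=\sigma'_{\mathbb{Z}^d\setminus0}$ and \autoref{def:L}. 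No boundary terms appear.

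For $w_1\le\sup$: you vastly overcomplicate this. No interaction-decomposition or conditional-expectation scheme is needed, and the alternative definition is the wrong tool here. Given $H\in\mathcal{O}_{\Lambda_a}$ with $\|H\|_L\le1$, the paper simply defines $\Phi^H(\Lambda_a+x)=\tau_xH/|\Lambda_a|$ and $\Phi^H=0$ on all other sets. The one-line check
\[
\|\Phi^H\|_L=\partial_0\sum_{x\in\Lambda_a}\frac{\tau_{-x}H}{|\Lambda_a|}\le\frac{1}{|\Lambda_a|}\sum_{x\in\Lambda_a}\partial_xH\le\|H\|_L\le1
\]
uses only subadditivity of $\partial_0$, and translation invariance of $\rho,\sigma$ gives $\rho(E_{\Phi^H})-\sigma(E_{\Phi^H})=(\rho(H)-\sigma(H))/|\Lambda_a|$. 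Taking the sup over such $H$ recovers $\|\rho_{\Lambda_a}-\sigma_{\Lambda_a}\|_{W_1}/|\Lambda_a|$ on the right, and $a\to\infty$ finishes.
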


 The result extends \autoref{prop:duality} to infinite spin systems, but unlike the finite dimensional case, in general there is no reason for the supremum in \eqref{eq:duality} to be attained in $\mathcal{B}_{\mathbb{Z}^d}^r$.

\begin{proof}
Let $\Phi \in \mathcal{B}_{\mathbb{Z}^d}^r$ with $\|\Phi\|_L \le 1$, and let $c \ge 0$, $\rho'$, $\sigma' \in \mathcal{S}_{\mathbb{Z}^d}$ be as in \eqref{eq:definition-w1-distance}. For $a \in \mathbb{N}^d_+$ and $\Lambda \subseteq \Lambda_a \in \mathcal{F}_{\mathbb{Z}^d}$, we have
\begin{align}
\rho( \Phi(\Lambda) ) - \sigma( \Phi(\Lambda) ) & = c \sum_{x \in \Lambda_a} \tau_x\rho'( \Phi(\Lambda)) - \tau_x\rho'( \Phi(\Lambda)) \nonumber \\ & = c  \sum_{x \in \Lambda_a}  \rho'( \Phi(\Lambda-x) ) - \sigma'( \Phi(\Lambda-x) ) \, .
\end{align}
If $x \notin \Lambda$, then $0 \notin \Lambda-x$, hence $\rho'( \Phi(\Lambda-x) ) = \sigma'( \Phi(\Lambda-x) )$ because $\rho'_{\Lambda-x} = \sigma'_{\Lambda-x}$.  Therefore, the sum above can be restricted upon $x \in \Lambda$, \emph{i.e.}, $0 \in \Lambda-x$. We then sum upon $\Lambda \in\mathcal{F}_{\mathbb{Z}^d}$ and  make a change of variable $\Lambda' = \Lambda - x$, obtaining
\begin{align}\label{eq:energy-change-variables}
\sum_{0 \in \Lambda \subseteq \Lambda_a } \frac{  \rho(  \Phi(\Lambda)  ) - \sigma(\Phi(\Lambda)) }{|\Lambda|} & = c \sum_{0 \in \Lambda \subseteq \Lambda_a } \sum_{x \in \Lambda} \frac{ \rho'( \Phi(\Lambda-x) ) - \sigma'( \Phi(\Lambda-x) ) }{|\Lambda|} \nonumber \\
& = c \sum_{0 \in \Lambda' \in \mathcal{F}_{\mathbb{Z}^d} } \left( \rho'( \Phi(\Lambda') ) - \sigma'( \Phi(\Lambda') )\right) \frac{ g_a(\Lambda') }{|\Lambda'|} ,
\end{align}
where  $g_a(\Lambda')$ denotes the number of pairs $(x,\Lambda)$ with $x \in \Lambda \subseteq \Lambda_a$, $0 \in \Lambda$, such that $\Lambda' = \Lambda-x$. Notice that the above is a finite sum, since we may restrict upon $\Lambda' \subseteq \Lambda_{2a}$, otherwise $g_a(\Lambda') = 0$. Moreover, for every such pair $(x,\Lambda)$, it must be  $x \in -\Lambda'$, since $0 \in \Lambda$. Therefore, for every $\Lambda' \in \mathcal{F}_{\mathbb{Z}^d}$ with $0 \in \Lambda'$,
\begin{equation}
0 \le \frac{ g_a(\Lambda') }{|\Lambda'|} \le 1\, , \qquad \text{and} \qquad \lim_{a \to \infty} \frac{g_a(\Lambda')}{|\Lambda'|} =1\, ,
\end{equation}
since every pair $(x, \Lambda)$ with $x \in - \Lambda'$ and $\Lambda =x+\Lambda'$ satisfies $x \in \Lambda \subseteq \Lambda_a$ if $a$ is sufficiently large.  Therefore, by the dominated convergence theorem for series, we deduce that
\begin{equation}
\lim_{a \to \infty}  \sum_{0 \in \Lambda' \in \mathcal{F}_{\mathbb{Z}^d} } \rho'( \Phi(\Lambda') ) \frac{ g_a(\Lambda') }{|\Lambda'|} =  \sum_{0 \in \Lambda' \in \mathcal{F}_{\mathbb{Z}^d} } \rho'( \Phi(\Lambda') )\, ,
\end{equation}
and similarly for $\sigma'$. The left hand side in \eqref{eq:energy-change-variables} converges to $\rho(E_\Phi) - \sigma(E_{\Phi})$ as $a \to \infty$, hence we obtain the identity
\begin{equation}\label{eq:duality-identity}
\rho(E_\Phi) - \sigma(E_{\Phi}) = c  \sum_{0 \in \Lambda \in \mathcal{F}_{\mathbb{Z}^d} } \rho'( \Phi(\Lambda) ) -  \sum_{0 \in \Lambda \in \mathcal{F}_{\mathbb{Z}^d} } \sigma'( \Phi(\Lambda) ).
\end{equation}
Given $A \in \mathcal{O}_{\mathbb{Z}^d\setminus 0}$, since $\rho'_{\mathbb{Z}^d\setminus 0} = \sigma'_{\mathbb{Z}^d\setminus 0}$, we have $\rho'( A ) = \sigma'(A)$, thus
\begin{align}
 \sum_{0 \in \Lambda \in \mathcal{F}_{\mathbb{Z}^d} } \rho'( \Phi(\Lambda) ) -  \sum_{0 \in \Lambda \in \mathcal{F}_{\mathbb{Z}^d} } \sigma'( \Phi(\Lambda) ) & =  \rho'\left( \sum_{0 \in \Lambda \in \mathcal{F}_{\mathbb{Z}^d} }  \Phi(\Lambda) - A \right) -  \sigma'\left( \sum_{0 \in \Lambda \in \mathcal{F}_{\mathbb{Z}^d} }  \Phi(\Lambda) - A \right)\\
 & \le 2 \left\|\sum_{0 \in \Lambda \in \mathcal{F}_{\mathbb{Z}^d} }  \Phi(\Lambda) - A  \right\|_\infty\, .
 \end{align}
 Being $A \in \mathcal{O}_{\mathbb{Z}^d\setminus 0}$ arbitrary, we deduce the inequality
\begin{equation}
\sum_{0 \in \Lambda \in \mathcal{F}_{\mathbb{Z}^d} } \rho'( \Phi(\Lambda) ) -  \sum_{0 \in \Lambda \in \mathcal{F}_{\mathbb{Z}^d} } \sigma'( \Phi(\Lambda) ) \le \| \Phi\|_{L} \le 1\, ,
\end{equation}
which from \eqref{eq:duality-identity} gives
\begin{equation}
\sup\left\{ \rho(E_{\Phi}) - \sigma(E_{\Phi}): \Phi \in \mathcal{B}_{\mathbb{Z}^d}^r , \, \left\|\Phi\right\|_L\le1\right\} \le c\, .
\end{equation}
Recalling that $c$ and $\rho'$, $\sigma'$ are chosen as in \eqref{eq:definition-w1-distance}, we deduce
\begin{equation}
\sup\left\{ \rho(E_{\Phi}) - \sigma(E_{\Phi}): \Phi \in \mathcal{B}_{\mathbb{Z}^d}^r , \, \left\|\Phi\right\|_L\le1\right\} \le \| \rho - \sigma\|_{w_1} ,
\end{equation}
\emph{i.e.}, inequality $\ge$ holds in \eqref{eq:duality}.

For the converse inequality, given any $H \in \mathcal{O}_{\Lambda_a}$ with $\|H\|_L \le 1$, we  define the translation invariant interaction
\begin{equation}
\Phi^H(\Lambda) = \frac{ \tau_x H}{|\Lambda_a|}
\end{equation}
 if $\Lambda = \Lambda_a + x$ for some $x \in \mathbb{Z}^d$, $\Phi^H(\Lambda) = 0$ otherwise. Notice that $\Phi^H \in \mathcal{B}_{\mathbb{Z}^d}^r$ and
\begin{equation}
\| \Phi^H \|_L \le \sum_{x \in \Lambda_a} \frac{ \partial_0 \tau_{-x} H}{|\Lambda_a|} \le \sum_{x \in \Lambda_a} \frac{ \partial_{x} H}{|\Lambda_a|} \le \| H\|_L \le 1\, .
\end{equation}
Since $\rho$, $\sigma \in \mathcal{S}_{\mathbb{Z}^d}^I$, we have $\rho(\tau_xH) = \rho(H)$, $\sigma(\tau_x H) = \sigma(H)$, hence
 \begin{equation}
 \rho(E_{\Phi^H}) - \sigma(E_{\Phi^H}) = \sum_{x \in \Lambda_a} \frac{ \rho( \tau_{-x} H ) - \sigma(\tau_{-x} H)}{|\Lambda_a|^2} = \frac{ \rho(H) - \sigma(H)}{|\Lambda_a|}\, .
 \end{equation}
The duality for the quantum $W_1$ distance on the finite lattice $\Lambda_a$ yields
\begin{align}
 \frac{  \| \rho _{\Lambda_a}- \sigma_{\Lambda_a} \|_{W_1} }{|\Lambda_a|} & = \sup\left\{ \rho(E_{\Phi^H}) - \sigma(E_{\Phi^H}): H \in \mathcal{O}_{\Lambda_a} , \, \left\|H\right\|_L\le1\right\} \nonumber \\
  & \le \sup\left\{ \rho(E_{\Phi}) - \sigma(E_{\Phi}): \Phi \in \mathcal{B}_{\mathbb{Z}^d}^r , \, \left\|\Phi\right\|_L\le1\right\}\, .
 \end{align}
 Letting $a \to \infty$, we obtain inequality $\le$ in \eqref{eq:duality}, hence the thesis.
\end{proof}

\section{Recovery of Ornstein's \texorpdfstring{$\bar{d}$}{dbar}-distance}\label{sec:recovery}

As in the finite dimensional case, the specific quantum $W_1$ distance between states recovers Ornstein's $\bar{d}$-distance, when restricted to diagonal states in the canonical basis, \emph{i.e.}, $\rho \in \mathcal{S}_{\mathbb{Z}^d}$ such that,  for every $\Lambda \in \mathcal{F}_{\mathbb{Z}^d}$,  $\rho_{\Lambda}$ is diagonal in the basis $( |x \rangle \langle x |)_{x \in [q]^{\Lambda}}$.

There is indeed a correspondence between probability measures $\mu$ on $[q]^\mathbb{Z^d}$ and such states, defined by mapping $\mu$ to the diagonal state $\rho \in \mathcal{S}_{\mathbb{Z}^d}$ such that, for every $\Lambda \in \mathcal{F}_{\mathbb{Z}^d}$,
\begin{equation}\label{eq:rhop}
\rho_{\Lambda} = \sum_{x \in [q]^{\Lambda}} \mu_{\Lambda}(x)\, | x \rangle \langle x|\, ,
\end{equation}
where $\mu_{\Lambda}$ denotes the marginal of $\mu$ on $\Lambda$. Since states are determined by their collection of marginals, \eqref{eq:rhop} completely determines $\rho$.

The correspondence is clearly invertible, arguing similarly on the space of probability measures  $[q]^\mathbb{Z^d}$. With a slight abuse of notation, we write $|x \rangle \langle x| \in \mathcal{S}_{\mathbb{Z}^d}$ for the diagonal state corresponding to  the Dirac probability measure concentrated at $x \in [q]^{\mathbb{Z}^d}$, so that one can also write
\begin{equation} \rho = \int_{[q]^{\mathbb{Z}^d} } | x \rangle \langle x|\,d \mu(x)\, ,\end{equation}
where integration is in the sense of Pettis (also called weak integral).

Given two shift-invariant (\emph{i.e.}, stationary) probability measures $\mu$, $\nu$ on the infinite product space $[q]^{\mathbb{Z}^d}$, Ornstein's $\bar{d}$-distance \cite{ornstein1973application, gray1975generalization} is defined as
\begin{equation} \label{eq:ornstein} \bar{d} (\mu, \nu) = \sup_{a \in \mathbb{N}^d_+} \frac{W_{1}\left( \mu_{\Lambda_a}, \nu_{\Lambda_a}\right)}{|\Lambda_a|}\, , \end{equation}
where, for $\Lambda \in \mathcal{F}_{\mathbb{Z}^d}$, $W_{1}$ denotes the classical optimal transport distance with Hamming cost on $[q]^{\Lambda}$, \emph{i.e.},
\begin{equation}
W_{1}\left( \mu_{\Lambda}, \nu_{\Lambda} \right) = \min_{\pi \in \mathcal{C}(\mu_{\Lambda}, \nu_{\Lambda})} \sum_{x, y \in [q]^{\Lambda}} h(x,y) \pi(x,y),
\end{equation}
with $\mathcal{C}(\mu_{\Lambda}, \nu_{\Lambda})$ being the set of couplings between the probability distributions $\mu_{\Lambda}$, $\nu_{\Lambda}$, and
\begin{equation}
h(x, y) =  \left| \left\{ i \in \Lambda \, : \, x_i \neq y_i \right\} \right|\, .
\end{equation}

This distance is usually defined only in the case $d=1$, but the extension to $d \ge 1$ is straightforward and informally discussed already in \cite[Appendix 4]{ornstein1973application}.

\begin{prop}\label{prop:recovery}
Given stationary probability measures $\mu$, $\nu$ on $[q]^{\mathbb{Z}^d}$, let $\rho$, $\sigma \in \mathcal{S}_{\mathbb{Z}^d}^I$ denote the associated diagonal states,
\begin{equation} \label{eq:diagonal-states} \rho = \int_{[q]^{\mathbb{Z}^d}} | x \rangle \langle x| d \mu(x), \quad  \sigma = \int_{[q]^{\mathbb{Z}^d}} | x \rangle \langle x| d \nu(x)\, .\end{equation}
Then, we have
\begin{equation} w_1(\rho, \sigma) = \bar{d} (\mu, \nu)\, .\end{equation}
\end{prop}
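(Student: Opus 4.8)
The plan is to reduce the claim to the finite-dimensional recovery property of the qudit $W_1$ norm established in Ref. \cite{de2021quantum}, so that nothing genuinely new happens in passing to the infinite lattice. By \autoref{prop:w1sup},
\[
w_1(\rho,\sigma) = \sup_{a\in\mathbb{N}_+^d}\frac{\left\|\rho_{\Lambda_a}-\sigma_{\Lambda_a}\right\|_{W_1}}{\left|\Lambda_a\right|}\, ,
\]
whereas by \eqref{eq:ornstein} the Ornstein distance $\bar d(\mu,\nu)$ is the supremum of the same quotients with $\left\|\rho_{\Lambda_a}-\sigma_{\Lambda_a}\right\|_{W_1}$ replaced by the classical transport cost $W_1(\mu_{\Lambda_a},\nu_{\Lambda_a})$ with Hamming cost. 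Hence it is enough to prove the site-wise identity
\[
\left\|\rho_\Lambda-\sigma_\Lambda\right\|_{W_1} = W_1(\mu_\Lambda,\nu_\Lambda)\qquad\text{for every } \Lambda\in\mathcal{F}_{\mathbb{Z}^d}\, ,
\]
and then take the supremum over the boxes $\Lambda_a$.

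For the site-wise identity, I would first observe that, by \eqref{eq:rhop} and the consistency of marginals, the operators $\rho_\Lambda = \sum_{x\in[q]^\Lambda}\mu_\Lambda(x)\,|x\rangle\langle x|$ and $\sigma_\Lambda = \sum_{y\in[q]^\Lambda}\nu_\Lambda(y)\,|y\rangle\langle y|$ are diagonal in the canonical product basis of $\mathcal{H}_\Lambda$, so $\Delta = \rho_\Lambda-\sigma_\Lambda\in\mathcal{O}_\Lambda^T$ is diagonal with diagonal $\mu_\Lambda-\nu_\Lambda$. The identity is then exactly the statement that the qudit $W_1$ norm of \autoref{defn:W1n} reduces, on diagonal states, to the classical $W_1$ distance with Hamming cost, which is proved in Ref. \cite{de2021quantum}. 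For completeness I would recall both directions: the inequality $\left\|\Delta\right\|_{W_1}\le W_1(\mu_\Lambda,\nu_\Lambda)$ follows by converting any coupling $\pi\in\mathcal{C}(\mu_\Lambda,\nu_\Lambda)$ into an admissible decomposition $\Delta=\sum_{x\in\Lambda}\Delta^{(x)}$ with $\mathrm{Tr}_x\Delta^{(x)}=0$, obtained by transporting the mass prescribed by $\pi$ one coordinate at a time (so that a pair $(x,y)$ at Hamming distance $k$ contributes to exactly $k$ of the terms $\Delta^{(x)}$); the reverse inequality follows from the duality of \autoref{prop:duality} together with Kantorovich duality for the Hamming cost, since any $f\colon[q]^\Lambda\to\mathbb{R}$ that is $1$-Lipschitz for the Hamming distance yields the diagonal operator $H=\sum_x f(x)\,|x\rangle\langle x|$ with $\left\|H\right\|_L\le1$ and $\mathrm{Tr}_\Lambda[\Delta\,H]=\sum_x f(x)\,(\mu_\Lambda-\nu_\Lambda)(x)$.

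Taking $\sup_{a\in\mathbb{N}_+^d}$ of the site-wise identity evaluated at $\Lambda=\Lambda_a$ then yields $w_1(\rho,\sigma)=\bar d(\mu,\nu)$. I do not expect any serious obstacle here: \autoref{prop:w1sup} and the definition \eqref{eq:ornstein} are suprema over the very same family of boxes, so the only real content is the finite-dimensional equality between the qudit $W_1$ norm and the Hamming-cost transport cost, which is imported verbatim from Ref. \cite{de2021quantum}. The mildest point to check carefully is that the $\Lambda_a$-marginal of the diagonal state $\rho$ corresponds to the $\Lambda_a$-marginal of $\mu$, which is immediate from \eqref{eq:rhop}.
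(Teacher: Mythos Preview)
Your proposal is correct and follows essentially the same approach as the paper: both reduce to the finite-dimensional identity $\left\|\rho_\Lambda-\sigma_\Lambda\right\|_{W_1}=W_1(\mu_\Lambda,\nu_\Lambda)$ from \cite[Proposition~5]{de2021quantum} and then pass to the infinite lattice via the boxes $\Lambda_a$. The paper simply cites that proposition and takes the limit $a\to\infty$, whereas you invoke the equivalent supremum formulation from \autoref{prop:w1sup} and additionally sketch the two directions of the finite-dimensional identity, but the content is the same.
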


\begin{proof}
For every $\Lambda \in \mathcal{F}_{\mathbb{Z}^d}$, we have, by \cite[Proposition 5]{de2021quantum},
\begin{equation}
\left\| \rho_{\Lambda} - \sigma_{\Lambda} \right\|_{W_1} = W_1\left( \mu_{\Lambda}, \nu_{\Lambda} \right).
\end{equation}
Choosing $\Lambda = \Lambda_a$, for $a \in \mathbb{N}^d_+$, dividing by $|\Lambda_a|$ and letting $a \to \infty$ yields  the thesis.
\end{proof}

Ornstein's $\bar{d}$-distance \eqref{eq:ornstein} can be equivalently  defined \cite[Theorem 1]{gray1975generalization} as
\begin{equation}
\bar d\left( \mu, \nu \right) = \min_{\pi \in \mathcal{C}^I(\mu, \nu)} \sum_{x, y \in [q]} h(x,y) \pi_0(x,y)\, ,
\end{equation}
where $\mathcal{C}^I(\mu,\nu)$ denotes the set of stationary couplings between the probability distributions $\mu$, $\nu$, and $\pi_0$ is the marginal density of $\pi$ at $0$. \autoref{def:alternative-wass} together with \autoref{thm:alternative-w1} above provide a similar characterization for quantum spin systems, where  stationary couplings are replaced in \eqref{eq:definition-w1-distance}  by representations of the difference the states as series of translates. In fact, if the states $\rho$, $\sigma$ are diagonal, we can also restrict minimization in \eqref{eq:definition-w1-distance} to diagonal states $\rho'$, $\sigma'$, corresponding to probability measures $\mu'$, $\nu'$, obtaining the following further equivalent representation of Ornstein's distance.

\begin{cor}
Given stationary probability measures $\mu$, $\nu$ on  $[q]^{\mathbb{Z}^d}$, we have
\begin{align} \bar{d} (\mu, \nu)=  \min \Bigg\{ c\ge0 : &  \text{\, $\exists \, \mu', \, \nu'$ probability measures on $[q]^{\mathbb{Z}^d}$} :\; \mu'_{\mathbb{Z}^d \setminus 0} = \nu'_{\mathbb{Z}^d\setminus 0}\,,\nonumber\\
&\mu_\Lambda - \nu_\Lambda = c \sum_{x \in \Lambda } (\tau_x \mu')_{\Lambda} - (\tau_x \nu')_{\Lambda}\quad\forall \, \Lambda \in \mathcal{F}_{\mathbb{Z}^d}\Bigg\}\,.
 \end{align}
\end{cor}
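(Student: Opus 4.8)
The plan is to combine \autoref{prop:recovery}, \autoref{thm:alternative-w1}, and a dephasing argument. By \autoref{prop:recovery} and \autoref{thm:alternative-w1} we have $\bar d(\mu,\nu) = w_1(\rho,\sigma) = \|\rho-\sigma\|_{w_1}$, so it suffices to show that the infimum in \eqref{eq:definition-w1-distance}, for the diagonal states $\rho$, $\sigma$ of \eqref{eq:diagonal-states}, is attained by a pair $\rho'$, $\sigma'$ that may be taken diagonal, and equals the right-hand side of the Corollary. One inequality is immediate: any probability measures $\mu'$, $\nu'$ on $[q]^{\mathbb{Z}^d}$ feasible in the Corollary with some $c\ge0$ correspond, via \eqref{eq:rhop}, to diagonal states $\rho'$, $\sigma'\in\mathcal{S}_{\mathbb{Z}^d}$ with $\rho'_{\mathbb{Z}^d\setminus 0}=\sigma'_{\mathbb{Z}^d\setminus 0}$ and $\rho_\Lambda-\sigma_\Lambda = c\sum_{x\in\Lambda}(\tau_x\rho')_\Lambda-(\tau_x\sigma')_\Lambda$ for all $\Lambda$, since \eqref{eq:rhop} is a translation-equivariant bijection onto the diagonal states and sends differences of marginal laws to differences of diagonal marginal states. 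Hence the right-hand side of the Corollary is $\ge \|\rho-\sigma\|_{w_1}=\bar d(\mu,\nu)$.

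For the reverse inequality, let $\mathcal{D}$ be the global completely dephasing map: for $x\in\mathbb{Z}^d$ put $\mathcal{D}_x(A)=\sum_{j\in[q]}|j\rangle\langle j|\,A\,|j\rangle\langle j|$ on $\mathfrak{U}_x$, set $\mathcal{D}_\Lambda=\bigotimes_{x\in\Lambda}\mathcal{D}_x$ on $\mathfrak{U}_\Lambda$, and let $\mathcal{D}$ act on $\mathcal{S}_{\mathbb{Z}^d}$ by $(\mathcal{D}\xi)_\Lambda=\mathcal{D}_\Lambda(\xi_\Lambda)$; this is well defined because the maps $\mathcal{D}_\Lambda$ are unital, trace-preserving, positive, and compatible with partial traces, $\mathrm{Tr}_{\Lambda\setminus\Lambda'}\circ\mathcal{D}_\Lambda=\mathcal{D}_{\Lambda'}\circ\mathrm{Tr}_{\Lambda\setminus\Lambda'}$, and they commute with every translation $\tau_x$. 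Moreover $\mathcal{D}$ fixes $\rho$ and $\sigma$ and maps any state to a diagonal state, which by \eqref{eq:rhop} corresponds to a probability measure on $[q]^{\mathbb{Z}^d}$. Using the attainment statement for \eqref{eq:definition-w1-distance} (the proposition proved right after \autoref{def:alternative-wass}), pick a minimizer $c=\|\rho-\sigma\|_{w_1}$, $\rho'$, $\sigma'\in\mathcal{S}_{\mathbb{Z}^d}$, and set $\rho''=\mathcal{D}\rho'$, $\sigma''=\mathcal{D}\sigma'$. Applying $\mathcal{D}_\Lambda$ to both sides of $\rho_\Lambda-\sigma_\Lambda=c\sum_{x\in\Lambda}(\tau_x\rho')_\Lambda-(\tau_x\sigma')_\Lambda$ and using $\mathcal{D}\rho=\rho$, $\mathcal{D}\sigma=\sigma$, and $\mathcal{D}_\Lambda\circ(\tau_x\,\cdot\,)_\Lambda=(\tau_x\mathcal{D}\,\cdot\,)_\Lambda$ gives $\rho_\Lambda-\sigma_\Lambda=c\sum_{x\in\Lambda}(\tau_x\rho'')_\Lambda-(\tau_x\sigma'')_\Lambda$ for every $\Lambda$; likewise $\rho'_{\mathbb{Z}^d\setminus 0}=\sigma'_{\mathbb{Z}^d\setminus 0}$ yields $\rho''_{\mathbb{Z}^d\setminus 0}=\sigma''_{\mathbb{Z}^d\setminus 0}$, since $\mathcal{D}$ commutes with restriction to $\mathfrak{U}_{\mathbb{Z}^d\setminus 0}$. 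Thus $c$, $\rho''$, $\sigma''$ is a minimizing triple in \eqref{eq:definition-w1-distance} with $\rho''$, $\sigma''$ diagonal; letting $\mu'$, $\nu'$ be the measures corresponding to $\rho''$, $\sigma''$ under \eqref{eq:rhop} and reading off diagonal entries in the identities above, we get that $c$, $\mu'$, $\nu'$ are feasible in the Corollary. Hence the minimum in the Corollary is $\le c=\bar d(\mu,\nu)$ and is attained, which together with the first inequality proves the claim.

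The only substantive content is the routine verification that the completely dephasing map is well defined on $\mathcal{S}_{\mathbb{Z}^d}$ and commutes with translations and partial traces, and that the correspondence \eqref{eq:rhop} between diagonal states and probability measures is a translation-equivariant bijection intertwining differences of marginals; I do not expect any genuine obstacle beyond this bookkeeping, the main point being simply that dephasing each site preserves all the structural constraints in \eqref{eq:definition-w1-distance}.
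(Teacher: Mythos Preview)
Your proposal is correct and follows the same route the paper indicates: combine \autoref{prop:recovery} and \autoref{thm:alternative-w1}, then restrict the minimization in \eqref{eq:definition-w1-distance} to diagonal states. The paper states this restriction without proof; your dephasing argument is exactly the natural way to justify it, and the bookkeeping (compatibility of $\mathcal{D}$ with partial traces, translations, and the marginal constraint on $\mathbb{Z}^d\setminus 0$) goes through as you say.
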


To our knowledge, duality for Orstein's $\bar{d}$-distance is not explicitly discussed in the literature. A result can be obtained directly from \autoref{thm:duality} for diagonal states, simply noticing that  the supremum may run among interactions $\Phi$ such that each $\Phi(\Lambda)$ is also diagonal, i.e., corresponding to a function
\begin{equation}
f(\Lambda): [q]^{\mathbb{Z}^d} \to \mathbb{R}\, ,
\end{equation}
depending only on the coordinates in $\Lambda$. Let us denote by $\mathcal{B}^{r, \operatorname{diag}}_{\mathbb{Z}^d}$ the set of translation invariant diagonal interactions satisfying \eqref{eq:defBr}. The Lipschitz constant of $f \in\mathcal{B}^{r, \operatorname{diag}}_{\mathbb{Z}^d}$ coincides with the oscillation of the function on $[q]^{\mathbb{Z}^d}$,
\begin{equation}
x \mapsto \sum_{ 0 \in \Lambda \in \mathcal{F}_{\mathbb{Z}^d}} f(\Lambda)(x)
\end{equation}
with respect to the $0$-th coordinate, i.e.,
\begin{equation}
 \| f \|_{L} = \sup \left\{ \sum_{ 0 \in \Lambda \in \mathcal{F}_{\mathbb{Z}^d}}\left( f(\Lambda) (x) - f(\Lambda) (y) \right) \, : \, \text{ $x$, $y \in [q]^{\mathbb{Z}^d}$,  $x_k=y_k$ for every $k \in \mathbb{Z}^d \setminus 0$}  \right\} \, .
\end{equation}
The specific energy of $f$ is identified with the function on $[q]^{\mathbb{Z}^d}$,
\begin{equation}
 x \mapsto e_f(x) = \sum_{0\in\Lambda\in\mathcal{F}_{\mathbb{Z}^d}}\frac{f(\Lambda)(x)}{\left|\Lambda\right|}\, .
\end{equation}

With this notation, \autoref{thm:duality} yields the following result.

\begin{cor}
Given stationary probability measures $\mu$, $\nu$ on $[q]^\mathbb{Z^d}$, we have
\begin{equation}
\bar{d}(\mu,\nu) = \sup\left\{ \int_{[q]^{\mathbb{Z}^d}} e_f(x) d \mu (x)- \int_{[q]^{\mathbb{Z}^d}} e_f(x) d \nu(x): f \in \mathcal{B}_{\mathbb{Z}^d}^{r,\operatorname{diag}} , \, \left\|f \right\|_L\le1\right\} \, .
\end{equation}
\end{cor}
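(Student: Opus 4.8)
The plan is to deduce this corollary directly from \autoref{thm:duality} together with \autoref{prop:recovery}, by checking that when $\rho$, $\sigma$ are the diagonal states associated with $\mu$, $\nu$, the supremum in \eqref{eq:duality} over all interactions $\Phi \in \mathcal{B}_{\mathbb{Z}^d}^r$ can be restricted, without loss, to diagonal interactions $f \in \mathcal{B}_{\mathbb{Z}^d}^{r,\operatorname{diag}}$. First I would observe that for a diagonal interaction $f$, the specific energy observable $E_f$ is the diagonal operator corresponding to the function $e_f$, so that $\rho(E_f) = \int e_f \, d\mu$ and $\sigma(E_f) = \int e_f \, d\nu$, and that the Lipschitz constant $\|f\|_L$ indeed reduces to the oscillation formula stated in the excerpt (this follows from \autoref{def:L}, \autoref{defn:partialI} and the fact that for a diagonal self-adjoint operator $K = \sum_{0 \in \Lambda} f(\Lambda)$, $\partial_0 K = 2\inf_{A \in \mathcal{O}_{\mathbb{Z}^d \setminus 0}} \|K - A\|_\infty$ equals the supremum of $|K(x) - K(y)|$ over configurations $x, y$ agreeing off $0$ — the infimum being attained by averaging $K$ over the $0$-th coordinate). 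This yields the inequality $\ge$ in the claimed identity: the right-hand side of the corollary is a supremum over a subset of the interactions appearing in \eqref{eq:duality}, so it is at most $w_1(\rho,\sigma) = \bar d(\mu,\nu)$ — wait, it is \emph{at most}, so this gives $\le$; the real content is the reverse.

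For the reverse inequality, the key step is a \emph{diagonalization} (dephasing) argument: given an arbitrary $\Phi \in \mathcal{B}_{\mathbb{Z}^d}^r$ with $\|\Phi\|_L \le 1$, I would apply the completely-positive, trace-preserving, translation-covariant dephasing channel $\mathcal{D}$ that projects each $\mathfrak{U}_\Lambda$ onto its diagonal in the canonical basis, and set $\Phi'(\Lambda) = \mathcal{D}(\Phi(\Lambda))$. Then $\Phi'$ is a diagonal translation-invariant interaction, and since $\mathcal{D}$ is a contraction for $\|\cdot\|_\infty$ on each $\mathfrak{U}_\Lambda$ one checks $\|\Phi'\|_r \le \|\Phi\|_r < \infty$, so $\Phi' \in \mathcal{B}_{\mathbb{Z}^d}^{r,\operatorname{diag}}$. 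Because $\mathcal{D}$ is a conditional expectation onto the diagonal subalgebra, dephasing does not increase the dependence on site $0$: from \autoref{defn:partialI}, if $A \in \mathcal{O}_{\mathbb{Z}^d\setminus 0}$ nearly achieves $\partial_0 (\sum_{0 \in \Lambda}\Phi(\Lambda))$, then $\mathcal{D}(A) \in \mathcal{O}_{\mathbb{Z}^d\setminus 0}$ (since $\mathcal{D}$ is local and preserves the sub-$\sigma$-algebra structure) and $\|\sum_{0\in\Lambda}\Phi'(\Lambda) - \mathcal{D}(A)\|_\infty = \|\mathcal{D}(\sum_{0\in\Lambda}\Phi(\Lambda) - A)\|_\infty \le \|\sum_{0\in\Lambda}\Phi(\Lambda) - A\|_\infty$, so $\|\Phi'\|_L \le \|\Phi\|_L \le 1$. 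Finally, since $\rho$ and $\sigma$ are diagonal, they are invariant under $\mathcal{D}$ in the sense that $\rho(E_\Phi) = \rho(\mathcal{D}(E_\Phi)) = \rho(E_{\Phi'})$, using self-adjointness of $\mathcal{D}$ with respect to the diagonal trace; likewise for $\sigma$. Hence $\rho(E_\Phi) - \sigma(E_\Phi) = \rho(E_{\Phi'}) - \sigma(E_{\Phi'}) = \int e_{f'}\,d\mu - \int e_{f'}\,d\nu$ with $f'$ the function associated with $\Phi'$, and this quantity is bounded by the right-hand side of the corollary. Taking the supremum over $\Phi$ and invoking \autoref{thm:duality} and \autoref{prop:recovery} closes the argument.

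I expect the main obstacle to be the bookkeeping needed to verify that the dephasing channel is genuinely ``local'' in the way required: namely that $\mathcal{D}$ maps $\mathcal{O}_{\mathbb{Z}^d \setminus 0}$ into itself and commutes appropriately with the restriction maps, so that the inequality $\|\Phi'\|_L \le \|\Phi\|_L$ and the identity $\rho(E_{\Phi}) = \rho(E_{\Phi'})$ both go through cleanly. Concretely, $\mathcal{D}$ is the infinite tensor product $\bigotimes_{x \in \mathbb{Z}^d}\mathcal{D}_x$ of single-site dephasings, which is well-defined on the quasi-local algebra, translation-covariant, a norm-one projection, and a conditional expectation onto the abelian subalgebra generated by the diagonal operators; from these properties everything follows, but stating them carefully is where most of the work lies. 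A secondary, purely notational point is to make the identification between diagonal self-adjoint elements of $\mathfrak{U}_\Lambda$ and real functions on $[q]^\Lambda$ precise enough that ``$\partial_0$ equals oscillation in the $0$-th coordinate'' and ``$E_f \leftrightarrow e_f$'' are literally true rather than just morally so; this is routine given \eqref{eq:rhop} but should be spelled out.
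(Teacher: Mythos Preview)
Your proposal is correct and follows the same approach the paper intends: apply \autoref{thm:duality} together with \autoref{prop:recovery}, then observe via dephasing that for diagonal states the supremum over $\mathcal{B}_{\mathbb{Z}^d}^r$ can be restricted to $\mathcal{B}_{\mathbb{Z}^d}^{r,\operatorname{diag}}$ without loss. The paper treats this corollary as an immediate consequence and only says that one ``simply notic[es] that the supremum may run among interactions $\Phi$ such that each $\Phi(\Lambda)$ is also diagonal''; your dephasing argument is precisely the justification of that remark, and the verification that $\mathcal{D}$ is a translation-covariant norm-one conditional expectation mapping $\mathcal{O}_{\mathbb{Z}^d\setminus 0}$ into itself is exactly the bookkeeping the paper suppresses. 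One minor inaccuracy: the optimal $A\in\mathcal{O}_{\mathbb{Z}^d\setminus 0}$ realizing $\partial_0 K$ for diagonal $K$ is the pointwise midpoint $\tfrac{1}{2}(\max_{x_0}K+\min_{x_0}K)$ rather than the average over the $0$-th coordinate, but this does not affect the conclusion that $\partial_0 K$ equals the oscillation.
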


\section{Quantum concentration inequalities}\label{sec:conc}

\subsection{Poincar\'e inequality}\label{sec:Poinc}
In this section, we prove the following quantum Poincar\'e inequality stating that for any interaction $\Phi$, the variance of the local Hamiltonian on $\Lambda_a$ on a product state scales at most linearly with the volume of $\Lambda_a$ in the limit $a\to\infty$, and the proportionality constant is upper bounded by the square of the Lipschitz constant of $\Phi$:
\begin{prop}[Poincar\'e inequality]\label{prop:PoincI}
Let $\omega\in\mathcal{S}_{\mathbb{Z}^d}$ be a product state.
Then, for any interaction $\Phi\in\mathcal{B}_{\mathbb{Z}^d}^r$ we have
\begin{equation}
\limsup_{a\to\infty}\frac{\mathrm{Var}_{\omega_{\Lambda_a}}H^\Phi_{\Lambda_a}}{\left|\Lambda_a\right|} \le \left\|\Phi\right\|_L^2\,.
\end{equation}
\end{prop}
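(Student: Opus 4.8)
The plan is to exploit the product structure of $\omega$ so that the variance of $H^\Phi_{\Lambda_a}$ decomposes site by site, and then control each single-site contribution by the dependence $\partial_x$, which in turn is bounded by $\|\Phi\|_L$. Concretely, for a product state $\omega_{\Lambda_a} = \bigotimes_{x\in\Lambda_a}\omega_x$ and any observable $H\in\mathcal{O}_{\Lambda_a}$, I would use the martingale-type (Efron--Stein / tensorization) identity for the variance with respect to a product measure: fixing an ordering of the sites of $\Lambda_a$ as $x_1,\dots,x_N$ with $N=|\Lambda_a|$, write the telescoping sum $H - \omega_{\Lambda_a}(H)\,\mathbb{I} = \sum_{k=1}^N (\mathbb{E}_{\le k} - \mathbb{E}_{<k})H$, where $\mathbb{E}_{\le k}$ denotes the conditional expectation $\mathrm{Tr}_{x_{k+1},\dots,x_N}[(\bigotimes_{j>k}\omega_{x_j})\,\cdot\,]$ tensored with the identity on the traced-out sites. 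The increments $\Delta_k := (\mathbb{E}_{\le k} - \mathbb{E}_{<k})H$ are orthogonal in the $\omega_{\Lambda_a}$-weighted Hilbert--Schmidt inner product, so $\mathrm{Var}_{\omega_{\Lambda_a}}H = \sum_{k=1}^N \omega_{\Lambda_a}(\Delta_k^\dagger\Delta_k)$.

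The next step is the key estimate $\|\Delta_k\|_\infty \le \partial_{x_k}H$. Indeed, $\Delta_k = \mathbb{E}_{x_k}^{<k}\big[(\mathbb{E}_{<k}-\mathbb{E}_{x_k}^{<k})H\big]$ — more usefully, $\mathbb{E}_{<k}$ factors through averaging over site $x_k$, so $\Delta_k$ is unchanged if we replace $H$ by $H - H_{\Lambda_a\setminus x_k}$ for any $H_{\Lambda_a\setminus x_k}\in\mathcal{O}_{\Lambda_a\setminus x_k}$ (since such terms are killed by $\mathbb{E}_{\le k}-\mathbb{E}_{<k}$, both conditional expectations acting as the identity on them). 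Since $\mathbb{E}_{\le k}$ and $\mathbb{E}_{<k}$ are both contractions in $\|\cdot\|_\infty$ (they are averages), we get $\|\Delta_k\|_\infty \le 2\|H - H_{\Lambda_a\setminus x_k}\|_\infty$ for every such $H_{\Lambda_a\setminus x_k}$, hence $\|\Delta_k\|_\infty \le \partial_{x_k}H^\Phi_{\Lambda_a}$ after taking the infimum. Therefore $\mathrm{Var}_{\omega_{\Lambda_a}}H^\Phi_{\Lambda_a} \le \sum_{x\in\Lambda_a}(\partial_x H^\Phi_{\Lambda_a})^2$.

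It remains to pass to the limit. By \autoref{prop:equivpartial} and the translation invariance of $\Phi$, each $\partial_x H^\Phi_{\Lambda_a}$ depends only on the position of $x$ inside $\Lambda_a$, and by the argument in the proof of \eqref{eq:defPhiL} we have $\partial_x H^\Phi_{\Lambda_a} \le \|\Phi\|_L + \varepsilon_{x,a}$ with $\varepsilon_{x,a}\le 2\sum_{0\in\Lambda\not\subseteq\Lambda_a-x}\|\Phi(\Lambda)\|_\infty$; moreover the uniform bound $\partial_x H^\Phi_{\Lambda_a}\le 2\sum_{0\in\Lambda\in\mathcal{F}_{\mathbb{Z}^d}}\|\Phi(\Lambda)\|_\infty =: C < \infty$ holds always. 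Splitting $\Lambda_a$ into an inner region $\Lambda_{a-R}$ (for a large fixed $R$, where $\varepsilon_{x,a}$ is small once $a$ is large since $\Phi\in\mathcal{B}^r_{\mathbb{Z}^d}$ forces $\sum_{0\in\Lambda,\ \mathrm{diam}\Lambda>R}\|\Phi(\Lambda)\|_\infty\to0$ as $R\to\infty$) and a boundary shell of size $o(|\Lambda_a|)$, we get
\begin{equation}
\frac{1}{|\Lambda_a|}\sum_{x\in\Lambda_a}\big(\partial_x H^\Phi_{\Lambda_a}\big)^2 \le \frac{|\Lambda_{a-R}|}{|\Lambda_a|}(\|\Phi\|_L+\delta_R)^2 + \frac{|\Lambda_a\setminus\Lambda_{a-R}|}{|\Lambda_a|}\,C^2\,,
\end{equation}
where $\delta_R\to0$ as $R\to\infty$. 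Taking $\limsup_{a\to\infty}$ gives $(\|\Phi\|_L+\delta_R)^2$, and then letting $R\to\infty$ yields the claim. The main obstacle is the variance tensorization step together with the estimate $\|\Delta_k\|_\infty\le\partial_{x_k}H$: one must be careful that the conditional expectations with respect to a product of (possibly mixed) single-site states are completely positive unital maps, hence $\|\cdot\|_\infty$-contractive, and that their increments annihilate observables not depending on the relevant site — everything else is a routine limiting argument controlled by the summability \eqref{eq:defBr}.
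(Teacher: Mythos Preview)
Your proposal is correct and follows essentially the same route as the paper. The paper splits the argument into (i) the finite-volume Poincar\'e inequality $\mathrm{Var}_{\omega_\Lambda}H\le\sum_{x\in\Lambda}(\partial_x H)^2$, which it simply quotes from \cite[Lemma F.1]{de2022limitations}, and (ii) the statement $\lim_{a\to\infty}\frac{1}{|\Lambda_a|}\sum_{x\in\Lambda_a}(\partial_x H^\Phi_{\Lambda_a})^2=\|\Phi\|_L^2$, proved exactly by your inner-box/boundary-shell decomposition together with the tail bound $|\partial_x H^\Phi_{\Lambda_a}-\|\Phi\|_L|\le 2\sum_{0\in\Lambda\not\subseteq\Lambda_a-x}\|\Phi(\Lambda)\|_\infty$ and the uniform bound $\partial_x H^\Phi_{\Lambda_a}\le 2\|\Phi\|_r$. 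Your Efron--Stein/martingale sketch is a valid self-contained proof of the cited finite-volume lemma (the orthogonality of the increments and the estimate $\|\Delta_k\|_\infty\le\partial_{x_k}H$ both go through as you indicate), and for the limiting step you only need the $\limsup\le$ direction, which is all the paper uses as well.
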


\subsubsection{Proof of \autoref{prop:PoincI}}
The proof of \autoref{prop:PoincI} is based on its counterpart for quantum spin systems on finite lattices:
\begin{prop}[Quantum Poincar\'e inequality {\cite[Lemma F.1]{de2022limitations}}]\label{prop:Poinc}
Let $\Lambda$ be a finite set, and let $\omega\in\mathcal{S}_\Lambda$ be a product state.
Then, for any $H\in\mathcal{O}_\Lambda$ we have
\begin{equation}
\mathrm{Var}_\omega H = \mathrm{Tr}\left[\omega\left(H - \mathrm{Tr}\left[\omega\,H\right]\mathbb{I}\right)^2\right] \le \sum_{x\in\Lambda}\left(\partial_x H\right)^2\,.
\end{equation}
\end{prop}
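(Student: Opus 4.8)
The plan is to reproduce the non-commutative analogue of the classical Efron--Stein / tensorization argument, via a martingale decomposition of $H$ adapted to the product state $\omega=\bigotimes_{x\in\Lambda}\omega_x$. Fix an enumeration $\Lambda=\{x_1,\dots,x_n\}$ and, for $1\le j\le n$, let $\mathbb{E}_j\colon\mathcal{O}_\Lambda\to\mathcal{O}_{\Lambda\setminus x_j}$ be the single-site average $\mathbb{E}_j[A]=\mathrm{Tr}_{x_j}\big[(\omega_{x_j}\otimes\mathbb{I})\,A\big]$, a unital positive (in fact completely positive) map. Since the $\mathbb{E}_j$ act on disjoint sites they commute, so one sets $\mathbb{E}_{>k}=\mathbb{E}_{k+1}\circ\cdots\circ\mathbb{E}_n$ (the average over $x_{k+1},\dots,x_n$), with $\mathbb{E}_{>n}=\mathrm{id}$, $\mathbb{E}_{>0}[\,\cdot\,]=\mathrm{Tr}[\omega\,\cdot\,]\,\mathbb{I}$, and $\mathbb{E}_{>k-1}=\mathbb{E}_k\circ\mathbb{E}_{>k}$. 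A telescoping identity then gives $H-\mathrm{Tr}[\omega H]\,\mathbb{I}=\sum_{k=1}^n V_k$ with $V_k:=\mathbb{E}_{>k}[H]-\mathbb{E}_{>k-1}[H]=(\mathrm{id}-\mathbb{E}_k)\big[\mathbb{E}_{>k}[H]\big]$; note that $V_k\in\mathcal{O}_{\{x_1,\dots,x_k\}}$ and $\mathbb{E}_k[V_k]=0$.

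First I would establish orthogonality of the $V_k$ in the GNS inner product of $\omega$: for $j<k$, expand $V_k=\sum_\alpha A_\alpha\otimes C_\alpha$ with $A_\alpha$ supported on $x_k$ and $C_\alpha$ on $\{x_1,\dots,x_{k-1}\}$, write $\omega=\omega_{x_k}\otimes\omega_{\hat k}$, and take the partial trace over $x_k$ first. Since $V_j$ acts trivially on $x_k$ and $\omega$ is a product, one finds $\mathrm{Tr}_{x_k}[\omega\,V_jV_k]=\omega_{\hat k}\,\tilde V_j\,\mathbb{E}_k[V_k]=0$ and likewise $\mathrm{Tr}_{x_k}[\omega\,V_kV_j]=0$, using $\mathbb{E}_k[V_k]=0$. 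Hence $\mathrm{Tr}[\omega\,V_jV_k]=\mathrm{Tr}[\omega\,V_kV_j]=0$ for $j\neq k$, so
\begin{equation}
\mathrm{Var}_\omega H=\mathrm{Tr}\Big[\omega\,\Big(\sum_{k=1}^n V_k\Big)^2\Big]=\sum_{k=1}^n\mathrm{Tr}[\omega\,V_k^2]\,.
\end{equation}

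Next I would bound each term by $(\partial_{x_k}H)^2$. Choose $B_k\in\mathcal{O}_{\Lambda\setminus x_k}$ attaining the minimum in \eqref{eq:partialxH}, so $\|H-B_k\|_\infty=\tfrac12\partial_{x_k}H$ (a minimizer exists by finite-dimensional compactness). Since $\mathbb{E}_{>k}[B_k]$ still acts trivially on $x_k$, we have $(\mathrm{id}-\mathbb{E}_k)\big[\mathbb{E}_{>k}[B_k]\big]=0$, whence $V_k=(\mathrm{id}-\mathbb{E}_k)\big[\mathbb{E}_{>k}[H-B_k]\big]$. Using that $\mathbb{E}_k$ and $\mathbb{E}_{>k}$ are unital and positive, hence $\|\cdot\|_\infty$-contractive,
\begin{equation}
\|V_k\|_\infty\le\|\mathbb{E}_{>k}[H-B_k]\|_\infty+\|\mathbb{E}_k\mathbb{E}_{>k}[H-B_k]\|_\infty\le 2\|H-B_k\|_\infty=\partial_{x_k}H\,.
\end{equation}
Since $\omega$ is a state and $V_k$ is self-adjoint, $\mathrm{Tr}[\omega\,V_k^2]\le\|V_k\|_\infty^2\le(\partial_{x_k}H)^2$, and summing over $k$ yields $\mathrm{Var}_\omega H\le\sum_{x\in\Lambda}(\partial_xH)^2$, as claimed. (For the paper one may alternatively just invoke \cite[Lemma F.1]{de2022limitations}; the argument above makes the statement self-contained.)

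The main obstacle I anticipate is the orthogonality step: because the $V_k$ and $\omega$ do not commute, the partial trace must be organized carefully — peeling off the highest-index site $x_k$ so that $V_j$ ($j<k$) and the corresponding tensor factor of $\omega$ pass through and the identity $\mathbb{E}_k[V_k]=0$ applies. (This is where the product structure of $\omega$ is used in an essential way; it is exactly the statement that the conditional expectations $\mathbb{E}_{>k}$ are symmetric for the GNS inner product of $\omega$.) Everything else is routine: the telescoping decomposition, the operator-norm contractivity of unital positive maps, and the elementary bound $\mathrm{Tr}[\omega X^2]\le\|X\|_\infty^2$.
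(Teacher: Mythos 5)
The paper does not give its own proof of \autoref{prop:Poinc}; it simply cites \cite[Lemma F.1]{de2022limitations}. Your self-contained argument is correct and follows the standard martingale (Efron--Stein / Doob decomposition) route, which is also what the cited reference does: the telescoping identity $H-\mathrm{Tr}[\omega H]\,\mathbb{I}=\sum_k V_k$ with $V_k=(\mathrm{id}-\mathbb{E}_k)\mathbb{E}_{>k}[H]$; the GNS-orthogonality of the increments, which you correctly reduce (via the product structure of $\omega$ and the fact that $V_j$ for $j<k$ is supported off $x_k$) to $\mathbb{E}_k[V_k]=0$, giving both $\mathrm{Tr}[\omega V_jV_k]=0$ and $\mathrm{Tr}[\omega V_kV_j]=0$; the uniform bound $\|V_k\|_\infty\le\partial_{x_k}H$ obtained by inserting an optimal $B_k\in\mathcal{O}_{\Lambda\setminus x_k}$, which is annihilated by $\mathrm{id}-\mathbb{E}_k$ after $\mathbb{E}_{>k}$, and then using that unital positive maps are $\|\cdot\|_\infty$-contractive; and the final estimate $\mathrm{Tr}[\omega V_k^2]\le\|V_k\|_\infty^2$ for self-adjoint $V_k$. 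No gaps.
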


The key step in the proof of \autoref{prop:PoincI} is the following proposition, which states that the Lipschitz constant of $\Phi$ is equal to the quadratic average of the dependence of the local Hamiltonians of $\Phi$ on the sites in the limit of infinite volume:
\begin{prop}\label{prop:gradnorm}
Let $\Phi\in\mathcal{B}_{\mathbb{Z}^d}^r$.
Then,
\begin{equation}
\lim_{a\to\infty}\frac{1}{\left|\Lambda_a\right|}\sum_{x\in\Lambda_a}\left(\partial_x H^\Phi_{\Lambda_a}\right)^2 = \left\|\Phi\right\|_L^2\,.
\end{equation}
\end{prop}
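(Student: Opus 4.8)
The plan is to prove the two inequalities $\limsup_{a\to\infty}\frac{1}{|\Lambda_a|}\sum_{x\in\Lambda_a}(\partial_xH^\Phi_{\Lambda_a})^2\le\|\Phi\|_L^2$ and $\liminf_{a\to\infty}\frac{1}{|\Lambda_a|}\sum_{x\in\Lambda_a}(\partial_xH^\Phi_{\Lambda_a})^2\ge\|\Phi\|_L^2$ separately. For the upper bound, the idea is that $\partial_x$ is a seminorm (being an infimum of $\|\cdot\|_\infty$ over a subspace), so for each $x\in\Lambda_a$ we can split $H^\Phi_{\Lambda_a}=\sum_{X\subseteq\Lambda_a}\Phi(X)$ into the terms containing $x$ and those not containing $x$; the latter lie in $\mathcal{O}_{\Lambda_a\setminus x}$ and contribute zero, giving $\partial_xH^\Phi_{\Lambda_a}\le 2\sum_{x\in X\subseteq\Lambda_a}\|\Phi(X)\|_\infty$. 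By translation invariance this is at most $2\sum_{0\in X\in\mathcal{F}_{\mathbb{Z}^d}}\|\Phi(X)\|_\infty$, a finite constant independent of $a$ and $x$. That alone only gives an $O(|\Lambda_a|)$ bound with the wrong constant, so the refinement is needed: compare $\partial_xH^\Phi_{\Lambda_a}$ with $\partial_0\sum_{0\in X}\Phi(X)=\|\Phi\|_L$ (via \autoref{prop:equivpartial} and translation invariance), and control the error by $2\sum_{0\in X\in\mathcal{F}_{\mathbb{Z}^d}:\,X+x\not\subseteq\Lambda_a}\|\Phi(X)\|_\infty$, exactly the estimate appearing in the proof of \eqref{eq:defPhiL}. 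The point is that for the overwhelming majority of sites $x\in\Lambda_a$ (all but a vanishing fraction, namely those within a growing-but-sublinear-density boundary layer determined by the tail of $\|\Phi\|_r$) this error is small; one then squares, sums over $x\in\Lambda_a$, divides by $|\Lambda_a|$, and uses that the ``bad'' sites contribute a bounded amount times a vanishing fraction.

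More precisely, write $\partial_xH^\Phi_{\Lambda_a}=\|\Phi\|_L+\varepsilon_a(x)$ with $|\varepsilon_a(x)|\le 2\sum_{0\in X\in\mathcal{F}_{\mathbb{Z}^d}:\,X+x\not\subseteq\Lambda_a}\|\Phi(X)\|_\infty=:\delta_a(x)$, where I have used the two-sided seminorm estimate $|\partial_0 A-\partial_0 B|\le\partial_0(A-B)\le 2\|A-B\|_\infty$. Then
\begin{equation}
\frac{1}{|\Lambda_a|}\sum_{x\in\Lambda_a}\bigl(\partial_xH^\Phi_{\Lambda_a}\bigr)^2 = \|\Phi\|_L^2 + \frac{2\|\Phi\|_L}{|\Lambda_a|}\sum_{x\in\Lambda_a}\varepsilon_a(x) + \frac{1}{|\Lambda_a|}\sum_{x\in\Lambda_a}\varepsilon_a(x)^2\,,
\end{equation}
so it suffices to show $\frac{1}{|\Lambda_a|}\sum_{x\in\Lambda_a}\delta_a(x)\to 0$ (the squared term is then also controlled since $\delta_a(x)$ is uniformly bounded by the finite constant $2\sum_{0\in X}\|\Phi(X)\|_\infty$, whence $\delta_a(x)^2\le C\,\delta_a(x)$). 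By Fubini, $\frac{1}{|\Lambda_a|}\sum_{x\in\Lambda_a}\delta_a(x)=\frac{2}{|\Lambda_a|}\sum_{0\in X\in\mathcal{F}_{\mathbb{Z}^d}}\|\Phi(X)\|_\infty\,\bigl|\{x\in\Lambda_a:X+x\not\subseteq\Lambda_a\}\bigr|$, and for fixed finite $X$ the cardinality $|\{x\in\Lambda_a:X+x\not\subseteq\Lambda_a\}|$ is $o(|\Lambda_a|)$, in fact bounded by $c(X)\,|\partial\Lambda_a|$ where $|\partial\Lambda_a|/|\Lambda_a|\to 0$ and $c(X)$ grows polynomially (say linearly) in $\mathrm{diam}(X)\le|X|$. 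Dominated convergence for series then applies because $\sum_{0\in X}\|\Phi(X)\|_\infty\,|X|<\infty$ — indeed $\sum_{0\in X}e^{r(|X|-1)}\|\Phi(X)\|_\infty<\infty$ dominates any polynomial-in-$|X|$ weighting — so the whole sum tends to $0$. This proves $\lim_{a\to\infty}\frac{1}{|\Lambda_a|}\sum_{x\in\Lambda_a}(\partial_xH^\Phi_{\Lambda_a})^2=\|\Phi\|_L^2$, which is both the $\limsup$ and the $\liminf$ at once, so a separate lower-bound argument is not even needed.

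The main obstacle is the bookkeeping in the last step: making precise that $|\{x\in\Lambda_a:X+x\not\subseteq\Lambda_a\}|=o(|\Lambda_a|)$ uniformly enough in $X$ to justify exchanging the limit with the sum over $X$. The clean way is to bound this cardinality by $|\Lambda_a|-|\Lambda_{a-\mathrm{diam}(X)\mathbf{1}}|$ (sites deep enough inside), expand the product $|\Lambda_a|=\prod_i 2a_i$, and observe the difference is $O\bigl(\mathrm{diam}(X)\cdot|\Lambda_a|/\min_i a_i\bigr)$; dividing by $|\Lambda_a|$ gives $O(\mathrm{diam}(X)/\min_i a_i)$, which for each fixed $X$ tends to $0$ as $a\to\infty$, and the $\mathrm{diam}(X)$-dependence is absorbed by the finiteness of $\|\Phi\|_r$. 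One should be slightly careful that the seminorm property $|\partial_0A-\partial_0B|\le 2\|A-B\|_\infty$ used above is legitimate — it follows since $\partial_0$ is (half of) a quotient seminorm on $\mathcal{O}_{\mathbb{Z}^d}/\mathcal{O}_{\mathbb{Z}^d\setminus 0}$ — and that all series manipulations are over absolutely convergent series, which is guaranteed by $\Phi\in\mathcal{B}_{\mathbb{Z}^d}^r$. Beyond that the argument is routine, mirroring closely the proof already given for \eqref{eq:defPhiL}.
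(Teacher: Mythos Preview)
Your proof is correct and follows essentially the same strategy as the paper: bound $|\partial_x H^\Phi_{\Lambda_a}-\|\Phi\|_L|$ by the tail $2\sum_{0\in X:\,X\not\subseteq\Lambda_a-x}\|\Phi(X)\|_\infty$ via the seminorm property and translation invariance, then show this error is negligible after averaging over $x\in\Lambda_a$. The paper carries this out by an explicit $\epsilon$--buffer argument (fix $\epsilon$, truncate $\Phi$ to finitely many terms, split $\Lambda_b$ into an interior $\Lambda_{b-a}$ where the error is below $4\epsilon$ and a boundary where the crude bound $2\|\Phi\|_r$ applies), treating the $\limsup$ and $\liminf$ separately; your Fubini\,+\,dominated-convergence packaging is a cleaner variant that yields the limit directly.

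One simplification: your worry about dominating a $\mathrm{diam}(X)$ or $|X|$ factor is unnecessary. Since trivially $|\{x\in\Lambda_a:X+x\not\subseteq\Lambda_a\}|\le|\Lambda_a|$, the ratio is bounded by $1$ for every $X$, so dominated convergence applies with the summable majorant $\|\Phi(X)\|_\infty$ itself---no need to invoke the exponential weight in $\|\Phi\|_r$.
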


\begin{proof}
We fix $\epsilon>0$.
Let $N\in\mathbb{N}$ and $0\in\Lambda_1\in\mathcal{F}_{\mathbb{Z}^d},\,\ldots,\,0\in\Lambda_N\in\mathcal{F}_{\mathbb{Z}^d}$ such that
\begin{equation}
\sum_{0\in\Lambda\in\mathcal{F}_{\mathbb{Z}^d},\,\Lambda\neq\Lambda_1,\,\ldots,\,\Lambda_N}\left\|\Phi(\Lambda)\right\|_\infty < \epsilon\,.
\end{equation}
Let $a\in\mathbb{N}_+^d$ such that $\Lambda_1\cup\ldots\cup\Lambda_N\subseteq\Lambda_a$ and
\begin{equation}
\left|\partial_0 H^\Phi_{\Lambda_{a}} - \left\|\Phi\right\|_L\right| < 2\,\epsilon\,.
\end{equation}
We have
\begin{equation}
\sum_{0\in\Lambda\in\mathcal{F}_{\mathbb{Z}^d},\,\Lambda\not\subseteq\Lambda_a}\left\|\Phi(\Lambda)\right\|_\infty < \epsilon\,.
\end{equation}
Let us fix $x\in\mathbb{Z}^d$.
For any $b\in\mathbb{N}_+^d$ with $b\ge a \pm x$ we have $\Lambda_a\subseteq\Lambda_b-x$ and
\begin{align}\label{eq:Cauchy}
\left|\partial_x H^\Phi_{\Lambda_b} - \partial_0 H^\Phi_{\Lambda_a}\right| &\overset{\mathrm{(a)}}{=} \left|\partial_0 H^\Phi_{\Lambda_b-x} - \partial_0 H^\Phi_{\Lambda_a}\right| \overset{\mathrm{(b)}}{\le} \partial_0\left(H^\Phi_{\Lambda_b-x} - H^\Phi_{\Lambda_a}\right) = \partial_0\sum_{\Lambda\subseteq\Lambda_b-x,\,\Lambda\not\subseteq\Lambda_a}\Phi(\Lambda) \nonumber\\
&= \partial_0\sum_{0\in\Lambda\subseteq\Lambda_b-x,\,\Lambda\not\subseteq\Lambda_a}\Phi(\Lambda)
\le 2\sum_{0\in\Lambda\subseteq\Lambda_b-x,\,\Lambda\not\subseteq\Lambda_a}\left\|\Phi(\Lambda)\right\|_\infty < 2\,\epsilon\,,
\end{align}
where (a) follows from the translation invariance of $\Phi$ and (b) follows since $\partial_0$ is a seminorm.
We then have
\begin{equation}
\left|\partial_x H^\Phi_{\Lambda_{b}} - \left\|\Phi\right\|_L\right| < 4\,\epsilon\,.
\end{equation}
Let $b>a$, such that any $x\in\Lambda_{b-a}$ satisfies $b\ge a\pm x$.
We have from \autoref{lem:partialxH}
\begin{align}
\sum_{x\in\Lambda_b}\left(\partial_x H^\Phi_{\Lambda_b}\right)^2 &= \sum_{x\in\Lambda_{b-a}}\left(\partial_x H^\Phi_{\Lambda_b}\right)^2 + \sum_{x\in\Lambda_b\setminus\Lambda_{b-a}}\left(\partial_x H^\Phi_{\Lambda_b}\right)^2\nonumber\\
&\le \left|\Lambda_{b-a}\right|\left(\left\|\Phi\right\|_L + 4\,\epsilon\right)^2 + 4\left(\left|\Lambda_b\right| - \left|\Lambda_{b-a}\right|\right)\left\|\Phi\right\|_r^2\,,
\end{align}
therefore
\begin{align}
\limsup_{b\to\infty} \frac{1}{\left|\Lambda_b\right|}\sum_{x\in\Lambda_b}\left(\partial_x H^\Phi_{\Lambda_b}\right)^2 &\le \limsup_{b\to\infty}\left(\frac{\left|\Lambda_{b-a}\right|}{\left|\Lambda_b\right|}\left(\left\|\Phi\right\|_L + 4\,\epsilon\right)^2 + 4\left(1 - \frac{\left|\Lambda_{b-a}\right|}{\left|\Lambda_b\right|}\right)\left\|\Phi\right\|_r^2\right)\nonumber\\
&=\left(\left\|\Phi\right\|_L + 4\,\epsilon\right)^2\,.
\end{align}
Since $\epsilon$ is arbitrary, we get
\begin{equation}\label{eq:limsupPoinc}
\limsup_{b\to\infty} \frac{1}{\left|\Lambda_b\right|}\sum_{x\in\Lambda_b}\left(\partial_x H^\Phi_{\Lambda_b}\right)^2 \le \left\|\Phi\right\|_L^2\,.
\end{equation}
If $\left\|\Phi\right\|_L = 0$, the claim follows from \eqref{eq:limsupPoinc}.
If $\left\|\Phi\right\|_L > 0$, we can choose $\epsilon < \left\|\Phi\right\|_L/4$.
We then have
\begin{equation}
\sum_{x\in\Lambda_b}\left(\partial_x H^\Phi_{\Lambda_b}\right)^2 \ge \sum_{x\in\Lambda_{b-a}}\left(\partial_x H^\Phi_{\Lambda_b}\right)^2 \ge \left|\Lambda_{b-a}\right|\left(\left\|\Phi\right\|_L - 4\,\epsilon\right)^2\,,
\end{equation}
and
\begin{equation}
\liminf_{b\to\infty} \frac{1}{\left|\Lambda_b\right|}\sum_{x\in\Lambda_b}\left(\partial_x H^\Phi_{\Lambda_b}\right)^2 \ge \liminf_{b\to\infty}\frac{\left|\Lambda_{b-a}\right|}{\left|\Lambda_b\right|}\left(\left\|\Phi\right\|_L - 4\,\epsilon\right)^2 = \left(\left\|\Phi\right\|_L - 4\,\epsilon\right)^2\,.
\end{equation}
Since $\epsilon$ is arbitrary, we get
\begin{equation}
\liminf_{b\to\infty} \frac{1}{\left|\Lambda_b\right|}\sum_{x\in\Lambda_b}\left(\partial_x H^\Phi_{\Lambda_b}\right)^2 \ge \left\|\Phi\right\|_L^2\,.
\end{equation}
The claim follows.
\end{proof}

We can now conclude the proof of \autoref{prop:PoincI}.
We have from \autoref{prop:Poinc} and \autoref{prop:gradnorm}
\begin{equation}
\limsup_{a\to\infty}\frac{\mathrm{Var}_{\omega_{\Lambda_a}}H^\Phi_{\Lambda_a}}{\left|\Lambda_a\right|} \le \limsup_{a\to\infty}\frac{1}{\left|\Lambda_a\right|}\sum_{x\in\Lambda_a}\left(\partial_x H^\Phi_{\Lambda_a}\right)^2 = \left\|\Phi\right\|_L^2\,.
\end{equation}
The claim follows.

\subsection{Gaussian concentration inequality}\label{sec:Gaussconc}
In this section we prove the following Gaussian concentration inequality (\autoref{thm:Gauss}) for quantum spin systems on finite lattices and apply it to prove an upper bound to the pressure of an interaction in terms of its Lipschitz constant (\autoref{cor:PL}).

\begin{thm}[Gaussian concentration inequality]\label{thm:Gauss}
Let $\Lambda$ be a finite set and let $\omega\in\mathcal{S}_\Lambda$ be a product state with full support.
Then, for any $H\in\mathcal{O}_\Lambda$ we have
\begin{equation}\label{eq:Poincexp}
\ln\mathrm{Tr}_\Lambda e^{H + \ln\omega} \le \mathrm{Tr}_\Lambda\left[\omega\,H\right] + \frac{1}{2}\sum_{x\in\Lambda}\left(\partial_x H\right)^2\,.
\end{equation}
\end{thm}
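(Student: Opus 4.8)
\emph{Proof plan.} The natural approach is to dualize the left-hand side and run an Azuma--Hoeffding martingale (Doob-decomposition) argument, adapted to the quantum setting by replacing the use of Hoeffding's lemma with Pinsker's inequality. By the Gibbs (Peierls--Bogoliubov) variational principle,
\begin{equation}
\ln\mathrm{Tr}_\Lambda e^{H + \ln\omega} = \sup_{\rho\in\mathcal{S}_\Lambda}\left(\mathrm{Tr}_\Lambda\left[\rho\,H\right] - S(\rho\|\omega)\right)\,,
\end{equation}
which is finite since $\omega$ has full support, and which, upon subtracting $\mathrm{Tr}_\Lambda[\omega H]$ from both sides, reduces \eqref{eq:Poincexp} to the inequality
\begin{equation}\label{plan:reduced}
\mathrm{Tr}_\Lambda\left[(\rho-\omega)\,H\right] \le S(\rho\|\omega) + \frac12\sum_{x\in\Lambda}(\partial_x H)^2
\end{equation}
for every $\rho\in\mathcal{S}_\Lambda$; all relative entropies occurring below are finite because $\omega$ has full support.

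To prove \eqref{plan:reduced}, fix an enumeration $\Lambda=\{x_1,\ldots,x_n\}$ and, for $0\le j\le n$, let $\mathbb{E}_{\le j}$ be the conditional expectation onto $\mathcal{O}_{\{x_1,\ldots,x_j\}}$ determined by $\omega$, that is $\mathbb{E}_{\le j}(A)=\mathrm{Tr}_{\{x_{j+1},\ldots,x_n\}}\big[(\mathbb{I}\otimes\omega_{x_{j+1}}\otimes\cdots\otimes\omega_{x_n})A\big]$. Putting $V_j = \mathbb{E}_{\le j}H - \mathbb{E}_{\le j-1}H\in\mathcal{O}_{\{x_1,\ldots,x_j\}}$ one obtains the telescoping identity $H - \mathrm{Tr}_\Lambda[\omega H]\,\mathbb{I} = \sum_{j=1}^n V_j$, the conditional mean-zero property $\mathrm{Tr}_{x_j}\big[(\mathbb{I}\otimes\omega_{x_j})V_j\big]=0$, and the bound $\|V_j\|_\infty\le\partial_{x_j}H$. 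The last bound follows from \eqref{eq:partialxH} together with the facts that the maps $\mathbb{E}_{\le j}$ and $\mathrm{Tr}_{x_j}[(\mathbb{I}\otimes\omega_{x_j})\,\cdot\,]$ are unital $\|\cdot\|_\infty$-contractions sending operators independent of a site to operators independent of that site. On the entropic side I would invoke the chain rule for relative entropy with respect to a product state, $S(\rho\|\omega)=\sum_{j=1}^n s_j$ with $s_j := S\big(\rho_{\{x_1,\ldots,x_j\}}\big\|\rho_{\{x_1,\ldots,x_{j-1}\}}\otimes\omega_{x_j}\big)\ge0$.

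Now combine the two decompositions. Using $\mathrm{Tr}_\Lambda[(\rho-\omega)H]=\sum_{j=1}^n\mathrm{Tr}_\Lambda[\rho\,V_j]$ and $V_j\in\mathcal{O}_{\{x_1,\ldots,x_j\}}$, and inserting $\rho_{\{x_1,\ldots,x_{j-1}\}}\otimes\omega_{x_j}$ — which annihilates $V_j$ by conditional mean-zeroness — H\"older's inequality followed by Pinsker's inequality gives, for each $j$,
\begin{align}
\mathrm{Tr}_\Lambda[\rho\,V_j] &= \mathrm{Tr}\big[\big(\rho_{\{x_1,\ldots,x_j\}} - \rho_{\{x_1,\ldots,x_{j-1}\}}\otimes\omega_{x_j}\big)V_j\big] \nonumber\\
&\le \|V_j\|_\infty\,\big\|\rho_{\{x_1,\ldots,x_j\}} - \rho_{\{x_1,\ldots,x_{j-1}\}}\otimes\omega_{x_j}\big\|_1 \le (\partial_{x_j}H)\sqrt{2\,s_j}\,.
\end{align}
Summing over $j$ and using $u\sqrt{2v}\le\tfrac12 u^2+v$ yields $\mathrm{Tr}_\Lambda[(\rho-\omega)H]\le\tfrac12\sum_{x\in\Lambda}(\partial_x H)^2+\sum_{j}s_j$, which is exactly \eqref{plan:reduced}.

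The items still requiring routine verification are: that a product-state conditional expectation does not increase $\partial_x$ (immediate from \eqref{eq:partialxH} and the contraction/site-independence properties just mentioned); the relative-entropy chain rule $S(\rho\|\sigma_A\otimes\sigma_B)=S(\rho_A\|\sigma_A)+S(\rho\|\rho_A\otimes\sigma_B)$ and its iteration; and the $\|V_j\|_\infty\le\partial_{x_j}H$ estimate written out in detail. The one potentially delicate point — that $H$ and $\ln\omega$ need not commute, which would obstruct any naive attempt via Golden--Thompson or operator Jensen inequalities — is completely sidestepped here, because the martingale increments $V_j$ are controlled in operator/trace norm and fed into Pinsker's inequality rather than being exponentiated. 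I therefore expect no substantive obstacle beyond careful bookkeeping of the marginal states.
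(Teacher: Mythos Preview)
Your argument is correct and constitutes a genuinely different proof from the paper's. The paper proceeds by induction on $|\Lambda|$: at each step it strips off one site $x$ using Lieb's three-matrix Golden--Thompson inequality to separate $e^{H-\mathrm{Tr}_x[\omega_x H]+\ln\omega_x}$ from the rest, then applies the scalar inequality $e^t\le\frac{\sinh a}{a}t+e^{a^2/2}$ (valid for $|t|\le a$, with $a=\partial_x H$) as an operator-valued Hoeffding lemma, and closes with the inductive hypothesis on $\Lambda\setminus x$. You instead dualize via the Gibbs variational principle and run a Doob decomposition on $H$ combined with the relative-entropy chain rule, bounding each increment by Pinsker and Young. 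Your route is more elementary in that it completely avoids the deep Golden--Thompson/Lieb concavity input; the noncommutativity of $H$ and $\ln\omega$, which the paper handles through Lieb's inequality, is absorbed by working only with trace pairings and trace-norm bounds. The paper's approach, by contrast, stays on the moment-generating-function side throughout and is closer in spirit to the classical Azuma--Hoeffding proof, exponentiating the martingale increments one by one; this can be advantageous when one wants direct control of $\ln\mathrm{Tr}\,e^{tH+\ln\omega}$ as a function of $t$ rather than just at $t=1$. Both arguments yield the same constant $\tfrac12$.
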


\begin{rem}
Ref. \cite{de2021quantum} proved the following Gaussian concentration inequality:
\end{rem}

\begin{thm}[{\cite[Theorem 3]{de2021quantum}}]
Let $\Lambda$ be a finite set and let $\omega = \frac{\mathbb{I}_\Lambda}{q^{|\Lambda|}}\in\mathcal{S}_\Lambda$ be the uniform distribution.
Then, for any $H\in\mathcal{O}_\Lambda$ we have
\begin{equation}\label{eq:Poincexpold}
\ln\mathrm{Tr}_\Lambda e^{H + \ln\omega} \le \mathrm{Tr}_\Lambda\left[\omega\,H\right] + \frac{\left|\Lambda\right|}{8}\left\|H\right\|_L^2\,.
\end{equation}
\end{thm}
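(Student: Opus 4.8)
The plan is to prove \eqref{eq:Poincexpold} by a telescoping/martingale argument over the sites of $\Lambda$, in the spirit of the Azuma--Hoeffding proof of the classical bounded-differences inequality, with the sharp constant $\tfrac18$ produced by Hoeffding's lemma one site at a time. Since both sides of \eqref{eq:Poincexpold} are unchanged under $H\mapsto H+c\,\mathbb{I}$, I would first assume $\mathrm{Tr}_\Lambda H=0$, so that $e^{H+\ln\omega}=q^{-|\Lambda|}e^H$ and the claim becomes $\ln\bigl(q^{-|\Lambda|}\mathrm{Tr}_\Lambda e^{H}\bigr)\le\tfrac{|\Lambda|}{8}\|H\|_L^2$. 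Writing $\Lambda=\{x\}\cup\Gamma$ and $E_x[\cdot]=\tfrac1q\mathrm{Tr}_x[\cdot]$ for the conditional expectation onto $\mathcal{O}_\Gamma$ compatible with the uniform state, the argument would run by induction on $|\Lambda|$, the inductive step being a single-site estimate of the shape: construct $G\in\mathcal{O}_\Gamma$ with $q^{-1}\mathrm{Tr}_x e^{H}\preceq e^{G}$, $\|G\|_L\le\|H\|_L$, and $q^{-|\Gamma|}\mathrm{Tr}_\Gamma G\le q^{-|\Lambda|}\mathrm{Tr}_\Lambda H+\tfrac18(\partial_x H)^2$. Granting this, $q^{-|\Lambda|}\mathrm{Tr}_\Lambda e^{H}\le e^{(\partial_x H)^2/8}\,q^{-|\Gamma|}\mathrm{Tr}_\Gamma e^{G}$, and the inductive hypothesis applied to $G$ closes the argument, since $\partial_x H\le\|H\|_L$ by \autoref{defn:partial} and conditional expectations do not increase any $\partial_y$.

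The single-site step is where Hoeffding's lemma enters, and one half of it is clean. Decompose $H=K_\Gamma+V$ with $K_\Gamma\in\mathcal{O}_\Gamma$ the best $x$-independent approximation of $H$, so that $\|V\|_\infty=\tfrac12\partial_x H=:r$ and hence the spectrum of $V$ lies in $[-r,r]$. The chord bound for the convex function $t\mapsto e^{t}$ on $[-r,r]$ gives, through the functional calculus, the operator inequality $e^{V}\preceq\cosh(r)\,\mathbb{I}+\tfrac{\sinh r}{r}\,V$. Applying $\tfrac1q\mathrm{Tr}_x$, which is positive and linear, yields $q^{-1}\mathrm{Tr}_x e^{V}\preceq\cosh(r)\,\mathbb{I}_\Gamma+\tfrac{\sinh r}{r}\,E_x[V]$; since $\cosh r\le e^{r^2/2}=e^{(\partial_x H)^2/8}$ and $1+t\le e^{t}$, this is $\preceq e^{(\partial_x H)^2/8}\exp\bigl(\tfrac{\tanh r}{r}E_x[V]\bigr)$, exactly the form needed to feed the telescoping. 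In the classical (diagonal) case $K_\Gamma$ and $V$ commute, $e^{H}=e^{K_\Gamma/2}e^{V}e^{K_\Gamma/2}$, and this immediately recovers the $\bar d$-version of the inequality for Ornstein's distance.

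The main obstacle is the non-commutativity between the centering $K_\Gamma$ (an operator on $\Gamma$) and the fluctuation $V$ (an operator on $\{x\}\cup\Gamma$): in general $e^{H}=e^{K_\Gamma+V}\neq e^{K_\Gamma/2}e^{V}e^{K_\Gamma/2}$, so the partial trace over $x$ does not factor through $e^{K_\Gamma}$, and one must resort to a Golden--Thompson-type inequality or to an interpolation in the ``amount of $x$-dependence switched on'' to convert the $e^{V}$ estimate above into the required bound on $q^{-1}\mathrm{Tr}_x e^{H}$. Compounding this, there is a genuine tension between the two natural centerings: centering at the uniform state, $V=H-E_x[H]$, kills the first-order term $E_x[V]$ but only gives $\|V\|_\infty\le\partial_x H$ (a factor $2$ off the sharp range), whereas centering at the best norm-approximation $K_\Gamma$ gives the sharp range $2r=\partial_x H$ but leaves a residual linear term and an inflated Lipschitz constant ($\partial_y K_\Gamma$ can a priori be as large as $\partial_y H+\partial_x H$). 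Reconciling these so that $G$ satisfies simultaneously $\|G\|_L\le\|H\|_L$ and the correct $\tfrac18$ constant is the crux of the matter; it is precisely the content of \cite[Theorem~3]{de2021quantum}, and I expect the careful handling of the non-commuting correction, together with the Lipschitz-constant bookkeeping for the reduced Hamiltonian $G$, to be the delicate points. Finally, it is worth noting that, unlike \autoref{thm:Gauss}, the resulting bound is not a sum of the local quantities $\partial_x H$ but a sharper universal constant times $\|H\|_L^2$, which is why it does not localize and hence cannot be used to control specific entropies per site.
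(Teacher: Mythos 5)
There is a genuine gap: your proposal identifies the right skeleton (reduce to traceless $H$, telescope over sites, Hoeffding's lemma one site at a time, close by induction) but never completes the inductive step, and you explicitly defer the crux to the cited reference. Concretely, you postulate that the single-site step produces $G\in\mathcal{O}_\Gamma$ with $q^{-1}\mathrm{Tr}_x e^H\preceq e^G$, $\|G\|_L\le\|H\|_L$, and a suitable bound on $\mathrm{Tr}_\Gamma G$, but you do not construct $G$. The chord bound $e^V\preceq\cosh(r)\,\mathbb{I}+\frac{\sinh r}{r}\,V$ for $\|V\|_\infty\le r$ is fine, but --- as you yourself observe --- the centering $K_\Gamma$ and the fluctuation $V$ do not commute, so $e^{K_\Gamma+V}\neq e^{K_\Gamma/2}e^V e^{K_\Gamma/2}$ and the partial trace over $x$ does not factor through $e^{K_\Gamma}$; your estimate on $\mathrm{Tr}_x e^V$ therefore does not transfer to $\mathrm{Tr}_x e^H$. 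When you write that resolving this ``is precisely the content of [de2021quantum, Theorem 3],'' you are conceding that the argument is not self-contained. There is also a bookkeeping inconsistency: if $q^{-1}\mathrm{Tr}_x e^H\preceq e^G$, the telescoping gives $q^{-|\Lambda|}\mathrm{Tr}_\Lambda e^H\le q^{-|\Gamma|}\mathrm{Tr}_\Gamma e^G$ with no extra factor, yet you simultaneously insert $e^{(\partial_x H)^2/8}$ in the recursion and $\tfrac18(\partial_x H)^2$ in the trace bound for $G$, which double-counts the Hoeffding gain.

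For context, this paper does not prove \eqref{eq:Poincexpold} at all; it is quoted from the earlier reference. What the paper does prove is the more general and local version (\autoref{thm:Gauss}), valid for arbitrary full-support product $\omega$ with $\tfrac12\sum_{x\in\Lambda}(\partial_x H)^2$ in place of $\tfrac{|\Lambda|}{8}\|H\|_L^2$. That proof has precisely the architecture you sketch --- induction on $|\Lambda|$, centering at $\mathrm{Tr}_x[\omega_x H]$, the chord bound $e^t\le\frac{\sinh a}{a}t+e^{a^2/2}$, and the observation $\partial_y\mathrm{Tr}_x[\omega_x H]\le\partial_y H$ --- and meets the non-commutativity by Lieb's three-matrix Golden--Thompson inequality. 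It resolves your centering dilemma by centering at the conditional expectation (killing the linear term) at the cost of a factor of $4$ in the constant; obtaining the sharp $\tfrac18$ requires the optimal asymmetric centering of Hoeffding's lemma plus the Lipschitz bookkeeping you flag as delicate, and that is exactly the work you would need to supply. Finally, your closing claim that the $\tfrac{|\Lambda|}{8}\|H\|_L^2$ form ``cannot be used to control specific entropies per site'' is misplaced: divided by $|\Lambda|$ it is already intensive; the reason the paper prefers the local form is that $\tfrac{1}{|\Lambda_a|}\sum_x(\partial_x H^\Phi_{\Lambda_a})^2\to\|\Phi\|_L^2$ by \autoref{prop:gradnorm}, which is what ties the bound to interactions, and the remark about non-intensive quantities in the paper concerns the old entropy continuity bound \eqref{eq:oldbound}, not this concentration inequality.
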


Upon replacing the constant $\frac{1}{8}$ by $\frac{1}{2}$, the inequality \eqref{eq:Poincexpold} is implied by \eqref{eq:Poincexp}.

\begin{proof}
We will prove the claim by induction on the size of $\Lambda$.
For $\Lambda=\emptyset$ equality holds in \eqref{eq:Poincexp}.
Let us fix $x\in\Lambda$, and let $\Lambda_0 = \Lambda\setminus x$.
Let $H_{\Lambda_0}\in\mathcal{O}_{\Lambda_0}$ such that
\begin{equation}
\partial_x H = 2\left\|H - H_{\Lambda_0}\right\|_\infty\,.
\end{equation}
We have
\begin{equation}
\left\|H - \mathrm{Tr}_x\left[\omega_x\,H\right]\right\|_\infty = \left\|H - H_{\Lambda_0} - \mathrm{Tr}_x\left[\omega_x\left(H - H_{\Lambda_0}\right)\right]\right\|_\infty \le 2\left\|H - H_{\Lambda_0}\right\|_\infty = \partial_x H\,.
\end{equation}
Using the inequality
\begin{equation}
e^t \le \frac{\sinh a}{a}\,t + e^\frac{a^2}{2}\,,\qquad |t|\le a\,,
\end{equation}
we get
\begin{equation}
e^{H - \mathrm{Tr}_x\left[\omega_x H\right]} \le \frac{\sinh\partial_x H}{\partial_x H}\left(H - \mathrm{Tr}_x\left[\omega_x\,H\right]\right) + e^\frac{\left(\partial_x H\right)^2}{2}\,,
\end{equation}
therefore
\begin{equation}\label{eq:Mart}
\mathrm{Tr}_x\left[\omega_x\,e^{H - \mathrm{Tr}_x\left[\omega_x H\right]}\right] \le e^\frac{\left(\partial_x H\right)^2}{2}\,.
\end{equation}
For any $y\in\Lambda_0$, let $H_{\Lambda\setminus y}\in\mathcal{O}_{\Lambda\setminus y}$ such that
\begin{equation}
\partial_y H = 2\left\|H - H_{\Lambda\setminus y}\right\|_\infty\,.
\end{equation}
We have
\begin{equation}\label{eq:partialtr}
\partial_y \mathrm{Tr}_x\left[\omega_x H\right] \le 2\left\|\mathrm{Tr}_x\left[\omega_x H\right] - \mathrm{Tr}_x\left[\omega_x H_{\Lambda\setminus y}\right]\right\|_\infty \le 2\left\|H - H_{\Lambda\setminus y}\right\|_\infty = \partial_y H\,.
\end{equation}
We then have
\begin{align}\label{eq:chain}
\ln\mathrm{Tr}_\Lambda e^{H + \ln \omega} &= \ln\mathrm{Tr}_\Lambda\exp\left(H - \mathrm{Tr}_x\left[\omega_x H\right] + \ln\omega_x + \mathrm{Tr}_x\left[\omega_x H\right] + \ln\omega_{\Lambda_0}\right) \nonumber\\
&\overset{\mathrm{(a)}}{\le} \ln\int_0^\infty\mathrm{Tr}_\Lambda\left[e^{\mathrm{Tr}_x\left[\omega_x H\right] + \ln\omega_{\Lambda_0}} \left(\omega_x^{-1}+t\right)^{-1} e^{H - \mathrm{Tr}_x\left[\omega_x H\right]}\left(\omega_x^{-1}+t\right)^{-1}\right]dt\nonumber\\
&= \ln\mathrm{Tr}_{\Lambda_0}\left[e^{\mathrm{Tr}_x\left[\omega_x H\right] + \ln\omega_{\Lambda_0}}\,\mathrm{Tr}_x\left[e^{H - \mathrm{Tr}_x\left[\omega_x H\right]}\int_0^\infty\left(\omega_x^{-1}+t\right)^{-2}dt\right]\right]\nonumber\\
&= \ln\mathrm{Tr}_{\Lambda_0}\left[e^{\mathrm{Tr}_x\left[\omega_x H\right] + \ln\omega_{\Lambda_0}}\,\mathrm{Tr}_x\left[\omega_x\,e^{H - \mathrm{Tr}_x\left[\omega_x H\right]}\right]\right]\nonumber\\
&\overset{\mathrm{(b)}}{\le} \frac{\left(\partial_x H\right)^2}{2}+ \ln\mathrm{Tr}_{\Lambda_0}e^{\mathrm{Tr}_x\left[\omega_x H\right] + \ln\omega_{\Lambda_0}}\nonumber\\
&\overset{\mathrm{(c)}}{\le} \frac{\left(\partial_x H\right)^2}{2} + \mathrm{Tr}_\Lambda\left[\omega\,H\right] + \frac{1}{2}\sum_{y\in\Lambda_0}\left(\partial_y \mathrm{Tr}_x\left[\omega_x H\right]\right)^2 \overset{\mathrm{(d)}}{\le} \mathrm{Tr}_\Lambda\left[\omega\,H\right] + \frac{1}{2}\sum_{y\in\Lambda}\left(\partial_y H\right)^2\,,
\end{align}
where (a) follows from the Golden--Thompson inequality with three matrices \cite{lieb1973convex}, (b) from \eqref{eq:Mart}, (c) from the inductive hypothesis and (d) from \eqref{eq:partialtr}.
The claim follows.
\end{proof}

\begin{cor}\label{cor:PL}
Let $\Phi\in\mathcal{B}^r_{\mathbb{Z}^d}$ and let $\omega\in\mathcal{S}^I_{\mathbb{Z}^d}$ be the uniform distribution, \emph{i.e.}, $\omega_\Lambda = \frac{\mathbb{I}_\Lambda}{q^{|\Lambda|}}$ for any $\Lambda\in\mathcal{F}_{\mathbb{Z}^d}$.
Then,
\begin{equation}
P(\Phi) \le \ln q + \frac{\left\|\Phi\right\|_L^2}{2} - \omega(E_\Phi)\,.
\end{equation}
\end{cor}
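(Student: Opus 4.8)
The plan is to apply the Gaussian concentration inequality of \autoref{thm:Gauss} on the boxes $\Lambda_a$ with the uniform product state and then pass to the thermodynamic limit. Fix $a\in\mathbb{N}_+^d$. The marginal $\omega_{\Lambda_a}=\mathbb{I}_{\Lambda_a}/q^{|\Lambda_a|}$ is a product state with full support, so \autoref{thm:Gauss} applies with $\Lambda=\Lambda_a$ and $H=-H^\Phi_{\Lambda_a}\in\mathcal{O}_{\Lambda_a}$. Since $\ln\omega_{\Lambda_a}=-|\Lambda_a|\ln q\,\mathbb{I}_{\Lambda_a}$, the left-hand side of \eqref{eq:Poincexp} becomes
\begin{equation}
\ln\mathrm{Tr}_{\Lambda_a}e^{-H^\Phi_{\Lambda_a}-|\Lambda_a|\ln q\,\mathbb{I}_{\Lambda_a}} = -|\Lambda_a|\ln q + \ln\mathrm{Tr}_{\Lambda_a}e^{-H^\Phi_{\Lambda_a}}\,.
\end{equation}

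For the right-hand side I would use that $\partial_x$ is unchanged under $H\mapsto-H$ (immediate from \eqref{eq:partialxHI}, replacing the optimizing operator $A$ by $-A$), so that $\partial_x(-H^\Phi_{\Lambda_a})=\partial_x H^\Phi_{\Lambda_a}$ for every $x\in\Lambda_a$. Hence \autoref{thm:Gauss} yields
\begin{equation}
-|\Lambda_a|\ln q + \ln\mathrm{Tr}_{\Lambda_a}e^{-H^\Phi_{\Lambda_a}} \le -\omega_{\Lambda_a}\!\left(H^\Phi_{\Lambda_a}\right) + \frac{1}{2}\sum_{x\in\Lambda_a}\left(\partial_x H^\Phi_{\Lambda_a}\right)^2\,.
\end{equation}
Dividing by $|\Lambda_a|$ and rearranging gives
\begin{equation}
\frac{\ln\mathrm{Tr}_{\Lambda_a}e^{-H^\Phi_{\Lambda_a}}}{|\Lambda_a|} \le \ln q - \frac{\omega_{\Lambda_a}\!\left(H^\Phi_{\Lambda_a}\right)}{|\Lambda_a|} + \frac{1}{2|\Lambda_a|}\sum_{x\in\Lambda_a}\left(\partial_x H^\Phi_{\Lambda_a}\right)^2\,.
\end{equation}

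Finally I would let $a\to\infty$. The left-hand side converges to $P(\Phi)$ by \eqref{eq:defP}; the term $\omega_{\Lambda_a}(H^\Phi_{\Lambda_a})/|\Lambda_a|$ converges to $\omega(E_\Phi)$ by \eqref{eq:EPhi}, since $\omega\in\mathcal{S}^I_{\mathbb{Z}^d}$; and the last term converges to $\tfrac12\|\Phi\|_L^2$ by \autoref{prop:gradnorm}. This gives $P(\Phi)\le\ln q-\omega(E_\Phi)+\tfrac12\|\Phi\|_L^2$, which is the claim. There is no serious obstacle here: the argument is a direct specialization of \autoref{thm:Gauss} combined with the already-established limit in \autoref{prop:gradnorm}, and the only points requiring a line of care are the identity $\partial_x(-H)=\partial_x H$ and the bookkeeping of the constant $|\Lambda_a|\ln q$ coming from $\ln\omega_{\Lambda_a}$.
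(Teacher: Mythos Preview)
Your proposal is correct and follows exactly the same approach as the paper: apply \autoref{thm:Gauss} on each box $\Lambda_a$ with the uniform product state, then pass to the limit using \eqref{eq:defP}, \eqref{eq:EPhi}, and \autoref{prop:gradnorm}. You are in fact slightly more explicit than the paper in noting that $\partial_x(-H)=\partial_x H$ and in handling the constant $|\Lambda_a|\ln q$ from $\ln\omega_{\Lambda_a}$, but these are the only differences in presentation.
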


\begin{proof}
We have
\begin{align}
P(\Phi) &= \lim_{a\to\infty}\frac{\ln\mathrm{Tr}_{\Lambda_a} e^{-H^\Phi_{\Lambda_a}}}{\left|\Lambda_a\right|} \overset{\mathrm{(a)}}{\le} \ln q + \lim_{a\to\infty}\left(\frac{1}{2\left|\Lambda_a\right|}\sum_{x\in\Lambda_a}\left(\partial_x H^\Phi_{\Lambda_a}\right)^2 - \frac{\omega\left(H^\Phi_{\Lambda_a}\right)}{\left|\Lambda_a\right|}\right)\nonumber\\
&\overset{\mathrm{(b)}}{=} \ln q + \frac{\left\|\Phi\right\|_L^2}{2} - \omega(E_\Phi)\,,
\end{align}
where (a) follows from \autoref{thm:Gauss} and (b) from \autoref{prop:gradnorm} and \eqref{eq:EPhi}.
The claim follows.
\end{proof}

\section{\texorpdfstring{$W_1$}{W\_1} continuity of the von Neumann entropy}\label{sec:continuity}
In this section we prove the following continuity bound of the von Neumann entropy with respect to the quantum $W_1$ distance:

\begin{thm}[$W_1$ continuity of the von Neumann entropy]\label{thm:main}
Let $\Lambda$ be a finite set.
For any $\rho,\,\sigma\in\mathcal{S}_\Lambda$ we have
\begin{equation}\label{eq:main}
\frac{\left|S(\rho) - S(\sigma)\right|}{\left|\Lambda\right|} \le h_2\left(\frac{\left\|\rho-\sigma\right\|_{W_1}}{\left|\Lambda\right|}\right) + \frac{\left\|\rho - \sigma\right\|_{W_1}}{\left|\Lambda\right|}\ln\left(q^2-1\right)\,.
\end{equation}
\end{thm}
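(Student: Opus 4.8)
The plan is to bound $S(\rho)-S(\sigma)$ from above; the reverse bound, hence \eqref{eq:main}, then follows by exchanging $\rho$ and $\sigma$. Set $W=\|\rho-\sigma\|_{W_1}$ and $n=|\Lambda|$; if $W=0$ then $\rho=\sigma$ and there is nothing to prove, so assume $W>0$. Fix a decomposition attaining the minimum in \autoref{defn:W1n}: operators $\Delta^{(x)}\in\mathcal{O}^T_\Lambda$ with $\mathrm{Tr}_x\Delta^{(x)}=0$, $\sum_{x}\Delta^{(x)}=\rho-\sigma$, and $\sum_{x}w_x=W$ where $w_x=\tfrac12\|\Delta^{(x)}\|_1$. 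For each site with $w_x>0$, let $\hat\Delta^{(x)}_\pm\in\mathcal{S}_\Lambda$ be the normalized positive and negative parts of $\Delta^{(x)}$; they are mutually orthogonal, so $\tfrac12\|\hat\Delta^{(x)}_+-\hat\Delta^{(x)}_-\|_1=1$, and since $\mathrm{Tr}_x\Delta^{(x)}=0$ forces $\mathrm{Tr}_x\hat\Delta^{(x)}_+=\mathrm{Tr}_x\hat\Delta^{(x)}_-$, the pair $\hat\Delta^{(x)}_\pm$ consists of neighboring states at the site $x$.

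The key device is the positive operator
\[
R=\rho+\sum_{x}w_x\,\hat\Delta^{(x)}_-=\sigma+\sum_{x}w_x\,\hat\Delta^{(x)}_+\,,\qquad \mathrm{Tr}\,R=1+W\,,
\]
so that $\hat R=R/(1+W)\in\mathcal{S}_\Lambda$ is, simultaneously, a convex combination of $\{\rho\}\cup\{\hat\Delta^{(x)}_-\}_x$ and of $\{\sigma\}\cup\{\hat\Delta^{(x)}_+\}_x$, in both cases with the weights $\bigl(\tfrac1{1+W},\{\tfrac{w_x}{1+W}\}_x\bigr)$. Applying concavity of the von Neumann entropy as a lower bound for the first representation and as an upper bound, corrected by the mixing (Shannon) entropy $H$ of the weights, for the second, and subtracting, gives
\[
S(\rho)-S(\sigma)\ \le\ \sum_{x}w_x\Bigl(S\bigl(\hat\Delta^{(x)}_+\bigr)-S\bigl(\hat\Delta^{(x)}_-\bigr)\Bigr)+(1+W)\,H\!\Bigl(\tfrac1{1+W},\bigl\{\tfrac{w_x}{1+W}\bigr\}_x\Bigr)\,.
\]

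It remains to control the two terms. For the first I would use a sharp, Fannes--Audenaert-type continuity bound for \emph{neighboring} states with an \emph{effective dimension $q^2$}: if $\alpha,\beta\in\mathcal{S}_\Lambda$ are neighboring at a single site, then $|S(\alpha)-S(\beta)|\le h_2(t)+t\ln(q^2-1)$ with $t=\tfrac12\|\alpha-\beta\|_1$. This is the main obstacle. The crude estimate $|S(\alpha)-S(\beta)|\le2\ln q$, obtained from the Araki--Lieb inequality applied to the common marginal on $\Lambda\setminus x$, is not sufficient, since $2\ln q=\ln q^2>\ln(q^2-1)$: the optimal constant has to be extracted by Audenaert's technique, exploiting that $\alpha$ and $\beta$ share the marginal $\tau$ on $\Lambda\setminus x$ and can therefore both be realised as images, under channels into the site $x$, of a common purification of $\tau$ --- the ``$q^2$'' being the dimension of $x$ together with one further virtual qudit carrying the relevant correlations with its environment. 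Granting this lemma and applying it to the orthogonal (hence $t=1$) pairs $\hat\Delta^{(x)}_\pm$, the $h_2$-term vanishes and the first term is at most $\sum_{x}w_x\ln(q^2-1)=W\ln(q^2-1)$.

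For the second term one uses the grouping identity $H(\tfrac1{1+W},\{\tfrac{w_x}{1+W}\})=h_2(\tfrac W{1+W})+\tfrac W{1+W}\,H(\{w_x/W\})$ together with $H(\{w_x/W\})\le\ln n$, the concavity of $h_2$ (Jensen over the normalized weights), and the monotonicity of $t\mapsto t\,h_2(\,\cdot/t)$, to recover the intensive form $n\,h_2(W/n)$. The delicate point --- and the place where this sketch must be sharpened to reach the \emph{exact} constants in \eqref{eq:main} rather than a bound weaker by a second-order term --- is to carry out this re-aggregation without incurring a spurious $\ln n$ and while keeping $q^2-1$ (not $q^2$) inside the logarithm; this is where one must combine the per-site Audenaert estimate with a more careful bookkeeping of the local trace distances, treating separately the regime where $W$ is comparable to $n$ (in which \eqref{eq:main} already follows from the global Fannes--Audenaert bound, since then $|S(\rho)-S(\sigma)|\le n\ln q\le n\ln(q^2-1)$ is dominated by the right-hand side). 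Dividing the resulting estimate by $n=|\Lambda|$ yields \eqref{eq:main}.
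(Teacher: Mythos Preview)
Your route --- decompose $\rho-\sigma$ site by site, sandwich via the common convex combination $\hat R$, and appeal to concavity --- is essentially the proof of the \emph{old} bound \cite[Theorem~1]{de2021quantum}, and as written it cannot reach \eqref{eq:main}. The obstruction is precisely the mixing-entropy term. With the grouping identity and $H(\{w_x/W\})\le\ln n$ you get
\[
(1+W)\,H\!\Bigl(\tfrac{1}{1+W},\bigl\{\tfrac{w_x}{1+W}\bigr\}\Bigr)\;\le\;(1+W)\,h_2\!\Bigl(\tfrac{W}{1+W}\Bigr)+W\ln n \;=\; g(W)+W\ln n\,,
\]
and this is \emph{not} dominated by $n\,h_2(W/n)$: already for $n=2$, $W=1$ one has $g(1)+\ln 2=3\ln 2>2\ln 2=2\,h_2(1/2)$, and for $W$ proportional to $n$ the left side grows like $n\ln n$. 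No amount of ``concavity of $h_2$'' or ``monotonicity of $t\mapsto t\,h_2(\cdot/t)$'' can remove the $\ln n$, because your decomposition genuinely has $n$ pieces and the mixing entropy of $n$ comparable weights is $\Theta(\ln n)$. Your separate treatment of the large-$W$ regime does not help, since the interesting (and hard) case is small $W/n$. The single-site Audenaert lemma you assume --- $|S(\alpha)-S(\beta)|\le\ln(q^2-1)$ for orthogonal neighboring states --- is also left unproved; it is tight and not a consequence of the purification sketch you give.

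The paper's argument is of a completely different nature. It first treats the case where $\sigma=\Pi/\dim\mathcal V$ is a normalized projector by introducing the \emph{distance operator} $H_{\mathcal V}$ of the subspace $\mathcal V$: this observable has $\|H_{\mathcal V}\|_L\le 1$, so by duality $\mathrm{Tr}[\rho\,H_{\mathcal V}]\le W$, and a maximum-entropy argument over its level sets (whose dimensions are controlled by $\binom{n}{k}(q^2-1)^k\dim\mathcal V$) yields $S(\rho)\le\ln\dim\mathcal V+n\,h_2(W/n)+W\ln(q^2-1)$ directly, with the correct intensive scaling built in. The general $\sigma$ is then reduced to the projector case by a typicality trick: replace $\sigma^{\otimes k}$ by its $\delta$-typical projector, apply the projector bound on $k$ copies, and let $k\to\infty$, $\delta\to 0$. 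It is the distance-operator/maximum-entropy step, absent from your proposal, that produces $n\,h_2(W/n)$ rather than $g(W)+W\ln n$.
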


\autoref{thm:main} generalizes to the quantum setting the following continuity bound of the Shannon entropy with respect to the classical $W_1$ distance:
\begin{thm}[$W_1$ continuity of the Shannon entropy {\cite[Proposition 8]{polyanskiy2016wasserstein}}]
Let $\Lambda$ be a finite set.
For any two probability distributions $\mu,\,\nu$ on $[q]^\Lambda$ we have
\begin{equation}\label{eq:SWc}
\left|S(\mu) - S(\nu)\right|\le \left|\Lambda\right|h_2\left(\frac{W_1(\mu,\nu)}{\left|\Lambda\right|}\right) + W_1(\mu,\nu)\ln\left(q-1\right)\,.
\end{equation}
\end{thm}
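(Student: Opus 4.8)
The plan is to follow the coupling-based argument behind \cite[Proposition 8]{polyanskiy2016wasserstein}. Write $n=|\Lambda|$ and let $H(\cdot)$ denote the Shannon entropy, so that $S(\mu)=H(X)$ and $S(\nu)=H(Y)$ whenever $X\sim\mu$ and $Y\sim\nu$. Since $[q]^\Lambda$ is finite, I would first fix an optimal coupling $\pi$ of $\mu$ and $\nu$ for the Hamming cost and take $(X,Y)\sim\pi$, so that $\mathbb{E}_\pi[h(X,Y)]=W_1(\mu,\nu)$. The crucial device is the \emph{disagreement pattern} $Z\in\{0,1\}^\Lambda$ with $Z_i=\mathbf{1}[X_i\ne Y_i]$, which is a deterministic function of $(X,Y)$ and satisfies $\sum_{i\in\Lambda}Z_i=h(X,Y)$.

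The core of the argument is the chain-rule estimate
\[
S(\mu)=H(X)\le H(X,Y)=H(Y)+H(X\mid Y)=H(Y)+H(Z\mid Y)+H(X\mid Y,Z)\,,
\]
where the last equality uses that $Z$ is a function of $(X,Y)$. I would then bound the two correction terms separately. For the first, subadditivity of entropy over the coordinates of $Z$, concavity of the binary entropy $h_2$, and Jensen's inequality give
\[
H(Z\mid Y)\le H(Z)\le\sum_{i\in\Lambda}h_2\bigl(\Pr[X_i\ne Y_i]\bigr)\le n\,h_2\Bigl(\tfrac1n\sum_{i\in\Lambda}\Pr[X_i\ne Y_i]\Bigr)=n\,h_2\bigl(W_1(\mu,\nu)/n\bigr)\,,
\]
using $\sum_i\Pr[X_i\ne Y_i]=\mathbb{E}_\pi[h(X,Y)]=W_1(\mu,\nu)$ and that $W_1(\mu,\nu)/n\in[0,1]$, so $h_2$ is evaluated inside its domain. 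For the second, observe that conditionally on $\{Y=y,\,Z=z\}$ the variable $X$ takes values in a set of cardinality at most $(q-1)^{|z|}$ --- it equals $y_i$ wherever $z_i=0$ and lies in $[q]\setminus\{y_i\}$ wherever $z_i=1$ --- so the log-cardinality bound gives $H(X\mid Y=y,Z=z)\le|z|\ln(q-1)$, and averaging over $(Y,Z)\sim\pi$ yields $H(X\mid Y,Z)\le\mathbb{E}_\pi[h(X,Y)]\ln(q-1)=W_1(\mu,\nu)\ln(q-1)$. Combining the three displays gives $S(\mu)-S(\nu)\le n\,h_2(W_1(\mu,\nu)/n)+W_1(\mu,\nu)\ln(q-1)$, and since the right-hand side is symmetric in $\mu,\nu$, exchanging their roles produces the same bound for $S(\nu)-S(\mu)$, which is \eqref{eq:SWc}.

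I do not expect a genuine obstacle here. The one inventive step is the introduction of the disagreement-pattern vector $Z$, which recasts the entropy gap as a Fano-type accounting of where, and in how many ways, the two coupled strings differ; the rest is a routine assembly of the entropy chain rule, subadditivity, the log-cardinality bound for conditional entropy, and concavity of $h_2$. The only points that merit a line of justification are the existence of an optimal coupling (immediate from finiteness of $[q]^\Lambda$) and the inequality $W_1(\mu,\nu)\le n$, which holds because the Hamming distance on $[q]^\Lambda$ never exceeds $n$; the latter is what legitimizes evaluating $h_2$ at $W_1(\mu,\nu)/n$ and applying Jensen.
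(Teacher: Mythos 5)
Your proof is correct and is exactly the standard coupling/disagreement-pattern argument behind the cited \cite[Proposition 8]{polyanskiy2016wasserstein}; the paper itself does not reproduce a proof of this classical statement, only cites it. There is therefore nothing in the paper to diverge from, and your three-step bound via the chain rule $H(X)\le H(Y)+H(Z\mid Y)+H(X\mid Y,Z)$, subadditivity plus Jensen for $H(Z)$, and the log-cardinality bound for $H(X\mid Y,Z)$ is the expected route.
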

The quantum continuity bound \eqref{eq:main} is identical to the classical bound \eqref{eq:SWc} upon replacing $q$ by $q^2$.
Such replacement is necessary, since the von Neumann entropy does not always satisfy the classical bound \eqref{eq:SWc} \cite{de2021quantum}.

\begin{rem}
Ref. \cite{de2021quantum} proved the following weaker continuity bound for the von Neumann entropy in terms of the $W_1$ distance:
\end{rem}

\begin{thm}[{\cite[Theorem 1]{de2021quantum}}]
Let $\Lambda$ be a finite set.
For any $\rho,\,\sigma\in\mathcal{S}_\Lambda$,
\begin{equation}\label{eq:oldbound}
\left|S(\rho) - S(\sigma)\right| \le g\left(\left\|\rho-\sigma\right\|_{W_1}\right) + \left\|\rho - \sigma\right\|_{W_1}\ln\left(q^2\left|\Lambda\right|\right)\,,
\end{equation}
where for any $t\ge0$
\begin{equation}
g(t) = \left(t+1\right)\ln\left(t+1\right) - t\ln t\,.
\end{equation}
\end{thm}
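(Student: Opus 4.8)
The plan is to establish the one-sided estimate $S(\sigma)-S(\rho)\le g(W)+W\ln(q^2|\Lambda|)$ with $W:=\|\rho-\sigma\|_{W_1}$, since applying it with the roles of $\rho$ and $\sigma$ reversed yields \eqref{eq:oldbound}. As $W_1$ is a norm, $W=0$ forces $\rho=\sigma$ and the bound is trivial, so assume $W>0$. First I would take a minimizing decomposition as in \autoref{defn:W1n}: operators $\Delta^{(x)}\in\mathcal O_\Lambda^T$ with $\mathrm{Tr}_x\Delta^{(x)}=0$, $\sum_{x\in\Lambda}\Delta^{(x)}=\rho-\sigma$, and $\tfrac12\sum_{x\in\Lambda}\|\Delta^{(x)}\|_1=W$. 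Decomposing each $\Delta^{(x)}=\alpha_x-\beta_x$ into positive and negative parts gives $\mathrm{Tr}\,\alpha_x=\mathrm{Tr}\,\beta_x=:c_x$, $\sum_x c_x=W$, and---because $\mathrm{Tr}_x\Delta^{(x)}=0$---also $\mathrm{Tr}_x\alpha_x=\mathrm{Tr}_x\beta_x$. Discarding the indices with $c_x=0$ (so at most $|\Lambda|$ remain), the rescalings $\alpha_x/c_x$ and $\beta_x/c_x$ are states in $\mathcal S_\Lambda$ that coincide after tracing out the single site $x$.

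The crux is to pass to the positive operator of trace $1+W$,
\begin{equation*}
P:=\rho+\sum_x\beta_x=\sigma+\sum_x\alpha_x\,,
\end{equation*}
and express $P/(1+W)$ as a convex combination in two ways: as the two-term mixture $\tfrac{1}{1+W}\rho+\tfrac{W}{1+W}\bar{\beta}$ with $\bar{\beta}:=\tfrac1W\sum_x\beta_x$, and as the many-term mixture $\tfrac{1}{1+W}\sigma+\sum_x\tfrac{c_x}{1+W}\,(\alpha_x/c_x)$. Bounding $S(P/(1+W))$ from below by concavity of the von Neumann entropy applied to the second representation, from above by $S(\sum_i p_i\tau_i)\le\sum_i p_iS(\tau_i)+h_2(p_1)$ applied to the first (two-point) one, and chaining the two estimates, I would obtain after multiplying through by $1+W$
\begin{equation*}
S(\sigma)-S(\rho)\le W\,S(\bar{\beta})-\sum_x c_x\,S(\alpha_x/c_x)+(1+W)\,h_2\!\left(\tfrac{1}{1+W}\right)\,,
\end{equation*}
and one checks directly that $(1+W)\,h_2\!\left(\tfrac1{1+W}\right)=g(W)$.

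To finish I would control the remaining terms. Writing $\bar{\beta}=\sum_x\tfrac{c_x}{W}(\beta_x/c_x)$ and applying the same mixing-entropy bound, with at most $|\Lambda|$ summands, gives $W\,S(\bar{\beta})\le\sum_x c_x\,S(\beta_x/c_x)+W\ln|\Lambda|$, whence
\begin{equation*}
S(\sigma)-S(\rho)\le\sum_x c_x\left[S(\beta_x/c_x)-S(\alpha_x/c_x)\right]+W\ln|\Lambda|+g(W)\,.
\end{equation*}
Since $\alpha_x/c_x$ and $\beta_x/c_x$ agree on $\Lambda\setminus x$, each of $S(\alpha_x/c_x)$ and $S(\beta_x/c_x)$ differs from the common value $S(\mathrm{Tr}_x(\alpha_x)/c_x)$ by at most $\ln q$ (Araki--Lieb inequality), so their difference is at most $2\ln q$; summing, $\sum_x c_x\cdot 2\ln q=2W\ln q$, and $2W\ln q+W\ln|\Lambda|=W\ln(q^2|\Lambda|)$, which gives the one-sided estimate. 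Reversing the roles of $\rho$ and $\sigma$ (equivalently, using $P$ with the two convex-combination readings interchanged) gives the other inequality, and \eqref{eq:oldbound} follows.

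The step I expect to be the genuine obstacle is the enlargement to the trace-$(1+W)$ operator $P$: working directly with $\rho$ and $\sigma$, or with the naive path $\sigma+\sum_x\Delta^{(x)}$ through operators that need not stay positive, does not yield a workable inequality, whereas the two convex-combination descriptions of $P/(1+W)$ convert the $W_1$-decomposition into an entropy bound with exactly the constant $g(W)$. The only other delicate point is that the ``which-site'' mixing entropy is estimated here crudely by $\ln|\Lambda|$ per unit of $W$; this is precisely why the lattice-size dependence appears inside the non-intensive term $W\ln|\Lambda|$, in contrast with the sharper $h_2$-type bound of \autoref{thm:main}, whose proof replaces this counting by a more careful argument.
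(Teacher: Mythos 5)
Your proof is correct. A small note first: the present paper only restates this theorem as a citation to \cite[Theorem 1]{de2021quantum} and does not reproduce its proof, so there is nothing in this paper to compare against line by line; but your argument is a sound, self-contained derivation. The route you take — Jordan-decomposing the optimal $\Delta^{(x)}=\alpha_x-\beta_x$, passing to the positive operator $P=\rho+\sum_x\beta_x=\sigma+\sum_x\alpha_x$ of trace $1+W$, reading $P/(1+W)$ as two different convex mixtures, and then chaining concavity with the mixing-entropy bound $S(\sum_i p_i\tau_i)\le\sum_i p_iS(\tau_i)+H(p)$ — is precisely the ``enlargement'' trick of Alicki--Fannes/Winter adapted to the $W_1$ decomposition, and as far as I can tell it is essentially how the cited reference proves its Theorem 1. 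Each step checks out: $\sum_x c_x=\tfrac12\sum_x\|\Delta^{(x)}\|_1=W$; the identity $(1+W)\,h_2\bigl(\tfrac{1}{1+W}\bigr)=(1+W)\ln(1+W)-W\ln W=g(W)$; the counting bound $H\bigl((c_x/W)_x\bigr)\le\ln|\Lambda|$ since at most $|\Lambda|$ of the $c_x$ are nonzero; and the per-site control $\bigl|S(\alpha_x/c_x)-S(\beta_x/c_x)\bigr|\le 2\ln q$ from subadditivity together with the Araki--Lieb lower bound, using that $\alpha_x/c_x$ and $\beta_x/c_x$ share the same marginal on $\Lambda\setminus x$. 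Your closing observation is also on point: the lossy step is estimating the ``which-site'' entropy by $\ln|\Lambda|$, which is exactly where the non-intensive logarithmic term $W\ln|\Lambda|$ originates, and it is this term that the paper's improved \autoref{thm:main} eliminates by an entirely different argument (distance operators and typical projectors) — which is why the present paper needs the stronger bound and cites this one only for comparison.
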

Due to the term $\ln\left|\Lambda\right|$, the bound \eqref{eq:oldbound} does not have the right scaling with respect to $\left|\Lambda\right|$ to prove a continuity bound for the specific entropy in terms of the specific quantum $W_1$ distance.
On the contrary, \autoref{thm:main} will be crucial in the proof of such a bound, which will be the subject of \autoref{sec:w1cont}.

\subsection{Proof of \autoref{thm:main}}

The proof of \autoref{thm:main} is based on the following notion of distance operator:
\begin{defn}[Distance operator {\cite[Section 2]{osborne2009quantum}}, {\cite[Definition 15]{eldar2017local}}]
Let $\mathcal{V}$ be a subspace of $\mathcal{H}_\Lambda$.
For any $k=0,\,\ldots,\,\left|\Lambda\right|$, we define the \emph{fattening} $\mathcal{V}_k$ of $\mathcal{V}$ of radius $k$ as the span of the linear operators acting on at most $k$ sites applied to a vector in $\mathcal{V}$:
\begin{equation}
    \mathcal{V}_k = \mathrm{span}\left\{O|\psi\rangle:|\psi\rangle\in\mathcal{V}\,,\;O\in\mathfrak{U}_X:X\subseteq\Lambda\,,\;|X|\le k\right\}\,,
\end{equation}
such that
\begin{equation}
    \mathcal{V} = \mathcal{V}_0 \subseteq \ldots \subseteq \mathcal{V}_{\left|\Lambda\right|} = \mathcal{H}_\Lambda\,.
\end{equation}
We define the \emph{distance operator} of $\mathcal{V}$ as the linear operator $H_\mathcal{V}\in\mathcal{O}_\Lambda$ that has eigenvalue $k$ on $\mathcal{V}_k\cap\mathcal{V}_{k-1}^\perp$ for each $k=0,\,\ldots,\,\left|\Lambda\right|$.
\end{defn}

The following \autoref{prop:distLip} provides the link between the distance operator and the $W_1$ distance:
\begin{prop}\label{prop:distLip}
Let $\mathcal{V}$ be a subspace of $\mathcal{H}_\Lambda$, and let $\rho,\,\sigma\in\mathcal{S}_\Lambda$ such that the support of $\sigma$ is contained in $\mathcal{V}$.
Then,
\begin{equation}
    \left\|\rho - \sigma\right\|_{W_1} \ge \mathrm{Tr}_\Lambda\left[\rho\,H_\mathcal{V}\right]\,.
\end{equation}
\end{prop}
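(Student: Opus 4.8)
The plan is to reduce the statement to the single inequality $\left\|H_\mathcal{V}\right\|_L\le1$ and then invoke the duality between the quantum $W_1$ norm and the quantum Lipschitz constant (\autoref{prop:duality}). Indeed $\rho-\sigma\in\mathcal{O}_\Lambda^T$, so \autoref{prop:duality} gives $\left\|\rho-\sigma\right\|_{W_1}\ge\mathrm{Tr}_\Lambda\left[(\rho-\sigma)\,H_\mathcal{V}\right]$ as soon as $\left\|H_\mathcal{V}\right\|_L\le1$; and $\mathrm{Tr}_\Lambda\left[\sigma\,H_\mathcal{V}\right]=0$ because the support of $\sigma$ is contained in $\mathcal{V}=\mathcal{V}_0$, which is exactly the eigenspace of $H_\mathcal{V}$ with eigenvalue $0$, so $H_\mathcal{V}\sigma=0$. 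Thus the whole proof amounts to showing $\partial_x H_\mathcal{V}\le1$ for every $x\in\Lambda$.

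To bound $\partial_x H_\mathcal{V}$ I would exhibit an explicit $A\in\mathcal{O}_{\Lambda\setminus x}$ with $\left\|H_\mathcal{V}-A\right\|_\infty\le\tfrac12$. First rewrite the distance operator in telescoped form, $H_\mathcal{V}=\sum_{j=0}^{|\Lambda|-1}\left(\mathbb{I}-\Pi_{\mathcal{V}_j}\right)$, where $\Pi_{\mathcal{V}_j}$ is the orthogonal projector onto the fattening $\mathcal{V}_j$. Then introduce the auxiliary subspace $\mathcal{W}=\mathrm{span}\left\{O\,|\psi\rangle:|\psi\rangle\in\mathcal{V},\,O\in\mathfrak{U}_x\right\}$, obtained by fattening $\mathcal{V}$ using only the single site $x$, together with its fattenings $\mathcal{W}_j$. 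Two inclusions are immediate from the definitions: $\mathcal{V}_j\subseteq\mathcal{W}_j\subseteq\mathcal{V}_{j+1}$ for all $j$. Moreover each $\mathcal{W}_j$ is invariant under $\mathfrak{U}_x$: for a generator $v=O_X O_x|\psi\rangle$ with $|X|\le j$, $O_x\in\mathfrak{U}_x$, $|\psi\rangle\in\mathcal{V}$, and $B\in\mathfrak{U}_x$, one checks $Bv\in\mathcal{W}_j$ by distinguishing the cases $x\notin X$ (so $B$ commutes with $O_X$) and $x\in X$ (using $\mathfrak{U}_X=\mathfrak{U}_x\otimes\mathfrak{U}_{X\setminus x}$ to push $B$ onto the $\mathfrak{U}_x$-factor, which reduces the support by one site). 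Since $\mathcal{W}_j$ is $\mathfrak{U}_x$-invariant, $\Pi_{\mathcal{W}_j}$ lies in the commutant of $\mathfrak{U}_x$ inside $\mathfrak{U}_\Lambda$, namely $\mathfrak{U}_{\Lambda\setminus x}$, and being a self-adjoint projector it belongs to $\mathcal{O}_{\Lambda\setminus x}$; hence $A':=\sum_{j=0}^{|\Lambda|-1}\left(\mathbb{I}-\Pi_{\mathcal{W}_j}\right)\in\mathcal{O}_{\Lambda\setminus x}$.

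Finally, from $\mathcal{V}_j\subseteq\mathcal{W}_j\subseteq\mathcal{V}_{j+1}$ we get $0\le\Pi_{\mathcal{W}_j}-\Pi_{\mathcal{V}_j}\le\Pi_{\mathcal{V}_{j+1}}-\Pi_{\mathcal{V}_j}$, and summing over $j$ (using that $0\le B_j\le D_j$ implies $\sum_j B_j\le\sum_j D_j$) gives
\begin{equation*}
0\;\le\;H_\mathcal{V}-A'\;=\;\sum_{j=0}^{|\Lambda|-1}\left(\Pi_{\mathcal{W}_j}-\Pi_{\mathcal{V}_j}\right)\;\le\;\sum_{j=0}^{|\Lambda|-1}\left(\Pi_{\mathcal{V}_{j+1}}-\Pi_{\mathcal{V}_j}\right)\;=\;\mathbb{I}-\Pi_{\mathcal{V}}\;\le\;\mathbb{I}\,.
\end{equation*}
Therefore $A:=A'+\tfrac12\mathbb{I}\in\mathcal{O}_{\Lambda\setminus x}$ satisfies $\left\|H_\mathcal{V}-A\right\|_\infty=\bigl\|(H_\mathcal{V}-A')-\tfrac12\mathbb{I}\bigr\|_\infty\le\tfrac12$, so $\partial_x H_\mathcal{V}\le1$ by \eqref{eq:partialxH}; since $x$ is arbitrary, $\left\|H_\mathcal{V}\right\|_L\le1$, which closes the argument. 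I expect the main obstacle to be the $\mathfrak{U}_x$-invariance of the fattenings $\mathcal{W}_j$: this is the one place where the tensor-product structure of $\mathfrak{U}_\Lambda$ must be used carefully, and it is what makes $A'$ land in $\mathcal{O}_{\Lambda\setminus x}$ and simultaneously what forces the sandwich $0\le H_\mathcal{V}-A'\le\mathbb{I}$, hence the sharp constant $\tfrac12$ and the value $1$ for the Lipschitz constant.
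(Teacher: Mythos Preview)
Your proof is correct and follows essentially the same route as the paper: reduce to $\|H_\mathcal{V}\|_L\le 1$, telescope $H_\mathcal{V}=\sum_k(\mathbb{I}-\Pi_k)$, introduce an auxiliary $\mathfrak{U}_x$-invariant chain of subspaces sandwiched between consecutive fattenings, and use the resulting operator inequality $0\le H_\mathcal{V}-A'\le\mathbb{I}$. Your subspaces $\mathcal{W}_j$ coincide (after an index shift) with the paper's $\mathcal{V}_{k,x}=\mathrm{span}\{O|\psi\rangle:|\psi\rangle\in\mathcal{V},\,O\in\mathfrak{U}_X,\,|X|\le k,\,x\in X\}$; the only cosmetic difference is that the paper's definition builds the condition $x\in X$ in directly, which makes the $\mathfrak{U}_x$-invariance a one-line observation rather than requiring your case split.
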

\begin{proof}
Since $\mathrm{Tr}_\Lambda\left[\sigma\,H_\mathcal{V}\right]=0$, it is sufficient to prove that $\left\|H_\mathcal{V}\right\|_L \le 1$.
For any $k=0,\,\ldots,\,\left|\Lambda\right|$, let $\Pi_k$ be the orthogonal projector onto $\mathcal{V}_k$, such that
\begin{equation}
    H_\mathcal{V} = \sum_{k=0}^{\left|\Lambda\right|}\left(\mathbb{I} - \Pi_k\right)\,.
\end{equation}
For any $x\in\Lambda$, let
\begin{equation}
\mathcal{V}_{k,x} = \mathrm{span}\left\{O|\psi\rangle:|\psi\rangle\in\mathcal{V}\,,\;O\in\mathfrak{U}_X:X\subseteq\Lambda\,,\;|X|\le k\,,\;x\in X\right\}\,,
\end{equation}
and let $\Pi_{k,x}$ be the orthogonal projector onto $\mathcal{V}_{k,x}$.
We have $\mathcal{V}_{k-1} \subseteq \mathcal{V}_{k,x} \subseteq \mathcal{V}_k$, therefore
\begin{equation}
0 \le \Pi_k - \Pi_{k,x} \le \Pi_k - \Pi_{k-1}\,.
\end{equation}
The subspace $\mathcal{V}_{k,x}$ is invariant with respect to the action of any unitary operator $U\in\mathfrak{U}_x$.
Then, $\Pi_{k,x}$ commutes with any such $U$, and therefore $\Pi_{k,x}\in\mathcal{O}_{\Lambda\setminus x}$.
Then,
\begin{align}
\partial_xH_\mathcal{V} &= \partial_x\sum_{k=0}^{\left|\Lambda\right|} \left(\mathbb{I} - \Pi_k\right) = \partial_x\sum_{k=0}^{\left|\Lambda\right|} \Pi_k = \partial_x\sum_{k=0}^{\left|\Lambda\right|}\left(\Pi_k - \Pi_{k,x}\right) \overset{(\mathrm{a})}{\le} \left\|\sum_{k=0}^{\left|\Lambda\right|} \left(\Pi_k - \Pi_{k,x}\right)\right\|_\infty\nonumber\\
&\le \left\|\sum_{k=0}^{\left|\Lambda\right|} \left(\Pi_k - \Pi_{k-1}\right)\right\|_\infty \overset{(\mathrm{b})}{=} 1\,,
\end{align}
where (a) follows from \autoref{lem:PSD} and (b) follows observing that $\Pi_k - \Pi_{k-1}$ is the orthogonal projector onto $\mathcal{V}_k\cap\mathcal{V}_{k-1}^\perp$.
The claim follows.
\end{proof}

We first prove \autoref{thm:main} when $\sigma$ is proportional to an orthogonal projector:

\begin{prop}\label{prop:main}
Let $\mathcal{V}$ be a subspace of $\mathcal{H}_\Lambda$, let $\Pi$ be the associated orthogonal projector and let
\begin{equation}
\sigma = \frac{\Pi}{\dim\mathcal{V}}\,.
\end{equation}
Then, for any $\rho\in\mathcal{S}_\Lambda$ we have
\begin{equation}
S(\rho) - \ln\dim\mathcal{V} \le \left|\Lambda\right|h_2\left(\frac{\left\|\rho-\sigma\right\|_{W_1}}{\left|\Lambda\right|}\right) + \left\|\rho - \sigma\right\|_{W_1}\ln\left(q^2-1\right)\,.
\end{equation}
\end{prop}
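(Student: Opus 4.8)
The plan is to combine the distance operator $H_{\mathcal V}$ with a pinching argument and a combinatorial dimension count, mirroring the classical proof of the Shannon-entropy bound. Since $\sigma$ is supported on $\mathcal V$ and $\|H_{\mathcal V}\|_L\le1$ (shown inside the proof of \autoref{prop:distLip}), \autoref{prop:distLip} gives $\mathrm{Tr}_\Lambda[\rho\,H_{\mathcal V}]\le\|\rho-\sigma\|_{W_1}=:w$. Let $P_k$ be the orthogonal projector onto the eigenspace $\mathcal V_k\cap\mathcal V_{k-1}^\perp$ of $H_{\mathcal V}$ (with $\mathcal V_{-1}=\{0\}$), so that $\sum_k P_k=\mathbb I$, $H_{\mathcal V}=\sum_k k\,P_k$, and $\sum_{k=0}^{|\Lambda|}k\,p_k\le w$ where $p_k=\mathrm{Tr}_\Lambda[\rho\,P_k]$. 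Applying the pinching channel $\mathcal E(\cdot)=\sum_k P_k(\cdot)P_k$, which does not decrease von Neumann entropy, and using that the blocks $P_k\rho P_k$ live on mutually orthogonal subspaces, one gets
\begin{equation}
S(\rho)\le S(\mathcal E(\rho))=H(\{p_k\}_k)+\sum_{k:\,p_k>0}p_k\,S(\rho_k)\le H(\{p_k\}_k)+\sum_{k}p_k\ln\bigl(\dim\mathcal V_k-\dim\mathcal V_{k-1}\bigr)\,,
\end{equation}
where $\rho_k=P_k\rho P_k/p_k$ and we used $S(\rho_k)\le\ln\operatorname{rank}P_k=\ln(\dim\mathcal V_k-\dim\mathcal V_{k-1})$.

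The key step, and where I expect the main work to lie, is the dimension bound $\dim\mathcal V_k-\dim\mathcal V_{k-1}\le\binom{|\Lambda|}{k}(q^2-1)^k\dim\mathcal V$, which is what produces the exponent $q^2-1$ in the final estimate. Fix a linear basis $\{\mathbb I,E_1,\dots,E_{q^2-1}\}$ of the single-site operator algebra, with $E_1,\dots,E_{q^2-1}$ spanning the traceless part. Any $O\in\mathfrak U_X$ with $|X|\le k$ is a linear combination of tensor products of these basis elements indexed by $X$; absorbing every identity factor, $O|\psi\rangle$ lies in the span of the vectors $\bigl(\bigotimes_{x\in Y}E_{i_x}\bigr)|\psi\rangle$ with $Y\subseteq\Lambda$, $|Y|\le k$, $i_x\in\{1,\dots,q^2-1\}$, and $|\psi\rangle$ in a fixed basis of $\mathcal V$. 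Those with $|Y|\le k-1$ already span $\mathcal V_{k-1}$, so $\mathcal V_k/\mathcal V_{k-1}$ is spanned by the images of the $\binom{|\Lambda|}{k}(q^2-1)^k\dim\mathcal V$ vectors with $|Y|=k$, proving the bound. Substituting and subtracting $\ln\dim\mathcal V$ yields
\begin{equation}
S(\rho)-\ln\dim\mathcal V\le H(\{p_k\}_k)+\sum_{k=0}^{|\Lambda|}p_k\ln\binom{|\Lambda|}{k}+\Bigl(\sum_{k=0}^{|\Lambda|}k\,p_k\Bigr)\ln(q^2-1)\,.
\end{equation}

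To control the first two terms I would introduce the probability measure $\mu$ on subsets of $\Lambda$ with $\mu(S)=p_{|S|}/\binom{|\Lambda|}{|S|}$: a direct computation gives $H(\mu)=H(\{p_k\}_k)+\sum_k p_k\ln\binom{|\Lambda|}{k}$, while each indicator $\mathbb 1[x\in S]$ has $\mu$-probability $\sum_k\frac{k}{|\Lambda|}p_k=m/|\Lambda|$ with $m:=\sum_k k\,p_k$, so subadditivity of the Shannon entropy gives $H(\mu)\le|\Lambda|\,h_2(m/|\Lambda|)$. Thus $S(\rho)-\ln\dim\mathcal V\le|\Lambda|\,f(m/|\Lambda|)$ with $f(u)=h_2(u)+u\ln(q^2-1)$ and $m\le\min\{w,|\Lambda|\}$.

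It remains to pass from $m$ to $w$. Since $f'(u)=\ln\frac{(q^2-1)(1-u)}{u}$, $f$ increases on $[0,1-q^{-2}]$ and decreases afterwards, with maximum $f(1-q^{-2})=2\ln q$. If $w\le|\Lambda|(1-q^{-2})$, then $m/|\Lambda|\le w/|\Lambda|\le1-q^{-2}$ and monotonicity gives $f(m/|\Lambda|)\le f(w/|\Lambda|)$, which is the claimed inequality. If instead $w>|\Lambda|(1-q^{-2})$, I would bound the left-hand side crudely by $S(\rho)\le|\Lambda|\ln q$ and the right-hand side below by $w\ln(q^2-1)>|\Lambda|(1-q^{-2})\ln(q^2-1)\ge|\Lambda|\ln q$, the last step being the elementary inequality $(1-q^{-2})\ln(q^2-1)\ge\ln q$ valid for every integer $q\ge2$ (e.g.\ from $\ln(q^2-1)\ge2\ln q-\tfrac1{q^2-1}$ together with $(q^2-2)\ln q\ge1$). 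Throughout one uses the bound $\|\rho-\sigma\|_{W_1}\le|\Lambda|$ so that $h_2(w/|\Lambda|)$ is well defined.
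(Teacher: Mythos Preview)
Your proof is correct and follows essentially the same route as the paper: the distance operator $H_{\mathcal V}$, the dimension bound $\dim(\mathcal V_k\cap\mathcal V_{k-1}^\perp)\le\binom{|\Lambda|}{k}(q^2-1)^k\dim\mathcal V$ (the paper's \autoref{lem:Wk}), and the reduction to a Shannon-entropy estimate with mean constraint. The only cosmetic differences are that the paper phrases the pinching step as $S(\rho)\le S(\tilde\rho)$ with $\tilde\rho=\sum_k p_k P_k/\dim\mathcal W_k$ (equivalent to your pinching-plus-rank bound), packages the Shannon step via the maximum-entropy principle for a random variable on $\{0,\dots,q^2-1\}^\Lambda$ rather than your subadditivity argument for the measure on subsets, and handles the ``trivial'' regime with the threshold $w^*$ satisfying $\phi(w^*)=\ln q$ instead of $1-q^{-2}$; all three variants yield the same inequality.
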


\begin{proof}
For any $0\le t \le 1$, let
\begin{equation}
\phi(t) = h_2(t) + t\ln\left(q^2-1\right)\,,
\end{equation}
and let
\begin{equation}
\left\|\rho - \sigma\right\|_{W_1} = \left|\Lambda\right|w\,.
\end{equation}
The claim becomes
\begin{equation}
S(\rho) \le \ln\dim\mathcal{V} + \left|\Lambda\right|\phi(w)\,.
\end{equation}
$\phi$ is increasing in $\left[0,1-\frac{1}{q^2}\right]$ and decreasing in $\left[1-\frac{1}{q^2},1\right]$ with
\begin{equation}
\phi(0) = 0\,,\qquad \phi\left(1-\tfrac{1}{q^2}\right) = \ln q^2\,,\qquad \phi(1) = \ln\left(q^2-1\right)\,.
\end{equation}
Let $0<w^*<1-\frac{1}{q^2}$ satisfy
\begin{equation}\label{eq:w*}
\phi(w^*) = \ln q\,.
\end{equation}
If $w\ge w^*$, the claim is trivial.
Indeed, if $w^*\le w \le 1-\frac{1}{q^2}$ we have
\begin{equation}
\phi(w) \ge \phi(w^*) = \ln q\,,
\end{equation}
while if $1-\frac{1}{q^2}<w\le1$ we have
\begin{equation}
\phi(w) \ge \phi(1) = \ln\left(q^2-1\right) \ge \ln q\,.
\end{equation}
We can then assume $w<w^*$.

Let $H_{\mathcal{V}}$ be the distance operator of $\mathcal{V}$.
For any $k=0,\,\ldots,\,\left|\Lambda\right|$, let $\mathcal{W}_k$ be the eigenspace of $H_{\mathcal{V}}$ with eigenvalue $k$, let $P_k$ be the orthogonal projector onto $\mathcal{W}_k$, and let $p_k = \mathrm{Tr}_\Lambda\left[\rho\,P_k\right]$ be the probability that a measurement of $H_\mathcal{V}$ on $\rho$ has outcome $k$.
Let
\begin{equation}
\tilde{\rho} = \sum_{k=0}^{\left|\Lambda\right|}p_k\,\frac{P_k}{\dim\mathcal{W}_k}\,.
\end{equation}
We have
\begin{equation}\label{eq:Stilde}
0 \le S(\rho\|\tilde{\rho}) = S(\tilde{\rho}) - S(\rho)\,.
\end{equation}
For any $x\in\left\{0,\ldots,q^2-1\right\}^{\Lambda}$, let $H(x)$ be the number of components of $x$ that are different from $0$, and for any $k=0,\,\ldots,\,\left|\Lambda\right|$, let
\begin{equation}
D_k = \left|H^{-1}(k)\right| = \left|\left\{x\in\left\{0,\ldots,q^2-1\right\}^{\Lambda}:H(x)=k\right\}\right|\,.
\end{equation}
Let $X$ be a random variable with values in $\left\{0,\ldots,q^2-1\right\}^{\Lambda}$ distributed as follows.
Let the probability distribution of $H(X)$ be $p$, and for any $k=0,\,\ldots,\,\left|\Lambda\right|$, let the probability distribution of $X$ conditioned on $H(X)=k$ be uniform, such that the probability of $x\in\left\{0,\ldots,q^2-1\right\}^{\Lambda}$ is
\begin{equation}
\mathbb{P}(X=x) = \frac{p_{H(x)}}{D_{H(x)}}\,.
\end{equation}
Since $H(X)$ has the same probability distribution as $H_\mathcal{V}$ measured on $\rho$, we have
\begin{equation}
\mathbb{E}\,H(X) = \mathrm{Tr}_\Lambda\left[\rho\,H_{\mathcal{V}}\right] =: \left|\Lambda\right|u\,.
\end{equation}
By the maximum entropy principle, the Shannon entropy of $X$ is upper bounded by the Shannon entropy of the Gibbs distribution of $H$ with average energy $\left|\Lambda\right|u$ :
\begin{equation}\label{eq:SX}
S(X) \le \left|\Lambda\right|\phi(u)\,.
\end{equation}
We then have
\begin{align}\label{eq:Somega}
S(\rho) &\overset{\mathrm{(a)}}{\le} S(\tilde{\rho}) = \sum_{k=0}^{\left|\Lambda\right|} p_k\ln\frac{\dim\mathcal{W}_k}{p_k} \overset{\mathrm{(b)}}{\le} \sum_{k=0}^{\left|\Lambda\right|} p_k\ln\frac{D_k\dim\mathcal{V}}{p_k} = \ln\dim\mathcal{V} + S(X)\nonumber\\ &\overset{\mathrm{(c)}}{\le} \ln\dim\mathcal{V} + \left|\Lambda\right|\phi(u)\,,
\end{align}
where (a) follows from \eqref{eq:Stilde}, (b) from \autoref{lem:Wk} and (c) from \eqref{eq:SX}.
We have from \autoref{prop:distLip}
\begin{equation}
w \ge \frac{\mathrm{Tr}_\Lambda\left[\rho\,H_{\mathcal{V}}\right]}{\left|\Lambda\right|} = u\,,
\end{equation}
hence
\begin{equation}\label{eq:fu}
\phi(u) \le \phi(w)\,.
\end{equation}
The claim follows.
\end{proof}

Without loss of generality, we can assume $S(\rho)\ge S(\sigma)$.
For any $k\in\mathbb{N}$ and any $\delta>0$, let $P_{k,\delta}$ be the $\delta$-typical projector of $\sigma^{\otimes k}$, \emph{i.e.}, the orthogonal projector on the sum of the eigenspaces of $\sigma^{\otimes k}$ with eigenvalues contained in $\left[e^{-k\left(S(\sigma)+\delta\right)},e^{-k\left(S(\sigma)-\delta\right)}\right]$.
$P_{k,\delta}$ satisfies \cite[Section 5.5]{nielsen2010quantum,wilde2017quantum,holevo2019quantum}
\begin{subequations}
\begin{equation}\label{eq:P1}
\sigma^{\otimes k} \ge e^{-k\left(S(\sigma)+\delta\right)}\,P_{k,\delta}\,,
\end{equation}
\begin{equation}\label{eq:P2}
\liminf_{k\to\infty}\frac{\ln\mathrm{Tr}_\Lambda P_{k,\delta}}{k}\ge S(\sigma) - \delta\,.
\end{equation}
\end{subequations}
The property \eqref{eq:P1} implies
\begin{subequations}
\begin{equation}\label{eq:P3}
\frac{\ln\mathrm{Tr}_\Lambda P_{k,\delta}}{k} \le S(\sigma) + \delta\,,
\end{equation}
\begin{equation}\label{eq:P4}
\frac{1}{k}\,S\left(\left.\frac{P_{k,\delta}}{\mathrm{Tr}_\Lambda P_{k,\delta}}\right\|\sigma^{\otimes k}\right) \le S(\sigma) + \delta - \frac{\ln\mathrm{Tr}_\Lambda P_{k,\delta}}{k}\,.
\end{equation}
\end{subequations}
We have from \eqref{eq:P3} and \autoref{prop:main}
\begin{equation}\label{eq:deltaS}
S(\rho) - S(\sigma) \le \frac{S\left(\rho^{\otimes k}\right) - \ln\mathrm{Tr}_\Lambda P_{k,\delta}}{k} + \delta \le \left|\Lambda\right|\phi\left(\frac{\left\|\rho^{\otimes k} - \frac{P_{k,\delta}}{\mathrm{Tr}_\Lambda P_{k,\delta}}\right\|_{W_1}}{k\left|\Lambda\right|}\right) + \delta\,.
\end{equation}
We have
\begin{align}
\frac{\left\|\rho^{\otimes k} - \frac{P_{k,\delta}}{\mathrm{Tr}_\Lambda P_{k,\delta}}\right\|_{W_1}}{k\left|\Lambda\right|} & \le \frac{\left\|\rho^{\otimes k} - \sigma^{\otimes k}\right\|_{W_1}}{k\left|\Lambda\right|} + \frac{\left\|\sigma^{\otimes k} - \frac{P_{k,\delta}}{\mathrm{Tr}_\Lambda P_{k,\delta}}\right\|_{W_1}}{k\left|\Lambda\right|} \overset{\mathrm{(a)}}{\le} w + \sqrt{\frac{2}{k}\,S\left(\left.\frac{P_{k,\delta}}{\mathrm{Tr}_\Lambda P_{k,\delta}}\right\|\sigma^{\otimes k}\right)}\nonumber\\
&\overset{\mathrm{(b)}}{\le} w + \sqrt{2\left(S(\sigma) + \delta - \frac{\ln\mathrm{Tr}_\Lambda P_{k,\delta}}{k}\right)}\,,
\end{align}
where (a) follows from \autoref{prop:W1SA} and \autoref{prop:Marton}, and (b) follows from \eqref{eq:P4}.
We get from \eqref{eq:P2}
\begin{equation}
\limsup_{k\to\infty}\frac{\left\|\rho^{\otimes k} - \frac{P_{k,\delta}}{\mathrm{Tr}_\Lambda P_{k,\delta}}\right\|_{W_1}}{k\left|\Lambda\right|} \le w + 2\sqrt{\delta}\,.
\end{equation}
We then get from \eqref{eq:deltaS}
\begin{equation}
S(\rho) - S(\sigma) \le \left|\Lambda\right|\phi\left(w + 2\sqrt{\delta}\right) + \delta\,,
\end{equation}
and the claim follows taking the limit $\delta\to0$.

\section{\texorpdfstring{$w_1$}{w\_1} continuity of the specific entropy}\label{sec:w1cont}
A fundamental consequence of \autoref{thm:main} is the following continuity bound for the specific entropy in terms of the specific quantum $W_1$ distance:
\begin{cor}[$w_1$ continuity of the specific entropy]\label{thm:mainI}
The specific entropy satisfies the following continuity bound with respect to the specific quantum $W_1$ distance:
For any $\rho,\,\sigma\in\mathcal{S}_{\mathbb{Z}^d}^I$ we have
\begin{equation}
\left|s(\rho) - s(\sigma)\right| \le h_2\left(w_1(\rho,\sigma)\right) + w_1(\rho,\sigma)\ln\left(q^2-1\right)\,,
\end{equation}
where
\begin{equation}
h_2(t) = -t\ln t - \left(1-t\right)\ln\left(1-t\right)\,,\qquad 0\le t\le1
\end{equation}
is the binary entropy function.
\end{cor}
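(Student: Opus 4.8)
The plan is to apply \autoref{thm:main} to the marginal states on the boxes $\Lambda_a$ and then let $a\to\infty$. For each $a\in\mathbb{N}_+^d$, applying \autoref{thm:main} with $\Lambda=\Lambda_a$ to the states $\rho_{\Lambda_a},\sigma_{\Lambda_a}\in\mathcal{S}_{\Lambda_a}$ gives
\[
\frac{\left|S(\rho_{\Lambda_a})-S(\sigma_{\Lambda_a})\right|}{\left|\Lambda_a\right|}\le h_2\!\left(\frac{\left\|\rho_{\Lambda_a}-\sigma_{\Lambda_a}\right\|_{W_1}}{\left|\Lambda_a\right|}\right)+\frac{\left\|\rho_{\Lambda_a}-\sigma_{\Lambda_a}\right\|_{W_1}}{\left|\Lambda_a\right|}\ln\left(q^2-1\right)\,.
\]
All quantities appearing here are intensive, which is exactly why this inequality survives the thermodynamic limit; note in particular that $\left\|\rho_{\Lambda_a}-\sigma_{\Lambda_a}\right\|_{W_1}/\left|\Lambda_a\right|\le\tfrac12\left\|\rho_{\Lambda_a}-\sigma_{\Lambda_a}\right\|_1\le1$ by \autoref{prop:W1T}, so the argument of $h_2$ lies in $[0,1]$ and the right-hand side is well defined.

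Next I would pass to the limit $a\to\infty$ on both sides. By the definition of specific entropy, $S(\rho_{\Lambda_a})/\left|\Lambda_a\right|\to s(\rho)$ and $S(\sigma_{\Lambda_a})/\left|\Lambda_a\right|\to s(\sigma)$, hence the left-hand side converges to $\left|s(\rho)-s(\sigma)\right|$. By \autoref{prop:w1sup}, $\left\|\rho_{\Lambda_a}-\sigma_{\Lambda_a}\right\|_{W_1}/\left|\Lambda_a\right|\to w_1(\rho,\sigma)$. Since the function $t\mapsto h_2(t)+t\ln(q^2-1)$ is continuous on $[0,1]$ (with the convention $h_2(0)=h_2(1)=0$) and $w_1(\rho,\sigma)\le T(\rho,\sigma)\le1$, the right-hand side converges to $h_2(w_1(\rho,\sigma))+w_1(\rho,\sigma)\ln(q^2-1)$. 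Combining the two limits yields the claimed bound.

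There is no real obstacle in this argument: the substantive work is entirely contained in \autoref{thm:main}, whose bound is phrased purely in terms of the intensive ratios $S/|\Lambda|$ and $\|\cdot\|_{W_1}/|\Lambda|$, together with the existence of the $w_1$ limit established in \autoref{prop:w1sup}. The only point deserving a moment's care is the termwise passage to the limit on the right-hand side, which is legitimate because that side is a fixed continuous function evaluated on a convergent sequence lying in the bounded interval $[0,1]$.
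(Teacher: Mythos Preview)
Your proof is correct and follows exactly the same route as the paper: apply \autoref{thm:main} on each box $\Lambda_a$, then pass to the limit $a\to\infty$ using the existence of $w_1$ from \autoref{prop:w1sup} and the continuity of $t\mapsto h_2(t)+t\ln(q^2-1)$. If anything, you are slightly more careful than the paper in justifying that the argument of $h_2$ stays in $[0,1]$ and that the limit can be taken termwise.
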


\begin{proof}
The claim follows from \autoref{thm:main}: We have
\begin{align}
\left|s(\rho) - s(\sigma)\right| &= \lim_{a\to\infty}\frac{\left|S\left(\rho_{\Lambda_a}\right) - S\left(\sigma_{\Lambda_a}\right)\right|}{\left|\Lambda_a\right|}\nonumber\\
&\le \lim_{a\to\infty}\left(h_2\left(\frac{\left\|\rho_{\Lambda_a}-\sigma_{\Lambda_a}\right\|_{W_1}}{\left|\Lambda_a\right|}\right) + \frac{\left\|\rho_{\Lambda_a} - \sigma_{\Lambda_a}\right\|_{W_1}}{\left|\Lambda_a\right|}\ln\left(q^2-1\right)\right)\nonumber\\
&= h_2\left(w_1(\rho,\sigma)\right) + w_1(\rho,\sigma)\ln\left(q^2-1\right)\,.
\end{align}
\end{proof}

\section{\texorpdfstring{$w_1$}{w\_1}-Gibbs states}\label{sec:Gibbs}

We define the $w_1$-Gibbs states of the interaction $\Phi$ as the translation-invariant states whose marginal states have a $W_1$ distance from the local Gibbs states of $\Phi$ that scales sublinearly with the volume:
\begin{defn}[$w_1$-Gibbs state]\label{defn:w1G}
Let $\Phi\in\mathcal{B}^r_{\mathbb{Z}^d}$.
We define for any $\rho\in\mathcal{S}^I_{\mathbb{Z}^d}$ the specific quantum  $W_1$ distance between $\rho$ and $\Phi$ as the limit of the $W_1$ distance per site between the marginals of $\rho$ and the local Gibbs states of $\Phi$:
\begin{equation}
w_1(\rho,\Phi) = \limsup_{a\to\infty}\frac{\left\|\rho_{\Lambda_a} - \omega^\Phi_{\Lambda_a}\right\|_{W_1}}{\left|\Lambda_a\right|}\,.
\end{equation}
We say that the state $\omega\in\mathcal{S}^I_{\mathbb{Z}^d}$ is a \emph{$w_1$-Gibbs state} of $\Phi$ if $w_1(\omega,\Phi) = 0$.
\end{defn}

We also define for any $\Phi\in\mathcal{B}^r_{\mathbb{Z}^d}$ and any $\rho\in\mathcal{S}^I_{\mathbb{Z}^d}$ the specific relative entropy between $\rho$ and $\Phi$ as the limit of the relative entropy per site between the marginals of $\rho$ and the local Gibbs states of $\Phi$:
\begin{equation}
s(\rho\|\Phi) = \lim_{a\to\infty}\frac{S\left(\rho_{\Lambda_a}\left\|\omega^\Phi_{\Lambda_a}\right.\right)}{\left|\Lambda_a\right|} = P(\Phi) - s(\rho) + \rho(E_\Phi)\,.
\end{equation}
We have $s(\rho\|\Phi)\ge0$, with equality iff $\rho\in\mathcal{S}_{eq}(\Phi)$.

An interaction can have at most one $w_1$-Gibbs state:
\begin{prop}[Uniqueness of the $w_1$-Gibbs state]\label{prop:w1G}
Let $\Phi\in\mathcal{B}^r_{\mathbb{Z}^d}$ have a $w_1$-Gibbs state $\omega\in\mathcal{S}^I_{\mathbb{Z}^d}$.
Then, for any $\rho\in\mathcal{S}^I_{\mathbb{Z}^d}$ we have
\begin{equation}
w_1(\rho,\Phi) = w_1(\rho,\omega)\,.
\end{equation}
In particular, $\Phi$ can have at most one $w_1$-Gibbs state.
\end{prop}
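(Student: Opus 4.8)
The plan is to deduce everything from the triangle inequality for the quantum $W_1$ distance on the finite boxes $\Lambda_a$, combined with the hypothesis $w_1(\omega,\Phi)=0$ and the fact (\autoref{prop:w1sup}) that $w_1(\cdot,\cdot)$ is a genuine distance on $\mathcal{S}_{\mathbb{Z}^d}^I$ for which the defining limit exists.

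First I would prove $w_1(\rho,\Phi)\le w_1(\rho,\omega)$. For each $a\in\mathbb{N}_+^d$, the triangle inequality for $\|\cdot\|_{W_1}$ on $\mathcal{S}_{\Lambda_a}$ gives
\begin{equation}
\left\|\rho_{\Lambda_a}-\omega^\Phi_{\Lambda_a}\right\|_{W_1}\le\left\|\rho_{\Lambda_a}-\omega_{\Lambda_a}\right\|_{W_1}+\left\|\omega_{\Lambda_a}-\omega^\Phi_{\Lambda_a}\right\|_{W_1}\,.
\end{equation}
Dividing by $|\Lambda_a|$ and taking $\limsup_{a\to\infty}$, the first term on the right converges to $w_1(\rho,\omega)$ by \autoref{prop:w1sup}, while the second has $\limsup$ equal to $w_1(\omega,\Phi)=0$ by the assumption that $\omega$ is a $w_1$-Gibbs state of $\Phi$; hence $w_1(\rho,\Phi)\le w_1(\rho,\omega)$. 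For the reverse inequality I would start instead from
\begin{equation}
\left\|\rho_{\Lambda_a}-\omega_{\Lambda_a}\right\|_{W_1}\le\left\|\rho_{\Lambda_a}-\omega^\Phi_{\Lambda_a}\right\|_{W_1}+\left\|\omega^\Phi_{\Lambda_a}-\omega_{\Lambda_a}\right\|_{W_1}\,,
\end{equation}
divide by $|\Lambda_a|$, and take $\limsup_{a\to\infty}$: the left side tends to $w_1(\rho,\omega)$, and using subadditivity of $\limsup$ the right side is bounded by $w_1(\rho,\Phi)+w_1(\omega,\Phi)=w_1(\rho,\Phi)$. Combining the two bounds yields $w_1(\rho,\Phi)=w_1(\rho,\omega)$.

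For the final assertion, suppose $\omega'\in\mathcal{S}_{\mathbb{Z}^d}^I$ is also a $w_1$-Gibbs state of $\Phi$, i.e., $w_1(\omega',\Phi)=0$. Applying the identity just proved with $\rho=\omega'$ gives $0=w_1(\omega',\Phi)=w_1(\omega',\omega)$, and since $w_1$ is a distance on $\mathcal{S}_{\mathbb{Z}^d}^I$ (\autoref{prop:w1sup}) this forces $\omega'=\omega$.

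I do not expect any genuine obstacle here; the only point requiring mild care is that $w_1(\rho,\Phi)$ is defined via a $\limsup$ whereas $w_1(\rho,\omega)$ is a true limit, so in the two estimates one must place the $\limsup$ on the correct side and invoke its subadditivity, rather than treating all three quantities as limits.
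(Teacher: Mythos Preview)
Your proposal is correct and follows essentially the same approach as the paper: both arguments rest on the triangle inequality for $\|\cdot\|_{W_1}$ on the boxes $\Lambda_a$, the hypothesis $w_1(\omega,\Phi)=0$, and the fact that $w_1$ is a genuine distance. The only cosmetic difference is that the paper packages the two directions into a single line via the reverse triangle inequality $\bigl|\|\rho_{\Lambda_a}-\omega^\Phi_{\Lambda_a}\|_{W_1}-\|\rho_{\Lambda_a}-\omega_{\Lambda_a}\|_{W_1}\bigr|\le\|\omega_{\Lambda_a}-\omega^\Phi_{\Lambda_a}\|_{W_1}$ and then uses that $w_1(\rho,\omega)$ is an actual limit to pull it inside the $\limsup$, whereas you treat the two inequalities separately; your care with the $\limsup$ is well placed and arguably clearer.
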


\begin{proof}
We have
\begin{align}
\left|w_1(\rho,\Phi) - w_1(\rho,\omega)\right| &= \left|\limsup_{a\to\infty}\frac{\left\|\rho_{\Lambda_a} - \omega^\Phi_{\Lambda_a}\right\|_{W_1} - \left\|\rho_{\Lambda_a} - \omega_{\Lambda_a}\right\|_{W_1}}{\left|\Lambda_a\right|}\right|\nonumber\\
&\le \limsup_{a\to\infty}\frac{\left\|\omega_{\Lambda_a} - \omega^\Phi_{\Lambda_a}\right\|_{W_1}}{\left|\Lambda_a\right|} = w_1(\omega,\Phi) = 0\,.
\end{align}
If also $\rho$ is a $w_1$-Gibbs state of $\Phi$, we have
\begin{equation}
w_1(\rho,\omega) = w_1(\rho,\Phi) = 0\,,
\end{equation}
hence $\rho = \omega$.
The claim follows.
\end{proof}

If an interaction admits a $w_1$-Gibbs state, then such state is also an equilibrium state:
\begin{prop}\label{prop:w1eq}
We have for any $\Phi\in\mathcal{B}^r_{\mathbb{Z}^d}$ and any $\rho\in\mathcal{S}^I_{\mathbb{Z}^d}$
\begin{equation}
s(\rho\|\Phi) \le h_2(w_1(\rho,\Phi)) + w_1(\rho,\Phi)\left(\ln\left(q^2-1\right) + 2\left\|\Phi\right\|_r\right)\,.
\end{equation}
In particular, if $\Phi$ has a $w_1$-Gibbs state $\omega\in\mathcal{S}^I_{\mathbb{Z}^d}$, then $\omega\in\mathcal{S}_{eq}(\Phi)$.
\end{prop}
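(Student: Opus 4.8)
The plan is to estimate the finite-volume relative entropy $S(\rho_{\Lambda_a}\|\omega^\Phi_{\Lambda_a})$, divide by $|\Lambda_a|$, and pass to the limit $a\to\infty$, recalling that $s(\rho\|\Phi)=\lim_{a\to\infty}S(\rho_{\Lambda_a}\|\omega^\Phi_{\Lambda_a})/|\Lambda_a|$ exists (as noted below \autoref{defn:w1G}). The first step is the elementary identity
\[
S(\rho_{\Lambda_a}\|\omega^\Phi_{\Lambda_a}) = \left(S(\omega^\Phi_{\Lambda_a}) - S(\rho_{\Lambda_a})\right) + \mathrm{Tr}_{\Lambda_a}\!\left[\left(\rho_{\Lambda_a} - \omega^\Phi_{\Lambda_a}\right)H^\Phi_{\Lambda_a}\right]\,,
\]
which follows from the Gibbs form $\omega^\Phi_{\Lambda_a}=e^{-H^\Phi_{\Lambda_a}}/\mathrm{Tr}_{\Lambda_a}e^{-H^\Phi_{\Lambda_a}}$ together with $S(\omega^\Phi_{\Lambda_a}\|\omega^\Phi_{\Lambda_a})=0$, which cancels the log-partition-function term.

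Next I would bound the two summands, writing $x_a=\|\rho_{\Lambda_a}-\omega^\Phi_{\Lambda_a}\|_{W_1}/|\Lambda_a|$, which lies in $[0,1]$. The entropy difference is handled by \autoref{thm:main}, which gives $S(\omega^\Phi_{\Lambda_a})-S(\rho_{\Lambda_a})\le|\Lambda_a|\,h_2(x_a)+|\Lambda_a|\,x_a\ln(q^2-1)$. The energy term is handled by the finite-volume duality \autoref{prop:duality}, which gives $\mathrm{Tr}_{\Lambda_a}[(\rho_{\Lambda_a}-\omega^\Phi_{\Lambda_a})H^\Phi_{\Lambda_a}]\le\|\rho_{\Lambda_a}-\omega^\Phi_{\Lambda_a}\|_{W_1}\,\|H^\Phi_{\Lambda_a}\|_L$; here I need the volume-independent estimate $\|H^\Phi_{\Lambda_a}\|_L\le2\|\Phi\|_r$, which follows by using the competitor $\sum_{X\subseteq\Lambda_a,\,x\notin X}\Phi(X)\in\mathcal{O}_{\Lambda_a\setminus x}$ in \eqref{eq:partialxH}: then $\partial_x H^\Phi_{\Lambda_a}\le2\sum_{x\in X\subseteq\Lambda_a}\|\Phi(X)\|_\infty\le2\sum_{0\in X\in\mathcal{F}_{\mathbb{Z}^d}}\|\Phi(X)\|_\infty\le2\|\Phi\|_r$ by translation invariance and $e^{r(|X|-1)}\ge1$. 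Combining the two bounds and dividing by $|\Lambda_a|$ gives
\[
\frac{S(\rho_{\Lambda_a}\|\omega^\Phi_{\Lambda_a})}{|\Lambda_a|}\le h_2(x_a)+x_a\bigl(\ln(q^2-1)+2\|\Phi\|_r\bigr)\,.
\]

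Finally I would let $a\to\infty$: since the left-hand side converges to $s(\rho\|\Phi)$, choosing a subsequence $(a_n)$ along which $x_{a_n}\to w_1(\rho,\Phi)=\limsup_a x_a$ and using continuity of $h_2$ on $[0,1]$ yields $s(\rho\|\Phi)\le h_2(w_1(\rho,\Phi))+w_1(\rho,\Phi)\bigl(\ln(q^2-1)+2\|\Phi\|_r\bigr)$. For the last assertion, a $w_1$-Gibbs state $\omega$ satisfies $w_1(\omega,\Phi)=0$, so the bound forces $s(\omega\|\Phi)\le h_2(0)=0$; combined with $s(\omega\|\Phi)\ge0$ this gives $s(\omega\|\Phi)=0$, \emph{i.e.}, $\omega\in\mathcal{S}_{eq}(\Phi)$. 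The only point needing care is this final limit: the map $t\mapsto h_2(t)+t\bigl(\ln(q^2-1)+2\|\Phi\|_r\bigr)$ need not be monotone, so one must pass to the limit along a subsequence realizing the $\limsup$ rather than bound the $\limsup$ of the right-hand side directly; everything else is routine.
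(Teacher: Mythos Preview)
Your proof is correct and follows essentially the same route as the paper: decompose $S(\rho_{\Lambda_a}\|\omega^\Phi_{\Lambda_a})$ into an entropy difference and an energy difference, bound the first by \autoref{thm:main} and the second via duality together with the estimate $\|H^\Phi_{\Lambda_a}\|_L\le 2\|\Phi\|_r$ (which is \eqref{eq:LL} in the paper), divide by $|\Lambda_a|$, and pass to the limit. Your treatment of the final step is in fact more careful than the paper's: the paper simply writes $\lim_{a\to\infty}$ on the right-hand side and evaluates at $w_1(\rho,\Phi)$, whereas you correctly observe that $w_1(\rho,\Phi)$ is only a $\limsup$ and that the bounding function need not be monotone, so one should pass to the limit along a sequence realizing the $\limsup$ and use continuity of $h_2$.
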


\begin{proof}
We have
\begin{align}
s(\rho\|\Phi) &= \lim_{a\to\infty}\frac{S\left(\rho_{\Lambda_a}\left\|\omega^\Phi_{\Lambda_a}\right.\right)}{\left|\Lambda_a\right|} = \lim_{a\to\infty}\frac{S\left(\omega^\Phi_{\Lambda_a}\right) - S(\rho_{\Lambda_a})+\mathrm{Tr}_{\Lambda_a}\left[\left(\rho_{\Lambda_a} - \omega^\Phi_{\Lambda_a}\right)H^\Phi_{\Lambda_a}\right]}{\left|\Lambda_a\right|}\nonumber\\
&\overset{\mathrm{(a)}}{\le} \lim_{a\to\infty}\left(h_2\left(\frac{\left\|\rho_{\Lambda_a} - \omega^\Phi_{\Lambda_a}\right\|_{W_1}}{\left|\Lambda_a\right|}\right) + \frac{\left\|\rho_{\Lambda_a} - \omega^\Phi_{\Lambda_a}\right\|_{W_1}}{\left|\Lambda_a\right|}\left(\ln\left(q^2-1\right) + 2\left\|\Phi\right\|_r\right)\right)\nonumber\\
&= h_2(w_1(\rho,\Phi)) + w_1(\rho,\Phi)\left(\ln\left(q^2-1\right) + 2\left\|\Phi\right\|_r\right)\,,
\end{align}
where (a) follows from \autoref{thm:main} and \eqref{eq:LL}\,.
The claim follows.
\end{proof}

\section{Quantum transportation-cost inequalities}\label{sec:TCI}

\begin{defn}[TCI]
The interaction $\Phi\in\mathcal{B}^r_{\mathbb{Z}^d}$ satisfies a \emph{Transportation-Cost Inequality} (TCI) with constant $c>0$ if the square of the specific quantum $W_1$ distance with respect to $\Phi$ is upper bounded by $\frac{c}{2}$ times the specific relative entropy with respect to $\Phi$, \emph{i.e.}, if for any $\rho\in\mathcal{S}^I_{\mathbb{Z}^d}$ we have
\begin{equation}\label{eq:TCI}\tag{TCI}
w_1(\rho,\Phi)^2 \le \frac{c}{2} \, s(\rho\|\Phi)\,.
\end{equation}
\end{defn}

A fundamental consequence of \eqref{eq:TCI} is the uniqueness of the equilibrium state of $\Phi$:

\begin{prop}[Uniqueness of the equilibrium state]\label{thm:uniqueness}
Let $\Phi\in\mathcal{B}^r_{\mathbb{Z}^d}$ satisfy \eqref{eq:TCI}.
Then, $\Phi$ has a unique equilibrium state, which is a $w_1$-Gibbs state.
\end{prop}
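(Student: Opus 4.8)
The plan is to deduce the statement directly from three facts already in place: (i) for every $\Phi\in\mathcal{B}^r_{\mathbb{Z}^d}$ the set $\mathcal{S}_{eq}(\Phi)$ of equilibrium states is nonempty (recalled in \autoref{sec:defs}); (ii) a translation-invariant state $\omega$ is an equilibrium state of $\Phi$ if and only if $s(\omega\|\Phi)=0$, as noted in \autoref{sec:Gibbs}; and (iii) $\Phi$ admits at most one $w_1$-Gibbs state (\autoref{prop:w1G}).

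First I would pick any $\omega\in\mathcal{S}_{eq}(\Phi)$, which exists by (i). By (ii) we have $s(\omega\|\Phi)=0$, so \eqref{eq:TCI} yields $w_1(\omega,\Phi)^2\le\frac{c}{2}\,s(\omega\|\Phi)=0$, hence $w_1(\omega,\Phi)=0$; in other words every equilibrium state of $\Phi$ is a $w_1$-Gibbs state. Since by \autoref{prop:w1G} there is at most one $w_1$-Gibbs state, $\mathcal{S}_{eq}(\Phi)$ contains at most one element, and combined with (i) it contains exactly one. By the first part of the argument, this unique equilibrium state is a $w_1$-Gibbs state, which is the claim.

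The argument is immediate once \eqref{eq:TCI} is available, so there is no real technical obstacle here; the only point deserving attention is logical, namely that one converts "every equilibrium state is a $w_1$-Gibbs state" into uniqueness by routing through \autoref{prop:w1G}, whose validity ultimately rests on $w_1$ being a genuine distance on $\mathcal{S}^I_{\mathbb{Z}^d}$ (\autoref{prop:w1sup}). One could alternatively note that \autoref{prop:w1eq} already guarantees that any $w_1$-Gibbs state is an equilibrium state, so the unique $w_1$-Gibbs state and the unique equilibrium state coincide, but this converse direction is not needed for the proof.
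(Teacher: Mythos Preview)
Your proposal is correct and follows essentially the same approach as the paper's proof: pick any $\omega\in\mathcal{S}_{eq}(\Phi)$, use $s(\omega\|\Phi)=0$ together with \eqref{eq:TCI} to deduce $w_1(\omega,\Phi)=0$, and then invoke \autoref{prop:w1G} to conclude uniqueness. The paper's argument is just a more compressed version of yours.
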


\begin{proof}
Let $\omega\in\mathcal{S}_{eq}(\Phi)$.
From \eqref{eq:TCI} we have
\begin{equation}
w_1(\omega,\Phi)^2 \le \frac{c}{2}\,s(\omega\|\Phi) = 0\,,
\end{equation}
therefore $\omega$ is a $w_1$-Gibbs state of $\Phi$.
Since the $w_1$-Gibbs state is unique, the equilibrium state is unique, too.
\end{proof}

Another property of the interactions satisfying \eqref{eq:TCI} is the following upper bound to the variation of the specific entropy in terms of the specific relative entropy:
\begin{prop}\label{prop:ss}
Let $\Phi\in\mathcal{B}^r_{\mathbb{Z}^d}$ satisfy \eqref{eq:TCI} and let $\omega\in\mathcal{S}^I_{\mathbb{Z}^d}$ be its unique equilibrium state.
Let $w^*$ be as in \eqref{eq:w*}.
Then, for any $\rho\in\mathcal{S}_{\mathbb{Z}^d}^I$ such that
\begin{equation}
s(\rho\|\Phi) \le \frac{2\,{w^*}^2}{c}
\end{equation}
we have
\begin{equation}
\left|s(\rho) - s(\omega)\right| \le h_2\left(\sqrt{\frac{c}{2}\,s(\rho\|\Phi)}\right) + \sqrt{\frac{c}{2}\,s(\rho\|\Phi)}\ln\left(q^2-1\right)\,.
\end{equation}
\end{prop}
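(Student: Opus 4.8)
The plan is to chain together three ingredients already available: the transportation-cost inequality \eqref{eq:TCI}, the identification $w_1(\rho,\Phi)=w_1(\rho,\omega)$ coming from uniqueness of the $w_1$-Gibbs state, and the continuity bound \autoref{thm:mainI}; the final step is a monotonicity argument for the function $\phi(t)=h_2(t)+t\ln(q^2-1)$ already studied in the proof of \autoref{prop:main}.

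First I would invoke \eqref{eq:TCI} to get $w_1(\rho,\Phi)\le\sqrt{\tfrac c2\,s(\rho\|\Phi)}$. Since $\Phi$ satisfies \eqref{eq:TCI}, by \autoref{thm:uniqueness} its equilibrium state $\omega$ is a $w_1$-Gibbs state, so \autoref{prop:w1G} gives $w_1(\rho,\omega)=w_1(\rho,\Phi)\le\sqrt{\tfrac c2\,s(\rho\|\Phi)}$. Next, \autoref{thm:mainI} applied to $\rho$ and $\omega$ yields
\begin{equation}
\left|s(\rho)-s(\omega)\right|\le h_2\!\left(w_1(\rho,\omega)\right)+w_1(\rho,\omega)\ln\left(q^2-1\right)=\phi\!\left(w_1(\rho,\omega)\right)\,.
\end{equation}

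It then remains to replace $w_1(\rho,\omega)$ by the larger quantity $\sqrt{\tfrac c2\,s(\rho\|\Phi)}$ inside $\phi$. Recall from the proof of \autoref{prop:main} that $\phi$ is increasing on $\left[0,1-\tfrac1{q^2}\right]$ and that $w^*$ of \eqref{eq:w*} satisfies $0<w^*<1-\tfrac1{q^2}$. The hypothesis $s(\rho\|\Phi)\le \tfrac{2\,{w^*}^2}{c}$ is exactly what guarantees $\sqrt{\tfrac c2\,s(\rho\|\Phi)}\le w^*$, hence
\begin{equation}
0\le w_1(\rho,\omega)\le\sqrt{\tfrac c2\,s(\rho\|\Phi)}\le w^*<1-\tfrac1{q^2}\,,
\end{equation}
so all three values lie in the branch where $\phi$ is nondecreasing, and monotonicity gives $\phi\!\left(w_1(\rho,\omega)\right)\le\phi\!\left(\sqrt{\tfrac c2\,s(\rho\|\Phi)}\right)$, which is the claimed inequality. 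The only point requiring care — and the reason the hypothesis on $s(\rho\|\Phi)$ is present — is keeping the argument of $\phi$ within the monotone region; there is no real obstacle beyond bookkeeping once \autoref{thm:mainI}, \autoref{prop:w1G} and \autoref{thm:uniqueness} are in hand.
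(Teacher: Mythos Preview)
Your proof is correct and follows essentially the same route as the paper: invoke \autoref{thm:uniqueness} to identify $\omega$ as the $w_1$-Gibbs state, apply \autoref{thm:mainI}, then combine \autoref{prop:w1G} with \eqref{eq:TCI}. Your explicit handling of the monotonicity of $\phi$ on $[0,w^*]$ is a welcome addition, since the paper's proof leaves implicit why replacing $w_1(\rho,\omega)$ by the larger quantity $\sqrt{\tfrac c2\,s(\rho\|\Phi)}$ is permissible.
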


\begin{proof}
From \autoref{thm:uniqueness}, we have that $\omega$ is a $w_1$-Gibbs state of $\Phi$.
We then have
\begin{align}
\left|s(\rho) - s(\omega)\right| &\overset{\mathrm{(a)}}{\le} h_2(w_1(\rho,\omega)) + w_1(\rho,\omega)\ln\left(q^2-1\right) \nonumber\\
&\overset{\mathrm{(b)}}{\le} h_2\left(\sqrt{\frac{c\,s(\rho\|\Phi)}{2}}\right) + \sqrt{\frac{c\,s(\rho\|\Phi)}{2}}\ln\left(q^2-1\right)\,,
\end{align}
where (a) follows from \autoref{thm:mainI} and (b) from \autoref{prop:w1G} and \eqref{eq:TCI}.
The claim follows.
\end{proof}

In the following, we will prove that \eqref{eq:TCI} is satisfied by interactions containing only single-site terms (\autoref{sec:prod}) and local commuting interactions at high temperature (\autoref{sec:HT}).

\subsection{Product states}\label{sec:prod}

The simplest setting where \eqref{eq:TCI} holds is when the interaction contains only terms acting on single spins and the associated Gibbs state is a product state.
Ref. \cite{de2021quantum} proved the following TCI for product states on finite lattices:
\begin{thm}[Quantum Marton's Transportation Inequality {\cite[Theorem 2]{de2021quantum}}]\label{thm:Marton}
Let $\Lambda$ be a finite set and let $\sigma\in\mathcal{S}_\Lambda$ be a product state.
Then, for any $\rho\in\mathcal{S}_\Lambda$ we have
\begin{equation}
\left\|\rho - \sigma\right\|_{W_1}^2 \le \frac{\left|\Lambda\right|}{2}\,S(\rho\|\sigma)\,.
\end{equation}
\end{thm}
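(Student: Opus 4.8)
The plan is to prove the bound by induction on $|\Lambda|$, adapting Marton's classical tensorization argument. The essential inputs are the quantum Pinsker inequality $\tfrac12\|\rho-\sigma\|_1^2\le S(\rho\|\sigma)$, the elementary bound $\|\cdot\|_{W_1}\le\tfrac12\|\cdot\|_1$ (immediate from \autoref{defn:W1n}) together with the triangle inequality for $\|\cdot\|_{W_1}$, the additive chain rule $S(\rho\|\sigma_{\Lambda_0}\otimes\sigma_x)=S(\rho\|\rho_{\Lambda_0}\otimes\sigma_x)+S(\rho_{\Lambda_0}\|\sigma_{\Lambda_0})$ for a product reference state, and Cauchy--Schwarz. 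Write $\sigma=\bigotimes_{x\in\Lambda}\sigma_x$. For $|\Lambda|\le1$ the claim is immediate: on a single site every traceless self-adjoint operator has vanishing partial trace, so $\|\rho-\sigma\|_{W_1}=\tfrac12\|\rho-\sigma\|_1$, and Pinsker gives $\|\rho-\sigma\|_{W_1}^2\le\tfrac12\,S(\rho\|\sigma)$.

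For the inductive step fix $x\in\Lambda$, put $\Lambda_0=\Lambda\setminus x$ and $\sigma_{\Lambda_0}=\bigotimes_{y\in\Lambda_0}\sigma_y$, and decompose
\[
\rho-\sigma=\bigl(\rho-\rho_{\Lambda_0}\otimes\sigma_x\bigr)+\bigl(\rho_{\Lambda_0}-\sigma_{\Lambda_0}\bigr)\otimes\sigma_x\, .
\]
Since $\mathrm{Tr}_x\rho=\rho_{\Lambda_0}=\mathrm{Tr}_x(\rho_{\Lambda_0}\otimes\sigma_x)$, the first summand has vanishing partial trace over $x$; using it as the only nonzero term $\Delta^{(x)}$ in \autoref{defn:W1n} and then Pinsker,
\[
\bigl\|\rho-\rho_{\Lambda_0}\otimes\sigma_x\bigr\|_{W_1}\le\tfrac12\bigl\|\rho-\rho_{\Lambda_0}\otimes\sigma_x\bigr\|_1\le\sqrt{\tfrac12\,S\bigl(\rho\,\|\,\rho_{\Lambda_0}\otimes\sigma_x\bigr)}\, .
\]
For the second summand, tensoring any admissible decomposition of $\rho_{\Lambda_0}-\sigma_{\Lambda_0}$ on $\Lambda_0$ with the state $\sigma_x$ yields an admissible decomposition on $\Lambda$ of equal cost, so $\bigl\|(\rho_{\Lambda_0}-\sigma_{\Lambda_0})\otimes\sigma_x\bigr\|_{W_1}\le\|\rho_{\Lambda_0}-\sigma_{\Lambda_0}\|_{W_1}$, and the inductive hypothesis on the $(|\Lambda|-1)$-site system $\Lambda_0$ bounds this by $\sqrt{\tfrac{|\Lambda|-1}{2}\,S(\rho_{\Lambda_0}\|\sigma_{\Lambda_0})}$.

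Setting $a=S(\rho\|\rho_{\Lambda_0}\otimes\sigma_x)\ge0$ and $b=S(\rho_{\Lambda_0}\|\sigma_{\Lambda_0})\ge0$, the triangle inequality gives $\|\rho-\sigma\|_{W_1}\le\sqrt{a/2}+\sqrt{(|\Lambda|-1)\,b/2}$, and Cauchy--Schwarz applied to the vectors $(1,\sqrt{|\Lambda|-1})$ and $(\sqrt{a/2},\sqrt{b/2})$ bounds the right-hand side by $\sqrt{|\Lambda|}\cdot\sqrt{(a+b)/2}=\sqrt{\tfrac{|\Lambda|}{2}\,S(\rho\|\sigma)}$, using the chain rule $a+b=S(\rho\|\sigma)$; squaring closes the induction. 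Each step is routine, so the only real point is the bookkeeping that produces the \emph{sharp} constant $\tfrac12$: one must split off precisely the part with zero $x$-marginal, so that its $W_1$ norm is at most half its trace norm and Pinsker contributes $\sqrt{a/2}$ rather than $\sqrt{2a}$, and pair this with the exact additive chain rule so that Cauchy--Schwarz yields exactly $|\Lambda|/2$. By contrast, a Herbst-type argument from the Gaussian concentration bound \autoref{thm:Gauss} combined with the duality \autoref{prop:duality} would only give the weaker constant $2|\Lambda|$ for a general product state, since $\|H-\mathrm{Tr}_x[\sigma_x H]\|_\infty\le\partial_x H$ cannot in general be improved to the oscillation bound available for the uniform state; it is also worth noting that when the support of $\rho$ is not contained in that of $\sigma$ all identities above are read in $[0,\infty]$ and the statement is trivial, so no full-support hypothesis is needed here.
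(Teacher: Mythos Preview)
Your proof is correct. The paper does not actually prove \autoref{thm:Marton}; it is quoted from \cite[Theorem~2]{de2021quantum} without proof. Your inductive Marton tensorization argument is precisely the standard one, and it matches the approach the paper uses for the closely related \autoref{prop:Marton} in \autoref{app:aux} (which the authors themselves say ``follows the same lines as the proof of \cite[Theorem~2]{de2021quantum}''): a telescoping decomposition splitting off one site, the single-site bound $\|\Delta\|_{W_1}\le\tfrac12\|\Delta\|_1$ for $\Delta$ with vanishing $x$-marginal, Pinsker, the additive chain rule for relative entropy against a product reference, and Cauchy--Schwarz.

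One small imprecision worth fixing: in your opening paragraph you state the bound $\|\cdot\|_{W_1}\le\tfrac12\|\cdot\|_1$ as if it held in general. It does not; the general inequality goes the other way (see \autoref{prop:W1T}). The bound you need holds only for $\Delta\in\mathcal{O}_\Lambda^T$ with $\mathrm{Tr}_x\Delta=0$ for some $x$, which is exactly how you apply it in the body of the argument, so the proof itself is unaffected.
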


\autoref{thm:Marton} implies the following TCI for product states on $\mathbb{Z}^d$:

\begin{cor}[TCI for product states]\label{cor:prodTCI}
Let $\omega\in\mathcal{S}_{\mathbb{Z}^d}^I$ be a product state.
Then, for any $\rho\in\mathcal{S}_{\mathbb{Z}^d}^I$ we have
\begin{equation}\label{eq:Marton}
w_1(\rho,\omega)^2 \le \frac{1}{2}\,s(\rho\|\omega)\,.
\end{equation}
Therefore, any $\Phi\in\mathcal{B}^r_{\mathbb{Z}^d}$ that contains only terms acting on single spins (\emph{i.e.}, such that $\Phi(\Lambda)=0$ for any $\Lambda\in\mathcal{F}_{\mathbb{Z}^d}$ with $|\Lambda|\ge2$) satisfies \eqref{eq:TCI} with $c=1$.
\end{cor}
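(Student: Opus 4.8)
The plan is to apply the finite-volume quantum Marton inequality \autoref{thm:Marton} on the hypercubes $\Lambda_a$ and then pass to the thermodynamic limit. First I would prove \eqref{eq:Marton}. Fix $\rho\in\mathcal{S}_{\mathbb{Z}^d}^I$. For every $a\in\mathbb{N}_+^d$ the marginal $\omega_{\Lambda_a}$ is a product state in $\mathcal{S}_{\Lambda_a}$ by definition of product state, so \autoref{thm:Marton} applied with $\sigma=\omega_{\Lambda_a}$ gives
\[
\left\|\rho_{\Lambda_a}-\omega_{\Lambda_a}\right\|_{W_1}^2\le\frac{|\Lambda_a|}{2}\,S(\rho_{\Lambda_a}\|\omega_{\Lambda_a})\,.
\]
Dividing by $|\Lambda_a|^2$, the left-hand side converges to $w_1(\rho,\omega)^2$ as $a\to\infty$ by \autoref{def:w1} and \autoref{prop:w1sup}. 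For the right-hand side I would use that $\omega$ is a translation-invariant product state, so that $\omega_{\Lambda_a}=\bigotimes_{x\in\Lambda_a}\tau_x\omega_0$ and hence $\ln\omega_{\Lambda_a}=\sum_{x\in\Lambda_a}\tau_x\ln\omega_0$; translation invariance of $\rho$ then gives $\rho(\ln\omega_{\Lambda_a})=|\Lambda_a|\,\rho(\ln\omega_0)$, whence
\[
\frac{S(\rho_{\Lambda_a}\|\omega_{\Lambda_a})}{|\Lambda_a|}=-\frac{S(\rho_{\Lambda_a})}{|\Lambda_a|}-\rho(\ln\omega_0)\longrightarrow-s(\rho)-\rho(\ln\omega_0)=s(\rho\|\omega)\,.
\]
In particular the limit defining $s(\rho\|\omega)$ exists, and dividing the displayed inequality by $|\Lambda_a|^2$ and letting $a\to\infty$ yields \eqref{eq:Marton}. (If $\omega_0$ is not invertible then $S(\rho_{\Lambda_a}\|\omega_{\Lambda_a})=+\infty$ unless $\mathrm{supp}\,\rho_{\Lambda_a}\subseteq\mathrm{supp}\,\omega_{\Lambda_a}$, and in either case the bound is trivial.)

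For the last assertion, let $\Phi\in\mathcal{B}^r_{\mathbb{Z}^d}$ satisfy $\Phi(\Lambda)=0$ whenever $|\Lambda|\ge2$. Then $H^\Phi_{\Lambda_a}=\sum_{x\in\Lambda_a}\Phi(\{x\})$ with $\Phi(\{x\})=\tau_x\Phi(\{0\})$ by translation invariance, so the local Gibbs state $\omega^\Phi_{\Lambda_a}$ of \eqref{eq:Gloc} factorizes over the sites of $\Lambda_a$ and equals the marginal on $\Lambda_a$ of the translation-invariant product state $\omega\in\mathcal{S}_{\mathbb{Z}^d}^I$ determined by $\omega_0=e^{-\Phi(\{0\})}/\mathrm{Tr}_0 e^{-\Phi(\{0\})}$. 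Therefore $\omega^\Phi_{\Lambda_a}=\omega_{\Lambda_a}$ for every $a$, so $w_1(\rho,\Phi)=w_1(\rho,\omega)$ for every $\rho\in\mathcal{S}_{\mathbb{Z}^d}^I$ — the $\limsup$ in \autoref{defn:w1G} being a genuine limit by \autoref{prop:w1sup} — and likewise $s(\rho\|\Phi)=s(\rho\|\omega)$, so \eqref{eq:TCI} with $c=1$ is exactly \eqref{eq:Marton}.

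I do not anticipate a genuine obstacle: the argument is a normalization of \autoref{thm:Marton} followed by a passage to the limit. The only points requiring care are the well-posedness of the specific relative entropy with respect to a product reference state, which is handled by the explicit single-site formula above, and the identification of the local Gibbs states $\omega^\Phi_{\Lambda_a}$ of a single-site interaction with the marginals of one translation-invariant product state, which is what reduces $w_1(\rho,\Phi)$ and $s(\rho\|\Phi)$ to $w_1(\rho,\omega)$ and $s(\rho\|\omega)$.
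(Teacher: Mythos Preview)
Your proposal is correct and follows essentially the same route as the paper: apply \autoref{thm:Marton} on each $\Lambda_a$, divide by $|\Lambda_a|^2$, pass to the limit, and then identify $\omega^\Phi_{\Lambda_a}$ with the marginals of a single translation-invariant product state to reduce $w_1(\rho,\Phi)$ and $s(\rho\|\Phi)$ to $w_1(\rho,\omega)$ and $s(\rho\|\omega)$. Your extra care in verifying that the limit defining $s(\rho\|\omega)$ actually exists via the single-site formula is a welcome addition, since the paper's definition of specific relative entropy is stated only ``whenever the limit exists.''
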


\begin{proof}
The claim \eqref{eq:Marton} follows from \autoref{thm:Marton}: We have
\begin{equation}
w_1(\rho,\omega)^2 = \lim_{a\to\infty}\frac{\left\|\rho_{\Lambda_a} - \omega_{\Lambda_a}\right\|_{W_1}^2}{\left|\Lambda_a\right|^2} \le \lim_{a\to\infty}\frac{S(\rho_{\Lambda_a}\|\omega_{\Lambda_a})}{2\left|\Lambda_a\right|} = \frac{s(\rho\|\omega)}{2}\,.
\end{equation}

Let $\Phi\in\mathcal{B}^r_{\mathbb{Z}^d}$ contain only terms acting on single sites.
We have for any $\Lambda\in\mathcal{F}_{\mathbb{Z}^d}$
\begin{equation}
\omega^\Phi_{\Lambda} = \bigotimes_{x\in\Lambda}\frac{e^{-\Phi(x)}}{\mathrm{Tr}_xe^{-\Phi(x)}} = \bigotimes_{x\in\Lambda}\omega^\Phi_x\,,
\end{equation}
therefore there exists a product state $\omega\in\mathcal{S}^I_{\mathbb{Z}^d}$ such that $\omega_\Lambda = \omega^\Phi_\Lambda$ for any $\Lambda\in\mathcal{F}_{\mathbb{Z}^d}$.
We have $w_1(\rho,\omega) = w_1(\rho,\Phi)$ and $s(\rho\|\omega) = s(\rho\|\Phi)$, therefore \eqref{eq:Marton} implies \eqref{eq:TCI} with $c=1$.
The claim follows.
\end{proof}

\subsection{Local commuting interactions at high temperature}\label{sec:HT}
A more general setting where \eqref{eq:TCI} can be proved is the case of geometrically local commuting interactions, where each spin interacts with a finite number of spins and all the terms of the interaction mutually commute.

Ref. \cite{de2022quantum} proved the following TCI for Gibbs states of local commuting interactions on a finite lattice employing a quantum generalization of Ollivier's coarse Ricci curvature \cite{ollivier2009ricci,gao2021ricci}:

\begin{thm}[High-temperature TCI for local commuting interactions {\cite[Theorem 4 and Proposition 9]{de2022quantum}}]\label{thm:TCI}
Let $\Phi\in\mathcal{B}^r_{\mathbb{Z}^d}$ be geometrically local and commuting, \emph{i.e.}, each spin interacts with at most $N$ spins where
\begin{equation}
N = \left|\bigcup_{0\in X\in\mathcal{F}_{\mathbb{Z}^d}\,:\,\Phi(X)\neq0} X\right| < \infty\,,
\end{equation}
and $\left[\Phi(X),\,\Phi(Y)\right]=0$ for any $X,\,Y\in\mathcal{F}_{\mathbb{Z}^d}$.
Let
\begin{equation}\label{eq:inft}
M = \inf_{t\ge0}\left(\left(e^{\left\|\Phi\right\|_r}+1\right)\sqrt{1+t^2}\left\|\Phi\right\|_r q^{\frac{3+\sqrt{1+t^2}}{2}} e^{\left\|\Phi\right\|_r\left(2+\frac{\sqrt{1+t^2}}{2}\right)} + 2\left\|\Phi\right\|_r e^{2\left\|\Phi\right\|_r} + 4\,e^{-\pi t}\right)\,,
\end{equation}
and let us assume that
\begin{equation}
\kappa = 1 - \left(2N-1\right)\left(N-1\right)M > 0\,.
\end{equation}
Then, for any $\Lambda\in\mathcal{F}_{\mathbb{Z}^d}$ and any $\rho\in\mathcal{S}_{\Lambda}$ we have
\begin{equation}
\left\|\rho - \omega^\Phi_{\Lambda}\right\|_{W_1}^2 \le \frac{2\,N^2\left|\Lambda\right|}{\left(1-e^{-\kappa}\right)^2} \, S\left(\rho\left\|\omega^\Phi_{\Lambda}\right.\right)\,.
\end{equation}
In particular, $\Phi$ satisfies \eqref{eq:TCI} with
\begin{equation}
c = \frac{4\,N^2}{\left(1-e^{-\kappa}\right)^2}\,.
\end{equation}
\end{thm}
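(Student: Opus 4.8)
The plan is to derive the specific transportation-cost inequality \eqref{eq:TCI} as the thermodynamic limit of a finite-volume transportation-cost inequality, in exactly the way \autoref{cor:prodTCI} was obtained from \autoref{thm:Marton}. The finite-volume ingredient is the bound
\begin{equation}\label{eq:fv-tci}
\left\|\rho - \omega^\Phi_\Lambda\right\|_{W_1}^2 \le \frac{2\,N^2\left|\Lambda\right|}{\left(1-e^{-\kappa}\right)^2}\,S\!\left(\rho\left\|\omega^\Phi_\Lambda\right.\right)\,,\qquad \Lambda\in\mathcal{F}_{\mathbb{Z}^d},\;\rho\in\mathcal{S}_\Lambda\,,
\end{equation}
which I would take from \cite[Theorem 4 and Proposition 9]{de2022quantum}. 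Its proof there rests on a quantum analogue of Ollivier's coarse Ricci curvature \cite{ollivier2009ricci,gao2021ricci} for the heat-bath dynamics with stationary state $\omega^\Phi_\Lambda$: the quantity $M$ of \eqref{eq:inft} bounds the mutual influence of neighboring single-site updates (a Dobrushin-type estimate controlled by analytic-continuation bounds on the local Gibbs states), the hypothesis $\kappa>0$ is the corresponding high-temperature / Dobrushin-uniqueness condition, and the uniform curvature lower bound $\kappa$ yields \eqref{eq:fv-tci} with a prefactor that is linear in $\left|\Lambda\right|$ and otherwise independent of $\Lambda$. For the present theorem I would invoke \eqref{eq:fv-tci} as a black box.

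Given $\rho\in\mathcal{S}^I_{\mathbb{Z}^d}$, I would then apply \eqref{eq:fv-tci} to the marginal $\rho_{\Lambda_a}\in\mathcal{S}_{\Lambda_a}$ for each $a\in\mathbb{N}_+^d$ and divide by $\left|\Lambda_a\right|^2$, obtaining
\begin{equation}
\frac{\left\|\rho_{\Lambda_a} - \omega^\Phi_{\Lambda_a}\right\|_{W_1}^2}{\left|\Lambda_a\right|^2} \le \frac{2\,N^2}{\left(1-e^{-\kappa}\right)^2}\,\frac{S\!\left(\rho_{\Lambda_a}\left\|\omega^\Phi_{\Lambda_a}\right.\right)}{\left|\Lambda_a\right|}\,.
\end{equation}
Letting $a\to\infty$, the left-hand side converges to $w_1(\rho,\Phi)^2$ by \autoref{defn:w1G}, while $\frac{1}{\left|\Lambda_a\right|}S(\rho_{\Lambda_a}\|\omega^\Phi_{\Lambda_a})$ converges to $s(\rho\|\Phi)=P(\Phi)-s(\rho)+\rho(E_\Phi)$, a limit that genuinely exists because each of the three terms on the right does (see \autoref{sec:Gibbs} and \eqref{eq:EPhi}). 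This yields $w_1(\rho,\Phi)^2\le \frac{2N^2}{(1-e^{-\kappa})^2}\,s(\rho\|\Phi)$, i.e. \eqref{eq:TCI} with $c=\frac{4N^2}{(1-e^{-\kappa})^2}$, which completes the proof.

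The only substantial obstacle is the finite-volume inequality \eqref{eq:fv-tci} itself — specifically, getting the relative-entropy prefactor to scale like $\left|\Lambda\right|$ uniformly in $\Lambda$, which is exactly what the coarse Ricci curvature argument of \cite{de2022quantum} delivers and which makes the limit harmless. Once \eqref{eq:fv-tci} is in hand, the step carried out above is routine: dividing by $\left|\Lambda_a\right|^2$ converts the extensive prefactor into the $\left|\Lambda_a\right|$-independent constant $c/2$, and the two normalized quantities each converge along the exhausting sequence of boxes $\Lambda_a$.
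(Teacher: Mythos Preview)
Your proposal is correct and matches the paper's approach. The paper does not even write out an explicit proof of \autoref{thm:TCI}: the finite-volume inequality is quoted verbatim from \cite[Theorem 4 and Proposition 9]{de2022quantum}, and the ``In particular'' conclusion is meant to follow by the same divide-by-$|\Lambda_a|^2$-and-take-the-limit argument used in the proof of \autoref{cor:prodTCI}, which is exactly what you do.
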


\begin{rem}
Choosing in \eqref{eq:inft}
\begin{equation}
t = \frac{\ln\frac{1}{\left\|\Phi\right\|_r}}{\pi + \frac{\ln q}{2}}\,,
\end{equation}
we get $M \le O\left(\left\|\Phi\right\|_r\right)$ for $\left\|\Phi\right\|_r\to0$.
\end{rem}

Another strategy to prove TCIs for quantum spin systems on a finite lattice is to prove that suitable local quantum Markov semigroups that have the Gibbs state as unique fixed point satisfy a modified logarithmic Sobolev inequality \cite{bardet2021entropy,bardet2021rapid,capel2020modified,de2022quantum}, which states that the semigroup contracts exponentially the relative entropy with respect to the Gibbs state.
Ref. \cite{capel2020modified} proved that above a critical temperature, the Gibbs states of commuting nearest-neighbor interactions satisfy a modified logarithmic Sobolev inequality.
Exploiting this result, Ref. \cite{de2022quantum} proved the following TCI for such Gibbs states:

\begin{thm}[High-temperature TCI for nearest-neighbor interactions {\cite[Theorem 5]{de2022quantum}}]\label{thm:TCI2}
Let $\Phi\in\mathcal{B}^r_{\mathbb{Z}^d}$ be a nearest-neighbor interaction, \emph{i.e.}, $\Phi(\Lambda)=0$ for any $\Lambda\in\mathcal{F}_{\mathbb{Z}^d}$ that contains at least two sites that are not neighboring.
Then, there exists a critical inverse temperature $\beta_c>0$ such that for any $0\le\beta<\beta_c$ there exists $c_\beta>0$ such that for any $a\in\mathbb{N}_+^d$ we have
\begin{equation}
\left\|\rho_{\Lambda_a} - \omega^{\beta\Phi}_{\Lambda_a}\right\|_{W_1}^2 \le \frac{c_\beta\left|\Lambda_a\right|}{2} \, S\left(\rho_{\Lambda_a}\left\|\omega^{\beta\Phi}_{\Lambda_a}\right.\right)\,.
\end{equation}
In particular, $\beta\,\Phi$ satisfies \eqref{eq:TCI} with $c=c_\beta$.
\end{thm}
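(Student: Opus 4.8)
The plan is to reduce the statement to its finite-volume counterpart, \cite[Theorem 5]{de2022quantum}, and then to pass to the thermodynamic limit exactly as in the proof of \autoref{cor:prodTCI}. So there are really two pieces: the finite-volume TCI on each cube $\Lambda_a$, and a short limiting argument deriving \eqref{eq:TCI} on $\mathbb{Z}^d$ from it.

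For the finite-volume piece I would follow the route of \cite{capel2020modified,de2022quantum}. First I would attach to each $\Lambda_a$ a primitive, reversible (heat-bath, Glauber-type) quantum Markov semigroup $\big(\mathcal{P}^{\beta\Phi}_{\Lambda_a,t}\big)_{t\ge0}$ on $\mathcal{S}_{\Lambda_a}$ whose unique stationary state is $\omega^{\beta\Phi}_{\Lambda_a}$ and whose Lindbladian is a sum of terms each supported on a region of bounded size; here the nearest-neighbor structure of $\Phi$ is what keeps these supports uniformly bounded. The core input, taken from \cite{capel2020modified}, is that there exists $\beta_c>0$ such that for $0\le\beta<\beta_c$ this semigroup satisfies a modified logarithmic Sobolev inequality with a constant $\lambda_\beta>0$ that is independent of $a$; equivalently, $S\big(\mathcal{P}^{\beta\Phi}_{\Lambda_a,t}(\rho_{\Lambda_a})\,\big\|\,\omega^{\beta\Phi}_{\Lambda_a}\big)$ decays exponentially in $t$ at rate $2\lambda_\beta$, uniformly in $a$. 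Then I would convert this, as in \cite{de2022quantum}, into the quoted transportation-cost inequality by an Otto--Villani-type argument adapted to the quantum $W_1$ distance: along $\rho_t:=\mathcal{P}^{\beta\Phi}_{\Lambda_a,t}(\rho_{\Lambda_a})$ one estimates $\big|\tfrac{d}{dt}\|\rho_t-\omega^{\beta\Phi}_{\Lambda_a}\|_{W_1}\big|$ by the $W_1$ norm of the generator applied to $\rho_t$, which — because each Lindbladian term moves the state by an $O(1)$ amount in $W_1$ and there are $O(|\Lambda_a|)$ of them — is controlled by $\sqrt{|\Lambda_a|}$ times the square root of the entropy-production functional; integrating over $t\in[0,\infty)$ and using the uniform exponential decay of the relative entropy gives
\[
\big\|\rho_{\Lambda_a}-\omega^{\beta\Phi}_{\Lambda_a}\big\|_{W_1}^2 \;\le\; \frac{c_\beta\,\big|\Lambda_a\big|}{2}\;S\!\left(\rho_{\Lambda_a}\,\big\|\,\omega^{\beta\Phi}_{\Lambda_a}\right),
\]
with $c_\beta$ depending only on $\beta$, $\|\Phi\|_r$, $d$ and $\lambda_\beta$.

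Finally, for $\rho\in\mathcal{S}_{\mathbb{Z}^d}^I$ I would divide this by $|\Lambda_a|^2$ and let $a\to\infty$, obtaining, as in \autoref{cor:prodTCI},
\[
w_1(\rho,\beta\Phi)^2 \;=\; \Big(\limsup_{a\to\infty}\tfrac{\|\rho_{\Lambda_a}-\omega^{\beta\Phi}_{\Lambda_a}\|_{W_1}}{|\Lambda_a|}\Big)^{2} \;\le\; \limsup_{a\to\infty}\frac{c_\beta}{2}\,\frac{S(\rho_{\Lambda_a}\|\omega^{\beta\Phi}_{\Lambda_a})}{|\Lambda_a|} \;=\; \frac{c_\beta}{2}\,s(\rho\|\beta\Phi),
\]
where the last equality uses that the specific relative entropy $s(\rho\|\beta\Phi)=P(\beta\Phi)-s(\rho)+\rho(E_{\beta\Phi})$ exists; hence $\beta\Phi$ satisfies \eqref{eq:TCI} with $c=c_\beta$. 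The hard part will be entirely in the uniform modified logarithmic Sobolev inequality of the first piece: the constant $\lambda_\beta$ must not degrade as $|\Lambda_a|\to\infty$, and this is exactly where the threshold $\beta_c$ is forced — a fact imported wholesale from \cite{capel2020modified}. A secondary technical point is the bookkeeping in the Otto--Villani step, namely checking that the geometric locality of the Lindbladian terms yields the $\sqrt{|\Lambda_a|}$ scaling (rather than the naive $|\Lambda_a|$) of the $W_1$-speed of the dynamics, which is handled in \cite{de2022quantum}.
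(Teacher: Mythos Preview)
Your proposal is correct and matches the paper's approach. The paper does not actually give a proof of this theorem: the finite-volume inequality is quoted directly from \cite[Theorem 5]{de2022quantum} (whose argument is precisely the MLSI of \cite{capel2020modified} combined with the Otto--Villani-type step you outline), and the ``in particular'' passage to \eqref{eq:TCI} is the same divide-by-$|\Lambda_a|^2$-and-take-limits argument used in \autoref{cor:prodTCI}, exactly as you wrote.
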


\begin{cor}
All the interactions satisfying the hypotheses either of \autoref{thm:TCI} or of \autoref{thm:TCI2} have a unique equilibrium state.
\end{cor}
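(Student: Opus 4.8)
The plan is to read off from \autoref{thm:TCI} and \autoref{thm:TCI2} that the relevant interactions satisfy \eqref{eq:TCI}, and then to invoke \autoref{thm:uniqueness}. Indeed, \autoref{thm:TCI} explicitly concludes that a geometrically local commuting $\Phi \in \mathcal{B}^r_{\mathbb{Z}^d}$ with $\kappa > 0$ satisfies \eqref{eq:TCI} with $c = 4N^2/(1-e^{-\kappa})^2$, and \autoref{thm:TCI2} explicitly concludes that, for a nearest-neighbor $\Phi \in \mathcal{B}^r_{\mathbb{Z}^d}$ and any $0 \le \beta < \beta_c$, the interaction $\beta\Phi$ (which again lies in $\mathcal{B}^r_{\mathbb{Z}^d}$, since $\|\beta\Phi\|_r = \beta\|\Phi\|_r$) satisfies \eqref{eq:TCI} with $c = c_\beta$. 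In either case \autoref{thm:uniqueness} applies directly and gives that the interaction has a unique equilibrium state, which is moreover its $w_1$-Gibbs state.

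The one point worth spelling out — and it is the same passage already carried out in the proof of \autoref{cor:prodTCI} — is why the finite-volume inequalities stated in \autoref{thm:TCI} and \autoref{thm:TCI2} entail the specific inequality \eqref{eq:TCI}. Fix $\rho \in \mathcal{S}^I_{\mathbb{Z}^d}$ and let $\Psi$ denote $\Phi$ in the first case and $\beta\Phi$ in the second. The cited theorems give $\|\rho_{\Lambda_a} - \omega^\Psi_{\Lambda_a}\|_{W_1}^2 \le \tfrac{c}{2}\,|\Lambda_a|\, S(\rho_{\Lambda_a}\|\omega^\Psi_{\Lambda_a})$ for every $a \in \mathbb{N}_+^d$; dividing by $|\Lambda_a|^2$ and taking $\limsup_{a\to\infty}$, the left-hand side becomes $w_1(\rho,\Psi)^2$ by \autoref{defn:w1G}, while the right-hand side converges to $\tfrac{c}{2}\, s(\rho\|\Psi)$ by the definition of the specific relative entropy with respect to $\Psi$. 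This is precisely \eqref{eq:TCI} for $\Psi$, so \autoref{thm:uniqueness} yields the claim.

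Since every step consists of quoting an already-established statement, there is no genuine obstacle here; the only thing to watch is the bookkeeping — keeping the roles of $\Phi$ and $\beta\Phi$ and the values of the constants $c$, $\kappa$, $c_\beta$ straight, and checking that the hypotheses invoked (geometric locality and commutativity with $\kappa > 0$ for \autoref{thm:TCI}; nearest-neighbor structure together with $0\le\beta < \beta_c$ for \autoref{thm:TCI2}) are exactly those under which the cited theorems guarantee \eqref{eq:TCI}.
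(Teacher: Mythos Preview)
Your proposal is correct and matches the paper's reasoning: the corollary is stated there without proof, as it follows immediately by combining the ``In particular, $\Phi$ satisfies \eqref{eq:TCI}'' conclusions of \autoref{thm:TCI} and \autoref{thm:TCI2} with \autoref{thm:uniqueness}. Your second paragraph re-deriving the passage from the finite-volume inequalities to \eqref{eq:TCI} is unnecessary, since that passage is already part of the statements of \autoref{thm:TCI} and \autoref{thm:TCI2}, but it does no harm.
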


\begin{rem}
The uniqueness of the equilibrium states for all the interactions $\Phi\in\mathcal{B}^r_{\mathbb{Z}^d}$ such that $r>\log q$ and $\left\|\Phi\right\|_r < \frac{1}{2q}$ has been proved in \cite{frolich2015some}.
\end{rem}

\section{Perspectives}\label{sec:persp}

In this paper we have proposed a specific Wasserstein distance of order $1$ for quantum spin systems on infinite lattices.
We expect the proposed distance to be a powerful tool in the study of the statistical mechanics of quantum spin systems, quantum dynamical systems, and tomography of quantum states:

\begin{enumerate}
\item The specific quantum $W_1$ distance can be employed to study the diameter of the set of the equilibrium states of an interaction close to a thermal phase transition.
Above the critical temperature the equilibrium state is unique and the diameter is zero, while below the critical temperature the diameter is strictly positive.
The limit of the diameter of the set of the equilibrium states as the temperature tends from below to the critical value, and in particular whether such limit is zero or strictly positive, can be employed to characterize the phase transition.

\item In \cite{ornstein1973application}, Ornstein proposed the $\bar{d}$-distance as a natural metric for the classification of stochastic processes and singled out a large class (the so-called $B$-processes), containing e.g.\ all mixing Markov processes, such that a fundamental isomorphism theorem holds: two processes are isomorphic if and only if their entropies coincide. The isomorphism here is in the sense of dynamical systems, i.e., a measurable and invertible transformation mapping one probability measure to the other. The specific quantum $W_1$ distance could provide a useful analytical tool towards establishing analogous results in the setting of quantum dynamical systems \cite{alicki2001quantum}.

\item The statistical problem of estimating a stationary ergodic process, in a given family, from the observation of a single sample path of length $n$ can be quantitatively addressed using Ornstein's $\bar{d}$-distance. In \cite{ornstein1990sampling}, it is proved that an empirical block scheme, i.e., the product probability naturally obtained from the observed frequencies on sliding window of length $k$, converges in the $\bar{d}$-distance, as $n$ grows, towards the target process, provided that it is a $B$-process and $k$ grows at least logarithmically with respect to $n$. Exploring the quantum analogue of this and related results, e.g.\ for discrimination between two sampled processes \cite{ornstein1994d},  may extend the scope of the recent works \cite{rouze2021learning,maciejewski2021exploring} on tomography of a quantum state and stimulate novel approaches, particularly when the number of accessible independent copies is extremely constrained.
\end{enumerate}

\section*{Acknowledgements}
We thank Emily Beatty for useful suggestions to improve the presentation of the proof of \autoref{thm:main}.
GDP has been supported by the HPC Italian National Centre for HPC, Big Data and Quantum Computing – Proposal code CN00000013, CUP J33C22001170001, funded within PNRR - Mission 4 - Component 2 Investment 1.4.
GDP is a member of the ``Gruppo Nazionale per la Fisica Matematica (GNFM)'' of the ``Istituto Nazionale di Alta Matematica ``Francesco Severi'' (INdAM)''. DT is a member of the INdAM group ``Gruppo Nazionale per l'Analisi Matematica, la Probabilità e le loro Applicazioni (GNAMPA)'' and  was partially supported by the INdAM-GNAMPA project 2022 ``Temi di Analisi Armonica Subellittica''.

\appendix

\section{Properties of the quantum \texorpdfstring{$W_1$}{W\_1} distance}\label{app:W1}

\begin{prop}[{\cite[Proposition 2]{de2021quantum}}]\label{prop:W1T}
For any finite set $\Lambda$ and any $\Delta\in\mathcal{O}_\Lambda^T$ we have
\begin{equation}
\frac{1}{2}\left\|\Delta\right\|_1 \le \left\|\Delta\right\|_{W_1} \le \frac{\left|\Lambda\right|}{2}\left\|\Delta\right\|_1\,.
\end{equation}
\end{prop}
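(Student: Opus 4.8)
The plan is to prove the two inequalities separately. The lower bound $\tfrac12\|\Delta\|_1\le\|\Delta\|_{W_1}$ is immediate from the triangle inequality for the trace norm: for any admissible decomposition $\Delta=\sum_{x\in\Lambda}\Delta^{(x)}$ as in \autoref{defn:W1n} one has $\|\Delta\|_1\le\sum_{x\in\Lambda}\|\Delta^{(x)}\|_1$, and minimizing the right-hand side over all such decompositions yields $\|\Delta\|_1\le2\|\Delta\|_{W_1}$.

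For the upper bound, the first step is to reduce to the case in which $\Delta$ is a difference of two quantum states. Writing the Jordan decomposition $\Delta=\Delta_+-\Delta_-$ into positive and negative parts, tracelessness of $\Delta$ forces $\mathrm{Tr}_\Lambda\Delta_+=\mathrm{Tr}_\Lambda\Delta_-=\tfrac12\|\Delta\|_1=:m$. If $m=0$ then $\Delta=0$ and the claim is trivial; otherwise $\rho:=\Delta_+/m$ and $\sigma:=\Delta_-/m$ are states with $\Delta=m(\rho-\sigma)$, so by homogeneity of the $W_1$ norm it suffices to prove $\|\rho-\sigma\|_{W_1}\le|\Lambda|$ for arbitrary $\rho,\sigma\in\mathcal{S}_\Lambda$.

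To establish this, I would interpolate between $\rho$ and $\sigma$ one site at a time. Enumerate $\Lambda=\{x_1,\dots,x_n\}$ with $n=|\Lambda|$, and for $j=0,\dots,n$ set
\[
\rho_j=\sigma_{\{x_1,\dots,x_j\}}\otimes\rho_{\{x_{j+1},\dots,x_n\}}\in\mathcal{S}_\Lambda\,,
\]
so that $\rho_0=\rho$ and $\rho_n=\sigma$. A direct computation gives, for each $j=1,\dots,n$,
\[
\mathrm{Tr}_{x_j}\rho_{j-1}=\sigma_{\{x_1,\dots,x_{j-1}\}}\otimes\rho_{\{x_{j+1},\dots,x_n\}}=\mathrm{Tr}_{x_j}\rho_j\,,
\]
i.e.\ $\rho_{j-1}$ and $\rho_j$ are neighboring, so that $\rho_{j-1}-\rho_j$ is admissible as the component $\Delta^{(x_j)}$ in \autoref{defn:W1n} with all other components zero; hence $\|\rho_{j-1}-\rho_j\|_{W_1}\le\tfrac12\|\rho_{j-1}-\rho_j\|_1\le1$. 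The triangle inequality then gives $\|\rho-\sigma\|_{W_1}\le\sum_{j=1}^n\|\rho_{j-1}-\rho_j\|_{W_1}\le n=|\Lambda|$, which completes the argument.

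I do not expect a genuine obstacle, but the one point needing care is the choice of interpolating chain in the last step. Naive alternatives — replacing each site successively by the maximally mixed state, or telescoping $\Delta$ against iterated partial traces with a fixed-state substitution — only deliver the weaker bound $\|\Delta\|_{W_1}\le|\Lambda|\,\|\Delta\|_1$, off by a factor of $2$, because the intermediate operators generically have trace norm up to $2\|\Delta\|_1$. Taking $\rho_j$ to be the product of the appropriate marginals of $\sigma$ and of $\rho$ is precisely what makes consecutive states neighboring and keeps the number of steps equal to $|\Lambda|$ rather than $2|\Lambda|$, yielding the sharp constant.
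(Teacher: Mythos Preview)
Your proof is correct. The lower bound via the triangle inequality for $\|\cdot\|_1$ and the upper bound via the Jordan decomposition followed by the product-of-marginals interpolation $\rho_j=\sigma_{\{x_1,\dots,x_j\}}\otimes\rho_{\{x_{j+1},\dots,x_n\}}$ both go through exactly as you describe; the key check $\mathrm{Tr}_{x_j}\rho_{j-1}=\sigma_{\{x_1,\dots,x_{j-1}\}}\otimes\rho_{\{x_{j+1},\dots,x_n\}}=\mathrm{Tr}_{x_j}\rho_j$ is valid, and the reduction to states via $\Delta=m(\rho-\sigma)$ with $m=\tfrac12\|\Delta\|_1$ is clean.

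Note, however, that there is nothing to compare against here: the present paper does not prove this proposition but merely cites it from \cite{de2021quantum} in Appendix~A as a known property of the quantum $W_1$ distance. Your argument is in fact essentially the one given in that reference for the upper bound (a telescoping chain of neighboring states of length $|\Lambda|$), and your closing remark about why naive chains lose a factor of $2$ is a useful piece of intuition that the original source does not spell out.
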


\begin{prop}[{\cite[Proposition 5]{de2021quantum}}]\label{prop:local}
Let $\Lambda'\subseteq\Lambda$ be finite sets.
Then, for any $\Delta\in\mathcal{O}_\Lambda^T$ such that $\mathrm{Tr}_{\Lambda'}\Delta=0$ we have
\begin{equation}
\left\|\Delta\right\|_{W_1} \le \frac{q^2-1}{q^2}\left|\Lambda'\right|\left\|\Delta\right\|_1\,.
\end{equation}
\end{prop}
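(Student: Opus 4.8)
The plan is to exhibit an explicit near‑optimal competitor for the minimization defining $\|\cdot\|_{W_1}$ in \autoref{defn:W1n}, obtained by depolarizing the sites of $\Lambda'$ one at a time. Enumerate $\Lambda' = \{x_1,\dots,x_m\}$ with $m=|\Lambda'|$ (the case $\Lambda'=\emptyset$ forces $\Delta=0$ and is trivial), and for $x\in\Lambda$ let $\mathcal{E}_x\colon\mathfrak{U}_\Lambda\to\mathfrak{U}_\Lambda$ be the channel replacing site $x$ by the maximally mixed state, $\mathcal{E}_x(A)=(\mathrm{Tr}_xA)\otimes\tfrac{\mathbb{I}_x}{q}$. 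Set $F_0=\mathrm{id}$ and $F_k=\mathcal{E}_{x_1}\circ\cdots\circ\mathcal{E}_{x_k}$, and define
\[
\Delta^{(x_j)} = F_{j-1}(\Delta) - F_j(\Delta) = F_{j-1}\bigl((\mathrm{id}-\mathcal{E}_{x_j})(\Delta)\bigr),\qquad j=1,\dots,m,
\]
where the second identity uses that the $\mathcal{E}_{x_i}$ act on distinct sites and hence commute. Setting $\Delta^{(x)}=0$ for $x\in\Lambda\setminus\Lambda'$, the sum telescopes to $\sum_{x\in\Lambda}\Delta^{(x)} = \Delta - F_m(\Delta)$, and $F_m(\Delta)=(\mathrm{Tr}_{\Lambda'}\Delta)\otimes\tfrac{\mathbb{I}_{\Lambda'}}{q^m}=0$ by hypothesis, so $\sum_{x}\Delta^{(x)}=\Delta$.

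Next I would check that $(\Delta^{(x)})_{x\in\Lambda}$ is admissible in \autoref{defn:W1n}. Self‑adjointness and tracelessness are immediate since each $\mathcal{E}_x$ preserves both and $\mathrm{Tr}\,\Delta=0$. For the constraint $\mathrm{Tr}_{x_j}\Delta^{(x_j)}=0$, observe that $F_{j-1}$ acts trivially on site $x_j$ (it touches only $x_1,\dots,x_{j-1}$), hence commutes with $\mathrm{Tr}_{x_j}$, while $\mathrm{Tr}_{x_j}\circ\mathcal{E}_{x_j}=\mathrm{Tr}_{x_j}$; therefore $\mathrm{Tr}_{x_j}F_{j-1}(\Delta)=F_{j-1}(\mathrm{Tr}_{x_j}\Delta)=\mathrm{Tr}_{x_j}F_j(\Delta)$, so the difference vanishes. (For $x\notin\Lambda'$ the constraint is trivially met since $\Delta^{(x)}=0$.)

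It then remains to bound $\sum_j\|\Delta^{(x_j)}\|_1$. As a composition of channels, $F_{j-1}$ is a contraction for $\|\cdot\|_1$, so $\|\Delta^{(x_j)}\|_1\le\|(\mathrm{id}-\mathcal{E}_{x_j})(\Delta)\|_1$. Expanding $\mathcal{E}_{x_j}$ as the uniform average of conjugations by the $q^2$ Heisenberg--Weyl operators on site $x_j$, one of which is the identity, gives $\mathcal{E}_{x_j}=\tfrac1{q^2}\,\mathrm{id}+\tfrac{q^2-1}{q^2}\,\mathcal{N}_{x_j}$ with $\mathcal{N}_{x_j}$ again a channel, whence $\mathrm{id}-\mathcal{E}_{x_j}=\tfrac{q^2-1}{q^2}(\mathrm{id}-\mathcal{N}_{x_j})$ and
\[
\|(\mathrm{id}-\mathcal{E}_{x_j})(\Delta)\|_1 \le \tfrac{q^2-1}{q^2}\bigl(\|\Delta\|_1+\|\mathcal{N}_{x_j}(\Delta)\|_1\bigr) \le \tfrac{2(q^2-1)}{q^2}\,\|\Delta\|_1.
\]
Summing over $j=1,\dots,m$ and using $\|\Delta\|_{W_1}\le\tfrac12\sum_{x\in\Lambda}\|\Delta^{(x)}\|_1$ yields $\|\Delta\|_{W_1}\le\tfrac{q^2-1}{q^2}\,|\Lambda'|\,\|\Delta\|_1$. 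The only mildly technical points are the elementary bookkeeping that $\mathcal{E}_{x_j}$ commutes with $F_{j-1}$ and with $\mathrm{Tr}_{x_j}$ (clear from the tensor‑factor structure) and the standard $1$‑design identity for the Heisenberg--Weyl operators; I do not expect a genuine obstacle, the single idea being to build the competitor from iterated depolarization of $\Lambda'$ rather than trying to manipulate $\Delta$ directly.
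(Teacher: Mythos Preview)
Your argument is correct. The paper does not reproduce a proof of this proposition but only cites \cite[Proposition 5]{de2021quantum}; your construction---telescoping $\Delta$ via successive single-site depolarizations on $\Lambda'$ and extracting the factor $\tfrac{q^2-1}{q^2}$ from the Heisenberg--Weyl $1$-design identity---is precisely the argument used there.
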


\begin{prop}[Superadditivity {\cite[Proposition 4]{de2021quantum}}]\label{prop:W1SA}
The quantum $W_1$ distance is superadditive in general and additive for product states, \emph{i.e.}, for any two disjoint finite sets $\Lambda,\,\Lambda'$ and any $\rho,\,\sigma\in\mathcal{S}_{\Lambda\Lambda'}$ we have
\begin{equation}
\left\|\rho - \sigma\right\|_{W_1} \ge \left\|\rho_\Lambda - \sigma_\Lambda\right\|_{W_1} + \left\|\rho_{\Lambda'} - \sigma_{\Lambda'}\right\|_{W_1}\,,
\end{equation}
and for any $\rho_\Lambda,\,\sigma_\Lambda\in\mathcal{S}_\Lambda$ and any $\rho_{\Lambda'},\,\sigma_{\Lambda'}\in\mathcal{S}_{\Lambda'}$ we have
\begin{equation}
\left\|\rho_\Lambda\otimes\rho_{\Lambda'} - \sigma_\Lambda\otimes\sigma_{\Lambda'}\right\|_{W_1} = \left\|\rho_\Lambda - \sigma_\Lambda\right\|_{W_1} + \left\|\rho_{\Lambda'} - \sigma_{\Lambda'}\right\|_{W_1}\,.
\end{equation}
\end{prop}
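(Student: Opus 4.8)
The plan is to argue directly from \autoref{defn:W1n}, using only two elementary facts: the partial trace is a contraction for the trace norm, $\|\mathrm{Tr}_S A\|_1\le\|A\|_1$, and partial traces over disjoint regions commute.

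\textbf{Superadditivity.} First I would fix $\Delta=\rho-\sigma\in\mathcal{O}_{\Lambda\Lambda'}^T$ and choose an optimal decomposition $\Delta=\sum_{z\in\Lambda\cup\Lambda'}\Delta^{(z)}$ realizing $\|\Delta\|_{W_1}=\frac12\sum_z\|\Delta^{(z)}\|_1$, with $\mathrm{Tr}_z\Delta^{(z)}=0$ for each $z$. Then I set $\Delta^{(x)}_\Lambda:=\mathrm{Tr}_{\Lambda'}\Delta^{(x)}$ for $x\in\Lambda$ and verify that $\{\Delta^{(x)}_\Lambda\}_{x\in\Lambda}$ is an admissible decomposition of $\rho_\Lambda-\sigma_\Lambda$: indeed $\mathrm{Tr}_x\Delta^{(x)}_\Lambda=\mathrm{Tr}_{\Lambda'}\mathrm{Tr}_x\Delta^{(x)}=0$, and $\sum_{x\in\Lambda}\Delta^{(x)}_\Lambda=\mathrm{Tr}_{\Lambda'}\big(\Delta-\sum_{y\in\Lambda'}\Delta^{(y)}\big)=\mathrm{Tr}_{\Lambda'}\Delta=\rho_\Lambda-\sigma_\Lambda$, where I used that $\mathrm{Tr}_{\Lambda'}\Delta^{(y)}=\mathrm{Tr}_{\Lambda'\setminus y}\mathrm{Tr}_y\Delta^{(y)}=0$ for $y\in\Lambda'$. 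Hence $\|\rho_\Lambda-\sigma_\Lambda\|_{W_1}\le\frac12\sum_{x\in\Lambda}\|\Delta^{(x)}_\Lambda\|_1\le\frac12\sum_{x\in\Lambda}\|\Delta^{(x)}\|_1$ by contractivity of the partial trace, and symmetrically $\|\rho_{\Lambda'}-\sigma_{\Lambda'}\|_{W_1}\le\frac12\sum_{y\in\Lambda'}\|\Delta^{(y)}\|_1$. Adding the two gives $\|\rho_\Lambda-\sigma_\Lambda\|_{W_1}+\|\rho_{\Lambda'}-\sigma_{\Lambda'}\|_{W_1}\le\frac12\sum_z\|\Delta^{(z)}\|_1=\|\Delta\|_{W_1}$. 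Alternatively, one can use duality (\autoref{prop:duality}): if $H_\Lambda\in\mathcal{O}_\Lambda$ and $H_{\Lambda'}\in\mathcal{O}_{\Lambda'}$ achieve the respective maxima with $\|H_\Lambda\|_L,\|H_{\Lambda'}\|_L\le1$, then $H_\Lambda+H_{\Lambda'}$ still has Lipschitz constant $\le1$ on $\Lambda\Lambda'$ (a term supported away from $x$ is absorbed into the minimizing operator, so $\partial_x(H_\Lambda+H_{\Lambda'})=\partial_xH_\Lambda$ for $x\in\Lambda$ and $=\partial_xH_{\Lambda'}$ for $x\in\Lambda'$), whence $\|\Delta\|_{W_1}\ge\mathrm{Tr}[\Delta(H_\Lambda+H_{\Lambda'})]=\|\rho_\Lambda-\sigma_\Lambda\|_{W_1}+\|\rho_{\Lambda'}-\sigma_{\Lambda'}\|_{W_1}$.

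\textbf{Additivity for product states.} The inequality ``$\ge$'' is just superadditivity applied to $\rho_\Lambda\otimes\rho_{\Lambda'}$ and $\sigma_\Lambda\otimes\sigma_{\Lambda'}$, whose marginals on $\Lambda$ and $\Lambda'$ are $\rho_\Lambda,\sigma_\Lambda$ and $\rho_{\Lambda'},\sigma_{\Lambda'}$. For ``$\le$'' I would first record that tensoring with a fixed state does not increase the $W_1$ norm: if $\{\Delta^{(x)}\}_{x\in\Lambda}$ is optimal for $\rho_\Lambda-\sigma_\Lambda$, then $\{\Delta^{(x)}\otimes\rho_{\Lambda'}\}_{x\in\Lambda}$ is admissible for $(\rho_\Lambda-\sigma_\Lambda)\otimes\rho_{\Lambda'}$ since $\mathrm{Tr}_x(\Delta^{(x)}\otimes\rho_{\Lambda'})=0$ and $\|\Delta^{(x)}\otimes\rho_{\Lambda'}\|_1=\|\Delta^{(x)}\|_1$, so $\|(\rho_\Lambda-\sigma_\Lambda)\otimes\rho_{\Lambda'}\|_{W_1}\le\|\rho_\Lambda-\sigma_\Lambda\|_{W_1}$, and likewise for the other factor. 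Then the telescoping identity
\begin{equation*}
\rho_\Lambda\otimes\rho_{\Lambda'}-\sigma_\Lambda\otimes\sigma_{\Lambda'}=(\rho_\Lambda-\sigma_\Lambda)\otimes\rho_{\Lambda'}+\sigma_\Lambda\otimes(\rho_{\Lambda'}-\sigma_{\Lambda'})
\end{equation*}
together with the triangle inequality for $\|\cdot\|_{W_1}$ yields the matching upper bound, hence equality.

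\textbf{Main obstacle.} There is no genuine difficulty; the content is the bookkeeping showing that the structural constraints of \autoref{defn:W1n} — tracelessness and $\mathrm{Tr}_x\Delta^{(x)}=0$ — are preserved under partial trace over $\Lambda'$ and under tensoring with a fixed state, combined with contractivity of the partial trace for $\|\cdot\|_1$. If one prefers the dual route, the only mildly delicate point is that $\partial_x$ of an operator supported on $\Lambda$ is insensitive to enlarging the ambient algebra, which is the finite-set counterpart of \autoref{prop:equivpartial}.
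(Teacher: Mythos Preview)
Your argument is correct. Note, however, that the paper does not prove \autoref{prop:W1SA}: it is quoted in \autoref{app:W1} as a known property from \cite[Proposition 4]{de2021quantum}, so there is no in-paper proof to compare against. Your primal approach---pushing an optimal decomposition through $\mathrm{Tr}_{\Lambda'}$ and using contractivity of the partial trace---is precisely the argument given in that reference; the dual alternative you sketch is also valid and is essentially how \cite{de2021quantum} handles the product-state upper bound. The only point worth tightening is to state explicitly that each $\Delta^{(x)}_\Lambda$ is traceless (which follows since $\mathrm{Tr}_\Lambda\Delta^{(x)}_\Lambda=\mathrm{Tr}_{\Lambda\Lambda'}\Delta^{(x)}=0$), so that $\Delta^{(x)}_\Lambda\in\mathcal{O}^T_\Lambda$ as required by \autoref{defn:W1n}.
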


\section{Auxiliary proofs}\label{app:auxproofs}

\subsection{Proof of \autoref{prop:TI}}\label{sec:propTI}
\begin{prop*}[\ref{prop:TI}]
The trace distance on $\mathcal{S}_{\mathbb{Z}^d}$ is the supremum of the trace distances between the marginal states: For any $\rho,\,\sigma\in\mathcal{S}_{\mathbb{Z}^d}$,
\begin{equation}
T(\rho,\sigma) = \frac{1}{2}\sup_{\Lambda\in\mathcal{F}_{\mathbb{Z}^d}}\left\|\rho_\Lambda - \sigma_\Lambda\right\|_1\,,
\end{equation}
where $\|\cdot\|_1$ denotes the trace norm on $\mathfrak{U}_\Lambda$ given by
\begin{equation}
\left\|A\right\|_1 = \mathrm{Tr}_\Lambda\sqrt{A^\dag A}\,,\qquad A\in\mathfrak{U}_\Lambda\,.
\end{equation}
\end{prop*}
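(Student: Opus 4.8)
The plan is to prove the two inequalities separately. The inequality $T(\rho,\sigma) \ge \frac{1}{2}\sup_{\Lambda}\|\rho_\Lambda - \sigma_\Lambda\|_1$ is the easy direction: for any fixed $\Lambda \in \mathcal{F}_{\mathbb{Z}^d}$, the trace norm on $\mathfrak{U}_\Lambda$ admits the variational formula $\|\rho_\Lambda - \sigma_\Lambda\|_1 = \sup\{|\mathrm{Tr}_\Lambda[(\rho_\Lambda - \sigma_\Lambda)A]| : A \in \mathfrak{U}_\Lambda,\, \|A\|_\infty \le 1\}$. Since every such $A$ can be regarded as an element of $\mathfrak{U}_{\mathbb{Z}^d}$ via the canonical inclusion, and since $\rho(A) - \sigma(A) = \mathrm{Tr}_\Lambda[(\rho_\Lambda - \sigma_\Lambda)A]$ for $A \in \mathfrak{U}_\Lambda$ by definition of the marginal state, each such $A$ is an admissible test operator in the definition of $T(\rho,\sigma)$. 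Taking the supremum over $A$ and then over $\Lambda$ gives the bound.

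For the reverse inequality $T(\rho,\sigma) \le \frac{1}{2}\sup_{\Lambda}\|\rho_\Lambda - \sigma_\Lambda\|_1$, I would use a density argument. Let $A \in \mathfrak{U}_{\mathbb{Z}^d}$ with $\|A\|_\infty \le 1$ and fix $\epsilon > 0$. Since $\mathfrak{U}_{\mathbb{Z}^d}^{loc}$ is dense in $\mathfrak{U}_{\mathbb{Z}^d}$, choose $B \in \mathfrak{U}_\Lambda$ for some $\Lambda \in \mathcal{F}_{\mathbb{Z}^d}$ with $\|A - B\|_\infty < \epsilon$; by replacing $B$ with $B/\max(1,\|B\|_\infty)$ we may also assume $\|B\|_\infty \le 1$ at the cost of enlarging $\epsilon$ by a harmless constant factor (or simply note $\|B\|_\infty \le 1 + \epsilon$ and carry the factor). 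Then
\begin{equation}
|\rho(A) - \sigma(A)| \le |\rho(A-B)| + |\sigma(A-B)| + |\rho(B) - \sigma(B)| \le 2\epsilon + \|\rho_\Lambda - \sigma_\Lambda\|_1\,,
\end{equation}
using that $\rho$ and $\sigma$ are states (norm-$1$ functionals) for the first two terms and the variational formula for the trace norm for the last. Since $\epsilon$ is arbitrary, $|\rho(A) - \sigma(A)| \le \sup_\Lambda \|\rho_\Lambda - \sigma_\Lambda\|_1$, and taking the supremum over admissible $A$ yields the claim after dividing by $2$.

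I do not expect a genuine obstacle here; the only point requiring minor care is the normalization of the local approximant $B$ so that it remains a legitimate test operator for the trace-norm variational formula, which is handled by rescaling and letting $\epsilon \to 0$. One should also note at the outset that both suprema are well defined: the map $\rho \mapsto \rho_\Lambda$ sends $\mathcal{S}_{\mathbb{Z}^d}$ into $\mathcal{S}_\Lambda$, so each $\|\rho_\Lambda - \sigma_\Lambda\|_1$ is finite and bounded by $2$, and the supremum over $\Lambda \in \mathcal{F}_{\mathbb{Z}^d}$ is therefore a supremum of a bounded monotone (in $\Lambda$, by monotonicity of the trace distance under partial trace) net.
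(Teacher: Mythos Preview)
Your proof is correct and follows essentially the same route as the paper: both use the density of $\mathfrak{U}_{\mathbb{Z}^d}^{loc}$ in $\mathfrak{U}_{\mathbb{Z}^d}$ together with the duality $\|X\|_1 = \sup_{\|A\|_\infty \le 1}|\mathrm{Tr}[XA]|$ on each $\mathfrak{U}_\Lambda$. The paper's version is simply more compressed---it replaces the supremum over the unit ball of $\mathfrak{U}_{\mathbb{Z}^d}$ by the supremum over the unit ball of $\mathfrak{U}_{\mathbb{Z}^d}^{loc}$ in one stroke and then rewrites it as $\sup_\Lambda \sup_{A\in\mathfrak{U}_\Lambda,\,\|A\|_\infty\le1}$, rather than splitting into two inequalities with an explicit $\epsilon$-approximation.
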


\begin{proof}
Since $\mathfrak{U}_{\mathbb{Z}^d}^{loc}$ is dense in $\mathfrak{U}_{\mathbb{Z}^d}$, we have
\begin{align}
2\,T(\rho,\sigma) &= \sup_{A\in\mathfrak{U}_{\mathbb{Z}^d}^{loc}:\|A\|_\infty\le1}\left|\rho(A) - \sigma(A)\right| = \sup_{\Lambda\in\mathcal{F}_{\mathbb{Z}^d}}\sup_{A\in\mathfrak{U}_\Lambda:\|A\|_\infty\le1}\left|\rho(A) - \sigma(A)\right|\nonumber\\
&= \sup_{\Lambda\in\mathcal{F}_{\mathbb{Z}^d}}\sup_{A\in\mathfrak{U}_\Lambda:\|A\|_\infty\le1}\left|\mathrm{Tr}_\Lambda\left[\left(\rho_\Lambda - \sigma_\Lambda\right)A\right]\right| = \sup_{\Lambda\in\mathcal{F}_{\mathbb{Z}^d}}\left\|\rho_\Lambda - \sigma_\Lambda\right\|_1\,.
\end{align}
The claim follows.
\end{proof}

\subsection{Proof of \autoref{prop:equivpartial}}\label{sec:equivpartial}
\begin{prop*}[\ref{prop:equivpartial}]
For any $\Lambda\in\mathcal{F}_{\mathbb{Z}^d}$, any $H\in\mathcal{O}_\Lambda$ and any $x\in\Lambda$, \eqref{eq:partialxH} and \eqref{eq:partialxHI} are equivalent.
\end{prop*}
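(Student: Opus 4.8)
The plan is to prove equality by establishing the two inequalities separately. One direction is free: under the canonical identification of $\mathfrak{U}_{\Lambda\setminus x}$ with a subalgebra of $\mathfrak{U}_{\mathbb{Z}^d\setminus x}$ (see \eqref{eq:ULambda}) we have $\mathcal{O}_{\Lambda\setminus x}\subseteq\mathcal{O}_{\mathbb{Z}^d\setminus x}$, so the infimum defining \eqref{eq:partialxHI} runs over a larger set than the minimum in \eqref{eq:partialxH} and is therefore $\le$ it. (I would also note in passing that the minimum in \eqref{eq:partialxH} is genuinely attained, by continuity and coercivity of $H_{\Lambda\setminus x}\mapsto\|H-H_{\Lambda\setminus x}\|_\infty$ on the finite-dimensional real vector space $\mathcal{O}_{\Lambda\setminus x}$.) The real content is the reverse inequality: every $A\in\mathcal{O}_{\mathbb{Z}^d\setminus x}$, possibly genuinely quasi-local, can be replaced by some $H_{\Lambda\setminus x}\in\mathcal{O}_{\Lambda\setminus x}$ that is at least as close to $H$ in operator norm.

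The main tool I would use is a normalized partial trace onto $\mathfrak{U}_\Lambda$, equivalently a twirl over the local unitaries on the complement of $\Lambda$; its relevant features are that it is norm-nonincreasing, fixes $\mathfrak{U}_\Lambda$ pointwise, and — since it touches no tensor factor inside $\Lambda$ — intertwines conjugation by unitaries on the site $x$. To avoid an infinite twirl, I would first approximate: given $A\in\mathcal{O}_{\mathbb{Z}^d\setminus x}$ and $\epsilon>0$, choose a finite set $Y$ with $\Lambda\setminus x\subseteq Y\subseteq\mathbb{Z}^d\setminus x$ and a self-adjoint $A'\in\mathcal{O}_Y$ with $\|A-A'\|_\infty<\epsilon$ (possible because $\mathfrak{U}_{\mathbb{Z}^d\setminus x}$ is the closure of the strictly local operators, and one can symmetrize $A'$ to keep it self-adjoint). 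Then set $Z=Y\cup\{x\}\supseteq\Lambda$ and let $\mathcal{E}\colon\mathfrak{U}_Z\to\mathfrak{U}_\Lambda$, $\mathcal{E}(B)=q^{-|Z\setminus\Lambda|}\,\mathrm{Tr}_{Z\setminus\Lambda}B$. I would then check: $\mathcal{E}$ is unital and completely positive, hence $\|\mathcal{E}(B)\|_\infty\le\|B\|_\infty$; $\mathcal{E}(H)=H$ since $H\in\mathfrak{U}_\Lambda$; and, since $x\notin Z\setminus\Lambda$, one has $\mathcal{E}(UBU^\dagger)=U\,\mathcal{E}(B)\,U^\dagger$ for every unitary $U\in\mathfrak{U}_x$. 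Because $A'$ is supported away from $x$ it commutes with $\mathfrak{U}_x$, so $\mathcal{E}(A')$ commutes with $\mathfrak{U}_x$ too; as the relative commutant of $\mathfrak{U}_x$ in $\mathfrak{U}_\Lambda$ is $\mathfrak{U}_{\Lambda\setminus x}$ and $\mathcal{E}$ preserves self-adjointness, we get $\mathcal{E}(A')\in\mathcal{O}_{\Lambda\setminus x}$.

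Putting this together, $\|H-\mathcal{E}(A')\|_\infty=\|\mathcal{E}(H-A')\|_\infty\le\|H-A'\|_\infty\le\|H-A\|_\infty+\epsilon$, so the minimum in \eqref{eq:partialxH} is $\le\|H-A\|_\infty+\epsilon$; letting $\epsilon\to0$ and then taking the infimum over $A\in\mathcal{O}_{\mathbb{Z}^d\setminus x}$ yields \eqref{eq:partialxH}$\,\le\,$\eqref{eq:partialxHI}, which together with the trivial direction gives the equality and hence the equivalence of the two definitions of $\partial_x H$. I expect the only genuinely delicate point to be the reduction from a general quasi-local $A$ to a strictly local $A'$, so that $\mathcal{E}$ is a partial trace over a finite system; once that step is in place, everything else is a routine verification of standard properties of the normalized partial trace.
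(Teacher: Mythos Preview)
Your proof is correct and follows essentially the same approach as the paper: both establish the nontrivial inequality by applying a norm-contractive conditional expectation onto $\mathfrak{U}_\Lambda$ (the normalized partial trace against the uniform state on the complement), which fixes $H$ and sends operators commuting with $\mathfrak{U}_x$ into $\mathcal{O}_{\Lambda\setminus x}$. The only difference is that you first approximate $A$ by a strictly local operator so as to work with a finite partial trace, whereas the paper defines the map $\Psi_\Lambda$ directly on all of $\mathfrak{U}_{\mathbb{Z}^d}$ via $\mathrm{Tr}_\Lambda[\rho_\Lambda\,\Psi_\Lambda(A)]=(\omega_{\mathbb{Z}^d\setminus\Lambda}\otimes\rho_\Lambda)(A)$, thereby dispensing with the $\epsilon$-approximation altogether.
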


\begin{proof}
Let
\begin{equation}
\partial_x H = 2\min_{A\in \mathcal{O}_{\Lambda\setminus x}}\left\|H - A\right\|_\infty\,,\qquad \tilde{\partial}_x H = 2\inf_{A\in \mathcal{O}_{\mathbb{Z}^d\setminus x}}\left\|H - A\right\|_\infty\,.
\end{equation}
We clearly have $\tilde{\partial}_x H \le \partial_x H$.
Let $\omega_{\mathbb{Z}^d\setminus\Lambda}\in \mathcal{S}_{\mathbb{Z}^d\setminus\Lambda}$ be the uniform distribution on $\mathbb{Z}^d\setminus\Lambda$, and let $\Psi_\Lambda:\mathfrak{U}_{\mathbb{Z}^d}\to\mathfrak{U}_\Lambda$ be the completely positive unital linear map such that for any $A\in\mathfrak{U}_{\mathbb{Z}^d}$ and any $\rho_\Lambda\in\mathcal{S}_\Lambda$
\begin{equation}
\mathrm{Tr}_\Lambda\left[\rho_\Lambda\,\Psi_\Lambda(A)\right] = (\omega_{\mathbb{Z}^d\setminus\Lambda}\otimes\rho_\Lambda)(A)\,.
\end{equation}
Let $A\in\mathcal{O}_{\mathbb{Z}^d\setminus x}$.
We have for any $\rho_\Lambda\in\mathcal{S}_\Lambda$ and any unitary operator $U_x\in\mathfrak{U}_x$
\begin{align}
\mathrm{Tr}_\Lambda\left[\rho_\Lambda\,U_x^\dag\,\Psi_\Lambda(A)\,U_x\right] &= \mathrm{Tr}_\Lambda\left[U_x\,\rho_\Lambda\,U_x^\dag\,\Psi_\Lambda(A)\right] = \left(\omega_{\mathbb{Z}^d\setminus\Lambda}\otimes U_x\,\rho_\Lambda\,U_x^\dag\right)(A)\nonumber\\
&= (\omega_{\mathbb{Z}^d\setminus\Lambda}\otimes\rho_\Lambda)\left(U_x^\dag\,A\,U_x\right)= (\omega_{\mathbb{Z}^d\setminus\Lambda}\otimes\rho_\Lambda)(A) = \mathrm{Tr}_\Lambda\left[\rho_\Lambda\,\Psi_\Lambda(A)\right]\,,
\end{align}
therefore $U_x^\dag\,\Psi_\Lambda(A)\,U_x = \Psi_\Lambda(A)$, hence $\Psi_\Lambda(A)\in\mathcal{O}_{\Lambda\setminus x}$.
We then have
\begin{equation}
\partial_x H \le 2\left\|H - \Psi_\Lambda(A)\right\|_\infty = 2\left\|\Psi_\Lambda(H-A)\right\|_\infty \le 2\left\|H-A\right\|_\infty\,,
\end{equation}
where the last inequality follows since $\Psi_\Lambda$ is completely positive and unital.
We then have $\partial_x H \le \tilde{\partial}_x H$.
The claim follows.
\end{proof}

\section{Auxiliary lemmas}\label{app:aux}

\begin{lem}[Multidimensional Fekete's lemma \cite{capobianco2008multidimensional}]\label{lem:Fekete}
Let $f:\mathbb{N}_+^d\to\mathbb{R}$ be superadditive with respect to each variable, \emph{i.e.},
\begin{equation}
f(x_1,\,\ldots,\,x_i+t,\,\ldots,\,x_d) \ge f(x_1,\,\ldots,\,x_i,\,\ldots,\,x_d) + f(x_1,\,\ldots,\,t,\,\ldots,\,x_d)
\end{equation}
for any $x_1,\,\ldots,\,x_d,\,t\in\mathbb{N}$ and any $i=1,\,\ldots,\,d$.
Then,
\begin{equation}
\lim_{x\to\infty}\frac{f(x)}{x_1\ldots x_d} = \sup_{x\in\mathbb{N}_+^d}\frac{f(x)}{x_1\ldots x_d}\,.
\end{equation}
\end{lem}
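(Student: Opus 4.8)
The plan is to prove the two inequalities
$\limsup_{x\to\infty}\frac{f(x)}{x_1\cdots x_d}\le\sup_{x\in\mathbb{N}_+^d}\frac{f(x)}{x_1\cdots x_d}$ and
$\liminf_{x\to\infty}\frac{f(x)}{x_1\cdots x_d}\ge\sup_{x\in\mathbb{N}_+^d}\frac{f(x)}{x_1\cdots x_d}$; together with $\liminf\le\limsup$ these force the limit to exist and to equal the supremum, and the case in which the supremum is $+\infty$ is covered automatically, since the second inequality then gives $\liminf=+\infty$. Throughout I abbreviate $|x|=x_1\cdots x_d$. The first inequality is immediate: for any threshold, the supremum of $f(x)/|x|$ taken over $x$ with all coordinates above that threshold is at most the global supremum. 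So the entire content lies in the lower bound on the $\liminf$, which I will establish in the sharper form: for every fixed $a\in\mathbb{N}_+^d$,
\[
\liminf_{x\to\infty}\frac{f(x)}{|x|}\ge\frac{f(a)}{|a|}\,;
\]
taking the supremum over $a$ then yields what is needed.

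To prove this, fix $a\in\mathbb{N}_+^d$ and consider $x$ with $x_i>a_i$ for every $i$ (eventually the case as $x\to\infty$). Write the Euclidean divisions $x_i=q_ia_i+r_i$ with $q_i\ge1$ and $1\le r_i\le a_i$. Applying the one-coordinate superadditivity repeatedly — first expanding the first coordinate, so that $f(q_1a_1+r_1,x_2,\ldots,x_d)\ge q_1 f(a_1,x_2,\ldots,x_d)+f(r_1,x_2,\ldots,x_d)$, then expanding the second coordinate in each of the two resulting terms, and so on through all $d$ coordinates — one arrives at
\[
f(x)\ge\sum_{S\subseteq\{1,\ldots,d\}}\Bigl(\prod_{i\in S}q_i\Bigr)f(b^S)\,,
\]
where $b^S\in\mathbb{N}_+^d$ is the tuple whose $i$-th entry equals $a_i$ if $i\in S$ and $r_i$ if $i\notin S$. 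The term indexed by $S=\{1,\ldots,d\}$ is the main term $(q_1\cdots q_d)f(a)$; in every other term the tuple $b^S$ has all entries bounded by $\max_i a_i$, hence $f(b^S)$ ranges over a finite set and $|f(b^S)|\le C_a$ for a constant $C_a$ depending only on $a$.

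Dividing by $|x|$, the main term equals $\bigl(\prod_i q_i/x_i\bigr)f(a)$, which tends to $\bigl(\prod_i 1/a_i\bigr)f(a)=f(a)/|a|$ because $q_i/x_i\to1/a_i$. For a proper subset $S\subsetneq\{1,\ldots,d\}$, choose any $j\notin S$; using $q_i/x_i\le1/a_i\le1$ and $1/x_i\le1$, the corresponding term is bounded in absolute value by $C_a/x_j\le C_a/\min_i x_i$, which tends to $0$ as $x\to\infty$. Since there are only finitely many such $S$, the sum of the error terms vanishes in the limit, and therefore $\liminf_{x\to\infty}f(x)/|x|\ge f(a)/|a|$, completing the argument.

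The one place that needs care is the bookkeeping in the iterated expansion: one must check that $q_i\ge1$ (which holds precisely because $x_i>a_i$) so that the iteration $f(q_ia_i+r_i)\ge q_if(a_i)+f(r_i)$ is legitimate, and that after the full $d$-fold expansion every remainder tuple $b^S$ with $S\ne\{1,\ldots,d\}$ has at least one coordinate frozen at some $r_j\le a_j$, so that its term carries the vanishing factor $1/x_j$. An alternative route would be induction on $d$, applying the classical one-dimensional Fekete lemma first in one coordinate and then to the resulting superadditive function of the remaining $d-1$ variables; but the direct argument above sidesteps the interchange-of-limits issue that makes the inductive step slightly delicate.
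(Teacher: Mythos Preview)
Your proof is correct. The paper itself does not supply a proof of this lemma; it merely states it with a citation to \cite{capobianco2008multidimensional}, so there is no in-paper argument to compare against. Your direct approach---Euclidean division in each coordinate followed by the subset expansion $f(x)\ge\sum_{S}\bigl(\prod_{i\in S}q_i\bigr)f(b^S)$---is the standard one, and the care you take to ensure $q_i\ge1$ and $1\le r_i\le a_i$ (so that every evaluation stays in $\mathbb{N}_+^d$) is exactly the point that needs attention.
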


\begin{lem}\label{lem:PSD}
Let $H\in\mathcal{O}_\Lambda$ be positive semi-definite.
Then, for any $x\in\Lambda$,
\begin{equation}
\partial_x H \le \left\|H\right\|_\infty\,.
\end{equation}
\end{lem}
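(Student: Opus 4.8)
The statement to prove is Lemma~\ref{lem:PSD}: for $H\in\mathcal{O}_\Lambda$ positive semi-definite and $x\in\Lambda$, we have $\partial_x H\le\|H\|_\infty$.

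\textbf{Plan.} The idea is to exhibit an explicit operator $A\in\mathcal{O}_{\Lambda\setminus x}$ that approximates $H$ well enough in operator norm. Recall that by \autoref{defn:partial}, $\partial_x H = 2\min_{A\in\mathcal{O}_{\Lambda\setminus x}}\|H-A\|_\infty$, so it suffices to produce a single candidate $A$ with $\|H-A\|_\infty\le\frac{1}{2}\|H\|_\infty$. The natural choice is to ``average out'' the site $x$: let $\omega_x=\mathbb{I}_x/q\in\mathcal{S}_x$ be the maximally mixed state on $\mathcal{H}_x$, and set $A=\mathrm{Tr}_x[\omega_x H]\otimes\mathbb{I}_x$ — more precisely, $A$ is the operator on $\mathcal{H}_\Lambda$ obtained by partial-tracing the $x$-factor against $\omega_x$ and tensoring back the identity on $x$. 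Equivalently, $A$ is the image of $H$ under the conditional expectation onto $\mathfrak{U}_{\Lambda\setminus x}$, which is completely positive and unital, so $A\in\mathcal{O}_{\Lambda\setminus x}$ and $A\ge0$.

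\textbf{Key steps.} First I would check $A\in\mathcal{O}_{\Lambda\setminus x}$: the averaging map $H\mapsto\int U_x H U_x^\dagger\,dU_x$ over the Haar measure on unitaries of $\mathcal{H}_x$ (equivalently the partial trace against $\omega_x$ followed by tensoring $\mathbb{I}_x/q$, up to the identification) is completely positive, unital, self-adjointness-preserving and fixes exactly $\mathcal{O}_{\Lambda\setminus x}$; this argument is essentially the one already used in the proof of \autoref{prop:equivpartial}. Second, and this is the crux, I would bound $\|H-A\|_\infty$. Since $0\le H\le\|H\|_\infty\,\mathbb{I}$, applying the averaging channel $\Phi$ (which is positive and unital) gives $0\le A=\Phi(H)\le\|H\|_\infty\,\mathbb{I}$. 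Therefore both $H$ and $A$ lie in the order interval $[0,\|H\|_\infty\mathbb{I}]$, so $H-A$ has all eigenvalues in $[-\|H\|_\infty,\|H\|_\infty]$, i.e. $\|H-A\|_\infty\le\|H\|_\infty$. Hmm — that only gives $\partial_x H\le 2\|H\|_\infty$, which is too weak by a factor of $2$.

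\textbf{Fixing the constant — the real obstacle.} To get the sharp constant I should not center at $A=\Phi(H)$ but shift by half the norm: take $A' = \Phi(H) - \tfrac12\|H\|_\infty\,\mathbb{I} + \tfrac12\|H\|_\infty\,\mathbb{I}$... that does nothing. The correct move is to write $H = (H-\tfrac12\|H\|_\infty\mathbb{I}) + \tfrac12\|H\|_\infty\mathbb{I}$ and observe $\|H-\tfrac12\|H\|_\infty\mathbb{I}\|_\infty\le\tfrac12\|H\|_\infty$ since the spectrum of $H$ lies in $[0,\|H\|_\infty]$. Now apply $\Phi$ to $H$ directly but compare to the constant multiple of the identity lifted appropriately: actually the clean statement is that $\partial_x$ is a seminorm (used repeatedly in the paper) with $\partial_x(\mathbb{I}_\Lambda)=0$, hence $\partial_x H = \partial_x(H-\tfrac12\|H\|_\infty\mathbb{I})\le 2\|H-\tfrac12\|H\|_\infty\mathbb{I}\|_\infty\le 2\cdot\tfrac12\|H\|_\infty=\|H\|_\infty$, where the middle inequality is just $\partial_x B\le 2\|B\|_\infty$ (take $A=0$ in \autoref{defn:partial}). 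This is the whole proof. The main subtlety to get right is simply that one must recenter $H$ so that its operator norm is halved before invoking the trivial bound $\partial_x B\le 2\|B\|_\infty$; positivity of $H$ is exactly what makes the recentering reduce the norm by the full factor of two.

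\begin{proof}
Since $H\ge0$, the spectrum of $H$ is contained in $[0,\|H\|_\infty]$, hence
\begin{equation}
\left\|H - \tfrac{1}{2}\|H\|_\infty\,\mathbb{I}_\Lambda\right\|_\infty \le \tfrac{1}{2}\|H\|_\infty\,.
\end{equation}
Because $\partial_x$ is a seminorm on $\mathcal{O}_\Lambda$ and $\partial_x\mathbb{I}_\Lambda=0$ (as $\mathbb{I}_\Lambda\in\mathcal{O}_{\Lambda\setminus x}$), we have, using $\partial_x B\le 2\|B\|_\infty$ for any $B\in\mathcal{O}_\Lambda$ (take $H_{\Lambda\setminus x}=0$ in \eqref{eq:partialxH}),
\begin{equation}
\partial_x H = \partial_x\left(H - \tfrac{1}{2}\|H\|_\infty\,\mathbb{I}_\Lambda\right) \le 2\left\|H - \tfrac{1}{2}\|H\|_\infty\,\mathbb{I}_\Lambda\right\|_\infty \le \|H\|_\infty\,.
\end{equation}
The claim follows.
\end{proof}
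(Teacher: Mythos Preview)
Your proof is correct and essentially identical to the paper's: both recenter $H$ by $\tfrac{1}{2}\|H\|_\infty\,\mathbb{I}_\Lambda$ and then use the trivial bound $\partial_x B \le 2\|B\|_\infty$. The paper phrases it directly via the operator inequalities $-\tfrac{1}{2}\|H\|_\infty\,\mathbb{I}\le H-\tfrac{1}{2}\|H\|_\infty\,\mathbb{I}\le\tfrac{1}{2}\|H\|_\infty\,\mathbb{I}$, while you invoke the seminorm property $\partial_x\mathbb{I}_\Lambda=0$ explicitly, but the argument is the same.
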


\begin{proof}
We have
\begin{equation}
-\frac{\left\|H\right\|_\infty}{2}\,\mathbb{I} \le H - \frac{\left\|H\right\|_\infty}{2}\,\mathbb{I} \le \frac{\left\|H\right\|_\infty}{2}\,\mathbb{I}\,,
\end{equation}
therefore
\begin{equation}
\partial_x H \le 2\left\|H - \frac{\left\|H\right\|_\infty}{2}\,\mathbb{I}\right\|_\infty \le \left\|H\right\|_\infty\,.
\end{equation}
The claim follows.
\end{proof}

\begin{prop}\label{prop:Marton}
Let $\Lambda_1,\,\ldots,\,\Lambda_k$ be $k$ copies of the finite set $\Lambda$.
Then, for any $\rho\in\mathcal{S}_{\Lambda_1\ldots\Lambda_k}$ and any $\sigma\in\mathcal{S}_\Lambda$ we have
\begin{equation}
\left\|\rho - \sigma^{\otimes k}\right\|_{W_1}^2 \le 2k\left|\Lambda\right|^2\,S\left(\rho\left\|\sigma^{\otimes k}\right.\right)\,.
\end{equation}
\end{prop}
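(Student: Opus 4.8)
The plan is to reduce the claimed inequality to the finite-dimensional quantum version of Marton's transportation-cost inequality, \autoref{thm:Marton}, applied to the product system $\Lambda_1\cdots\Lambda_k$, which has $k|\Lambda|$ sites. Indeed, if we could simply invoke \autoref{thm:Marton} with the product state $\sigma^{\otimes k}\in\mathcal{S}_{\Lambda_1\cdots\Lambda_k}$ (which is a product state on the $k|\Lambda|$ single-site factors, since $\sigma$ itself need not be), we would get $\|\rho-\sigma^{\otimes k}\|_{W_1}^2 \le \frac{k|\Lambda|}{2}\,S(\rho\|\sigma^{\otimes k})$, which is actually \emph{stronger} than the stated bound (the stated bound has a factor $2k|\Lambda|^2$ on the right). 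So the real work is not in improving the constant but in making sure the $W_1$ norm in the statement refers to the right ground set: the $W_1$ norm on $\mathcal{S}_{\Lambda_1\cdots\Lambda_k}$ treats each elementary site of each copy as a movable unit, and that is exactly the setting of \autoref{thm:Marton}. First I would spell out that $\Lambda_1\cdots\Lambda_k$ is a finite set with $k|\Lambda|$ sites, that $\sigma^{\otimes k} = \bigotimes_{j=1}^k \sigma$ decomposes — after further decomposing each $\sigma$ — only as a product over the $k$ copies, not over all $k|\Lambda|$ sites, so \autoref{thm:Marton} applies with ``$\Lambda$'' there equal to $\Lambda_1\cdots\Lambda_k$ here only if $\sigma^{\otimes k}$ is a product state over single sites; it is a product over the $k$ blocks. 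Hence the cleaner route is to view $\sigma^{\otimes k}$ as a product of $k$ states, one per block $\Lambda_j$, and apply \autoref{thm:Marton} in the form where the ``single-spin'' factors are the $k$ blocks.

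Concretely, I would argue as follows. Group the $k|\Lambda|$ sites into $k$ super-sites, the $j$-th being all of $\Lambda_j$, each of local Hilbert-space dimension $q^{|\Lambda|}$. With respect to this coarser tensor decomposition, $\sigma^{\otimes k}$ is a genuine product state of $k$ factors, so \autoref{thm:Marton} (with the role of $q$ played by $q^{|\Lambda|}$ and the role of $|\Lambda|$ played by $k$) gives $\|\rho-\sigma^{\otimes k}\|_{W_1^{\mathrm{coarse}}}^2 \le \frac{k}{2}\,S(\rho\|\sigma^{\otimes k})$, where $W_1^{\mathrm{coarse}}$ is the $W_1$ norm built from the coarse partition. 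Then I would compare the two $W_1$ norms: moving one super-site is at most as expensive as moving its $|\Lambda|$ constituent sites one at a time, so any decomposition realizing $\|\cdot\|_{W_1}$ over single sites can be coarsened, giving $\|\Delta\|_{W_1^{\mathrm{coarse}}} \le \|\Delta\|_{W_1}$; conversely a coarse decomposition term $\Delta^{(j)}$ with $\mathrm{Tr}_{\Lambda_j}\Delta^{(j)}=0$ can be split into $|\Lambda|$ single-site terms whose $W_1$-cost is controlled, roughly via \autoref{prop:W1T} applied within block $\Lambda_j$, by $|\Lambda|\,\|\Delta^{(j)}\|_1/\|\Delta^{(j)}\|_{W_1^{\Lambda_j}}$-type factors. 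Combining, $\|\Delta\|_{W_1} \le |\Lambda|\,\|\Delta\|_{W_1^{\mathrm{coarse}}}$, hence $\|\rho-\sigma^{\otimes k}\|_{W_1}^2 \le |\Lambda|^2\,\|\rho-\sigma^{\otimes k}\|_{W_1^{\mathrm{coarse}}}^2 \le \frac{k|\Lambda|^2}{2}\,S(\rho\|\sigma^{\otimes k}) \le 2k|\Lambda|^2\,S(\rho\|\sigma^{\otimes k})$, which is the claim with room to spare.

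The step I expect to be the main obstacle — or at least the one requiring care — is the comparison $\|\Delta\|_{W_1}\le |\Lambda|\,\|\Delta\|_{W_1^{\mathrm{coarse}}}$, i.e.\ bounding the fine-grained $W_1$ norm by the coarse one. One has to take an optimal coarse decomposition $\Delta=\sum_j \Delta^{(j)}$ with $\mathrm{Tr}_{\Lambda_j}\Delta^{(j)}=0$, and then for each $j$ further decompose $\Delta^{(j)}$ into single-site pieces supported inside block $j$; the clean tool for this is \autoref{prop:local} or \autoref{prop:W1T} applied inside $\Lambda_j$, which gives $\|\Delta^{(j)}\|_{W_1}\le |\Lambda|\,\|\Delta^{(j)}\|_1/2 \le |\Lambda|\,\|\Delta^{(j)}\|_{W_1^{\Lambda_j}}$ by \autoref{prop:W1T}, and then subadditivity of $\|\cdot\|_{W_1}$ under adding the terms $\Delta^{(j)}$ (triangle inequality for the seminorm) finishes it. An alternative, and perhaps slicker, route that avoids the coarse/fine comparison entirely is to induct on $k$ using the chain rule for relative entropy together with \autoref{prop:W1SA} (superadditivity of $W_1$): write $S(\rho\|\sigma^{\otimes k}) = S(\rho_{\Lambda_k}\|\sigma) + \mathbb{E}\,S(\rho_{|\Lambda_k}\|\sigma^{\otimes(k-1)})$ and bound each $W_1$ contribution by \autoref{thm:Marton} on a single block plus the inductive hypothesis on $k-1$ blocks; the factor-$2$ slack in the stated constant is exactly what makes such an inductive bookkeeping go through without optimizing constants. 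I would present whichever of these is shorter, most likely the direct reduction via the coarse decomposition.
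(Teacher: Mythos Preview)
Your approach is correct but genuinely different from the paper's. The paper never invokes \autoref{thm:Marton}; instead it telescopes directly,
\[
\left\|\rho - \sigma^{\otimes k}\right\|_{W_1} \le \sum_{i=1}^k\left\|\sigma^{\otimes(i-1)}\otimes\rho_{\Lambda_i\ldots\Lambda_k} - \sigma^{\otimes i}\otimes\rho_{\Lambda_{i+1}\ldots\Lambda_k}\right\|_{W_1},
\]
applies \autoref{prop:local} to each term (since the $i$-th term has vanishing partial trace over $\Lambda_i$) to get a factor $|\Lambda|\,\|\cdot\|_1$, then Pinsker, Cauchy--Schwarz over the $k$ terms, and a chain-rule telescoping of the relative entropies. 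Your ``alternative'' inductive sketch is in fact close in spirit to this.

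Your main route, coarse-graining to $k$ super-sites, applying \autoref{thm:Marton} there, and comparing the fine and coarse $W_1$ norms, works, but one constant needs correcting: from an optimal coarse decomposition $\Delta=\sum_j\Delta^{(j)}$ with $\mathrm{Tr}_{\Lambda_j}\Delta^{(j)}=0$, the right tool is \autoref{prop:local} (not \autoref{prop:W1T}, whose ambient set here has size $k|\Lambda|$), giving $\|\Delta^{(j)}\|_{W_1}\le |\Lambda|\,\|\Delta^{(j)}\|_1$ without the factor $\tfrac12$. Summing, $\|\Delta\|_{W_1}\le |\Lambda|\sum_j\|\Delta^{(j)}\|_1 = 2|\Lambda|\,\|\Delta\|_{W_1^{\mathrm{coarse}}}$, hence $\|\rho-\sigma^{\otimes k}\|_{W_1}^2 \le 4|\Lambda|^2\cdot\tfrac{k}{2}\,S(\rho\|\sigma^{\otimes k}) = 2k|\Lambda|^2\,S(\rho\|\sigma^{\otimes k})$, exactly the claim with no room to spare. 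Your route buys a clean black-box reduction; the paper's buys self-containedness and avoids introducing the auxiliary coarse norm.
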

\begin{proof}
The proof follows the same lines as the proof of \cite[Theorem 2]{de2021quantum}.
We have
\begin{align}
\left\|\rho - \sigma^{\otimes k}\right\|_{W_1} &\le \sum_{i=1}^k\left\|\sigma^{\otimes\left(i-1\right)}\otimes\rho_{\Lambda_i\ldots\Lambda_k} - \sigma^{\otimes i}\otimes\rho_{\Lambda_{i+1}\ldots\Lambda_k}\right\|_{W_1} \nonumber\\
&\overset{\mathrm{(a)}}{\le} \left|\Lambda\right|\sum_{i=1}^k\left\|\rho_{\Lambda_i\ldots\Lambda_k} - \sigma\otimes\rho_{\Lambda_{i+1}\ldots\Lambda_k}\right\|_1 \overset{\mathrm{(b)}}{\le} \left|\Lambda\right|\sum_{i=1}^k\sqrt{2\,S\left(\rho_{\Lambda_i\ldots\Lambda_k}\left\|\sigma\otimes\rho_{\Lambda_{i+1}\ldots\Lambda_k}\right.\right)} \nonumber\\
&= \left|\Lambda\right|\sum_{i=1}^k\sqrt{2\left(S(\rho_{\Lambda_i}) + S(\rho_{\Lambda_{i+1}\ldots\Lambda_k}) - S(\rho_{\Lambda_i\ldots\Lambda_k}) + S(\rho_{\Lambda_i}\|\sigma)\right)} \nonumber\\
&\overset{\mathrm{(c)}}{\le} \left|\Lambda\right|\sqrt{2k\sum_{i=1}^k\left(S(\rho_{\Lambda_i}) + S(\rho_{\Lambda_{i+1}\ldots\Lambda_k}) - S(\rho_{\Lambda_i\ldots\Lambda_k}) + S(\rho_{\Lambda_i}\|\sigma)\right)} \nonumber\\
&=\left|\Lambda\right|\sqrt{2k}\sqrt{\sum_{i=1}^k\left(S(\rho_{\Lambda_i}) + S(\rho_{\Lambda_i}\|\sigma)\right)- S(\rho)} =\left|\Lambda\right|\sqrt{2k\,S\left(\rho\left\|\sigma^{\otimes k}\right.\right)}\,.
\end{align}
(a) follows from \autoref{prop:local} observing that
\begin{equation}
\mathrm{Tr}_{\Lambda_i}\left[\sigma^{\otimes\left(i-1\right)}\otimes\rho_{\Lambda_i\ldots\Lambda_k} - \sigma^{\otimes i}\otimes\rho_{\Lambda_{i+1}\ldots\Lambda_k}\right] = 0\,;
\end{equation}
(b) follows from Pinsker's inequality; (c) follows from the concavity of the square root.
The claim follows.
\end{proof}

\begin{lem}\label{lem:Wk}
We have
\begin{equation}
\dim\mathcal{W}_k \le D_k\dim\mathcal{V}\,.
\end{equation}
\end{lem}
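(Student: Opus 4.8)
The plan is to pass from operators acting on few sites to a combinatorially controlled spanning set of $\mathcal{V}_k$ by fixing an operator basis that factorizes over the lattice, and then to count generators. First I would fix, for every site $x\in\Lambda$, a linear basis $E^{(0)}=\mathbb{I},\,E^{(1)},\,\ldots,\,E^{(q^2-1)}$ of the $q^2$-dimensional operator space $\mathfrak{U}_x$, with the first element equal to the identity. For a string $a\in\{0,\ldots,q^2-1\}^\Lambda$ I set $E_a=\bigotimes_{x\in\Lambda}E^{(a_x)}$; since $E^{(0)}=\mathbb{I}$, the operator $E_a$ acts nontrivially only on $\mathrm{supp}(a)=\{x\in\Lambda:a_x\neq0\}$, a set of cardinality $H(a)$, so that $E_a\in\mathfrak{U}_{\mathrm{supp}(a)}$. (Heuristically, this is exactly what makes $q^2$, rather than $q$, appear: an operator on a site has $q^2$ basis directions, one of which is trivial.)

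The key step is the identity
\begin{equation}
\mathcal{V}_k=\mathrm{span}\left\{E_a|\psi\rangle:|\psi\rangle\in\mathcal{V},\ H(a)\le k\right\}\,.
\end{equation}
The inclusion $\supseteq$ is immediate, since each such $E_a$ acts on at most $k$ sites. For $\subseteq$ I would use that, since $\{E^{(j)}\}_j$ is a basis of $\mathfrak{U}_x$, the products $\{E_a:\mathrm{supp}(a)\subseteq X\}$ form a basis of $\mathfrak{U}_X$ for every $X\subseteq\Lambda$ (the standard fact $\mathfrak{U}_X\cong\bigotimes_{x\in X}\mathfrak{U}_x$); hence any $O\in\mathfrak{U}_X$ with $|X|\le k$ is a linear combination of operators $E_a$ with $H(a)=|\mathrm{supp}(a)|\le|X|\le k$, and applying this to $O|\psi\rangle$ with $|\psi\rangle\in\mathcal{V}$ yields the claim.

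To conclude, fix an orthonormal basis $|\phi_1\rangle,\ldots,|\phi_m\rangle$ of $\mathcal{V}$, with $m=\dim\mathcal{V}$. By linearity of $E_a$ the span above is unchanged if $|\psi\rangle$ ranges only over this basis, so $\mathcal{V}_k=\mathcal{V}_{k-1}+\mathcal{U}_k$, where $\mathcal{U}_k=\mathrm{span}\{E_a|\phi_i\rangle:H(a)=k,\ 1\le i\le m\}$. The set of pairs $(a,i)$ with $H(a)=k$ has cardinality $D_k\,m$, hence $\dim\mathcal{U}_k\le D_k\dim\mathcal{V}$, and therefore
\begin{equation}
\dim\mathcal{W}_k=\dim\mathcal{V}_k-\dim\mathcal{V}_{k-1}\le\dim\mathcal{U}_k\le D_k\dim\mathcal{V}\,,
\end{equation}
using $\mathcal{V}_{k-1}\subseteq\mathcal{V}_k$ for the first equality.

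I do not expect a real obstacle here: the only point that genuinely needs a line of justification is that a local operator expands in the factorized basis with support contained in its support set, which is the elementary tensor-factorization of $\mathfrak{U}_X$; everything else is bookkeeping and counting. The one thing to be careful about in the write-up is to record that restricting $|\psi\rangle$ to a basis of $\mathcal{V}$ is legitimate (immediate from linearity) and that $D_k$ is exactly $|H^{-1}(k)|$ as defined in the proof of \autoref{prop:main}.
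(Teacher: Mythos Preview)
Your proof is correct and follows essentially the same approach as the paper: fix a tensor-product operator basis with the identity as the zeroth element, identify $\mathcal{V}_k$ with the span of $E_a|\psi\rangle$ for $H(a)\le k$, and count generators. Your dimension argument via $\dim\mathcal{W}_k=\dim\mathcal{V}_k-\dim\mathcal{V}_{k-1}\le\dim\mathcal{U}_k$ is in fact cleaner than the paper's phrasing, which asserts the inclusion $\mathcal{W}_k\subseteq\mathrm{span}\{A_x|\psi\rangle:H(x)=k\}$ (an inclusion that is not literally true in general, though the resulting dimension bound is).
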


\begin{proof}
Let $A_0,\,\ldots,\,A_{q^2-1}$ be a basis of $\mathbb{C}^{q\times q}$ with $A_0 = \mathbb{I}$.
For any $x\in\left\{0,\,\ldots,\,q^2-1\right\}^{\Lambda}$, let
\begin{equation}
A_x = \bigotimes_{i\in\Lambda}A_{x_i}\,,
\end{equation}
where each $A_{x_i}$ acts on the site $i$.
We have
\begin{equation}
\mathcal{W}_k \subseteq \mathrm{span}\left\{A_x|\psi\rangle:|\psi\rangle\in\mathcal{V},\,H(x)\le k\right\}\,.
\end{equation}
We also have
\begin{equation}
\mathcal{W}_{k-1} \subseteq \mathrm{span}\left\{A_x|\psi\rangle:|\psi\rangle\in\mathcal{V},\,H(x)\le k-1\right\}\,,
\end{equation}
and since $\mathcal{W}_k\perp\mathcal{W}_{k-1}$, we have
\begin{equation}
\mathcal{W}_k \subseteq \mathrm{span}\left\{A_x|\psi\rangle:|\psi\rangle\in\mathcal{V},\,H(x) = k\right\}\,.
\end{equation}
Therefore,
\begin{equation}
\dim\mathcal{W}_k \le \left|H^{-1}(k)\right|\dim\mathcal{V}\,.
\end{equation}
The claim follows.
\end{proof}

\begin{lem}\label{lem:partialxH}
Let $\Phi\in\mathcal{B}_{\mathbb{Z}^d}^r$.
Then, for any $\Lambda\in\mathcal{F}_{\mathbb{Z}^d}$ and any $x\in\Lambda$ we have
\begin{equation}
\partial_x H^\Phi_\Lambda \le 2\left\|\Phi\right\|_r\,,
\end{equation}
and
\begin{equation}\label{eq:LL}
\left\|H^\Phi_\Lambda\right\|_L \le 2\left\|\Phi\right\|_r\,.
\end{equation}
\end{lem}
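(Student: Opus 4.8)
The plan is to bound $\partial_x H^\Phi_\Lambda$ by isolating the part of the local Hamiltonian that genuinely depends on the site $x$. First I would write
\begin{equation}
H^\Phi_\Lambda = \sum_{X\subseteq\Lambda} \Phi(X) = \sum_{X\subseteq\Lambda,\, x\in X}\Phi(X) + \sum_{X\subseteq\Lambda,\, x\notin X}\Phi(X)\,,
\end{equation}
and observe that the second sum belongs to $\mathcal{O}_{\Lambda\setminus x}$. Hence, using the definition \eqref{eq:partialxH} with this choice of $H_{\Lambda\setminus x}$,
\begin{equation}
\partial_x H^\Phi_\Lambda \le 2\left\|\sum_{X\subseteq\Lambda,\, x\in X}\Phi(X)\right\|_\infty \le 2\sum_{X\subseteq\Lambda,\, x\in X}\left\|\Phi(X)\right\|_\infty \le 2\sum_{x\in X\in\mathcal{F}_{\mathbb{Z}^d}}\left\|\Phi(X)\right\|_\infty\,.
\end{equation}
By translation invariance of $\Phi$ the last sum equals $\sum_{0\in X\in\mathcal{F}_{\mathbb{Z}^d}}\left\|\Phi(X)\right\|_\infty$, which is at most $\|\Phi\|_r$ because $e^{r(|X|-1)}\ge1$ for every nonempty $X$. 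This gives the first inequality.

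For \eqref{eq:LL}, by \autoref{defn:partial} we have $\left\|H^\Phi_\Lambda\right\|_L = \max_{x\in\Lambda}\partial_x H^\Phi_\Lambda$, and since the bound just obtained is uniform in $x\in\Lambda$, taking the maximum over $x$ yields $\left\|H^\Phi_\Lambda\right\|_L \le 2\|\Phi\|_r$ as well.

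I do not expect any real obstacle here: the argument is a direct estimate using that $\partial_x$ is dominated by the norm of the $x$-containing part, the triangle inequality, translation invariance, and the trivial bound $e^{r(|X|-1)}\ge1$. The only point requiring a moment's care is the bookkeeping of which subsets $X$ contribute — namely restricting to $x\in X$ — and then relating the sum over such $X\subseteq\Lambda$ to the full sum over $X\ni 0$ defining $\|\Phi\|_r$; once translation invariance is invoked this is immediate.
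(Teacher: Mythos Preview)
Your proof is correct and follows essentially the same approach as the paper's: split off the $x$-independent part of $H^\Phi_\Lambda$, bound $\partial_x$ by the norm of the remaining sum, apply the triangle inequality, use translation invariance to recenter at $0$, and invoke $e^{r(|X|-1)}\ge1$ to land on $\|\Phi\|_r$. The only cosmetic difference is the order in which you relax the constraint $X\subseteq\Lambda$ and translate by $-x$; the paper translates first and then relaxes, but the result is identical.
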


\begin{proof}
We have
\begin{equation}
\partial_x H^\Phi_\Lambda \le 2\sum_{x\in X \subseteq\Lambda}\left\|\Phi(X)\right\|_\infty \overset{\mathrm{(a)}}{=} 2\sum_{0\in X \subseteq\Lambda-x}\left\|\Phi(X)\right\|_\infty \le 2\sum_{0\in X \in \mathcal{F}_{\mathbb{Z}^d}}\left\|\Phi(X)\right\|_\infty \le 2\left\|\Phi\right\|_r\,,
\end{equation}
where (a) follows from the translation invariance of $\Phi$.
The claim follows.
\end{proof}

\bibliography{lattices}
\bibliographystyle{unsrt}
\end{document}